\newtheoremstyle{red}{}{}{\itshape}{}{\color{red!80!black}\bfseries}{.}{ }{}
\definecolor{darkred}{rgb}{0.57,0,0.12}
\let\nc\newcommand
\DeclareMathOperator{\Tr}{Tr}
\DeclareMathOperator{\tr}{tr}
\DeclareMathOperator{\supp}{supp}
\let\Re\relax
\DeclareMathOperator{\Re}{Re}
\renewcommand{\H}{\mathcal{H}}
\newcommand{\proj}[1]{\ket{#1}\!\bra{#1}}
\renewcommand{\S}{\mathcal{S}}
\newcommand{\T}{\mathcal{T}}
\newcommand{\E}{\mathcal{E}}
\newcommand{\X}{\mathcal{X}}
\renewcommand{\L}{\mathcal{L}}
\renewcommand{\P}{\mathcal{P}}
\newcommand{\M}{\mathcal{M}}
\newcommand{\N}{\mathcal{N}}
\newcommand{\STAB}{{\FF_{\mathrm{STAB}}}}
\renewcommand{\bar}{\;\rule{0pt}{9.5pt}\right|\;}
\let\sbar\bar
\newcommand{\lset}{\left\{\left.}
\newcommand{\rset}{\right\}}
\DeclareMathOperator{\cone}{cone}
\newcommand{\RR}{\mathbb{R}}
\renewcommand{\SS}{\mathbb{S}}
\newcommand{\DD}{\mathbb{D}}
\newcommand{\OO}{\mathbb{O}}
\newcommand{\FF}{\mathbb{F}}
\newcommand{\mfR}{\mathfrak{R}}
\newcommand{\id}{\mathbbm{1}}
\newcommand{\idc}{\mathrm{id}}
\newenvironment{boxxed}[1]%
  {\expandafter\ifstrequal\expandafter{#1}{red}{\begin{tcolorbox}[colback=orange!3,colframe=orange!15]}{\begin{tcolorbox}[colback=white,colframe=gray!10,breakable,enhanced]}}%
  {\end{tcolorbox}}
\let\textsc\uppercase
\newtheorem{theorem}{Theorem}
\newtheorem{proposition}[theorem]{Proposition}
\newtheorem{corollary}[theorem]{Corollary}
\newtheorem{lemma}[theorem]{Lemma}
\theoremstyle{definition}
\newtheorem*{remark}{Remark}
\theoremstyle{red}
\let\oldproofname\proofname
\renewcommand{\proofname}{\rm\bf{\oldproofname}}
  \nc{\MIO}{{\text{\rm MIO}}}
\nc{\DIO}{{\text{\rm DIO}}}
\nc{\SIO}{{\text{\rm SIO}}}
\nc{\IO}{{\text{\rm IO}}}
\nc{\CRNG}{{\text{\rm CRNG}}}
\nc{\lsetr}{\left\{\,}
\nc{\rsetr}{\right.\right\}}
\nc{\barr}{\,\rule{0pt}{9.5pt}\left|\;}
\nc{\ketbra}[2]{\ket{#1}\!\bra{#2}}
\let\sbarr\barr
\nc{\TT}{\mathbb{T}}
\nc{\CT}{\Upsilon}
\nc{\wt}{\widetilde}
\nc{\dia}{\!\!\Diamond}
\nc{\bdia}{\!\!\vardiamond}
\DeclareMathOperator{\aff}{aff}
\DeclarePairedDelimiter\floor{\lfloor}{\rfloor}
\DeclarePairedDelimiter\ceil{\lceil}{\rceil}
\newcommand{\bal}{\begin{equation}\begin{aligned}}
\newcommand{\eal}{\end{aligned}\end{equation}}
\newcommand{\bann}{\begin{equation*}\begin{aligned}}
\newcommand{\eann}{\end{aligned}\end{equation*}}
\newcommand{\dm}[1]{\ketbra{#1}{#1}}
\def\renewtheorem#1{%
  \expandafter\let\csname#1\endcsname\relax
  \expandafter\let\csname c@#1\endcsname\relax
  \gdef\renewtheorem@envname{#1}
  \renewtheorem@secpar
}
\def\renewtheorem@secpar{\@ifnextchar[{\renewtheorem@numberedlike}{\renewtheorem@nonumberedlike}}
\def\renewtheorem@numberedlike[#1]#2{\newtheorem{\renewtheorem@envname}[#1]{#2}}
\def\renewtheorem@nonumberedlike#1{  
\def\renewtheorem@caption{#1}
\edef\renewtheorem@nowithin{\noexpand\newtheorem{\renewtheorem@envname}{\renewtheorem@caption}}
\renewtheorem@thirdpar
}
\def\renewtheorem@thirdpar{\@ifnextchar[{\renewtheorem@within}{\renewtheorem@nowithin}}
\def\renewtheorem@within[#1]{\renewtheorem@nowithin[#1]}
\nc{\phig}{{\phi_{\rm gold}}}
\newcommand{\texteq}[1]{\stackrel{\mathclap{\scriptsize \mbox{#1}}}{=}}
\newcommand{\textleq}[1]{\stackrel{\mathclap{\scriptsize \mbox{#1}}}{\leq}}
\begin{document}

\title{One-Shot Yield--Cost Relations in General Quantum Resource Theories}

\author{Ryuji Takagi}
\email{ryuji.takagi@ntu.edu.sg}
\affiliation{Nanyang Quantum Hub, School of Physical and Mathematical Sciences, Nanyang Technological University, 637371, Singapore}

\author{Bartosz Regula}
\email{bartosz.regula@gmail.com}
\affiliation{Department of Physics, Graduate School of Science, The University of Tokyo, Bunkyo-ku, Tokyo 113-0033, Japan}
\affiliation{Nanyang Quantum Hub, School of Physical and Mathematical Sciences, Nanyang Technological University, 637371, Singapore}

\author{Mark M. Wilde}
\email{mwilde@lsu.edu}
\affiliation{Hearne Institute for Theoretical Physics, Department of Physics and Astronomy, and Center for Computation and Technology, Louisiana State University, Baton Rouge, Louisiana 70803, USA}

\begin{abstract}
Although it is well known that the amount of resources that can be asymptotically distilled from a quantum state or channel does not exceed the resource cost needed to produce it, the corresponding relation in the non-asymptotic regime hitherto has not been well understood. Here, we establish a quantitative relation between the one-shot distillable resource yield and dilution cost in terms of transformation errors involved in these processes. Notably, our bound is applicable to quantum state and channel manipulation with respect to any type of quantum resource and any class of free transformations thereof, encompassing broad types of settings, including entanglement, quantum thermodynamics, and quantum communication. We also show that our techniques provide strong converse bounds relating the distillable resource and resource dilution cost in the asymptotic regime. Moreover, we introduce a class of channels that generalize twirling maps encountered in many resource theories, and by directly connecting it with resource quantification, we compute analytically several smoothed resource measures and improve our one-shot yield--cost bound in relevant theories. We use these operational insights to exactly evaluate important measures for various resource states in the resource theory of magic states. 
\end{abstract}

\maketitle

\section{Introduction}

Resource distillation and dilution are two fundamental resource manipulation tasks concerned, respectively, with the purification of noisy quantum resources and with the synthesis of general resources from pure ones. There have been many efforts to investigate the optimal rates of these tasks, known respectively as \emph{distillable resource yield} and \emph{resource dilution cost}, under various physical settings in the asymptotic regime with vanishing error~\cite{Bennett1996purification,Bennett1996concentrating,Brandao2013resource,Horodecki2013quantumness,Winter2016operational,Marvian2020coherence} and the non-asymptotic regime with non-zero error~\cite{Brandao2011one-shot,Bravyi2005magic,Zhao2018one-shot,Regula2018one-shot,Faist2018fundamental,Liu2019one-shot}.
In the asymptotic regime, an operational argument paralleling the second law of thermodynamics implies that the distillable resource is  bounded from above by the dilution cost, because otherwise one could produce an unbounded amount of resources by repeating the distillation-dilution cycle \textit{ad infinitum}~\cite{Bennett1996concentrating,Horodecki2002are,Donald2002uniqueness,Brandao2010reversible,Brandao2015reversible,Hayashi2016quantum,Winter2016operational,Watrous2018theory,Kuroiwa2020general}. 
(See also Refs.~\cite{Kumagai2013entanglement,Ito2015asymptotic,Kumagai2017secondorder} for related discussions of second-order asymptotics in the context of entanglement theory.)
However, in the non-asymptotic regime, the relation between the distillable resource and the resource cost is more subtle, especially when errors incurred in the transformations are also taken into account.

The recent work of Ref.~\cite{Wilde2021second} established a quantitative relation in the non-asymptotic regime for entanglement transformations that use local operations and classical communication, and it recovered the well-known asymptotic relation as a limit of the one-shot bound.
A natural question following these findings is to what extent we can strengthen the relations obtained there and establish the corresponding bounds in more general settings beyond the distillation and dilution of entangled states.

One possibility is to extend the result for entanglement manipulation to the manipulation of other types of quantum resources, such as quantum superposition~\cite{Baumgratz2014quantifying,Streltsov2017quantum,Regula2018one-shot,Zhao2018one-shot} and thermal non-equilibrium~\cite{Horodecki2013fundamental,Brandao2013resource}. 
The ultimate form of this extension is to consider \emph{general resource theories}~\cite{Chitambar2019quantum}, which encompass diverse types of physical resources in a single framework.
Results obtained for general resource theories are not only readily applicable to practical settings of interest but also provide foundational insights into what properties are universally shared by wide classes of quantum resources, which might seem very different from each other on the surface.   
Despite the apparent difficulty of getting meaningful results out of this abstract setup, recent years have seen significant development in the understanding of operational aspects of general resource theories, including the settings of resource erasure~\cite{Liu2017resource,Anshu2018quantifying}, discrimination tasks~\cite{Takagi2019operational,Takagi2019general,Uola2019quantifying,Uola2020all,Ducuara2020operational,Regula2021operational}, resource quantification~\cite{Regula2018convex,Chitambar2019quantum,gonda_2019}, and resource transformations~\cite{Horodecki2013quantumness,Brandao2015reversible,Gour2017quantum,Takagi2019general,Liu2019one-shot,Zhou2020general,Vijayan2020simple,Regula2020benchmarking,Kuroiwa2020general,Fang2020nogo-state}.

Another possible extension is to the setting of quantum channel manipulation.
Although research in resource theories has largely focused on quantum states, there are various situations in which quantum dynamics, generally represented by quantum channels, play the role of the main objects of study.
As a framework to accommodate these settings, \emph{resource theories of quantum channels} have recently been under rapid development~\cite{Bendana2017resource,Kaur2017amortized,Pirandola2017fundamental,Diaz2018using,Rosset2018resource,Bauml2019resource,Gour2020dynamical,Saxena2020dynamical,Faist2018fundamental,Theurer2019quantifying,Theurer2020dynamical,Yuan2020universal,Wang2019quantifying,Wang2019asymmetric_channel,Takagi2019general,Liu2019resource,Liu2020operational,Regula2021fundamental,Fang2020no-go,Gour2019how,Takagi2020application,Takagi2021optimal,Regula2020oneshot,Gour2021entropy,Rosset2020type,Schmid2020typeindependent,Yuan2020one-shot,Gour2020dynamicalresources,Salzmann2021symmetric}.
These dynamical resources include resource theories of states as special cases, because quantum states can be equivalently represented as quantum channels that prepare them. 
Thus, channel theories are strictly more general than state theories, and results that hold for arbitrary quantum channel resources are automatically carried over to state theories. 
Importantly, we can impose additional structures to channels under study so that the resulting resource theory can describe various settings involving quantum channels with pre-determined structures, such as the scenario of Bell non-locality represented by no-signalling boxes~\cite{de_Vicente2014nonlocality,Geller2014quantifying,Wolfe2020quantifyingbell}, and measurement incompatibility~\cite{Heinosaari2016invitation}, where a channel describes a set of POVMs~\cite{Regula2020oneshot}. 
Although the performance of channel distillation and one-shot channel manipulation in general resource theories has recently been studied~\cite{Regula2021fundamental,Fang2020no-go,Regula2020oneshot,Yuan2020one-shot,Gour2020dynamicalresources}, the relation between the one-shot distillable resource and resource cost under arbitrary free channel transformations has still been elusive.

In this work, we establish a fundamental quantitative relation between the one-shot distillable resource and dilution cost that is applicable to general resource theories of quantum channels, accomplishing the two extensions in a single framework.
We further extend the one-shot bound to the asymptotic regime and obtain strong converse bounds between transformation rates, revealing also new relations in the asymptotic manipulation of quantum states.
Our framework not only accommodates well-known quantum resources that have been previously studied using related approaches --- such as quantum entanglement and coherence --- but also other settings that involve dynamical resources as central objects of study, e.g., quantum communication.
Notably, a major part of our main results does not even assume the convexity or closedness of the set of free channels, unlike most of the works dealing with general resource theories.  
Moreover, our bounds hold for all chosen sets of free operations satisfying only minimal requirements, thus accommodating all physically motivated choices of free operations, including ones that may be equipped with complex structures.

An important technical aspect of our result is that, for the bounds to be most effective, they require the collapse of a class of resource measures based on R\'{e}nyi divergences. Focusing on resource theories of states, we provide an operational interpretation for this and related conditions in terms of a special class of free operations that reflects the essential features of the ``twirling'' operation commonly encountered in various theories~\cite{Werner1989quantum,Horodecki1999reduction}.
We use this class of operations to obtain analytical expressions for a number of smoothed resource measures, leading also to an improved yield--cost bound for the case of state manipulation. 
Finally, to demonstrate the applicability of our results, we use these operational insights in the resource theory of magic~\cite{Veitch2014resource,Howard2017application,Wang2019quantifying} and compute resource measures for several special states of interest.

We focus on discussing the main results and their implications in the main text, while deferring all proofs and technical details to the Appendices.

\section{Resource theories of states}\label{sec:theory of states}

We begin by introducing the framework of resource theories of quantum states, in which quantum states are the central objects under study~\footnote{In this manuscript, we restrict our attention to the cases in which underlying Hilbert spaces are finite dimensional.}.
Quantum information processing necessarily involves the manipulation of quantum states but commonly is performed under some physical restrictions, which limits the accessible set of quantum states and operations. 
A primary example is a scenario in which two parties, Alice and Bob, are physically separated.
Unless they are connected by a quantum communication channel, they can only apply local quantum operations and exchange classical messages. 
This restriction limits the accessible set of operations to the class known as local operations and classical communication (LOCC). 
Importantly, LOCC channels cannot create entanglement for free but can only prepare states that are separable~\cite{Horodecki2009quantum}. 

The framework of resource theories allows us to  deal formally with such restrictions~\cite{Chitambar2019quantum}.
It provides a recipe to quantify the amount of  resource that cannot be created due to the restriction and characterize feasible resource manipulation using only accessible operations.
Each resource theory comes with a set $\FF$ of \emph{free states}  and a set $\OO$ of \emph{free operations}.
They are subsets of quantum states and quantum channels (completely positive trace-preserving maps), respectively, and are considered to be accessible for free under a given setting.   
The minimal requirement for free operations is that they should not create resourceful states out of free states, i.e., $\Lambda\in\OO\Rightarrow\Lambda(\sigma)\in\FF,\forall \sigma\in\FF$. 
Physical considerations usually impose more constraints, but it is useful to consider the set of free operations consisting of all the operations satisfying this minimal requirement; we call this the maximal set of free operations and write it as~$\OO_{\max}$.
Consequently, any valid set of free operations is a subset of~$\OO_{\max}$.

A major strength of resource theories is that they allow us to evaluate the resourcefulness of a given state quantitatively. 
It is made possible by introducing \emph{resource measures}, which are functions from quantum states to real numbers that represent the amount of resource contained in the state. 
The requirement for resource measures is that they output the same value for free states, and they do not increase under the application of free operations. 

Here, we consider two resource measures that can be defined for an arbitrary set $\FF$ of free states and an arbitrary set $\OO$ of free operations. 
The first one is known as the \emph{generalized robustness}, or alternatively, \emph{max-relative entropy measure}~\cite{Vidal1999robustness,Steiner2003generalized,Datta2009min,Datta2009max} defined as
\bal
D_{\max,\FF}(\rho)&\coloneqq\inf\lset \log(1+s) \sbar \frac{\rho+s\tau}{1+s}\in\FF,\,\tau\in\DD\rset\\
&= \inf\lset\lambda\sbar\rho\leq 2^\lambda\sigma,\,\sigma\in\FF\rset
\label{eq:robustness def},
\eal
where $\DD$ is the set of all quantum states.
The other type of resource measure relevant to this work is \emph{min-relative entropy measure}~\cite{Datta2009min}, defined as 
\bal
 D_{\min,\FF}(\rho)&\coloneqq \inf_{\sigma\in\FF} D_{\min}(\rho\|\sigma),
\eal
where $D_{\min}(\rho\|\sigma)\coloneqq -\log\Tr[\Pi_\rho \sigma]$, with $\Pi_\rho$ denoting the projector onto the support of $\rho$.

These quantities lay out a platform for establishing relations between resource manipulation and resource quantification.
Resource distillation and dilution particularly stand out as important subclasses of resource manipulation tasks. 
Resource distillation is a protocol to transform a given state to a state in the family $\TT$ of \emph{reference states}  using free operations, while resource dilution is the opposite task in which a reference state is to be transformed to  the desired state.
The optimal performance of these tasks is characterized by the one-shot \emph{distillable resource} and \emph{dilution cost}, defined respectively as 
\bal
  d^\epsilon_{\OO} (\rho)\! &\coloneqq \sup \lset \mathfrak{R}_{\FF}(\Phi) \!\bar\! F(\Lambda(\rho), \Phi) \geq 1\!-\!\epsilon,\; \Phi\! \in \TT,\; \Lambda \in \OO \rset,\\
  c^\epsilon_{\OO} (\rho)\! &\coloneqq \inf \lset \mathfrak{R}_{\FF}(\Phi) \!\bar\! F(\rho, \Lambda(\Phi)) \geq 1\!-\!\epsilon,\; \Phi\! \in \TT,\; \Lambda \in \OO \rset,
\eal
where $F(\rho,\sigma)\coloneqq \|\sqrt{\rho}\sqrt{\vphantom{\rho}\sigma}\|_1^2$ is the fidelity \cite{Uhlmann1976transition}, and $\mathfrak{R}_\FF$ refers to an arbitrary resource monotone.  
For simplicity, we take $\mathfrak{R}_\FF=D_{\min,\FF}$ in the above definition throughout this manuscript.

To provide more insight into these definitions and connect them with notions of distillation and dilution familiar from commonly encountered resource theories, let us consider the case of quantum entanglement. Here, $\FF$ denotes all separable quantum states, and the family $\TT$ of reference states  can be taken simply as copies of the maximally entangled qubit Bell state:
\begin{equation}
    \TT = \{ \Phi_2^{\otimes n} \,|\; n \in \mathbb{N} \},
\end{equation}
where $\ket{\Phi_2} \coloneqq  \frac{1}{\sqrt{2}} (\ket{00} + \ket{11})$ and we use the shorthand notation $\Phi_2 = \proj{\Phi_2}$. Combined with the fact that~\cite{Shimony1995degree}
\begin{equation}
    D_{\min,\FF}(\Phi_2^{\otimes n}) = n,
\end{equation}
the quantities $d^\epsilon_{\OO}$ and $c^\epsilon_{\OO}$ can be alternatively understood as asking, respectively, how many copies of $\ket{\Phi_2}$ can be obtained from a given state, or how many copies of $\ket{\Phi_2}$ are necessary to produce a given state. We use more general definitions of distillation yield and resource cost to accommodate potentially more complex families of reference states and resource theories where a simple choice of reference states $\TT$ might not be available.


\section{Resource theories of channels}\label{sec:theory of channels}

We now introduce resource theories of quantum channels, extending the framework described in the previous section. 
To take into account the various structures that may be imposed on channels, we employ the approach of Ref.~\cite{Regula2020oneshot} and consider a subset $\OO_{\rm all}$ of quantum channels as the set of channels of interest. 
Each situation designates a set $\OO\subset \OO_{\rm all}$ of \emph{free channels} that does not contain any resource considered precious under the given setting.
We also need to introduce the set of free operations that are accessible to manipulate channels. 
General channel transformations are described by quantum \emph{superchannels}~\cite{Chiribella2008transforming,Chiribella2008quantum}. 
Superchannels are linear maps that map quantum channels to quantum channels, and they are physically realizable by sandwiching an input channel with a pre-processing channel and a post-processing channel, both of which can also be connected by a quantum memory.    
Letting $\OO_{\rm all}'$ denote the set of output channels of interest, the set of relevant superchannels is specified as 
\begin{equation}
\SS_{\rm all}\coloneqq \lset \Xi \sbar \Xi(\E)\in \OO_{\rm all}',\,\forall \E\in\OO_{\rm all}\rset.    
\end{equation}
Then, one can consider a subset $\SS\subset\SS_{\rm all}$ of superchannels  as \emph{free superchannels}, serving as free operations.  
The minimal requirement for free operations forces $\Theta\in\SS\Rightarrow\Theta(\M)\in\OO',\,\forall \M\in\OO$.  
Analogous to the case of state theories, we also define the maximal set of free superchannels as the set of superchannels satisfying the above minimal condition and call it $\SS_{\max}$.

We can analogously introduce resource quantifiers that do not increase under any set of free superchannels.
The relevant channel resource measures that we need are the max-relative entropy measure~\cite{Diaz2018using,Takagi2019general,Liu2019resource,Liu2020operational}, defined for every $\E\in\OO_{\rm all}$ as
\begin{equation}
 D_{\max,\OO}(\E) \coloneqq \inf_{\M\in\OO} \sup_{\psi} D_{\max}(\idc\otimes\E(\psi)\|\idc\otimes\M(\psi)),
\end{equation}
and the min-relative entropy measure~\cite{Liu2019resource,Liu2020operational}, defined as 
\begin{equation}
 D_{\min,\OO}(\E) \coloneqq \inf_{\M\in\OO} \sup_{\psi} D_{\min}(\idc\otimes\E(\psi)\|\idc\otimes\M(\psi)),   
\end{equation}
where the optimization is restricted to every pure input state $\psi$, without loss of generality~\cite{Gilchrist2005distance}. 

Then, we can formalize the tasks of resource distillation and dilution.
Resource distillation is a task that transforms a given channel $\E\in\OO_{\rm all}$ to a channel in the set $\TT\subset\OO_{\rm all}'$ of reference channels  using free superchannels $\SS$, while resource dilution transforms a reference channel to the desired channel~$\E$.
The optimal performance of these tasks is characterized by the one-shot distillable resource and dilution cost, defined respectively as 
\bal
  d^\epsilon_{\SS} (\E)\! &\coloneqq \sup \lset \mathfrak{R}_{\OO'}(\T) \!\bar\! F(\Theta(\E), \T) \geq 1\!-\!\epsilon,\; \T\! \in \TT,\; \Theta \in \SS \rset,\\
  c^\epsilon_{\SS} (\E)\! &\coloneqq \inf \lset \mathfrak{R}_{\OO'}(\T) \!\bar\! F(\E, \Theta(\T)) \geq 1\!-\!\epsilon,\; \T\! \in \TT,\; \Theta \in \SS \rset,
\eal
where
\begin{equation}
F(\E_1,\E_2)\coloneqq \min_\psi F(\idc\otimes\E_1(\psi),\idc\otimes\E_2(\psi)) 
\label{eq:fidelity def}
\end{equation}
is the worst-case fidelity~\cite{Gilchrist2005distance} (also called channel fidelity), and  $\mathfrak{R}_\OO$ refers to an arbitrary resource monotone, which we take as $\mathfrak{R}_\OO=D_{\min,\OO}$ in the above definition for simplicity.

In our discussion below, we will make the natural assumption that the reference channel $\T$ is a pure target channel, in the sense that $\idc\otimes\T(\psi)$ is pure for every pure state $\psi$. This covers general types of distillation processes encountered in physical resource theories, where $\T$ plays the role of a noiseless resource. If the input and output systems are identical, then the condition implies that $\T$ is unitary, while if the input and output systems have different dimensions, then it implies that $\T$ is an isometry. 
When the input space is trivial --- which is the case, e.g.,\ in the special case of resource theories of states --- then $\T$ reduces to a pure-state preparation channel.


\section{Yield--cost relations}

Our first main result is a relation between the one-shot distillable resource and dilution cost, which we dub a yield--cost relation, applicable to (1) dynamical resource theories of channels, (2) any set of free channels, and (3) any set of free superchannels defined for the given free channels.

\begin{theorem}
\label{thm:second law general}
     Suppose that for every channel $\T$ in the chosen reference set $\TT$, $\idc \otimes \T(\psi)$ is pure for every pure state $\psi$, and $\T$ satisfies $D_{\min,\OO}(\T) = D_{\max,\OO}(\T)$.
     Then, for all $\epsilon_1\in[0,1)$ and $\epsilon_2\in[0,1-\epsilon_1]$, for every quantum channel $\E$, and for every set $\SS\subseteq \SS_{\max}$ of free superchannels, the following inequality holds
 \bal
  d_{\SS}^{\epsilon_1}(\E)\leq c_{\SS}^{\epsilon_2}(\E) + \log f(\epsilon_1,\epsilon_2),
 \eal
 where $f(\epsilon_1,\epsilon_2)$ is a function defined as 
 \bal
  f(\epsilon_1,\epsilon_2)\coloneqq \min\left\{(1-\epsilon_1-\sqrt{\epsilon_2})^{-1},(\sqrt{1-\epsilon_2}-\sqrt{\epsilon_1})^{-2}\right\}
 \eal
 for $\epsilon_1+\sqrt{\epsilon_2}<1$ and 
  \bal
  f(\epsilon_1,\epsilon_2)\coloneqq (\sqrt{1-\epsilon_2}-\sqrt{\epsilon_1})^{-2}
 \eal
 otherwise. 
\end{theorem}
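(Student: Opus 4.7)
The plan is to chain the near-optimal distillation and dilution protocols into a single free superchannel, then exploit the pure-output structure of the distilled reference $\T$ to convert a fidelity estimate into a divergence inequality. Fix $\eta>0$ and pick $\T,\T'\in\TT$ together with $\Theta_d,\Theta_c\in\SS$ such that $F(\Theta_d(\E),\T)\geq 1-\epsilon_1$, $F(\Theta_c(\T'),\E)\geq 1-\epsilon_2$, $D_{\min,\OO}(\T)\geq d^{\epsilon_1}_\SS(\E)-\eta$, and $D_{\min,\OO}(\T')\leq c^{\epsilon_2}_\SS(\E)+\eta$. Set $\N\coloneqq\Theta_d(\Theta_c(\T'))$. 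The map $\Theta_d\circ\Theta_c$ again sends free channels to free channels and hence lies in $\SS_{\max}$, under which $D_{\max,\OO}$ is monotone (this follows by the standard Stinespring-type unfolding of a superchannel into pre- and post-processing with an auxiliary memory, combined with data-processing of $D_{\max}$ on a purification of the pre-processed input). Combined with the collapse $D_{\max,\OO}(\T')=D_{\min,\OO}(\T')$ one obtains $D_{\max,\OO}(\N)\leq D_{\min,\OO}(\T')$, so there exists $\M'\in\OO$ with $\idc\otimes\N(\psi)\leq 2^{D_{\min,\OO}(\T')}\,\idc\otimes\M'(\psi)$ for every pure input state $\psi$.

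The remaining task is to lower-bound $q\coloneqq\langle\T_\psi|\idc\otimes\N(\psi)|\T_\psi\rangle$, where $|\T_\psi\rangle$ is the pure state with $\idc\otimes\T(\psi)=|\T_\psi\rangle\langle\T_\psi|$. Monotonicity of the worst-case channel fidelity under $\Theta_d$ gives $F(\idc\otimes\N(\psi),\idc\otimes\Theta_d(\E)(\psi))\geq 1-\epsilon_2$, while the distillation hypothesis plus purity of $\T(\psi)$ give $p\coloneqq\langle\T_\psi|\idc\otimes\Theta_d(\E)(\psi)|\T_\psi\rangle\geq 1-\epsilon_1$. Two complementary estimates then produce the two constituents of $f$. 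First, the sharpened Fuchs--van de Graaf bound $|\Tr[M(\rho-\sigma)]|\leq\tfrac12\|\rho-\sigma\|_1\leq\sqrt{1-F(\rho,\sigma)}$, valid for any $0\leq M\leq I$ via $\|M-I/2\|_\infty\leq\tfrac12$, applied with $M=|\T_\psi\rangle\langle\T_\psi|$ yields $q\geq 1-\epsilon_1-\sqrt{\epsilon_2}$. Second, classical fidelity monotonicity under the two-outcome measurement $\{|\T_\psi\rangle\langle\T_\psi|,I-|\T_\psi\rangle\langle\T_\psi|\}$ forces $(\sqrt{pq}+\sqrt{(1-p)(1-q)})^2\geq 1-\epsilon_2$, and estimating $\sqrt{(1-p)(1-q)}\leq\sqrt{\epsilon_1}$ together with $p\leq 1$ produces $q\geq(\sqrt{1-\epsilon_2}-\sqrt{\epsilon_1})^2$. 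Taking the better of the two estimates gives $q\geq 1/f(\epsilon_1,\epsilon_2)$.

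Combining with the earlier inequality $\idc\otimes\N(\psi)\leq 2^{D_{\min,\OO}(\T')}\,\idc\otimes\M'(\psi)$ yields $\langle\T_\psi|\idc\otimes\M'(\psi)|\T_\psi\rangle\geq 2^{-D_{\min,\OO}(\T')}/f(\epsilon_1,\epsilon_2)$ for every pure $\psi$. Because $\idc\otimes\T(\psi)=|\T_\psi\rangle\langle\T_\psi|$ is pure, this is precisely $D_{\min}(\idc\otimes\T(\psi)\|\idc\otimes\M'(\psi))\leq D_{\min,\OO}(\T')+\log f(\epsilon_1,\epsilon_2)$; taking $\sup_\psi$ and using $\M'$ as a feasible witness in $D_{\min,\OO}(\T)$ gives $D_{\min,\OO}(\T)\leq D_{\min,\OO}(\T')+\log f(\epsilon_1,\epsilon_2)$, and sending $\eta\to 0$ concludes. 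I expect the main obstacle to be the clean handling of data-processing for $D_{\max,\OO}$ under the composite superchannel with only the minimal $\SS\subseteq\SS_{\max}$ assumption: although the argument is standard, it requires some care with the Stinespring-type unfolding of superchannels so that the pure-input optimization defining $D_{\max,\OO}$ is preserved. Once that is in place, the two fidelity estimates assemble mechanically into the piecewise definition of $f(\epsilon_1,\epsilon_2)$.
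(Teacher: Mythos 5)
Your argument is correct, but it takes a genuinely different route from the paper's. The paper proves Theorem~\ref{thm:second law general} via Lemma~\ref{lem:second law} by factoring through smoothed divergences of the \emph{intermediate} channel: $D_{\min,\OO}(\T_1)\leq D_{H,\OO}^{\epsilon_1}(\E)\leq D_{\max,\OO}^{\epsilon_2}(\E)+\log f\leq D_{\max,\OO}(\T_2)+\log f$, where the middle step is the channel-level lift (Lemmas~\ref{lem:hypothesis max channel square} and~\ref{lem:hypothesis max channel minimax}) of the state-level inequalities of Datta--Hsieh--Oppenheim and Anshu--Berta--Jain--Tomamichel, obtained with a max--min swap. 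You instead compose the two protocols into $\N=\Theta_d(\Theta_c(\T'))$ and bound the overlap $q=\braket{\T_\psi|\idc\otimes\N(\psi)|\T_\psi}$ directly; your two estimates ($q\geq 1-\epsilon_1-\sqrt{\epsilon_2}$ via the Fuchs--van de Graaf bound, and $q\geq(\sqrt{1-\epsilon_2}-\sqrt{\epsilon_1})^2$ via classical fidelity data processing) are in effect in-line re-derivations of those two cited inequalities, so the same mathematics appears, but you never need to define or manipulate $D_{H,\OO}^{\epsilon_1}(\E)$ or $D_{\max,\OO}^{\epsilon_2}(\E)$ and thus avoid the max--min inequality steps entirely. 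What your route loses is modularity: the intermediate bound $D_{H,\OO}^{\epsilon_1}\leq D_{\max,\OO}^{\epsilon_2}+\log f$ is reused verbatim in the asymptotic Theorem~\ref{thm:asymptotic}, whereas your composite-protocol argument would have to be redone there. Two small points to tighten: the monotonicity of $D_{\max,\OO}$ (and of the worst-case fidelity) under arbitrary $\Theta\in\SS_{\max}$ that you flag as the delicate step is exactly what the paper outsources to Lemma~\ref{lem:converses} (Ref.~[56]), so you should cite it rather than re-prove it; and since $\OO$ need not be closed, the witness $\M'$ should be chosen $\delta$-optimal, with $\delta$ absorbed into your existing $\eta\to 0$ limit.
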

For $\epsilon_1+\sqrt{\epsilon_2}<1$, each quantity can be tighter than the other for certain error regions.
Indeed, direct calculation reveals that $(1-\epsilon_1-\sqrt{\epsilon_2})^{-1}\leq (\sqrt{1-\epsilon_2}-\sqrt{\epsilon_1})^{-2}$ if and only if 
\bal
 \frac{1}{2}(1-\sqrt{1-\epsilon_2})(1-\sqrt{\epsilon_2})\leq \epsilon_1\leq \frac{1}{2}(1+\sqrt{1-\epsilon_2})(1-\sqrt{\epsilon_2}).
\eal

This gives a fundamental relation between distillable resource and dilution cost in the one-shot regime under any chosen set of free superchannels.
In particular, when transformation errors are taken into account, the one-shot cost could be smaller than the one-shot yield.
Our result establishes a quantitative trade-off relation between a potential yield--cost gap and the transformation inaccuracy. 

The only assumption required for the result to hold is the choice of reference channels which satisfy $D_{\min,\OO}(\T) = D_{\max,\OO}(\T)$. The collapse of the two measures to the same value is a common property of maximally resourceful states or channels~\cite{Liu2019one-shot,Regula2020benchmarking,Regula2020oneshot}, which are precisely the most appropriate references employed in distillation and dilution protocols in practice.
We review examples of reference states and channels in several physical settings in Appendix~\ref{app:reference}.
We will also return to this condition in Section~\ref{sec:operational} to provide more intuition behind the assumption and give an operationally motivated understanding of it.

Theorem~\ref{thm:second law general} is a direct consequence of  the following lemma.

\begin{lemma}
 \label{lem:second law}
 Let $\E$ be an arbitrary quantum channel, and let $\epsilon_1,\epsilon_2$ be arbitrary real numbers such that $\epsilon_1\in[0,1)$, $\epsilon_2\in[0,1-\epsilon_1]$. Also, let $\T_1$ be a channel for which $\idc \otimes \T_1(\psi)$ is pure for every pure state $\psi$ and there exists $\Theta_1 \in \SS$ such that $F(\Theta_1(\E),\T_1)\geq 1-\epsilon_1$, and let $\T_2$ be an arbitrary channel for which there exists $\Theta_2 \in \SS$ such that $F(\Theta_2(\T_2),\E)\geq 1-\epsilon_2$. Then
    \begin{equation}\begin{aligned}
      D_{\min,\OO} (\T_1) \leq D_{\max,\OO} (\T_2) + \log f(\epsilon_1,\epsilon_2).
    \end{aligned}\end{equation}
where $f(\cdot,\cdot)$ is the function introduced in Theorem~\ref{thm:second law general}.
\end{lemma}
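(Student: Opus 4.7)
The plan is to extract an operator-level inequality from $D_{\max,\OO}(\T_2)$, propagate it through the two free superchannels, and then exploit the purity of $\T_1$'s output to convert the inequality back into a fidelity bound to which standard triangle-type inequalities can be applied.

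First, I would pick any $\delta>0$ and a near-optimal $\M^*\in\OO$ with $\sup_\psi D_{\max}(\idc\otimes\T_2(\psi)\|\idc\otimes\M^*(\psi))\le D_{\max,\OO}(\T_2)+\delta=:\mu$, which is equivalent to the pointwise operator inequality $\idc\otimes\T_2(\psi)\le 2^\mu\,\idc\otimes\M^*(\psi)$ for every pure $\psi$. Because superchannels are linear and complete-positivity-preserving, applying $\Theta_1\circ\Theta_2$ to both sides preserves the inequality, so $\idc\otimes\Theta_1(\Theta_2(\T_2))(\psi)\le 2^\mu\,\idc\otimes\N(\psi)$ for $\N:=\Theta_1(\Theta_2(\M^*))$, which remains a free channel. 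Sandwiching with the pure output $\idc\otimes\T_1(\psi)=|\phi_\psi\rangle\langle\phi_\psi|$ turns this into a scalar fidelity bound, since for pure targets the fidelity equals the overlap. Taking the minimum over $\psi$ and using that $D_{\min,\OO}(\T_1)=-\log\sup_{\M\in\OO}F(\T_1,\M)$ under the pure-output hypothesis, I obtain $D_{\min,\OO}(\T_1)\le\mu-\log F(\T_1,\Theta_1(\Theta_2(\T_2)))$, so the task reduces to lower-bounding $F(\T_1,\Theta_1(\Theta_2(\T_2)))$ by $f(\epsilon_1,\epsilon_2)^{-1}$.

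To that end, I would first use monotonicity of the channel fidelity under $\Theta_1$ to upgrade the dilution bound $F(\Theta_2(\T_2),\E)\ge 1-\epsilon_2$ into $F(\Theta_1(\Theta_2(\T_2)),\Theta_1(\E))\ge 1-\epsilon_2$, which together with the distillation bound $F(\Theta_1(\E),\T_1)\ge 1-\epsilon_1$ produces a three-point configuration between $\T_1$, $\Theta_1(\E)$, and $\Theta_1(\Theta_2(\T_2))$. I would then derive the required lower bound on $F(\T_1,\Theta_1(\Theta_2(\T_2)))$ in two complementary ways: (a) via Fuchs--van de Graaf combined with the pure-target identity $F(\phi,\tau)=\langle\phi|\tau|\phi\rangle$, giving $F\ge (1-\epsilon_1)-\sqrt{\epsilon_2}$ and hence the first entry $(1-\epsilon_1-\sqrt{\epsilon_2})^{-1}$ of $f$; and (b) via the Bures-angle (equivalently purified-distance) triangle inequality, yielding $\sqrt{F}\ge \sqrt{1-\epsilon_2}-\sqrt{\epsilon_1}$ after an appropriate weakening, and hence the second entry $(\sqrt{1-\epsilon_2}-\sqrt{\epsilon_1})^{-2}$. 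Since both pointwise fidelity bounds have right-hand sides independent of $\psi$, they pass directly to the minimum over $\psi$ that defines the channel fidelity, and taking the tighter of the two followed by $\delta\to 0$ finishes the argument.

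The main obstacle I anticipate is the careful bookkeeping of superchannel composition: one must verify that $\Theta_1\circ\Theta_2$ sends the channel operator inequality through correctly, as a consequence of CP-preservation in the supermap sense, and that $\N=\Theta_1(\Theta_2(\M^*))$ is indeed free in the appropriate output resource set. A secondary subtlety is that the cleanest Bures-angle triangle inequality actually produces the sharper bound $\sqrt{(1-\epsilon_1)(1-\epsilon_2)}-\sqrt{\epsilon_1\epsilon_2}$, so one has to verify that this dominates $\sqrt{1-\epsilon_2}-\sqrt{\epsilon_1}$ throughout the regime $\epsilon_2\in[0,1-\epsilon_1]$ assumed in the lemma in order to justify the clean stated form.
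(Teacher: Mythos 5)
Your argument is correct, but it takes a genuinely different route from the paper's. The paper factors the chain through one-shot entropic monotones evaluated on the intermediate channel $\E$: it shows $D_{\min,\OO}(\T_1)\leq D_{H,\OO}^{\epsilon_1}(\E)\leq D_{\max,\OO}^{\epsilon_2}(\E)+\log f(\epsilon_1,\epsilon_2)\leq D_{\max,\OO}(\T_2)+\log f(\epsilon_1,\epsilon_2)$, where the outer inequalities are monotonicity statements for the smoothed measures under free superchannels (quoted from prior work) and the middle one is the state-level Datta and Anshu--Berta--Hayashi relations between $D_H^{\epsilon_1}$ and $D_{\max}^{\epsilon_2}$ lifted to channels via a max--min argument. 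You instead work at the operator level: you push the $D_{\max}$ witness $2^\mu\M^*-\T_2\in\mathrm{CP}$ through the composed superchannel $\Theta_1\circ\Theta_2$ (valid, since superchannels preserve complete positivity and the composition maps $\OO$ into the free set relevant to $\T_1$), sandwich with the pure output of $\T_1$ to get $D_{\min,\OO}(\T_1)\leq\mu-\log F(\T_1,\Theta_1(\Theta_2(\T_2)))$, and then control the remaining fidelity with two triangle-type arguments. Both of your fidelity bounds check out: the Fuchs--van de Graaf route gives $1-\epsilon_1-\sqrt{\epsilon_2}$ exactly as in the paper's Lemma on $D_H^{\epsilon_1}$ versus $D_{\max}^{\epsilon_2}$, and the Bures-angle bound $\sqrt{(1-\epsilon_1)(1-\epsilon_2)}-\sqrt{\epsilon_1\epsilon_2}$ does dominate $\sqrt{1-\epsilon_2}-\sqrt{\epsilon_1}$ throughout $\epsilon_2\in[0,1-\epsilon_1]$ (write $\sqrt{\epsilon_1}=\sin\alpha$, $\sqrt{1-\epsilon_2}=\sin\beta$; the claim becomes $\sin(\beta-\alpha)\geq\sin\beta-\sin\alpha$ for $0\leq\alpha\leq\beta\leq\pi/2$, which holds). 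What each approach buys: the paper's version isolates the intermediate quantities $D_{H,\OO}^{\epsilon_1}(\E)$ and $D_{\max,\OO}^{\epsilon_2}(\E)$, which is exactly what is reused in the asymptotic Theorem~\ref{thm:asymptotic}; your version is more self-contained and elementary, avoiding the smoothing and max--min manipulations, but it does not produce those intermediate one-shot quantities, so it would need to be reworked to yield the regularized statements.
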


A natural question following these results is whether they smoothly connect to an asymptotic yield--cost relation.
In asymptotic distillation (and analogously for dilution), the goal of the tasks is commonly to obtain as many copies of a fixed reference channel as possible from multiple copies of the given channel, in such a way that the transformation error vanishes in the limit of infinite copies. The figure of merit is then the ratio of the number of obtained copies of the reference channel to the number of used copies of the given channel.
The following result provides a relation between the optimal rates for the asymptotic distillation and dilution in a more general setting, in which the errors do not necessarily approach zero in the asymptotic limit.

\begin{theorem}\label{thm:asymptotic}
Let $\E$ be an arbitrary input channel, and let $\T$ be some target reference channel for which $\idc \otimes \T(\psi)$ is pure for every pure state $\psi$. 
Let $d$ be any rate of distillation such that there exists a sequence $\{\Theta_n\}_n$ of free superchannels with
\begin{equation}\begin{aligned}
  1 - F\left(\Theta_n(\E^{\otimes n}), \T^{\otimes \floor{d n}}\right) \eqqcolon \delta_n.
\end{aligned}\end{equation}
Also, let $c$ be any rate of dilution such that there exists a sequence $\{\Theta_n\}_n$ of free superchannels with
\begin{equation}\begin{aligned}
  1 - F\left(\Theta_n(\T^{\otimes \ceil{c n}}), \E^{\otimes n}\right) \eqqcolon \epsilon_n.
\end{aligned}\end{equation}

Suppose that the following conditions are satisfied:
\begin{enumerate}[(i)]
\item It holds that $\liminf_{n \to \infty} \epsilon_n + \delta_n < 1$; 
\item The resource theory is closed under tensor products; that is, $\M_1, \M_2 \in \OO \Rightarrow \M_1 \otimes \M_2 \in \OO$.
\end{enumerate}
Then the following inequality holds
\begin{equation}\begin{aligned}
  d \cdot D^\infty_{\min,\OO}(\T) \leq
  c\cdot D^\infty_{\max,\OO}(\T),
\end{aligned}\end{equation}
where
\begin{equation}\begin{aligned}
  D^\infty_{\min,\OO}(\T) &\coloneqq \lim_{n\to\infty}\frac{1}{n} D_{\min}\left(\T^{\otimes n}\right),\\
  D^\infty_{\max,\OO}(\T) &\coloneqq \lim_{n\to\infty}\frac{1}{n} D_{\max}\left(\T^{\otimes n}\right).
\end{aligned}\end{equation}
\end{theorem}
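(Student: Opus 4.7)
The plan is to apply Lemma~\ref{lem:second law} directly at the $n$-fold level: for each $n$ I would take $\T_1 = \T^{\otimes \floor{dn}}$ together with the distillation superchannel $\Theta_n$ and error $\epsilon_1 = \delta_n$, and $\T_2 = \T^{\otimes \ceil{cn}}$ together with the dilution superchannel and error $\epsilon_2 = \epsilon_n$. Tensor products of a pure $\T$ remain pure on every pure input, so both $\T_1$ and $\T_2$ satisfy the purity hypothesis of Lemma~\ref{lem:second law}. Condition (i) supplies a subsequence $\{n_k\}$ along which $\delta_{n_k} + \epsilon_{n_k}$ stays strictly below $1$, which ensures that the admissibility constraints $\epsilon_1 \in [0,1)$ and $\epsilon_2 \in [0, 1-\epsilon_1]$ of the lemma are satisfied for all sufficiently large $k$.

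Applying the lemma along this subsequence yields
\begin{equation*}
  D_{\min,\OO}\bigl(\T^{\otimes \floor{d n_k}}\bigr) \leq D_{\max,\OO}\bigl(\T^{\otimes \ceil{c n_k}}\bigr) + \log f(\delta_{n_k}, \epsilon_{n_k}).
\end{equation*}
I would then divide by $n_k$ and rewrite the entropic terms as $\bigl(\floor{d n_k}/n_k\bigr) \cdot D_{\min,\OO}(\T^{\otimes \floor{d n_k}})/\floor{d n_k}$ on the left and analogously on the right, so that the prefactors tend to $d$ and $c$ while the normalized entropic quantities converge by definition to $D^\infty_{\min,\OO}(\T)$ and $D^\infty_{\max,\OO}(\T)$. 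The uniform bound $\delta_{n_k} + \epsilon_{n_k} \leq 1-\eta$ along the subsequence implies that $f(\delta_{n_k}, \epsilon_{n_k})$ is bounded above by a fixed constant, so the error term $\log f / n_k$ vanishes in the limit, and the desired inequality $d \cdot D^\infty_{\min,\OO}(\T) \leq c \cdot D^\infty_{\max,\OO}(\T)$ drops out.

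The main technical subtlety I anticipate is that the regularized quantities are defined as limits along the full sequence $m \to \infty$, while the previous display only controls them along the particular subsequences $m = \floor{d n_k}$ and $m = \ceil{c n_k}$; cleanly identifying the subsequential limits with $D^\infty_{\min,\OO}(\T)$ and $D^\infty_{\max,\OO}(\T)$ requires that the defining limits genuinely exist and coincide with every subsequential limit. This is where I expect the tensor-product closedness in condition (ii) to enter: it supplies subadditivity such as $D_{\max,\OO}(\T^{\otimes m + m'}) \leq D_{\max,\OO}(\T^{\otimes m}) + D_{\max,\OO}(\T^{\otimes m'})$ by tensoring near-optimal free channels, whence Fekete's lemma guarantees a well-defined limit (and an analogous argument handles $D_{\min,\OO}$). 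The trivial edge cases $d = 0$ or $c = 0$, where $\floor{d n_k}$ or $\ceil{c n_k}$ does not tend to infinity, are disposed of separately; beyond that, the argument is a routine passage to the limit.
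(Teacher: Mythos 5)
Your proposal is correct and follows essentially the same route as the paper: the paper's proof is precisely the chain of inequalities constituting Lemma~\ref{lem:second law} applied at the $n$-fold level along a subsequence with $\epsilon_{n_k}+\delta_{n_k}$ bounded away from $1$, followed by division by $n_k$ and the observation that condition (ii) gives subadditivity of $D_{\min,\OO}$ and $D_{\max,\OO}$ on tensor powers, so Fekete's lemma makes the regularized limits exist and coincide with the subsequential limits. The error ordering $f(\delta_{n_k},\epsilon_{n_k})$, the boundedness of $f$ along the subsequence, and the prefactor limits $\floor{dn_k}/n_k \to d$, $\ceil{cn_k}/n_k \to c$ are all handled exactly as in the paper.
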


Let us briefly discuss the assumptions of Theorem~\ref{thm:asymptotic}. 
Condition (i) simply means that the errors of distillation and dilution do not get too large at the same time; it is satisfied, for instance, when we take $\lim_{n\to\infty} \epsilon_n = 0$ and $\liminf_{n \to\infty} \delta_n < 1$ or vice versa. Condition (ii) is a basic property obeyed by virtually all theories encountered in practice.

Although the result of Theorem~\ref{thm:asymptotic} is appealing since it follows straightforwardly from our one-shot relation in Theorem~\ref{thm:second law general}, one can ask whether it is possible to derive asymptotic relations that have better dependence on quantities such as $D^\infty_{\max,\OO}(\T)$ and $D^\infty_{\min,\OO}(\T)$, as well as are free from the constraint on the target channel $\T$, that is, the constraint that $\idc\otimes\T(\psi)$ is pure for every pure state $\psi$. 
We discuss several variations on this idea in Appendix~\ref{app:alternative asymptotic}, where we present other bounds that are potentially tighter and do not impose any condition on $\T$ but come with an additional condition on the sequences $\{\epsilon_n\}_n$ and $\{\delta_n\}_n$ of achievable errors  or require the computation of more complicated regularized quantities. For our purposes, it will be sufficient to use the condition of Theorem~\ref{thm:asymptotic}, and indeed we will see that the reliance on the max-/min-relative entropies is not a problem in most theories of practical interest.

The result of Theorem~\ref{thm:asymptotic} implies in particular the strong converse bounds given in Corollary~\ref{cor:strong converse} below, for the rates of resource manipulation tasks. Before stating this result, let us recall the definitions of the basic asymptotic quantities involved.
We define the asymptotic distillable resource $d_\SS^\infty$ as the largest achievable rate at which copies of $\T$ can be extracted from a given channel, and, analogously, the asymptotic resource cost $c_\SS^\infty$ as the least rate at which copies of $\T$ are needed to produce a given channel. Imposing that the error in such transformations vanishes in the asymptotic limit, we have
\bal\label{eq:def_distcost}
d_\SS^\infty(\E,\T)&\!\coloneqq\!\sup\lsetr\! d \sbarr\lim_{n\to\infty}\sup_{\Theta_n\in\SS}F\left(\Theta_n(\E^{\otimes n}),\T^{\otimes\lfloor dn \rfloor}\right)\!=\!1\!\rsetr \\
c_\SS^\infty(\E,\T)&\!\coloneqq\!\inf\lsetr\! c \sbarr\lim_{n\to\infty}\sup_{\Theta_n\in\SS}F\left(\Theta_n(\T^{\otimes \lceil cn \rceil}),\E^{\otimes n}\right)\!=\!1\!\rsetr.
\eal
As counterparts to the above asymptotic quantities, we define also the corresponding strong converse quantities as~\cite{Hayashi2016quantum,Khatri2020principles}
\bal
{\tilde d}_\SS^\infty(\E,\T)&\!\coloneqq\!\inf\lsetr\! d \sbarr\lim_{n\to\infty}\sup_{\Theta_n\in\SS}F\left(\Theta_n(\E^{\otimes n}),\T^{\otimes\lfloor dn \rfloor}\right)\!=\!0\!\rsetr \\
{\tilde c}_\SS^\infty(\E,\T)&\!\coloneqq\!\sup\lsetr\! c \sbarr\lim_{n\to\infty}\sup_{\Theta_n\in\SS}F\left(\Theta_n(\T^{\otimes \lceil cn \rceil}),\E^{\otimes n}\right)\!=\!0\!\rsetr ,
\eal
with the interpretation of the strong converse distillable resource as the smallest rate at which the error is guaranteed to converge to one in the asymptotic limit, and that for the strong converse dilution cost as the largest rate at which the error is guaranteed to converge to one.
The following equivalent expressions for the strong converse quantities hold:
\bal
{\tilde d}_\SS^\infty(\E,\T)  &\!=\!\sup\lsetr\! d \sbarr\limsup_{n\to\infty}\sup_{\Theta_n\in\SS}F\left(\Theta_n(\E^{\otimes n}),\T^{\otimes\lfloor dn \rfloor}\right)\!>\!0\!\rsetr \\
{\tilde c}_\SS^\infty(\E,\T) &\!=\!\inf\lsetr\! c \sbarr\limsup_{n\to\infty}\sup_{\Theta_n\in\SS}F\left(\Theta_n(\T^{\otimes \lceil cn \rceil}),\E^{\otimes n}\right)\!>\!0\!\rsetr,
\eal
which can alternatively be understood as weakening the requirements imposed on the transformation error in Eq.~\eqref{eq:def_distcost}.
Here the interpretation is as follows: unlike the quantities in \eqref{eq:def_distcost}, the strong converse quantities no longer ask that the error converges to zero, but represent the best rates at which the error does not converge to one --- they therefore provide a general threshold on achievable asymptotic transformations even when perfect conversion is not required.
Note that $d_\SS^\infty(\E,\T)\leq {\tilde d}_\SS^\infty(\E,\T)$ and $c_\SS^\infty(\E,\T)\geq {\tilde c}_\SS^\infty(\E,\T)$ by definition. 

Theorem~\ref{thm:asymptotic} implies the following strong converse bounds. 

\begin{corollary}\label{cor:strong converse}
Let $\E$ be an arbitrary input channel, and let $\T$ be some target reference channel for which $\idc \otimes \T(\psi)$ is pure for every pure state $\psi$.
Suppose that the following conditions are satisfied:
\begin{enumerate}[(i)]
\item The resource theory is closed under tensor products; that is, $\M_1, \M_2 \in \OO \Rightarrow \M_1 \otimes \M_2 \in \OO$;
\item The target channel satisfies $D_{
\min,\OO}(\T)=D_{\max,\OO}(\T)$ and $D_{\min,\OO}(\T^{\otimes n}) = n D_{\min,\OO}(\T)$ for every $n$.
\end{enumerate}
Then the following inequalities hold:
\begin{equation}\begin{aligned}
 d_\SS^\infty(\E,\T)\leq{\tilde c}_\SS^\infty(\E,\T),\quad {\tilde d}_\SS^\infty(\E,\T)\leq c_\SS^\infty(\E,\T).
 \label{eq:str-conv-bounds-yc}
\end{aligned}\end{equation}
\end{corollary}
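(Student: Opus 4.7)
The plan is to deduce both strong converse bounds directly from Theorem~\ref{thm:asymptotic}, using the extra assumptions of the corollary to force $D^\infty_{\min,\OO}(\T)=D^\infty_{\max,\OO}(\T)$; once this collapse is established, the inequality $d\cdot D^\infty_{\min,\OO}(\T)\le c\cdot D^\infty_{\max,\OO}(\T)$ supplied by Theorem~\ref{thm:asymptotic} simplifies to $d\le c$ whenever this common value is positive, and the two strong converse inequalities in Eq.~\eqref{eq:str-conv-bounds-yc} follow by appropriate choices of distillation and dilution sequences. The degenerate case $D_{\min,\OO}(\T)=0$ corresponds to $\T$ being free and is handled directly.

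The first technical step is to verify $D^\infty_{\min,\OO}(\T)=D^\infty_{\max,\OO}(\T)=D_{\min,\OO}(\T)$. The identity for the min-regularization is immediate from the additivity in condition (ii). For the max side, I would establish sub-additivity $D_{\max,\OO}(\T^{\otimes n})\le nD_{\max,\OO}(\T)$ via tensor-product closure: writing $c=D_{\max,\OO}(\T)$ and choosing an (approximately) optimal $\M\in\OO$ so that $2^c\M-\T$ is completely positive, the telescoping identity
\bal
2^{nc}\M^{\otimes n}-\T^{\otimes n}=\sum_{k=1}^{n}\T^{\otimes(k-1)}\otimes(2^c\M-\T)\otimes(2^c\M)^{\otimes(n-k)}
\eal
exhibits $2^{nc}\M^{\otimes n}-\T^{\otimes n}$ as a sum of CP maps, while condition (i) places $\M^{\otimes n}\in\OO$. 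Combined with the pointwise bound $D_{\min,\OO}\le D_{\max,\OO}$ and the single-copy collapse from (ii), this sandwiches $D_{\max,\OO}(\T^{\otimes n})$ between $nD_{\min,\OO}(\T)$ and $nD_{\max,\OO}(\T)=nD_{\min,\OO}(\T)$, so all three regularized quantities coincide with $D_{\min,\OO}(\T)$.

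For $d^\infty_\SS(\E,\T)\le\tilde c^\infty_\SS(\E,\T)$, I would pick any $d<d^\infty_\SS(\E,\T)$, for which the definition of $d^\infty_\SS$ furnishes a distillation sequence with $\delta_n\to 0$, together with any $c>\tilde c^\infty_\SS(\E,\T)$, for which the equivalent form of $\tilde c^\infty_\SS$ furnishes a dilution sequence with $\limsup_n F>0$, i.e.\ $\liminf_n\epsilon_n<1$. Since $\delta_n\to 0$ along the full sequence, $\liminf_n(\epsilon_n+\delta_n)=\liminf_n\epsilon_n<1$, so hypothesis (i) of Theorem~\ref{thm:asymptotic} is satisfied (its hypothesis (ii) is precisely condition (i) of the corollary), giving $d\le c$; passing to suprema/infima in $d$ and $c$ yields the first bound. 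The second bound follows by the symmetric argument: take a dilution sequence with $\epsilon_n\to 0$ and a distillation sequence with $\liminf_n\delta_n<1$ obtained from the equivalent form of $\tilde d^\infty_\SS$. The main subtlety I anticipate is reconciling Theorem~\ref{thm:asymptotic}'s joint condition on $\epsilon_n+\delta_n$ with the fact that the distillation and dilution sequences arise from possibly different superchannels; since one of the two error sequences converges to zero on the full sequence of $n$, the joint liminf is automatically dominated by the nontrivial liminf, so no coordination between the sequences is required.
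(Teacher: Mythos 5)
Your proof is correct and follows the route the paper intends: Corollary~\ref{cor:strong converse} is obtained by feeding Theorem~\ref{thm:asymptotic} with the collapse $D^\infty_{\min,\OO}(\T)=D^\infty_{\max,\OO}(\T)=D_{\min,\OO}(\T)$, which the paper justifies by the same subadditivity of $D_{\max,\OO}$ under tensor-product closure that your telescoping CP-decomposition makes explicit, and the pairing of a vanishing-error sequence on one side with a bounded-away-from-one error sequence on the other is exactly the intended use of condition (i) of that theorem. The one caveat is your claim that the degenerate case $D_{\min,\OO}(\T)=0$ is ``handled directly'': there the inequalities can genuinely fail (e.g.\ for free $\E$ and free $\T$ under $\SS_{\max}$ one has $d_\SS^\infty=\infty$ while ${\tilde c}_\SS^\infty=0$), so that case must simply be excluded, as the paper implicitly does by taking $\T$ to be a resourceful reference.
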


We remark here that the definitions of strong converse quantities and Corollary~\ref{cor:strong converse} imply the following fundamental inequality:
\bal
 d_\SS^\infty(\E,\T)\leq c_\SS^\infty(\E,\T).
\eal

Here, condition (i) is generally satisfied in physical theories, but condition (ii) imposes non-trivial requirements on the choice of a suitable target channel $\T$. In particular, in addition to the collapse of the max- and min-relative entropy measures to the same value, it requires that the min-relative entropy measure is additive, i.e.\ $D_{\min,\OO}(\T^{\otimes n}) = n D_{\min,\OO}(\T)$. Although this constitutes a seemingly restrictive condition, the existence of channels or states satisfying this requirement is actually a common property of important resource theories, obeyed, e.g., by quantum entanglement, quantum thermodynamics, magic, and quantum communication (cf.~\cite[Table~1]{Regula2021fundamental}). Intuitively, the condition can be thought of as characterizing how well the given reference $\T$ serves as an intermediary channel in distillation and dilution protocols.

We stress that all the results in this section automatically provide bounds for state theories as special cases; when one is interested in a resource theory of quantum states with free states $\FF$ and free operations $\OO$, the same results hold by replacing free superchannels $\SS$ with free operations $\OO$ and free channels $\OO$ with free states $\FF$.  
Furthermore, in the case of state transformations, the generalized quantum Stein's lemma~\cite{Brandao2010reversible,Brandao2010generalization} implies that the asymptotic distillable resource coincides with its strong converse rate when the set of free states and the reference state satisfy some mild assumptions --- see Appendix~\ref{app:strong converse} for details.
When the generalized quantum Stein's lemma holds, for the asymptotic distillation and dilution for an arbitrary state $\rho$ with respect to a reference state $\Phi$ under free operations $\OO$, we can improve the bound in Corollary~\ref{cor:strong converse} to
\bal
 {\tilde d}_\OO^\infty(\rho,\Phi)\leq{\tilde c}_\OO^\infty(\rho,\Phi).
\label{eq:two-sided strong converse}
\eal
This applies in particular to the resource theory of quantum entanglement, which solves an open problem posed in Ref.~\cite{Wilde2021second}. The implication of \eqref{eq:two-sided strong converse} is that the yield--cost relation still holds even when neither of the errors is required to vanish in the asymptotic limit, which gives a significant strengthening of the inequalities in \eqref{eq:str-conv-bounds-yc}.

We remark that another approach to asymptotic transformation rates of states and the consequent relations between asymptotic resource yield and cost in general resource theories was considered in Ref.~\cite{Kuroiwa2020general}. 

\section{Operational account for reference states}\label{sec:operational}

We saw that the relation between $D_{\min,\OO}$ and $D_{\max,\OO}$ plays a major role in establishing the yield--cost relation in general resource theories.  
Here, we give operational insights into the properties of these quantifiers and in particular the condition $D_{\min,\OO}(\T) = D_{\max,\OO}(\T)$ that we have encountered previously. 
By studying the interplay between these and other related measures, we will be able to quantify a number of smoothed resource monotones for target reference states in general resource theories.
In this section, we restrict our attention to theories of quantum states, and we use $\FF$ and $\OO$ to specify the set of free states and free operations, respectively. 
In what follows, we also assume that $\FF$ is a convex and closed set.

In addition to $D_{\min,\FF}$ and $D_{\max,\FF}$, we also consider another type of robustness measure known as the \emph{standard robustness}~\cite{Vidal1999robustness}, defined as
\bal
D_{s,\FF}(\rho)\coloneqq\inf\lset \log(1+s) \sbar \frac{\rho+s\tau}{1+s}\in\FF,\tau\in\FF\rset,
\label{eq:standard robustness def}
\eal

We also define two smooth robustness quantities as~\cite{Datta2009min,Brandao2010generalization}
\begin{align}
D_{s,\FF}^\epsilon(\rho)& \coloneqq \inf\lset D_{s,\FF}(\rho')\sbar F(\rho',\rho)\geq 1-\epsilon\rset,\\
D_{\max,\FF}^\epsilon(\rho)& \coloneqq  \inf\lset D_{\max,\FF}(\rho')\sbar F(\rho',\rho)\geq 1-\epsilon\rset,
\end{align}
and two types of hypothesis-testing relative entropy measures~\cite{Brandao2010generalization,Regula2018one-shot,Regula2020benchmarking}:
\bal
D_{H,\FF}^\epsilon(\rho)&\coloneqq \inf_{\sigma\in\FF}D_H^\epsilon(\rho\|\sigma),\\ D_{H,\aff(\FF)}^\epsilon(\rho)&\coloneqq \inf_{\sigma\in\aff(\FF)}D_H^\epsilon(\rho\|\sigma),
\eal
where
\begin{equation}
D_H^\epsilon(\rho\Vert\sigma)\coloneqq  \sup_{\substack{0\leq P \leq \id\\\Tr(P\rho)\geq 1-\epsilon}}\log \Tr(P\sigma)^{-1}
\label{eq:hypothesis-testing def}
\end{equation}
is the hypothesis-testing relative entropy \cite{Buscemi2010quantum,Brandao2011one-shot,Wang2012one-shot}.
We used $\aff(\FF)$ to denote the affine hull of $\FF$, which is the smallest affine subspace that contains $\FF$.
The min-relative entropy is obtained as a special case of the hypothesis-testing relative entropy measure as $D_{\min,\FF}(\rho)=D_{H,\FF}^{\epsilon=0}(\rho)$.
We use this correspondence to define the affine min-relative entropy measure~\cite{Regula2020benchmarking,Regula2020oneshot} as 
\bal
 D_{\min,\aff(\FF)}(\rho)& \coloneqq D_{H,\aff(\FF)}^{\epsilon=0}(\rho)\\
 & = \inf_{\sigma\in\aff(\FF)}D_{H}^{\epsilon=0}(\rho\|\sigma) .
\eal
Then, the following ordering holds for an arbitrary state $\rho$:  
\bal
 D_{\min,\aff(\FF)}(\rho)\leq D_{\min,\FF}(\rho)\leq D_{\max,\FF}(\rho)\leq D_{s,\FF}(\rho),
\eal
where the first inequality follows because $\FF\subseteq\aff(\FF)$, the second inequality  because $D_{\min}(\rho\|\sigma)\leq D_{\max}(\rho\|\sigma)$ for all states $\rho$ and $\sigma$, and the third inequality  from the definitions of the generalized and standard robustness measures.
Depending on the structure of $\FF$, some of these measures may exhibit drastic behavior. 
$D_{\min,\FF}$ and $D_{\max,\FF}$ take finite values for every state $\rho$ as long as $\FF$ contains at least one full-rank state, which is satisfied by most of the relevant theories. 
However, $D_{s,\FF}$ may diverge if $\FF$ is \emph{reduced dimensional}~\cite{Regula2020benchmarking}, meaning that $\FF$ has zero volume in the set of all states and ${\rm span}(\FF)$ is not the whole space of self-adjoint operators. 
Examples of reduced-dimensional theories include the theories of coherence and thermal non-equilibrium. 
On the other hand, for \emph{full-dimensional} theories, which are not reduced dimensional and include the theories of entanglement and magic as examples, 
$D_{s,\FF}$ remains finite but $D_{\min,\aff(\FF)}$ takes the value zero for every state $\rho$.
Therefore, $D_{s,\FF}$ is usually a relevant resource quantifier in full-dimensional theories while $D_{\min,\aff(\FF)}$ is relevant in reduced-dimensional theories~\cite{Regula2020benchmarking,Regula2020oneshot}. 

Although these measures have been introduced in a rather abstract manner, they play crucial roles in the quantitative characterization of distillation and dilution --- under certain assumptions on the target state $\Phi$, the value of the one-shot resource yield $d^\epsilon_{\OO}(\rho)$ under the maximal set of free operations is directly related to $D^\epsilon_{H,\FF}(\rho)$ or $D^\epsilon_{H,\aff(\FF)}(\rho)$, while the value of the resource cost $c^\epsilon_{\OO}(\rho)$ corresponds to the value of $D^\epsilon_{s,\FF}(\rho)$ or $D^\epsilon_{\max,\FF}(\rho)$~\cite{Liu2019one-shot,Regula2020benchmarking}.
A necessary requirement for a precise description of distillation or dilution to be possible was that, when evaluated on the reference state $\Phi$, the resource measures all collapse to the same value; specifically, in full-dimensional theories one needs that
\bal\label{eq:golden_req_full}
 D_{\min,\FF}(\Phi) = D_{s,\FF}(\Phi),
\eal
while in reduced-dimensional theories one instead requires
\bal\label{eq:golden_req_reduced}
 D_{\min,\aff(\FF)}(\Phi) = D_{\max,\FF}(\Phi).
\eal
In some theories, such as entanglement or coherence, the existence of such states is natural: the maximally entangled or coherent states always satisfy the requirement. It is rather remarkable that maximally resourceful states satisfy $D_{\min,\FF}(\Phi) = D_{\max,\FF}(\Phi)$ in \emph{any} convex resource theory~\cite{Regula2020benchmarking}, but this is still not sufficient to guarantee that Eqs.~\eqref{eq:golden_req_full}--\eqref{eq:golden_req_reduced} hold in general --- we expect this to be a resource-dependent property which needs to be verified explicitly in each specific setting.

Here, we find that Eqs.~\eqref{eq:golden_req_full}--\eqref{eq:golden_req_reduced} can have operational implications, which then help evaluate other resource measures introduced in this section.
In particular, we  provide an understanding of the conditions \eqref{eq:golden_req_full}--\eqref{eq:golden_req_reduced} in terms of a twirling-like free operation.

\begin{lemma}\label{lem:resource to twirling}
If a state $\Phi$ satisfies $D_{\min,\FF}(\Phi)=D_{s,\FF}(\Phi)$ or $D_{\min,\aff(\FF)}(\Phi)=D_{\max,\FF}(\Phi)$, there exists a free operation $\Lambda\in\OO_{\max}$ defined by an operator $0\leq P^\star \leq \id$ with $\Tr[P^\star\Phi]=1$ of the form 
\bal
 \Lambda(\cdot)=\Tr[P^\star \cdot ]\Phi + \Tr[(\id-P^\star) \cdot ]\sigma^\star,
 \label{eq:resource to twirling}
\eal
such that $\sigma^\star$ is a state orthogonal to $\Phi$, satisfying $\Tr[P\sigma^\star]=\Tr[\Phi \sigma^\star]=0$.
When $D_{\min,\FF}(\Phi)=D_{s,\FF}(\Phi)$, one can further take $\sigma^\star$ to be a free state. 
\end{lemma}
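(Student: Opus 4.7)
The plan is to treat the two hypothesis cases separately: in each, I extract an explicit witness operator $P^\star$ and auxiliary state $\sigma^\star$ from the equality of two resource measures, and then verify by direct computation that $\Lambda(\cdot)=\Tr[P^\star\cdot]\Phi+\Tr[(\id-P^\star)\cdot]\sigma^\star$ preserves $\FF$. The common intuition is that the two equal quantities in each case have dual flavors---one a measurement/witness quantity, the other a decomposition/robustness quantity---so the equality simultaneously pins down a near-optimal witness and a near-optimal decomposition of $\Phi$ that must be mutually consistent.

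In the case $D_{\min,\FF}(\Phi)=D_{s,\FF}(\Phi)=:\log(1+s^*)$, I would first use closedness of $\FF$ to fix optimizers $\sigma^*,\tau^*\in\FF$ of the standard robustness satisfying $(1+s^*)\sigma^*=\Phi+s^*\tau^*$. The equality with $D_{\min,\FF}$ gives $\sup_{\sigma\in\FF}\Tr[\Pi_\Phi\sigma]=1/(1+s^*)$, and evaluating this bound on $\sigma^*$ via the decomposition forces $\Tr[\Pi_\Phi\tau^*]=0$, so $\tau^*$ is a free state orthogonal to $\Phi$. Setting $P^\star=\Pi_\Phi$ and $\sigma^\star=\tau^*$ makes $\Lambda$ manifestly CP and trace preserving, and for any $\sigma\in\FF$ the output $\Lambda(\sigma)=q\Phi+(1-q)\tau^*$ with $q=\Tr[\Pi_\Phi\sigma]\in[0, q^*]$ and $q^*=1/(1+s^*)$ can be rewritten as $(q/q^*)\sigma^*+(1-q/q^*)\tau^*$, a convex combination of two free states, so $\Lambda(\sigma)\in\FF$.

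In the case $D_{\min,\aff(\FF)}(\Phi)=D_{\max,\FF}(\Phi)=:\log\lambda^*$, I would instead invoke the SDP-dual characterization of $D_{\min,\aff(\FF)}$ to obtain a feasible witness $P^\star$ with $0\leq P^\star\leq\id$, $\Tr[P^\star\Phi]=1$, and $\Tr[P^\star\sigma]=1/\lambda^*$ for every $\sigma\in\aff(\FF)$. From the $D_{\max,\FF}$ primal I take $\sigma^*\in\FF$ saturating $\Phi\leq\lambda^*\sigma^*$ and set $\sigma^\star=(\lambda^*\sigma^*-\Phi)/(\lambda^*-1)$, which is a valid quantum state; a one-line computation then gives $\Tr[P^\star\sigma^\star]=0$, and using $P^\star\geq\Pi_\Phi$ forces $\Tr[\Phi\sigma^\star]=0$. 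Every $\sigma\in\FF$ is consequently sent by $\Lambda$ to $\tfrac{1}{\lambda^*}\Phi+(1-\tfrac{1}{\lambda^*})\sigma^\star=\sigma^*\in\FF$, so $\Lambda\in\OO_{\max}$.

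The main technical obstacle will be securing the witness $P^\star$ in the second case: one needs a dual formulation of $D_{\min,\aff(\FF)}(\Phi)$ that both attains its optimum and produces a $P^\star$ whose expectation is constant on $\aff(\FF)$. Without this affine-constancy property, the inner infimum over $\sigma\in\aff(\FF)$ would be unbounded below and the quantity would lose its meaning. I would resolve this by appealing to strong SDP duality together with the standing convexity and closedness of $\FF$, which simultaneously ensure attainability and pin the witness value to $1/\lambda^*$ via the equality hypothesis. Once $P^\star$ is in hand, the remaining construction and the verification that $\Lambda$ is free reduce to the short algebraic checks above.
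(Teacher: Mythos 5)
Your proposal is correct and follows essentially the same route as the paper's proof: in both cases you extract the optimal witness $P^\star$ (the support projector $\Pi_\Phi$ in the first case; in the second, a witness whose expectation is constant on $\aff(\FF)$, justified by the same unboundedness-on-the-affine-hull argument the paper uses) together with the optimal robustness decomposition, verify that $\Lambda$ is free by mapping every free state into the segment between the optimal free state and $\sigma^\star$, and then deduce the orthogonality relations. The only cosmetic difference is that you obtain $\Tr[P^\star\sigma^\star]=0$ by evaluating the witness directly on the robustness decomposition, whereas the paper derives it by applying $\Lambda$ to the optimal free state $\tau$ and invoking the definition of the robustness; both are short algebraic checks.
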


This free operation possesses a property similar to well-known group twirling operations such as the isotropic twirling $\int U \otimes U^* (\cdot) U^\dagger \otimes U^{* \dagger} \,\mathrm{d}U$ in entanglement theory~\cite{Horodecki1999reduction}, in that it maps states to the given reference state or its complement, while stabilizing a specific state $\Phi$.
Such twirling operations were found to be useful to evaluate several entanglement measures for states invariant under them~\cite{Vollbrecht2001entanglement}.
In particular, the existence of such an operation allows us to obtain exact expressions for smoothed resource measures, which \textit{a priori} require a non-trivial optimization over all states within an error $\epsilon$.

\begin{proposition}\label{pro:smooth measures}
If a state $\Phi$ satisfies $D_{\min,\FF}(\Phi)=D_{s,\FF}(\Phi)=:r$, then for every $\epsilon \in [0,1)$ it holds that
\bal
 D_{H,\FF}^\epsilon(\Phi) = r + \log\frac{1}{1-\epsilon}
 \eal
 and
 \bal
 D_{\max,\FF}^\epsilon(\Phi) = D_{s,\FF}^\epsilon(\Phi) = \max\left\{r - \log\frac{1}{1-\epsilon},0\right\}.
 \eal
Similarly, if a state $\Phi$ satisfies $D_{\min,\aff(\FF)}(\Phi)=D_{\max,\FF}(\Phi)=:r$, then 
\bal
 D_{H,\aff(\FF)}^\epsilon(\Phi)=D_{H,\FF}^\epsilon(\Phi) = r + \log\frac{1}{1-\epsilon}
 \eal
 and
 \bal
 D_{\max,\FF}^\epsilon(\Phi) = \max\left\{r - \log\frac{1}{1-\epsilon},0\right\}.
 \eal 
 \end{proposition}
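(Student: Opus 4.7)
The plan centers on Lemma~\ref{lem:resource to twirling}: the operator $P^\star$ and the associated free operation $\Lambda$ serve as a universal witness that lets us sandwich each smoothed divergence between matching upper and lower bounds. Lower bounds will be obtained by plugging $P^\star$ (or a rescaling of it) into the variational formulas for the divergences, while upper bounds will come from constructing explicit smoothed states $\rho'$ close to $\Phi$ in fidelity whose resource content can be read off directly. Observe first that in both cases of the proposition the chain $D_{\min,\aff(\FF)}\le D_{\min,\FF}\le D_{\max,\FF}\le D_{s,\FF}$ collapses at $\Phi$ to the common value $r$, so the decomposition $\Phi\le 2^r\sigma_0$ with $\sigma_0\in\FF$ is available throughout, and in the first case one additionally has $\Phi=2^r\sigma_0-(2^r-1)\tau_0$ with both $\sigma_0,\tau_0\in\FF$ from the definition of $D_{s,\FF}$.

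For the lower bounds, I would first show that $\Tr[P^\star\sigma]\le 2^{-r}$ for every $\sigma\in\aff(\FF)$. Since $\Lambda$ is linear and maps $\FF$ into itself, it preserves affine combinations and sends $\aff(\FF)$ into itself; a direct calculation using $\Tr[\Phi\sigma^\star]=0$ yields $\Tr[\Pi_\Phi\Lambda(\sigma)]=\Tr[P^\star\sigma]$, and the bound then follows from $D_{\min,\aff(\FF)}(\Phi)=r$. Inserting the admissible test operator $P=(1-\epsilon)P^\star$ (for which $\Tr[P\Phi]=1-\epsilon$) into the variational formula~\eqref{eq:hypothesis-testing def} gives $D_H^\epsilon(\Phi\|\sigma)\ge r+\log\frac{1}{1-\epsilon}$ for every $\sigma$ in $\FF$ or $\aff(\FF)$, establishing the lower bounds on $D_{H,\FF}^\epsilon(\Phi)$ and $D_{H,\aff(\FF)}^\epsilon(\Phi)$. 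For $D_{\max,\FF}^\epsilon$, I would apply the data-processing inequality for fidelity under the binary measurement $\{P^\star,\id-P^\star\}$: since $\Tr[P^\star\Phi]=1$, the post-measurement classical fidelity reduces to $F(\rho',\Phi)\le \Tr[P^\star\rho']$, so $F(\rho',\Phi)\ge 1-\epsilon$ forces $\Tr[P^\star\rho']\ge 1-\epsilon$; combining with the implication $\rho'\le 2^\lambda\sigma\Rightarrow \Tr[P^\star\rho']\le 2^\lambda\Tr[P^\star\sigma]$ then yields $2^\lambda\ge (1-\epsilon)2^r$, and hence $D_{\max,\FF}^\epsilon(\Phi)\ge \max\{r-\log\frac{1}{1-\epsilon},0\}$.

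For the upper bounds I would split into the two regimes $(1-\epsilon)2^r\gtrless 1$. In the large-$r$ regime, take $\rho'\coloneqq(1-\epsilon)\Phi+\epsilon\tau_0$, where $\tau_0\in\FF$ in the first case and $\tau_0$ is simply the normalization of the positive operator $2^r\sigma_0-\Phi$ in the second case; a direct rearrangement yields the identity
\begin{equation}
 \rho'=(1-\epsilon)2^r\sigma_0-\bigl[(1-\epsilon)2^r-1\bigr]\tau_0,
\end{equation}
so $\rho'\le(1-\epsilon)2^r\sigma_0$ and thus $D_{\max,\FF}(\rho')\le r-\log\frac{1}{1-\epsilon}$; the same identity witnesses $D_{s,\FF}(\rho')\le r-\log\frac{1}{1-\epsilon}$ in the first case since there $\tau_0\in\FF$. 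In the small-$r$ regime the coefficient of $\tau_0$ becomes non-positive, so in the first case $\rho'$ is already a convex combination of $\sigma_0,\tau_0\in\FF$ and hence free ($D_{s,\FF}(\rho')=0$), while in the second case I would instead use $\rho'=\sigma_0$, which is free and satisfies $F(\sigma_0,\Phi)=\bra{\phi}\sigma_0\ket{\phi}\ge 2^{-r}\ge 1-\epsilon$ for pure $\Phi$ via $\Phi\le 2^r\sigma_0$. For $D_{H,\FF}^\epsilon(\Phi)$, a single line suffices: taking $\sigma_0$ as the witness in the infimum over $\FF$, the bound $\Phi\le 2^r\sigma_0$ implies $\Tr[P\sigma_0]\ge \Tr[P\Phi]/2^r\ge(1-\epsilon)/2^r$ for every admissible $P$, yielding $D_{H,\FF}^\epsilon(\Phi)\le r+\log\frac{1}{1-\epsilon}$, and the same bound for $D_{H,\aff(\FF)}^\epsilon$ is inherited from $\FF\subseteq\aff(\FF)$.

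The main technical point I expect to need care with is the identity $\Tr[\Pi_\Phi\Lambda(\sigma)]=\Tr[P^\star\sigma]$ for $\sigma\in\aff(\FF)$ in the second case of Lemma~\ref{lem:resource to twirling}, where $\sigma^\star$ is only orthogonal to $\Phi$ and need not itself be free --- the orthogonality is precisely what kills the $\sigma^\star$ term upon projecting onto $\Pi_\Phi$, and it is essential that $\Lambda$ maps $\aff(\FF)$ rather than merely $\FF$ into a controllable set. Verifying the fidelity constraint $F(\rho',\Phi)\ge 1-\epsilon$ for the constructions above is immediate when $\Phi$ is pure (the reference states of interest in the applications of Section~\ref{sec:operational}) via $F(\rho',\Phi)=\bra{\phi}\rho'\ket{\phi}$, and otherwise follows from joint concavity of fidelity; matching the two branches of the $\max\{\cdot,0\}$ with the regimes $(1-\epsilon)2^r\gtrless 1$ is then a straightforward case distinction.
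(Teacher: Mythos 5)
Your proof is correct, but it takes a genuinely different route from the paper's. The paper proves this proposition as the $\kappa=1$ special case of a more general computation (Proposition~\ref{pro:smooth measures full}): it uses the full channel $\Lambda$ from Lemma~\ref{lem:resource to twirling} together with Lemma~\ref{lem:measure restriction} to argue, via data processing and the invariance of $\Phi$ and of the segment $\tilde\FF=\{\alpha\Phi+(1-\alpha)\sigma^\star\}$ under $\Lambda$, that all optimizations (over free states, over test operators $Q$ via $\Lambda^\dagger$, and over smoothing states $\rho'$ via $\Lambda$) can be restricted to the two-dimensional $\Phi$--$\sigma^\star$ block, after which everything reduces to an explicit classical optimization in the parameters $(\alpha,\eta,\lambda)$. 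You instead extract only the witness operator $P^\star$ and the robustness decomposition $\Phi=2^r\sigma_0-(2^r-1)\tau_0$, and prove matching one-sided bounds: the feasible test $(1-\epsilon)P^\star$ and the measurement data-processing inequality $F(\rho',\Phi)\le\Tr[P^\star\rho']$ for the lower bounds, and the explicit states $(1-\epsilon)\Phi+\epsilon\tau_0$ and $\sigma_0$ for the upper bounds. Your argument is shorter and more elementary (it never needs Lemma~\ref{lem:measure restriction} or the channel $\Lambda$ as such), whereas the paper's reduction buys the formulas for the whole isotropic family $\Phi_\kappa$, which it reuses for Proposition~\ref{pro:isotropic exact} and the trace-distance measure.

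One presentational wobble worth fixing: your justification of $\Tr[P^\star\sigma]\le 2^{-r}$ for $\sigma\in\aff(\FF)$ via the identity $\Tr[\Pi_\Phi\Lambda(\sigma)]=\Tr[P^\star\sigma]$ is circular, since $D_{\min,\aff(\FF)}(\Phi)=r$ does not by itself bound $\Tr[\Pi_\Phi\eta]$ uniformly over $\eta\in\aff(\FF)$ ($\Pi_\Phi$ need not be the optimal affine test operator). The inequality you need is simply the defining property of $P^\star$ as constructed in the \emph{proof} of Lemma~\ref{lem:resource to twirling} (where in fact $\Tr[P^\star\eta]$ is shown to be constant equal to $2^{-r}$ on $\FF$, hence on $\aff(\FF)$ by affinity), so the step is valid once cited correctly. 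The remaining details --- the fidelity checks via concavity of $F$ in its first argument and the case split at $(1-\epsilon)2^r\gtrless 1$ --- all go through as you describe.
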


A crucial property which makes this result possible is that optimal states in the optimization of the smoothed resource measures can always be taken of the form $\Phi_\kappa=\kappa\Phi+(1-\kappa)\sigma^\star$, as obtained through the operation in Lemma~\ref{lem:resource to twirling}. 
Such states share many useful properties with isotropic states of entanglement theory, and indeed several quantitative insights in the description of isotropic states in works such as Ref.~\cite{Vollbrecht2001entanglement} can be extended to general resource theories. In particular, the entropic resource quantifiers can be computed exactly for such states.
\begin{proposition}\label{pro:isotropic exact}
Suppose a state $\Phi$ satisfies $D_{\min,\FF}(\Phi)=D_{s,\FF}(\Phi)=:r$, and let $\sigma^\star$ be the state in \eqref{eq:resource to twirling}. 
Then,   
\bal
 D_{\min,\FF}(\Phi_\kappa)=\begin{cases}
 0 & 0\leq\kappa<1 \\
 r & \kappa=1,
 \end{cases}
\eal
and
\bal
 D_{\max,\FF}(\Phi_\kappa)&=D_{s,\FF}(\Phi_\kappa)\\
 &=\max\left\{r - \log\frac{1}{\kappa},0\right\}
\eal
for every $\Phi_\kappa=\kappa\Phi+(1-\kappa)\sigma^\star$ with $\kappa\in[0,1]$. 
Similarly, if a state $\Phi$ satisfies $D_{\min,\aff(\FF)}(\Phi)=D_{\max,\FF}(\Phi)=:r$, then 
\bal
 D_{\min,\aff(\FF)}(\Phi_\kappa)&=D_{\min,\FF}(\Phi_\kappa)\\
 &=\begin{cases}
 0 & 0<\kappa<1\\
\log\frac{1}{1-2^{-r}} &\kappa=0\\
 r& \kappa=1,
 \end{cases}
\eal
and
\bal
D_{\max,\FF}(\Phi_\kappa)&=\max\left\{r-\log\frac{1}{\kappa},\, \log\frac{1-\kappa}{1-2^{-r}} \right\}
\eal
for every $\Phi_\kappa=\kappa\Phi+(1-\kappa)\sigma^\star$ with $\kappa\in[0,1]$.
\end{proposition}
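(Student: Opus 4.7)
My plan is to exploit Lemma~\ref{lem:resource to twirling} and the observation that the twirling channel $\Lambda$ there \emph{stabilises} the one-parameter family, i.e.\ $\Lambda(\Phi_\kappa)=\Phi_\kappa$ for every $\kappa$. Indeed, $\Tr[P^\star\Phi]=1$ and $\Tr[P^\star\sigma^\star]=0$ imply $\Tr[P^\star\Phi_\kappa]=\kappa$, so $\Lambda(\Phi_\kappa)=\kappa\Phi+(1-\kappa)\sigma^\star=\Phi_\kappa$. Thus $\{\Phi_\kappa\}_{\kappa\in[0,1]}$ plays the role of an ``isotropic'' family in a general resource theory, in analogy with Ref.~\cite{Vollbrecht2001entanglement}.

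The first step is to locate which $\Phi_q$ lie in $\FF$. In Case~1, I apply $\Lambda$ to a standard-robustness optimiser $\sigma_0=\frac{\Phi+(2^r-1)\tau^*}{2^r}\in\FF$ of $D_{s,\FF}(\Phi)=r$. Since $\Lambda$ is free, $\Lambda(\sigma_0)=\Phi_{q^*}\in\FF$ with $q^*=\frac{1+(2^r-1)\Tr[P^\star\tau^*]}{2^r}\geq 2^{-r}$; moreover, any strict inequality $q^*>2^{-r}$ would let me write $\Phi=\frac{1}{q^*}\Phi_{q^*}-\frac{1-q^*}{q^*}\sigma^\star$ and hence produce a standard-robustness decomposition of $\Phi$ of weight $\log(1/q^*)<r$, contradicting the hypothesis. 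So $q^*=2^{-r}$ and $\Phi_{2^{-r}}\in\FF$; convexity with $\sigma^\star\in\FF$ (guaranteed by the lemma in Case~1) gives $\Phi_q\in\FF$ for every $q\in[0,2^{-r}]$, and the same contradiction argument rules out $q>2^{-r}$. In Case~2, the condition $D_{\min,\aff(\FF)}(\Phi)=r$ forces $\Tr[P^\star\sigma]=2^{-r}$ to be \emph{constant} on $\aff(\FF)$ (otherwise the affine infimum would be $-\infty$), so $\Lambda(\sigma)=\Phi_{2^{-r}}$ for every $\sigma\in\aff(\FF)$; hence $\Phi_{2^{-r}}\in\FF$ and no other $\Phi_q$ can be free.

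Second, I compute the min-relative entropies. For $\kappa\in(0,1)$ the support satisfies $\supp(\Phi_\kappa)=\supp(\Phi)\oplus\supp(\sigma^\star)$, so a free state supported there gives $\Tr[\Pi_{\Phi_\kappa}\sigma]=1$: take $\sigma=\sigma^\star$ in Case~1 (valid already for $\kappa\in[0,1)$), or $\sigma=\Phi_{2^{-r}}$ in Case~2. The boundary cases $\kappa=1$ follow from the hypothesis $D_{\min,\FF}(\Phi)=r$, and for $\kappa=0$ in Case~2 the witness $\id-P^\star$ (which obeys $P^\star\leq\id-\Pi_{\sigma^\star}$ because $\Tr[P^\star\sigma^\star]=0$) yields $\Tr[\Pi_{\sigma^\star}\sigma]\leq\Tr[(\id-P^\star)\sigma]=1-2^{-r}$ for every $\sigma\in\FF$, with equality attained by $\Phi_{2^{-r}}$, so $D_{\min,\FF}(\sigma^\star)=\log(1-2^{-r})^{-1}$. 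For the equality $D_{\min,\aff(\FF)}(\Phi_\kappa)=D_{\min,\FF}(\Phi_\kappa)$ in Case~2, I invoke data processing of $D_H^{\epsilon=0}$ under $\Lambda$: since $\Lambda(\Phi_\kappa)=\Phi_\kappa$ and $\Lambda(\sigma)=\Phi_{2^{-r}}$ for every $\sigma\in\aff(\FF)$, one gets $D_H^0(\Phi_\kappa\|\sigma)\geq D_H^0(\Phi_\kappa\|\Phi_{2^{-r}})\geq D_{\min,\FF}(\Phi_\kappa)$, which sandwiches the affine quantity between $D_{\min,\FF}$ and itself.

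Third, I obtain $D_{\max,\FF}$ (and $D_{s,\FF}$ in Case~1) by matching upper and lower bounds. The upper bound in Case~1 for $\kappa\geq 2^{-r}$ follows from the direct identity
\bal
 \frac{\Phi_\kappa+(\kappa 2^r-1)\sigma^\star}{\kappa 2^r}=\Phi_{2^{-r}}\in\FF,
\eal
while $\Phi_\kappa\in\FF$ handles $\kappa\leq 2^{-r}$. In Case~2, comparing the orthogonal $\Phi$- and $\sigma^\star$-components of $\Phi_\kappa\leq\lambda\Phi_{2^{-r}}$ gives $\lambda_{\min}=\max\{\kappa 2^r,\tfrac{1-\kappa}{1-2^{-r}}\}$. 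For matching lower bounds I use the dual $D_{\max,\FF}(\rho)\geq\log\frac{\Tr[W\rho]}{\max_{\sigma\in\FF}\Tr[W\sigma]}$ for $W\geq 0$: the choice $W=P^\star$, combined with $\max_{\sigma\in\FF}\Tr[P^\star\sigma]=2^{-r}$ from Step~1, delivers $r+\log\kappa$ in both cases, and in Case~2 the additional choice $W=\id-P^\star$ with the constant value $\Tr[(\id-P^\star)\sigma]=1-2^{-r}$ delivers $\log\frac{1-\kappa}{1-2^{-r}}$. The hardest step will be establishing $\max_{\sigma\in\FF}\Tr[P^\star\sigma]=2^{-r}$ in Case~1, where $\Tr[P^\star\sigma]$ is not \textit{a priori} constant on $\FF$; the robustness-inversion argument outlined in Step~1 is the crucial input. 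Extending data processing of $D_H^0$ to the affine setting in Case~2 is a secondary technicality that follows from the standard SDP formulation once $\Lambda$ collapses $\aff(\FF)$ onto a single free state.
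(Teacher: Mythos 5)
Your proposal is correct, and it runs on the same engine as the paper's argument --- the twirling map of Lemma~\ref{lem:resource to twirling} and the fact that it fixes the whole family $\{\Phi_\kappa\}$ --- but the execution is genuinely different. The paper obtains Proposition~\ref{pro:isotropic exact} as the $\epsilon=0$ case of the more general Proposition~\ref{pro:smooth measures full}: it first proves a restriction lemma (Lemma~\ref{lem:measure restriction}) that uses data processing under $\Lambda$ to collapse the free-state optimization onto $\tilde\FF=\lset\Phi_q\sbar 0\leq q\leq 2^{-r}\rset$ (or onto the single state $\Phi_{2^{-r}}$ in the affine case), and then solves the resulting one- or two-parameter scalar programs. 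You instead characterize $\{q:\Phi_q\in\FF\}=[0,2^{-r}]$ directly and sandwich each measure between an explicit primal feasible point (the identity $\frac{\Phi_\kappa+(\kappa 2^{r}-1)\sigma^\star}{\kappa 2^{r}}=\Phi_{2^{-r}}$, or the block comparison against $\Phi_{2^{-r}}$) and a dual witness bound $D_{\max,\FF}(\rho)\geq\log\bigl(\Tr[W\rho]/\max_{\sigma\in\FF}\Tr[W\sigma]\bigr)$ with $W=P^\star$ and $W=\id-P^\star$. This certificate-matching route is a bit more elementary and avoids the restriction lemma for the robustness measures; what it gives up is that it does not immediately extend to the smoothed ($\epsilon>0$) quantities the paper also needs. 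Two points to tighten: (i) in Case~1 the fact $\max_{\sigma\in\FF}\Tr[P^\star\sigma]=2^{-r}$, which you flag as the hardest step, is actually immediate, since there $P^\star=\Pi_\Phi$ and $\sup_{\sigma\in\FF}\Tr[\Pi_\Phi\sigma]=2^{-D_{\min,\FF}(\Phi)}$ by definition of $D_{\min,\FF}$ (your robustness-inversion detour also works, but is not needed); (ii) your parenthetical justification of the constancy of $\Tr[P^\star\,\cdot\,]$ on $\aff(\FF)$ (``otherwise the affine infimum would be $-\infty$'') is not correct as stated, because $D_H^{0}\geq 0$ always via the witness $P=\id$ --- the right reason, which is exactly what the paper establishes in the proof of Lemma~\ref{lem:resource to twirling}, is that $\Tr[P^\star\,\cdot\,]$ is an affine functional bounded above by $2^{-r}$ on the affine set $\aff(\FF)$ and is therefore constant there. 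With those repairs the argument is complete.
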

We present a more complete discussion of the quantitative properties of the isotropic-like states $\Phi_\kappa$ in Appendix~\ref{app:isotropic}, where we also show how the smoothed entropic measures can be computed for this class of states.

These exact evaluations of the resource measures allow us to employ the argument in Ref.~\cite{Wilde2021second} to obtain an alternative bound to that given in Theorem~\ref{thm:second law general}.

\begin{theorem} \label{thm:second law general state}
 If the chosen reference set obeys $D_{\min,\FF}(\Phi) = D_{s,\FF}(\Phi) \; \forall \Phi \in \TT$ or $D_{\min,\aff(\FF)}(\Phi) = D_{\max,\FF}(\Phi) \; \forall \Phi \in \TT$, then for every set $\OO\subseteq\OO_{\max}$ of free operations and all $\epsilon_1,\epsilon_2\geq 0$ satisfying $\epsilon_1+\epsilon_2<1$, the following inequality holds
 \bal
  d_{\OO}^{\epsilon_1}(\rho)\leq c_{\OO}^{\epsilon_2}(\rho)+\log\frac{1}{1-\epsilon'},
 \eal
 where $\epsilon'\coloneqq \left(\sqrt{\epsilon_1(1-\epsilon_2)}+\sqrt{\epsilon_2(1-\epsilon_1)}\right)^2$.
\end{theorem}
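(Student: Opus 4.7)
The plan is to compose optimal distillation and dilution protocols, bound the fidelity of the composed map to $\Phi_1$ via the Bures-angle triangle inequality, and then invoke the exact smooth-measure evaluation of Proposition~\ref{pro:smooth measures} at $\Phi_1$.

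First, for arbitrary $\eta>0$ I would pick near-optimal achievers $(\Lambda_1,\Phi_1)$ of $d^{\epsilon_1}_\OO(\rho)$ and $(\Lambda_2,\Phi_2)$ of $c^{\epsilon_2}_\OO(\rho)$ so that
\begin{equation*}
F(\Lambda_1(\rho),\Phi_1)\geq 1-\epsilon_1,\qquad F(\Lambda_2(\Phi_2),\rho)\geq 1-\epsilon_2,
\end{equation*}
with $D_{\min,\FF}(\Phi_1)\geq d^{\epsilon_1}_\OO(\rho)-\eta$ and $D_{\min,\FF}(\Phi_2)\leq c^{\epsilon_2}_\OO(\rho)+\eta$. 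Applying $\Lambda_1$ to both arguments of the dilution fidelity and using data processing gives $F(\Lambda_1\!\circ\!\Lambda_2(\Phi_2),\Lambda_1(\rho))\geq 1-\epsilon_2$. Setting $\tilde\Phi_1\coloneqq\Lambda_1\!\circ\!\Lambda_2(\Phi_2)$ and $\theta(\cdot,\cdot)\coloneqq\arccos\sqrt{F(\cdot,\cdot)}$, the Bures-angle triangle inequality $\theta(\tilde\Phi_1,\Phi_1)\leq\theta(\tilde\Phi_1,\Lambda_1(\rho))+\theta(\Lambda_1(\rho),\Phi_1)$, together with $\cos(\theta_1+\theta_2)=\cos\theta_1\cos\theta_2-\sin\theta_1\sin\theta_2$ and the bounds $\cos\theta_i\geq\sqrt{1-\epsilon_i}$, $\sin\theta_i\leq\sqrt{\epsilon_i}$, yields (after squaring, using $\epsilon_1+\epsilon_2<1$ so that the intermediate expression is non-negative)
\begin{equation*}
F(\tilde\Phi_1,\Phi_1)\geq\bigl(\sqrt{(1-\epsilon_1)(1-\epsilon_2)}-\sqrt{\epsilon_1\epsilon_2}\bigr)^{2}=1-\epsilon',
\end{equation*}
the last equality being a direct algebraic identity against the definition of $\epsilon'$ in the statement.

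Second, I would note that each hypothesis of the theorem forces, on every $\Phi\in\TT$, the collapse $D_{\min,\FF}(\Phi)=D_{\max,\FF}(\Phi)$: from $D_{\min,\FF}\leq D_{\max,\FF}\leq D_{s,\FF}$ in the first case, or from $D_{\min,\aff(\FF)}\leq D_{\min,\FF}\leq D_{\max,\FF}$ in the second. Writing $r_1\coloneqq D_{\min,\FF}(\Phi_1)$, Proposition~\ref{pro:smooth measures} therefore gives the exact value
\begin{equation*}
D_{\max,\FF}^{\epsilon'}(\Phi_1)=\max\bigl\{r_1-\log\tfrac{1}{1-\epsilon'},\,0\bigr\}.
\end{equation*}
Since $\tilde\Phi_1$ lies within fidelity $1-\epsilon'$ of $\Phi_1$, the infimum defining $D_{\max,\FF}^{\epsilon'}$ gives $D_{\max,\FF}^{\epsilon'}(\Phi_1)\leq D_{\max,\FF}(\tilde\Phi_1)$. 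Monotonicity of $D_{\max,\FF}$ under any $\Lambda\in\OO\subseteq\OO_{\max}$ (immediate from its definition and the fact that free operations map $\FF$ to itself) yields $D_{\max,\FF}(\tilde\Phi_1)\leq D_{\max,\FF}(\Phi_2)$, and the same collapse now on $\Phi_2$ identifies $D_{\max,\FF}(\Phi_2)=D_{\min,\FF}(\Phi_2)\leq c^{\epsilon_2}_\OO(\rho)+\eta$.

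Chaining these four inequalities gives $r_1-\log\tfrac{1}{1-\epsilon'}\leq c^{\epsilon_2}_\OO(\rho)+\eta$, and sending $\eta\to 0$ together with $r_1\geq d^{\epsilon_1}_\OO(\rho)-\eta$ produces the claimed bound; the estimate is anyway trivial when $d^{\epsilon_1}_\OO(\rho)\leq\log\tfrac{1}{1-\epsilon'}$. The only genuinely non-trivial technical input is Proposition~\ref{pro:smooth measures} itself, which is already established; the rest is a short composition-and-monotonicity argument, and the main care is in matching the constants produced by the Bures-angle triangle inequality to the precise form of $\epsilon'$ declared in the statement.
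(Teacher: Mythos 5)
Your argument is correct, and its overall architecture --- compose a near-optimal dilution with a near-optimal distillation, bound the fidelity of the resulting $\TT\to\TT$ conversion by $1-\epsilon'$, and then invoke the exact evaluation of a smoothed measure on a reference state from Proposition~\ref{pro:smooth measures} --- is the same as the paper's. Your Bures-angle computation is exactly equivalent to the improved purified-distance triangle inequality of Ref.~\cite{Tomamichel2015quantum} that the paper uses, and your identity $1-\epsilon'=\bigl(\sqrt{(1-\epsilon_1)(1-\epsilon_2)}-\sqrt{\epsilon_1\epsilon_2}\bigr)^2$ gives a one-line verification that $\epsilon_1+\epsilon_2<1$ implies $\epsilon'<1$, which is cleaner than the paper's reparametrization argument. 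Where you genuinely diverge is the final chain: the paper smooths on the yield side, using Lemma~\ref{lem:converses} to get $D_{\min,\FF}(\Phi_1)\leq D_{H,\FF}^{\epsilon'}(\Phi_2)$ and then the formula $D_{H,\FF}^{\epsilon'}(\Phi_2)=D_{\min,\FF}(\Phi_2)+\log\frac{1}{1-\epsilon'}$, whereas you smooth on the cost side, combining $D_{\max,\FF}^{\epsilon'}(\Phi_1)\leq D_{\max,\FF}(\tilde\Phi_1)\leq D_{\max,\FF}(\Phi_2)=D_{\min,\FF}(\Phi_2)$ with the formula $D_{\max,\FF}^{\epsilon'}(\Phi_1)=\max\bigl\{r_1-\log\frac{1}{1-\epsilon'},0\bigr\}$. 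The two routes are dual and rest on the two halves of the same Proposition~\ref{pro:smooth measures}; yours has the mild advantage of needing only the elementary monotonicity of $D_{\max,\FF}$ under $\FF$-preserving maps rather than the converse bound of Lemma~\ref{lem:converses}, and the needed bookkeeping (either hypothesis squeezes $D_{\min,\aff(\FF)}=D_{\min,\FF}=D_{\max,\FF}$ on $\TT$, so the $r$ of Proposition~\ref{pro:smooth measures} equals your $r_1$ in both cases, and $r_1-\log\frac{1}{1-\epsilon'}$ is always bounded by the $\max$ so no case split is actually required) all checks out.
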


This improves the bound in Theorem~\ref{thm:second law general}, as it gives a tighter upper bound for all regions of $(\epsilon_1,\epsilon_2)$ with $\epsilon_1+\epsilon_2<1$, as we prove in Appendix~\ref{app:comparison bounds}. 
We also remark that this bound becomes essentially tight when $\OO=\OO_{\max}$, $\rho\in\TT$, and either $\epsilon_1$ or $\epsilon_2$ is 0. 
This can be shown by using $d_{\OO_{\max}}^{\epsilon_1}(\rho)=D_{H,\FF}^{\epsilon_1}(\rho)$ and $c_{\OO_{\max}}^{\epsilon_2}(\rho)=D_{s,\FF}^{\epsilon_2}(\rho)$ when $D_{\min,\FF}(\rho)=D_{s,\FF}(\rho)$, and $d_{\OO_{\max}}^{\epsilon_1}(\rho)=D_{H,\aff(\FF)}^{\epsilon_1}(\rho)$ and $c_{\OO_{\max}}^{\epsilon_2}(\rho)=D_{\max,\FF}^{\epsilon_2}(\rho)$ when $D_{\min,\aff(\FF)}(\rho)=D_{\max,\FF}(\rho)$ (up to some floor or ceiling due to a discrete structure of $\TT$)~\cite{Regula2020oneshot}, as well as explicitly evaluating these measures by Proposition~\ref{pro:smooth measures}.

So far we have employed the property of several resource measures evaluated for a reference state to show the existence of a twirling-like operation \eqref{eq:resource to twirling} and presented its applications. 
We now also investigate the opposite direction, asking whether the existence of a certain type of free operation provides insights into the relation between different resource measures. 
The following result addresses this question.

\begin{lemma}\label{lem:twirling to resource}
Let $\S(\Phi)$ be a set of channels stabilizing $\Phi$, defined as 
\bal
 \S(\Phi)\coloneqq \lset \Tr[P\cdot]\Phi + \tilde\Lambda(\cdot) \sbar 0\leq P \leq \id, \Tr[\Phi P]=1, \tilde\Lambda\in {\rm CP}\rset
 \label{eq:stabilizers}
\eal
where {\rm CP} is the set of completely positive maps. 
Then, if there exists a free operation $\Lambda\in\OO_{\max}\cap\S(\Phi)$, the equality
\bal
D_{\min,\FF}(\Phi)=D_{\max,\FF}(\Phi)
\eal
holds.
In addition, if $\Lambda$ is completely free, i.e., $\idc\otimes\Lambda\in\OO_{\max}$ where $\idc$ is the identity map on an arbitrary ancillary system, then
\bal
 D_{\min,\FF}(\Phi^{\otimes n})=D_{\max,\FF}(\Phi^{\otimes n}),\ \forall n.
\eal

Moreover, if $\tilde\Lambda$ can be taken as a free subchannel in \eqref{eq:stabilizers}, i.e., $\tilde\Lambda(\sigma)\propto\cone(\FF),\forall \sigma\in\FF$, then 
\bal
 D_{\min,\FF}(\Phi)=D_{s,\FF}(\Phi),
\eal
where $\cone(X)\coloneqq\lset\lambda x\sbar \lambda\geq 0,\ x\in X\rset$.
\end{lemma}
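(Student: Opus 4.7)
The strategy is uniform across the three claims: start from a state $\sigma^\star \in \FF$ achieving the infimum in $D_{\min,\FF}(\Phi)$ (so $\Tr[\Pi_\Phi \sigma^\star] = 2^{-D_{\min,\FF}(\Phi)}$), push it through the stabilizer channel $\Lambda$, and extract from the structure of the free image $\Lambda(\sigma^\star)$ an upper bound on $D_{\max,\FF}(\Phi)$ or $D_{s,\FF}(\Phi)$ in terms of $D_{\min,\FF}(\Phi)$. The single structural input I would establish first is that $\Tr[P\Phi] = 1$ combined with $0 \leq P \leq \id$ forces $P \geq \Pi_\Phi$; this is a short positivity argument starting from $\Tr[(\id - P)\Phi] = 0$ with $\id - P \geq 0$ and $\Phi \geq 0$, which forces $(\id - P)\sqrt{\Phi} = 0$ and hence the block decomposition $P = \Pi_\Phi + (\id - \Pi_\Phi) P (\id - \Pi_\Phi) \geq \Pi_\Phi$. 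Everything downstream is then essentially bookkeeping.

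For the first claim, the bound $P \geq \Pi_\Phi$ immediately gives $\Lambda(\sigma^\star) \geq \Tr[P \sigma^\star]\,\Phi \geq 2^{-D_{\min,\FF}(\Phi)}\,\Phi$. Thus $\Phi$ is dominated by $2^{D_{\min,\FF}(\Phi)}$ times the free state $\Lambda(\sigma^\star) \in \FF$ (using $\Lambda \in \OO_{\max}$), certifying $D_{\max,\FF}(\Phi) \leq D_{\min,\FF}(\Phi)$; the matching lower bound is universal.

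For the second claim, the only new ingredient is upgrading $\Lambda$ to its $n$-fold tensor power. I would first argue $\Lambda^{\otimes n} \in \OO_{\max}$ by writing it as the composition $\prod_{i=0}^{n-1}\bigl(\idc^{\otimes i} \otimes \Lambda \otimes \idc^{\otimes (n-1-i)}\bigr)$, each factor being free by complete freeness of $\Lambda$ and $\OO_{\max}$ being closed under composition. Expanding $\Lambda^{\otimes n}$ multinomially and retaining only the ``all-$P$'' term while dropping the manifestly positive remainder gives $\Lambda^{\otimes n}(\rho) \geq \Tr[P^{\otimes n}\rho]\,\Phi^{\otimes n}$, and $\Tr[P^{\otimes n}\Phi^{\otimes n}] = (\Tr[P\Phi])^n = 1$ reinvokes the structural fact to yield $P^{\otimes n} \geq \Pi_{\Phi^{\otimes n}}$. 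The first-claim argument then applies verbatim with $\Phi \to \Phi^{\otimes n}$.

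For the third claim, I would use the free-subchannel hypothesis $\tilde\Lambda(\sigma) \in \cone(\FF)$ together with trace preservation of $\Lambda$ to rewrite $\tilde\Lambda(\sigma^\star) = (1-p)\tau^\star$ for some $\tau^\star \in \FF$, where $p \coloneqq \Tr[P\sigma^\star]$. Hence $\Lambda(\sigma^\star) = p\Phi + (1-p)\tau^\star \in \FF$, which is exactly the mixture form in the definition of the standard robustness; setting $s = (1-p)/p$ certifies $D_{s,\FF}(\Phi) \leq \log(1+s) = -\log p \leq D_{\min,\FF}(\Phi)$, and the reverse inequality $D_{\min,\FF} \leq D_{s,\FF}$ is automatic. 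The main (and essentially only) subtlety across the three parts is establishing the block lower bound $P \geq \Pi_\Phi$; once that is in hand, every part reduces to applying $\Lambda$ or $\Lambda^{\otimes n}$ to a $D_{\min,\FF}$-optimal witness and unpacking the resulting image.
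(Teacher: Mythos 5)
Your proof is correct and follows essentially the same route as the paper's: push free states through the stabilizing channel $\Lambda$ (or $\Lambda^{\otimes n}$) and read a robustness certificate off the $\Phi$-component of the free output, using $\Tr[P\Phi]=1$, $0\leq P\leq \id$ $\Rightarrow$ $P\geq\Pi_\Phi$. The only cosmetic difference is that you close the first claim via the operator-domination form $\Phi\leq 2^{D_{\min,\FF}(\Phi)}\Lambda(\sigma^\star)$ at a $D_{\min,\FF}$-optimal witness, whereas the paper derives $\Tr[P\sigma]^{-1}\geq 2^{D_{\max,\FF}(\Phi)}$ for every free $\sigma$ and then invokes a max--min inequality; the content is identical.
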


Together with Lemma~\ref{lem:resource to twirling}, we conclude a general operational correspondence between the collapse of resource measures and the existence of twirling-like free operations: we have that $D_{\min,\FF}(\Phi)=D_{s,\FF}(\Phi)$ if and only if there exists a map of the form \eqref{eq:resource to twirling} with a free state $\sigma^\star$.

However, it is not always easy to find a channel $\Lambda\in\OO_{\max}\cap\S(\Phi)$ to apply Lemma~\ref{lem:twirling to resource}. 
The following result helps to find such a channel by formally relating the group-theoretic twirling operations inspired by the original LOCC twirlings~\cite{Werner1989quantum,Horodecki1999reduction} to the form of \eqref{eq:stabilizers} necessary to apply Lemma~\ref{lem:twirling to resource}.  

\begin{proposition}\label{pro:group twirling}
Let $\Phi$ be a pure state.  
Suppose that there exists a convex set $\OO\subseteq\OO_{\max}$ of free operations and a finite or a compact Lie group $G$ with a unitary representation $\{U(g)\}_{g\in G}$ that satisfies $U(g)\cdot U(g)^\dagger\in\OO$ and $U(g)\ket{\Phi}=e^{i\phi_g}\ket{\Phi}\ \forall g\in G$ for some set of eigenvalues $\{e^{i\phi_g}\}_g$. 
If $\ket{\Phi}$ is the unique simultaneous eigenvector of all $U(g)$'s with eigenvalues $\{e^{i\phi_g}\}_g$, then $D_{\min,\FF}(\Phi)=D_{\max,\FF}(\Phi)$ holds.
Moreover, if $\OO$ is completely free, then $D_{\min,\FF}(\Phi^{\otimes n})=D_{\max,\FF}(\Phi^{\otimes n})$ holds for every positive integer~$n$.
\end{proposition}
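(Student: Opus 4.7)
The plan is to use the hypothesized group action to construct a twirling-like free operation of the form required by Lemma~\ref{lem:twirling to resource}, and then invoke that lemma.

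First I would define the group-averaged channel $\Lambda_G(\rho) = \int_G U(g)\rho U(g)^\dagger\,dg$ (a normalized sum in the finite case, and the Haar integral in the compact Lie case). Each $U(g)\cdot U(g)^\dagger$ lies in $\OO\subseteq\OO_{\max}$ and so maps $\FF$ into $\FF$; using the convexity and closedness of $\FF$ (assumed at the start of Section~\ref{sec:operational}), the averaged channel $\Lambda_G$ still sends every $\sigma\in\FF$ into $\FF$, so $\Lambda_G\in\OO_{\max}$.

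Next I would identify the action of $\Lambda_G$ by decomposing the Hilbert space as $\H=\CC\ket{\Phi}\oplus\ket{\Phi}^\perp$. Since $U(g)$ is unitary with $U(g)\ket{\Phi}=e^{i\phi_g}\ket{\Phi}$, it preserves this decomposition and splits as $U(g)=e^{i\phi_g}\proj{\Phi}+U^\perp(g)$. Writing $\rho$ in block form and averaging, the diagonal blocks produce $\langle\Phi|\rho|\Phi\rangle\proj{\Phi}$ and a compressed twirl on $\ket{\Phi}^\perp$, while the off-diagonal blocks are controlled by the operator $M := \int_G e^{-i\phi_g}U^\perp(g)\,dg$. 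A change of variables shows that $U^\perp(h)\,M = e^{i\phi_h} M$ for every $h\in G$, so any nonzero column of $M$ would be a simultaneous eigenvector of all $U^\perp(h)$ with eigenvalues $\{e^{i\phi_h}\}$; such a vector lies in $\ket{\Phi}^\perp$ and contradicts the uniqueness hypothesis, forcing $M=0$. Setting $P^\star := \proj{\Phi}$, this yields
\[
\Lambda_G(\rho) \;=\; \Tr[P^\star\rho]\,\Phi \;+\; \tilde\Lambda(\rho),
\]
with $\Tr[P^\star\Phi]=1$ and $\tilde\Lambda(\rho) := \int_G U^\perp(g)(\id-P^\star)\rho(\id-P^\star)U^\perp(g)^\dagger\,dg$ manifestly completely positive. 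Hence $\Lambda_G\in\OO_{\max}\cap\S(\Phi)$, and the first half of Lemma~\ref{lem:twirling to resource} immediately gives $D_{\min,\FF}(\Phi)=D_{\max,\FF}(\Phi)$.

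For the $n$-copy statement, complete freeness of $\OO$ means that $\idc\otimes(U(g)\cdot U(g)^\dagger)\in\OO_{\max}$ for every $g$ and every ancillary system. Since $\OO_{\max}$ is itself convex and closed (inherited from $\FF$), the same averaging argument shows $\idc\otimes\Lambda_G\in\OO_{\max}$; i.e.\ $\Lambda_G$ is completely free, and the second part of Lemma~\ref{lem:twirling to resource} delivers $D_{\min,\FF}(\Phi^{\otimes n})=D_{\max,\FF}(\Phi^{\otimes n})$ for every $n$. The main subtlety is the vanishing of $M$, which is a Schur-orthogonality statement: the uniqueness hypothesis is precisely the condition that the one-dimensional representation $g\mapsto e^{i\phi_g}$ appears with multiplicity one in $\{U(g)\}_{g\in G}$, and therefore admits no nontrivial intertwiner with its orthogonal complement.
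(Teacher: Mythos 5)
Your proof is correct, and it reaches the paper's conclusion by the same overall strategy --- construct the group average $\Lambda_G=\int_G U(g)\cdot U(g)^\dagger\,\mathrm{d}g$, show it lies in $\OO_{\max}\cap\S(\Phi)$, and invoke Lemma~\ref{lem:twirling to resource} --- but the technical core is handled quite differently. The paper derives the full isotypic form of the twirl, decomposing $\H=\bigoplus_\mu\H_\mu^{(r)}\otimes\H_\mu^{(m)}$ and using Schur's lemma to obtain the general formula $\Xi(\rho)=\sum_\mu \frac{\id_\mu^{(r)}}{d_\mu^{(r)}}\otimes\Tr_r[\rho\,\id_\mu^{(r)}\otimes\id_\mu^{(m)}]$ before specializing to the one-dimensional irrep containing $\ket{\Phi}$. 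You instead use only the coarse splitting $\H=\CC\ket{\Phi}\oplus\ket{\Phi}^\perp$ and kill the off-diagonal block by showing the intertwiner $M=\int_G e^{-i\phi_g}U^\perp(g)\,\mathrm{d}g$ satisfies $U^\perp(h)M=e^{i\phi_h}M$, so a nonzero column would violate the uniqueness of $\ket{\Phi}$ as a simultaneous eigenvector; this correctly identifies the uniqueness hypothesis as the multiplicity-one condition without ever needing the complete reduction. Your route is more elementary and self-contained (the paper's general twirling formula is not actually needed for this proposition), while the paper's buys a reusable structural statement about symmetric states. One further small difference in your favor: you certify $\Lambda_G\in\OO_{\max}$ via preservation of the convex closed set $\FF$ (and likewise $\idc\otimes\Lambda_G\in\OO_{\max}$ from complete freeness), which cleanly handles the Haar integral in the compact Lie case, whereas the paper appeals to convexity of $\OO$; both are acceptable. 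All steps check out, including the verification that the cross terms of $U(g)\rho U(g)^\dagger$ integrate to $Mb$ and its adjoint, and that the remaining part $\tilde\Lambda$ is manifestly completely positive in Kraus form.
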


This result allows us to find an appropriate free operation to ensure the condition required in Theorems~\ref{thm:second law general} and \ref{thm:second law general state}.
We indeed find that Lemma~\ref{lem:twirling to resource} and Proposition~\ref{pro:group twirling} are helpful to obtain new insights into specific resource theories, which we now demonstrate.


\section{Application: Theory of magic}\label{sec:magic}

As we emphasized before, our results immediately hold in general types of quantum resources of both states and channels, and we direct the interested reader, e.g.,\ to the recent Refs.~\cite{Liu2019one-shot,Regula2021fundamental,Fang2020no-go,Regula2020oneshot,Yuan2020one-shot} as well as to Appendix~\ref{app:reference} for a discussion of how such general methods can be applied in specific theories such as quantum communication. Here, we discuss a non-trivial example in which the quantification of resource measures for many important states has still not been understood --- the theory of magic (non-stabilizerness)~\cite{Veitch2014resource,Howard2017application}. Applying the results of our work, we show how they can be used to provide new quantitative insights and reveal broad and useful relations for this resource.

To realize scalable fault-tolerant quantum computation, it is essential to encode the whole computation in a higher-dimensional space using an error-correcting code.
Clifford gates particularly stand out as a subset of quantum gates that admits efficient fault-tolerant encoding on many error-correcting codes.
However, to form a universal gate set, we also need to implement a non-Clifford gate. 
It is usually accomplished by the gate teleportation technique~\cite{Gottesman1999demonstrating,Zhou2000methodology}, which simulates the action of a non-Clifford gate by combining a Clifford operation and a \emph{magic state}, which cannot be created solely by Clifford operations.  
Since magic states are hard to produce in a fault-tolerant manner, they are precious resources under this setting, motivating us to consider the quantification and manipulation of them using a resource-theoretic formalism~\cite{Veitch2014resource,Howard2017application}. 

In particular, the optimal performance of magic state distillation and dilution has been a central question in the field, as it is the most resource-demanding part to realize fault-tolerant universal quantum computation. 
Our results establish a relation between the optimal performance of these two, under the assumption that the reference resource states satisfy the aforementioned conditions on resource measures.  
Here, we investigate these conditions for some well-known states that can be good candidates for reference states in distillation and dilution protocols.
We find that our approach, particularly Lemma~\ref{lem:twirling to resource} and Proposition~\ref{pro:group twirling}, provides new operational insights into the evaluation of resource measures, which may be of independent interest.

\emph{Stabilizer states} are those that can be represented by a probabilistic mixture of eigenstates of Pauli operators. 
In the resource theory of magic, stabilizer states form the set $\FF_{\rm STAB}$ of free states.
Although it is often assumed that quantum computation is carried out in multi-qubit systems, higher-dimensional qu\emph{d}it systems also stand as potential candidates for a large-scale quantum computing architecture~\cite{Campbell2012magic-state,Howard2012qudit}.
Resource theories have been developed for both settings~\cite{Veitch2014resource,Howard2017application,Wang2020efficiently}, and we will consider several standard magic states defined in these scenarios.  

Let us start with multi-qubit systems. 
For single-qubit states, the farthest states from the set of stabilizer states are positioned at $\frac{1}{\sqrt{3}}(\pm 1, \pm 1, \pm 1)$ in Bloch coordinates, which are connected to each other by Clifford unitaries.
We call one of them the \emph{face state}, which is written as 
\bal
 \dm{F} = \frac{\id+(X+Y+Z)/\sqrt{3}}{2},
\eal
where $X,Y,Z$ are qubit Pauli operators. 
Then, Lemma~\ref{lem:twirling to resource} and Proposition~\ref{pro:group twirling} allow us to obtain the following relation:

\begin{proposition}\label{pro:face state}
The qubit face state satisfies
 \bal
 D_{\min,\STAB}(F^{\otimes n})=D_{\max,\STAB}(F^{\otimes n}).
 \label{eq:face state}
 \eal
for every integer $n\geq 1$.
\end{proposition}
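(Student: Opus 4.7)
The plan is to apply Proposition~\ref{pro:group twirling} with $G$ a cyclic subgroup of the single-qubit Clifford group, and then to invoke its ``moreover'' clause to lift the resulting equality to every tensor power of $F$.

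First, I would exhibit a single-qubit Clifford unitary $U$ of order three that cyclically permutes the Pauli operators, $U X U^\dagger = Y$, $U Y U^\dagger = Z$, $U Z U^\dagger = X$. Its existence is standard: the quotient of the single-qubit Clifford group by its center acts on the Pauli group as $\mathrm{Sp}(2,\mathbb{F}_2) \cong S_3$, and this group contains a $3$-cycle permuting $\{X,Y,Z\}$. Because such $U$ commutes with $X+Y+Z$, and the $+\sqrt{3}$-eigenspace of that observable is one-dimensional and spanned by $\ket{F}$, we obtain $U\ket{F} = e^{i\phi}\ket{F}$ for some phase $e^{i\phi}$.

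Next, I would take $G = \langle U \rangle \cong \mathbb{Z}_3$ with the unitary representation $\{\id, U, U^2\}$, and choose $\OO \subseteq \OO_{\max}$ to be the (convex) set of stabilizer operations in the theory of magic. Each $U(g)\cdot U(g)^\dagger$ is a Clifford conjugation and therefore lies in $\OO$. The uniqueness hypothesis of Proposition~\ref{pro:group twirling} is satisfied because the $e^{i\phi}$-eigenspace of $U$ in the two-dimensional qubit Hilbert space coincides with the one-dimensional $+\sqrt{3}$-eigenspace of $X+Y+Z$, which is spanned by $\ket{F}$. All hypotheses being met, Proposition~\ref{pro:group twirling} immediately yields $D_{\min,\STAB}(F) = D_{\max,\STAB}(F)$. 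Since tensoring a stabilizer operation with the identity channel on any ancilla again produces a stabilizer operation, $\OO$ is completely free in the sense required, and the ``moreover'' clause of the proposition directly upgrades the single-copy equality to $D_{\min,\STAB}(F^{\otimes n}) = D_{\max,\STAB}(F^{\otimes n})$ for every $n \geq 1$, which is precisely~\eqref{eq:face state}.

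The only nontrivial ingredient is the existence of the order-three Clifford cyclically permuting $X, Y, Z$; once that element is in hand, every remaining step is a direct invocation of Proposition~\ref{pro:group twirling} and requires no further calculation.
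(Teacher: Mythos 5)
Your proposal is correct and follows essentially the same route as the paper: the paper also takes the order-three Clifford $K=SH$, which cyclically permutes the Pauli operators and stabilizes $\ket{F}$ as its unique eigenvector with the relevant eigenvalue, and then invokes Lemma~\ref{lem:twirling to resource} (equivalently Proposition~\ref{pro:group twirling}) together with the complete freeness of stabilizer operations to get the $n$-copy statement. The only point worth tightening is that commuting with $X+Y+Z$ gives uniqueness of the eigenvector only once you also note that $U$ is not a scalar (so its two eigenvalues are distinct), which is immediate from its nontrivial action on the Paulis.
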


Since the face state is a maximizer of $D_{\min,\FF_{\rm STAB}}$, called a golden state~\cite{Liu2019one-shot,Regula2020benchmarking}, the case of $n=1$ recovers the result in Ref.~\cite{Liu2019one-shot}. 
Also, noting that the stabilizer extent introduced in Ref.~\cite{Bravyi2019simulation} is identical to $D_{\max,\FF_{\rm STAB}}$~\cite{Regula2018convex}, Eq.~\eqref{eq:face state} can be shown by using the additivity of $D_{\min,\FF_{\rm STAB}}(F^{\otimes n})$ and $D_{\max,\FF_{\rm STAB}}(F^{\otimes n})$, as well as the equivalence of $D_{\min,\FF_{\rm STAB}}(F)$ and $D_{\max,\FF_{\rm STAB}}(F)$ shown in 
Ref.~\cite{Bravyi2019simulation}.
Our result provides a different operational approach to this relation based on a strategy with a broad potential applicability.

We also get an analogous result for a special class of magic states that includes the $T$ state and the Toffoli state, which are usually used for gate teleportation protocols.

\begin{proposition} \label{pro:Clifford magic}
 Let $V$ be a unitary in the third level of the Clifford hierarchy, and let $\ket{\phi}$ be a stabilizer state defined on a multi-qudit system with local dimension $d$, where $d=2$ or $d$ is an odd prime.  Then, the state $\ket{\psi}=V\ket{\phi}$ satisfies $D_{\min,\STAB}(\psi^{\otimes n})=D_{\max,\STAB}(\psi^{\otimes n})$ for every integer $n\geq 1$.
\end{proposition}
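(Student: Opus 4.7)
The plan is to invoke Proposition~\ref{pro:group twirling} with a finite group obtained by conjugating the stabilizer group of $\ket{\phi}$ by $V$. Since $\ket{\phi}$ is a pure stabilizer state on $n$ qudits of local dimension $d$, there exists a maximal abelian subgroup $S_\phi$ of the $d$-dimensional (generalized) Pauli group, with $\lvert S_\phi\rvert = d^n$, together with a character $\chi\colon S_\phi\to \CC$ such that $\ket{\phi}$ is the \emph{unique} simultaneous eigenvector: $P\ket{\phi}=\chi(P)\ket{\phi}$ for all $P\in S_\phi$. I take $G\coloneqq V S_\phi V^\dagger$ with the natural unitary representation $U(g)=g$. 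By construction, for every $g=VPV^\dagger\in G$ we have $U(g)\ket{\psi}=\chi(P)\ket{\psi}$, and uniqueness of $\ket{\psi}$ as the simultaneous eigenvector with these eigenvalues follows immediately from uniqueness for $\ket{\phi}$ by the unitary equivalence via $V$.

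The defining property of the third level of the Clifford hierarchy, which is well behaved for $d=2$ and for odd prime $d$, is that $V P V^\dagger$ lies in the second level, i.e.\ is a Clifford unitary, for every Pauli $P$. Hence every element of $G$ is Clifford, each channel $U(g)(\cdot)U(g)^\dagger$ is a Clifford unitary channel, and Clifford unitary channels preserve stabilizer states and therefore belong to $\OO_{\max}$ in the resource theory of magic. I let $\OO$ be the convex hull of these finitely many channels, which is a convex subset of $\OO_{\max}$ as required by Proposition~\ref{pro:group twirling}. Moreover, tensoring any Clifford on the original system with the identity channel on an ancillary qudit system yields another Clifford with respect to the enlarged Pauli group, so each channel in $\OO$ is completely free in the sense of Lemma~\ref{lem:twirling to resource}. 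The ``moreover'' clause of Proposition~\ref{pro:group twirling} then yields $D_{\min,\STAB}(\psi^{\otimes n})=D_{\max,\STAB}(\psi^{\otimes n})$ for all $n\geq 1$.

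The step that requires the most care is verifying the uniqueness hypothesis of Proposition~\ref{pro:group twirling}: I must argue that $\ket{\psi}$ is the \emph{unique} simultaneous eigenvector of $\{U(g)\}_{g\in G}$ with the prescribed eigenvalues, not merely an eigenvector. This reduces to the analogous uniqueness statement for $\ket{\phi}$ and $S_\phi$, which holds because a maximal stabilizer group on $n$ qudits (with $\lvert S_\phi\rvert=d^n$) has a one-dimensional joint eigenspace for each admissible character; conjugation by the unitary $V$ preserves dimensionality of this joint eigenspace. The dimension restriction enters precisely here and in the previous step, since the clean definition of the Pauli group, the Clifford group, and the resulting hierarchy --- together with the guarantee that conjugating Paulis by third-level elements yields Cliffords --- is standard exactly when $d=2$ or $d$ is an odd prime.
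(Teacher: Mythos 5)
Your proof is correct and follows essentially the same route as the paper: conjugate the stabilizer group of $\ket{\phi}$ by $V$, note that the third-level property makes every element of $V S_\phi V^\dagger$ a Clifford unitary (hence free and completely free), observe that $\ket{\psi}$ is the unique joint eigenvector with the prescribed eigenvalues, and invoke Proposition~\ref{pro:group twirling}. The paper's version merely makes the resulting twirl explicit by writing out the dephasing channel $\frac{1}{d^t}\sum_{j_1\dots j_t}\bigl(\prod_k W_{j_k k}\bigr)\cdot\bigl(\prod_k W_{j_k k}^\dagger\bigr)$ before applying the same lemma.
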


This provides an alternative proof of the result in Ref.~\cite{Bravyi2019simulation} and extends it to qudit systems, including the qudit generalization of the $T$ state~\cite{Howard2012qudit}. Note that $D_{\max,\STAB}(\psi^{\otimes n}) = n D_{\max,\STAB}(\psi)$ holds for every $\psi$ of up to three qubits~\cite{Bravyi2019simulation}, and indeed also for single-qubit mixed states~\cite{Seddon2021quantifying}.

Although $D_{\min,\FF_{\rm STAB}}$ and $D_{\max,\FF_{\rm STAB}}$ coincide for the above cases, one can actually show that for every single-qubit state, $D_{\max,\FF_{\rm STAB}}$ is strictly less than $D_{s,\FF_{\rm STAB}}$.  
The natural question is whether there exists a state for which the three quantities collapse to the same value, allowing us to use both Theorems~\ref{thm:second law general} and~\ref{thm:second law general state}.
No such state has been previously known in the theory of multi-qubit magic, which prevented previously known bounds for one-shot distillation and dilution from being tight~\cite{Liu2019one-shot,Regula2020benchmarking}. To show that a suitable choice does indeed exist, let us consider the three-qubit fiducial Hoggar state~\cite{Andersson2015states,Howard2017application,stacey2016geometric,Stacey2019invariant,Zhu2012quantum} defined as
\begin{equation}\begin{aligned}
  \ket{{\rm Hog}} = \frac{1}{\sqrt{6}} (1+i,\, 0,\, -1,\, 1,\, -i,\, -1,\, 0,\, 0)^T.
\end{aligned}\end{equation}
The Hoggar state serves as one of the fiducial states from which group covariant SIC-POVMs can be generated~\cite{Zhu2012quantum}.
We show that the three measures collapse for the Hoggar state.

\begin{proposition}\label{pro:Hoggar state}
The three-qubit Hoggar state satisfies
\bal
 D_{\min,\STAB}({\rm Hog})=D_{\max,\STAB}({\rm Hog})&=D_{s,\STAB}({\rm Hog}) \\&= \log\frac{12}{5},
\eal
and 
\bal
 D_{\min,\STAB}({\rm Hog}^{\otimes n})=D_{\max,\STAB}({\rm Hog}^{\otimes n})
\eal
for every integer $n\geq 1$.
\end{proposition}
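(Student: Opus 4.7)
The plan is to combine the symmetry machinery of Proposition~\ref{pro:group twirling} and Lemma~\ref{lem:twirling to resource} with known stabilizer-decomposition properties of the Hoggar state, handling the three measures and their additivity in turn.

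First, I would establish $D_{\min,\STAB}(\mathrm{Hog}^{\otimes n}) = D_{\max,\STAB}(\mathrm{Hog}^{\otimes n})$ for every $n$ via Proposition~\ref{pro:group twirling}. The Hoggar state is a fiducial of a group-covariant SIC-POVM in dimension eight, and its stabilizer inside the three-qubit Clifford group is a large finite subgroup $G$ whose representation on $\mathbb{C}^8$ has $\ket{\mathrm{Hog}}$ as its unique joint eigenvector with a specified phase tuple; this I would verify by explicitly identifying enough elements of the three-qubit Clifford group that fix $\ket{\mathrm{Hog}}$ up to phase, or by invoking the known automorphism structure of the Hoggar SIC. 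Every element of $G$ is represented by a Clifford unitary, so $U(g)\cdot U(g)^\dagger \in \OO_{\max}$, and stabilizer-preserving operations remain stabilizer-preserving when tensored with identity on an ancilla, so the chosen $\OO$ can be taken convex and completely free. Both the single-copy collapse and its $n$-copy extension then follow directly from Proposition~\ref{pro:group twirling}.

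Second, I would pin down the numerical value. Using the equivalence between the stabilizer extent $\xi$ and $D_{\max,\STAB}$ on pure states established in Ref.~\cite{Regula2018convex}, combined with a direct computation showing $\xi(\mathrm{Hog}) = 12/5$ via an explicit $\ell_1$-minimal stabilizer decomposition of $\ket{\mathrm{Hog}}$, I conclude $D_{\max,\STAB}(\mathrm{Hog}) = \log(12/5)$, and hence $D_{\min,\STAB}(\mathrm{Hog}) = \log(12/5)$ by the collapse above. For $D_{s,\STAB}(\mathrm{Hog}) = \log(12/5)$, the lower bound $D_s \geq D_{\max}$ is automatic, and for the upper bound I would construct an explicit affine decomposition $\mathrm{Hog} = \tfrac{12}{5}\sigma - \tfrac{7}{5}\tau$ with $\sigma,\tau \in \FF_{\mathrm{STAB}}$ by suitably regrouping the positive and negative parts of the optimal extent decomposition, which directly certifies $s = 7/5$ in the definition of the standard robustness.

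The main obstacle I expect lies in two explicit structural verifications on which the whole argument hinges: establishing that the Clifford stabilizer of $\ket{\mathrm{Hog}}$ is large enough to render $\ket{\mathrm{Hog}}$ its unique joint eigenvector with the specified phases, which is the concrete representation-theoretic content behind applying Proposition~\ref{pro:group twirling}; and producing a stabilizer decomposition of $\ket{\mathrm{Hog}}$ that simultaneously attains the value $12/5$ for the stabilizer extent and admits a regrouping into a two-state affine certificate for the standard robustness. Both are finite-dimensional checks specific to the Hoggar fiducial, but they require concrete decompositions rather than purely abstract arguments.
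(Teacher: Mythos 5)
Your plan for the $n$-copy collapse --- applying Proposition~\ref{pro:group twirling} to a Clifford subgroup stabilizing $\ket{\rm Hog}$ --- is exactly the paper's route: the paper exhibits two explicit Clifford generators of orders $7$ and $12$ and observes that the order-$12$ element alone already has a one-dimensional $+1$-eigenspace, so $\ket{\rm Hog}$ is the unique joint $+1$-eigenvector, and complete freeness of stabilizer operations does the rest. That part of your proposal is sound, modulo the explicit verification you correctly flag.

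The numerical value is where your route has two genuine gaps. First, an explicit stabilizer decomposition of $\ket{\rm Hog}$ only certifies $\xi({\rm Hog})\leq 12/5$, i.e.\ an upper bound on $D_{\max,\STAB}$; to pin the value you need a matching lower bound (a dual witness), which your proposal never supplies. The natural witness is precisely what the paper uses: since $\left|\braket{{\rm Hog}|P|{\rm Hog}}\right|=1/3$ for every non-identity Pauli $P$, any state $\sigma$ with stab-norm at most one satisfies $\braket{{\rm Hog}|\sigma|{\rm Hog}}\leq 5/12$, giving $D_{\min,\STAB}({\rm Hog})\geq D_{\min,\FF_{\mathcal{P}}}({\rm Hog})\geq\log\frac{12}{5}$ and thereby lower-bounding the entire chain. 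Second, and more seriously, ``regrouping the positive and negative parts of the optimal extent decomposition'' does not produce an affine certificate ${\rm Hog}=\frac{12}{5}\sigma-\frac{7}{5}\tau$ with $\sigma,\tau\in\FF_{\rm STAB}$: the extent decomposition lives at the level of the state vector, and $\proj{\rm Hog}=\sum_{i,j}c_i\bar{c}_j\ketbra{\phi_i}{\phi_j}$ contains cross terms that are not stabilizer operators and cannot be absorbed with the advertised weights. Indeed the paper notes that $D_{\max,\STAB}$ is \emph{strictly} smaller than $D_{s,\STAB}$ for every single-qubit magic state, so the coincidence of the two for the Hoggar state is exactly the nontrivial content and cannot follow from such a regrouping. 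The paper sidesteps both issues by taking $D_{s,\STAB}({\rm Hog})=\log\frac{12}{5}$ from Ref.~\cite{Howard2017application} as the upper anchor and squeezing $D_{s,\STAB}\geq D_{\max,\STAB}\geq D_{\min,\STAB}\geq\log\frac{12}{5}$; if you adopt that anchor together with the flat-Pauli-spectrum witness, your argument closes.
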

We note that the value of $D_{s,\FF_{\rm STAB}}({\rm Hog})$ was reported in Ref.~\cite{Howard2017application}.

Let us now turn our attention to qutrit states.
In the qutrit magic theory, two classes of states were identified to have the maximum sum negativity of the discrete Wigner function~\cite{Veitch2014resource}. 
The first class is represented by the Strange state~\cite{Appleby2005symmmetric,Zhu2010SIC,Veitch2014resource,Wang2020efficiently}, defined as 
\bal
 \ket{S}\coloneqq \frac{1}{\sqrt{2}}(\ket{1}-\ket{2}).
\eal
We can use Lemma~\ref{lem:twirling to resource} to show the collapse of the three measures for this state.

\begin{proposition}\label{pro:strange state}
The qutrit Strange state satisfies
 \bal
 D_{\min,\STAB}(S)=D_{\max,\STAB}(S)=D_{s,\STAB}(S)=1
 \eal
 and 
 \bal
 D_{\min,\STAB}(S^{\otimes n})=D_{\max,\STAB}(S^{\otimes n})
 \eal
 for every integer $n\geq 1$.
\end{proposition}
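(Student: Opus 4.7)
The plan naturally splits into two independent tasks: (i) establishing the single-copy collapse $D_{\min,\STAB}(S)=D_{\max,\STAB}(S)=D_{s,\STAB}(S)=1$ through Lemma~\ref{lem:twirling to resource}, and (ii) extending the $D_{\min}$--$D_{\max}$ equality to all tensor powers through Proposition~\ref{pro:group twirling}.

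For step (ii), I would use the cyclic group $G=\langle H\rangle$ generated by the qutrit Hadamard. A short computation using $\omega-\omega^2=i\sqrt{3}$ yields $H\ket{S}=i\ket{S}$, and inspecting $\Tr H=i$ together with $\det H=-i$ forces the spectrum of $H$ to be $\{1,i,-1\}$ with each eigenvalue simple, so $\ket{S}$ is the unique $i$-eigenvector of $H$ (and hence of $G$). Since Clifford operations are completely free in the theory of magic, Proposition~\ref{pro:group twirling} then immediately yields $D_{\min,\STAB}(S^{\otimes n})=D_{\max,\STAB}(S^{\otimes n})$ for every $n\geq 1$.

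For step (i), I plan to construct a measure-and-prepare free operation
\[
  \Lambda(\rho)=\bra{S}\rho\ket{S}\dm{S}+\Tr[(\id-\dm{S})\rho]\,\sigma^\star,
\]
and then apply Lemma~\ref{lem:twirling to resource} with $P=\dm{S}$ and $\tilde\Lambda(\,\cdot\,)=\Tr[(\id-\dm{S})\,\cdot\,]\sigma^\star$. Enumerating the twelve pure qutrit stabilizer states across the four MUBs shows that $|\braket{S|\psi}|^2\in\{0,1/2\}$ for every pure stabilizer $\ket{\psi}$, so $\Lambda$ sends every pure stabilizer input either to $\sigma^\star$ or to $\tfrac{1}{2}(\dm{S}+\sigma^\star)$. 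Hence membership in $\OO_{\max}$ reduces to finding a single $\sigma^\star\in\STAB$ for which $\tfrac{1}{2}(\dm{S}+\sigma^\star)$ is itself stabilizer. I would take $\sigma^\star$ to be the equal mixture of the two pure stabilizer states sitting in $\sspan(\ket{0},\ket{+_{12}})=\ket{S}^\perp$ beyond the obvious $\ket{0}$ and $\ket{+_0}$ — namely the appropriate $XZ$- and $XZ^2$-eigenstates — and verify by a direct matrix identity that $\tfrac{1}{2}(\dm{S}+\sigma^\star)$ is a uniform mixture of four computational- and $X$-basis projectors. Once $\sigma^\star\in\STAB$ is in place, the subchannel $\tilde\Lambda$ automatically satisfies $\tilde\Lambda(\sigma)\in\cone(\STAB)$ for every $\sigma\in\STAB$, so Lemma~\ref{lem:twirling to resource} delivers both $D_{\min,\STAB}(S)=D_{\max,\STAB}(S)$ and $D_{\min,\STAB}(S)=D_{s,\STAB}(S)$; the common value follows from $\max_{\sigma\in\STAB}\bra{S}\sigma\ket{S}=1/2$.

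The hard part will be recognizing the explicit $\sigma^\star$: the two-dimensional subspace $\sspan(\ket{0},\ket{+_{12}})$ secretly contains four pure stabilizer states rather than only $\ket{0}$ and $\ket{+_0}$, and it is exactly the other two that provide the cancellation making $\tfrac{1}{2}(\dm{S}+\sigma^\star)$ stabilizer. I will not attempt to promote $\Lambda$ to a completely free operation, since a direct check shows that $(\idc\otimes\Lambda)(\dm{\Phi^+})$ inherits the non-stabilizer component $\dm{S}\otimes\dm{S}$; the tensor-power part of the statement is therefore handled entirely through the Hadamard-twirling argument of step~(ii).
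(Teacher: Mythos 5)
Your proposal is correct, and both halves check out: $H\ket{S}=i\ket{S}$ with $\{1,-1,i\}$ the (simple) spectrum of the qutrit Fourier gate, all twelve pure stabilizer states satisfy $|\braket{S|\psi}|^2\in\{0,1/2\}$, and with $\sigma^\star=\tfrac12(\dm{xz}+\dm{xz^2})$ one indeed has $\tfrac12(\dm{S}+\sigma^\star)=\tfrac14(\dm{1}+\dm{2}+\dm{x_1}+\dm{x_2})\in\STAB$, so your $\Lambda$ lies in $\OO_{\max}\cap\S(S)$ with a free subchannel and Lemma~\ref{lem:twirling to resource} applies. The route is genuinely different from the paper's. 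The paper builds a single completely free channel by twirling over the $\mathrm{SL}(2,\mathbb{Z}_3)$ subgroup of the Clifford group that stabilizes the SIC fiducial $\ket{S}$; a counting argument over the Heisenberg--Weyl orbit shows this twirl equals $\rho\mapsto\Tr[S\rho]S+\Tr[(\id-S)\rho]\tfrac{\id-S}{2}$, and since $\tfrac{\id-S}{2}$ is stabilizer and the map is a random-Clifford channel (hence completely free), one application of Lemma~\ref{lem:twirling to resource} yields both the $D_s$ collapse and the $n$-copy $D_{\min}=D_{\max}$ statement simultaneously. You instead split the work: a hand-built measure-and-prepare map, verified free by enumeration over the twelve pure stabilizer states, handles the single-copy $D_s$ collapse (your $\sigma^\star$ differs from the paper's $\tfrac{\id-S}{2}$, but that is immaterial), while the tensor-power claim is delegated to the much smaller cyclic group $\langle H\rangle\cong\mathbb{Z}_4$ via Proposition~\ref{pro:group twirling}. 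Your version avoids the group-orbit counting and is more elementary for the multi-copy part; the paper's version is more structural and produces one canonical twirl doing everything at once. Your caveat that your $\Lambda$ need not be completely free is well taken and correctly motivates the split.
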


Our results complement the findings in Ref.~\cite{Wang2020efficiently}, which considered $D_{\min,\FF'}$ and $D_{\max,\FF'}$ with respect to a larger set of free operators (not necessarily normalized quantum states) based on the negativity of the discrete Wigner function and found that $D_{\min,\FF'}(S)=D_{\max,\FF'}(S)=\log(5/3)$.

Another maximizer of the sum negativity is the Norrell state~\cite{Veitch2014resource} defined as 
\bal
\ket{N}\coloneqq \frac{1}{\sqrt{6}}(-\ket{0}+2\ket{1}-\ket{2}).
\eal

We can get a similar collapse for the measures, but to a different value. 
\begin{proposition}\label{pro:Norrell state}
The qutrit Norrell state satisfies
\bal
 D_{\min,\STAB}(N)=D_{\max,\STAB}(N)&=D_{s,\STAB}(N)\\&=\log\frac{3}{2}.
\eal
\end{proposition}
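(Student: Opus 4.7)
The plan is to compute $D_{\min,\STAB}(N)$ by direct enumeration and then invoke Lemma~\ref{lem:twirling to resource} to force the three measures to collapse. Since Lemma~\ref{lem:twirling to resource} yields $D_{\min,\STAB}=D_{s,\STAB}$ whenever a free channel of a specific form exists, and since the ordering $D_{\min,\STAB}\leq D_{\max,\STAB}\leq D_{s,\STAB}$ is automatic, the whole statement reduces to (i)~evaluating $D_{\min,\STAB}(N)$ and (ii)~constructing a suitable free channel that stabilises $\dm{N}$.

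For (i), I would compute $D_{\min,\STAB}(N)=-\log\sup_{\sigma\in\STAB}\braket{N|\sigma|N}$. By convexity the supremum is attained on a pure stabilizer state, and the pure single-qutrit stabilizer states are exactly the twelve vectors $\ket{k}$ for $k\in\mathbb{Z}_3$ together with $\frac{1}{\sqrt{3}}\sum_k\omega^{ak^2+bk}\ket{k}$ for $(a,b)\in\mathbb{Z}_3\times\mathbb{Z}_3$. A short arithmetic using $1+\omega+\omega^2=0$ shows that the maximal overlap is $2/3$, saturated by $\ket 1$ and by the two stabilizer states $\ket{\psi_\pm}\coloneqq\frac{1}{\sqrt{3}}(\ket 0+\omega^{\pm 1}\ket 1+\ket 2)$. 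Hence $D_{\min,\STAB}(N)=\log(3/2)$.

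For (ii), I would consider the channel
\begin{equation}
\Lambda(\rho)\coloneqq\braket{N|\rho|N}\dm{N}+\bigl(1-\braket{N|\rho|N}\bigr)\dm{\tilde 0},
\end{equation}
where $\ket{\tilde 0}\coloneqq\frac{1}{\sqrt{3}}(\ket 0+\ket 1+\ket 2)$, which is orthogonal to $\ket N$ since $-1+2-1=0$. Writing $\Lambda(\rho)=\Tr[P\rho]\dm{N}+\tilde\Lambda(\rho)$ with $P=\dm{N}$ and $\tilde\Lambda(\cdot)=\Tr[(\id-\dm N)\cdot]\dm{\tilde 0}$ matches exactly the form in \eqref{eq:stabilizers}; moreover $\tilde\Lambda$ is CP and manifestly sends every stabilizer state into $\cone(\STAB)$, so the strongest conclusion of Lemma~\ref{lem:twirling to resource} will yield $D_{\min,\STAB}(N)=D_{s,\STAB}(N)$ provided that $\Lambda\in\OO_{\max}$.

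The main obstacle is the verification that $\Lambda\in\OO_{\max}$, equivalently, that $p\,\dm{N}+(1-p)\dm{\tilde 0}\in\STAB$ for every $p=\braket{N|\sigma|N}$ realisable by a stabilizer $\sigma$. Step~(i) restricts $p\in[0,2/3]$, so by convexity it suffices to place the endpoint $p=2/3$ inside $\conv(\STAB)$, which I would do via the explicit identity
\begin{equation}
\tfrac{2}{3}\dm{N}+\tfrac{1}{3}\dm{\tilde 0}=\tfrac{1}{3}\bigl(\dm{1}+\dm{\psi_+}+\dm{\psi_-}\bigr),
\end{equation}
checked by a short computation in the computational basis using $\omega+\omega^2=-1$. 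Identifying the correct complement state $\sigma^\star=\dm{\tilde 0}$ and the symmetric three-term decomposition above is the key creative step; it is naturally motivated by noting that $\ket{\tilde 0}$ is the Fourier eigenstate orthogonal to $\ket N$, and that the three stabilizer states $\ket 1,\ket{\psi_\pm}$ attaining the optimal overlap in Step~(i) combine with equal weights into exactly the required convex mixture.
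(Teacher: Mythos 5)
Your proof is correct, and both of its computational ingredients check out: the maximal stabilizer overlap of $\ket{N}$ is indeed $2/3$ (attained by $\ket{1}$ and $\ket{\psi_\pm}$), and the identity $\tfrac{2}{3}\dm{N}+\tfrac{1}{3}\dm{\tilde 0}=\tfrac{1}{3}\bigl(\dm{1}+\dm{\psi_+}+\dm{\psi_-}\bigr)$ holds (I verified the matrices entrywise). The route, however, differs from the paper's. The paper argues by a direct sandwich: it reads the rescaled version of your identity, $\dm{N}+\tfrac{1}{2}\dm{\tilde 0}=\tfrac{1}{2}\bigl(\dm{1}+\dm{\psi_+}+\dm{\psi_-}\bigr)$, as a standard-robustness decomposition giving $D_{s,\STAB}(N)\leq\log\tfrac{3}{2}$, computes $D_{\min,\STAB}(N)=\log\tfrac{3}{2}$ from the overlap, and closes with $D_{\min,\STAB}\leq D_{\max,\STAB}\leq D_{s,\STAB}$ --- no channel construction needed. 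You instead build the twirling-like channel $\Lambda$ and invoke Lemma~\ref{lem:twirling to resource}, which is the strategy the paper reserves for the Strange state; verifying $\Lambda\in\OO_{\max}$ consumes exactly the same two facts as the direct sandwich (the overlap bound to confine $p$ to $[0,2/3]$, and the three-term decomposition to place the endpoint in $\STAB$), so your argument is strictly more machinery for the same conclusion. What the detour buys is the explicit free channel in $\OO_{\max}\cap\S(N)$ with free complement $\sigma^\star=\dm{\tilde 0}$, which plugs directly into Lemma~\ref{lem:measure restriction} and Propositions~\ref{pro:smooth measures}--\ref{pro:isotropic exact} for the isotropic-like family around $N$; but since Lemma~\ref{lem:resource to twirling} already guarantees such a channel exists once $D_{\min,\STAB}(N)=D_{s,\STAB}(N)$ is established, nothing essential is gained for the proposition itself.
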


We note that $D_{\min,\STAB}(N)=D_{\max,\STAB}(N)=\log \frac{3}{2}$ was originally reported in Ref.~\cite{Wang2020efficiently}.

These results may make one wonder whether there is a general characterization for when the three measures take the same value. 
Although we still do not have a definitive answer for this question, we can make the following observation.
For a given state $\psi$, Lemma~\ref{lem:resource to twirling} implies that if $D_{\min,\STAB}(\psi)=D_{\max,\STAB}(\psi)$, we must have $D_{\max,\STAB}(\psi)<D_{s,\STAB}(\psi)$ unless there exists a free state acting on $\supp\!\left(\frac{\id-\psi}{d-1}\right)$.
This is equivalent to the condition that the \emph{weight resource measure}~\cite{Lewenstein1998separability,Regula2021fundamental}, defined for the set $\FF$ of free states  as 
\bal
 W_\FF (\rho)\coloneqq \sup\lset w \sbar \rho=w \sigma + (1-w)\tau,\ \sigma\in\FF \rset
\eal
satisfies
\begin{equation}
W_{\STAB}\!\left(\frac{\id-\psi}{d-1}\right)>0.    
\end{equation}
This condition can be explicitly verified to hold for the Strange state and Norrell state, for which we have seen the three measures to coincide. 
On the other hand, for the qutrit $T$ state~\cite{Howard2012qudit}, defined as 
\begin{equation}
\ket{T}\coloneqq\frac{1}{\sqrt{3}}(e^{2\pi i/9}\ket{0}+\ket{1}+e^{-2\pi i/9}\ket{2}),    
\end{equation}
one can check that $W_{\STAB}\left(\frac{\id-T}{2}\right)=0$. 
Combining it with Proposition~\ref{pro:Clifford magic}, we recover the fact that
\bal
 D_{\min,\FF_{\rm STAB}}(T)=D_{\max,\FF_{\rm STAB}}(T)<D_{s,\FF_{\rm STAB}}(T).
\eal

Interestingly, numerical investigations suggest that, for most qutrit states, the weight measure for the complement is equal to  zero, indicating that the non-zero gap between $D_{\max,\FF_{\rm STAB}}$ and $D_{s,\FF_{\rm STAB}}$ is a generic feature shared by the majority of qutrit states.  
We leave a thorough investigation on the relationship between the weight resource measure and the robustness measures for future work.

\section{Conclusions}

We established a quantitative relation between the one-shot distillable resource and resource cost for general quantum resource theories, including both state-based resources as well as dynamical resources of quantum channels.
We also showed the corresponding bounds in the asymptotic regime and recovered the familiar relation between the distillable resource and resource cost in the form of strong converse bounds. 
We investigated the conditions that appear in the yield--cost relation and related it to a class of free operations that have similar properties to twirling operations.
We employed such operations to obtain analytical expressions for several smoothed resource measures for general resource theories of states and tightened the yield--cost relation.
We then applied our operational technique to evaluate resource measures for several standard resource states in the resource theory of magic, recovering previous results with different techniques and presenting new evaluations of measures for some magic states.  

Outstanding questions include the extension of the results that were only shown for state theories in this work to channel theories, such as the two-sided strong converse bound \eqref{eq:two-sided strong converse} and the relation to the twirling-like operation. The difficulty in characterizing the manipulation of quantum channels, and in particular their asymptotic properties~\cite{Gour2019how}, makes such questions non-trivial to answer.
Another interesting direction is to use the operational techniques introduced here to shed light on settings other than magic theory. 
On the other hand, it will also be beneficial to gain a deeper understanding of the structure of magic theory, for which additional operational insights might be helpful. 

\begin{acknowledgments}
We thank an anonymous referee of Ref.~\cite{Wilde2021second} for  suggesting to improve a result in Ref.~\cite{Wilde2021second} by using a tighter triangle inequality of the purified distance from Ref.~\cite{Tomamichel2015quantum},  which eventually allowed us to arrive at the bound in Theorem~\ref{thm:second law general state}.
R.T.\ acknowledges the support from National Research Foundation (NRF) Singapore, under its NRFF Fellow programme (Award No. NRF-NRFF2016-02), the Singapore Ministry of Education Tier 1 Grant 2019-T1-002-015, and the Lee Kuan Yew Postdoctoral Fellowship at Nanyang Technological University Singapore. Any opinions, findings and conclusions or recommendations expressed in this material are those of the author(s) and do not reflect the views of National Research Foundation, Singapore.
B.R.\ acknowledges support of the Japan Society for the Promotion of Science (JSPS) \mbox{KAKENHI} Grant No.\ 21F21015, the JSPS Postdoctoral Fellowship for Research in Japan, and the Presidential Postdoctoral Fellowship from Nanyang Technological University, Singapore.
M.M.W.\ acknowledges support from the NSF under Grant No.~1907615.
\end{acknowledgments}


\appendix

\section{Proofs of Theorem~\ref{thm:second law general}, Lemma~\ref{lem:second law}, and Theorem~\ref{thm:asymptotic}}\label{app:second law general proof}

Let us first define the smoothed channel divergences~\cite{Cooney2015strong,Diaz2018using}:
\bal
 D_H^\epsilon(\E\|\M)\coloneqq \sup_\psi D_H^\epsilon(\idc\otimes\E(\psi)\|\idc\otimes\M(\psi))\\
 D_{\max}^\epsilon(\E\|\M)\coloneqq \inf_{F(\E',\E)\geq 1-\epsilon} D_{\max}(\E'\|\M),
\eal
which construct the following resource monotones~\cite{Diaz2018using,Liu2019resource,Liu2020operational}:
\bal
 D_{H,\OO}^\epsilon(\E)&\coloneqq \inf_{\M\in\OO} D_H^\epsilon(\E\|\M)\\
 D_{\max,\OO}^\epsilon(\E)&\coloneqq \inf_{\M\in\OO} D_{\max}^\epsilon(\E\|\M).
\eal

We begin by showing some useful lemmas. 
The first two are channel extensions of the relations shown in Refs.~\cite{Datta2013smooth,Anshu2019minimax} between the hypothesis-testing relative entropy and the max-relative entropy of quantum states. They have an interpretation in the resource theory of asymmetric distinguishability as a one-shot yield--cost relation for bits of asymmetric distinguishability \cite{Wang2019asymmetric_channel}. 

\begin{lemma}\label{lem:hypothesis max channel square}

For all $\epsilon_1,\epsilon_2\geq 0$ with $\epsilon_1+\sqrt{\epsilon_2}<1$, two arbitrary channels $\E,\M$ satisfy 
\bal
 D_{H}^{\epsilon_1}(\E\|\M)\leq D_{\max}^{\epsilon_2}(\E\|\M) +  \log\frac{1}{1-\epsilon_1-\sqrt{\epsilon_2}}.
\eal
\end{lemma}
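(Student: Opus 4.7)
The plan is to reduce Lemma~\ref{lem:hypothesis max channel square} to its state-level counterpart from Refs.~\cite{Datta2013smooth,Anshu2019minimax}, which asserts that for arbitrary states $\rho,\sigma$ and $\epsilon_1,\epsilon_2\ge 0$ with $\epsilon_1+\sqrt{\epsilon_2}<1$,
\[
D_H^{\epsilon_1}(\rho\|\sigma)\le D_{\max}^{\epsilon_2}(\rho\|\sigma)+\log\tfrac{1}{1-\epsilon_1-\sqrt{\epsilon_2}}.
\]
Because both channel divergences are built from a sup/inf over input pure states of the corresponding state divergences on the ancilla-extended output states $\idc\otimes\E(\psi)$ and $\idc\otimes\M(\psi)$, lifting the inequality should reduce to applying the state version pointwise and handling the order of optimizations correctly.

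Concretely, I would fix an arbitrary pure input $\psi$ and set $\rho_\psi:=\idc\otimes\E(\psi)$, $\sigma_\psi:=\idc\otimes\M(\psi)$. The first step is to invoke the state-level inequality on $(\rho_\psi,\sigma_\psi)$. The second step is to upper-bound the state-level smoothed max-divergence by its channel counterpart: for any channel $\E'$ with $F(\E',\E)\ge 1-\epsilon_2$, the worst-case definition \eqref{eq:fidelity def} of channel fidelity immediately forces $F(\idc\otimes\E'(\psi),\rho_\psi)\ge 1-\epsilon_2$, so $\idc\otimes\E'(\psi)$ is feasible in the state smoothing and
\[
D_{\max}^{\epsilon_2}(\rho_\psi\|\sigma_\psi)\le D_{\max}(\idc\otimes\E'(\psi)\|\sigma_\psi)\le D_{\max}(\E'\|\M),
\]
where the final inequality uses the sup-over-inputs definition of $D_{\max}(\E'\|\M)$. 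Infimizing over feasible $\E'$ yields $D_{\max}^{\epsilon_2}(\rho_\psi\|\sigma_\psi)\le D_{\max}^{\epsilon_2}(\E\|\M)$ for every $\psi$. Finally, chaining the two bounds and taking the supremum over $\psi$ on the left-hand side produces $D_H^{\epsilon_1}(\E\|\M)$ by definition, completing the proof.

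The only substantive ingredient is the cited state-level inequality; its standard proof combines the Fuchs--van de Graaf bound (to convert the $\sqrt{\epsilon_2}$-fidelity gap between $\rho$ and a $D_{\max}^{\epsilon_2}$-optimal $\rho'$ into a trace-distance bound) with the transfer of an $\epsilon_1$-optimal hypothesis test from $\rho$ to $\rho'$, followed by the operator inequality $\rho'\le 2^{D_{\max}(\rho'\|\sigma)}\sigma$. The only conceptual subtlety in the channel lifting is the asymmetry between single-input smoothing at the state level and worst-case smoothing at the channel level; the plan threads this needle in the easy direction, since evaluating the channel-level optimizer $\E'$ at the same $\psi$ always furnishes a legitimate state-smoothing at that $\psi$, so no additional minimax interchange is required.
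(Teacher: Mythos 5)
Your proposal is correct and follows essentially the same route as the paper: invoke the state-level inequality of Refs.~\cite{Datta2013smooth,Anshu2019minimax} on $\idc\otimes\E(\psi)$ and $\idc\otimes\M(\psi)$, note that any channel-level smoothing $\E'$ with $F(\E',\E)\geq 1-\epsilon_2$ furnishes a feasible state-level smoothing at each $\psi$, and then pass to the channel quantities. Your pointwise-in-$\psi$ phrasing is just a repackaging of the max--min inequality step the paper uses explicitly, so the two arguments coincide.
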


\begin{proof}
Ref.~\cite{Datta2013smooth} showed that 
\bal
 D_H^{\epsilon_1}(\rho\|\sigma)\leq D_{\max}^{\epsilon_2}(\rho\|\sigma)+\log\frac{1}{1-\epsilon_1-\sqrt{\epsilon_2}}
 \label{eq:hypothesis-testing and max}
\eal
for all $\epsilon_1,\epsilon_2\geq 0$, $\epsilon_1+\sqrt{\epsilon_2}<1$ and all states $\rho,\sigma$.
Then, we get
\begin{align}\begin{aligned}
 &D_H^{\epsilon_1}(\E\|\M)\\&=\sup_\psi D_H^{\epsilon_1}(\idc\otimes\E(\psi)\|\idc\otimes\M(\psi))\\
 &\textleq{(1)} \sup_\psi D_{\max}^{\epsilon_2}(\idc\otimes\E(\psi)\|\idc\otimes\M(\psi)) + \log\frac{1}{1-\epsilon_1-\sqrt{\epsilon_2}}\\
 &=\sup_\psi \inf_{F(\phi,\idc\otimes\E(\psi))\geq 1-\epsilon_2} D_{\max}(\phi\|\idc\otimes\M(\psi))\\
 & \qquad +\log\frac{1}{1-\epsilon_1-\sqrt{\epsilon_2}}\\
 &\textleq{(2)}\sup_\psi \inf_{F(\E',\E)\geq 1-\epsilon_2} D_{\max}(\idc\otimes\E'(\psi)\|\idc\otimes\M(\psi))\\
 & \qquad +\log\frac{1}{1-\epsilon_1-\sqrt{\epsilon_2}}\\
 &\textleq{(3)}  \inf_{F(\E',\E)\geq 1-\epsilon_2} \sup_\psi D_{\max}(\idc\otimes\E'(\psi)\|\idc\otimes\M(\psi))\\
 & \qquad +\log\frac{1}{1-\epsilon_1-\sqrt{\epsilon_2}}\\
 &= D_{\max}^{\epsilon_2}(\E\|\M)+\log\frac{1}{1-\epsilon_1-\sqrt{\epsilon_2}},
\end{aligned}\end{align}
where we used (1): Eq.~\eqref{eq:hypothesis-testing and max},\ (2): the restriction of the optimization over $\phi$ to the form $\phi=\idc\otimes\E'(\psi)$ with $F(\E',\E)\geq 1-\epsilon_2$, which is justified by   $F(\idc\otimes\E'(\psi),\idc\otimes\E(\psi))\geq F(\E',\E)\geq 1-\epsilon_2$ for every $\psi$,\ (3): max-min inequality. 
\end{proof}

We note that alternative bounds with trace-distance and diamond-distance smoothing were shown in Refs.~\cite{Wang2019asymmetric_state,Wang2019asymmetric_channel}. 

We also have the following alternative bound. 

\begin{lemma}\label{lem:hypothesis max channel minimax}
For all $\epsilon_1,\epsilon_2\geq 0$ with $\epsilon_1+\epsilon_2<1$, two arbitrary channels $\E,\M$ satisfy
\bal
 D_{H}^{\epsilon_1}(\E\|\M)\leq D_{\max}^{\epsilon_2}(\E\|\M)+\log\frac{1}{ \left(\sqrt{1-\epsilon_2}-\sqrt{\epsilon_1}\right)^2}.
\eal
\end{lemma}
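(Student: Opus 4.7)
The plan is to first prove the corresponding state-level inequality and then lift to channels by the same three-step argument already used in the proof of Lemma~\ref{lem:hypothesis max channel square}. Concretely, the state-level claim I aim at is
$$
D_H^{\epsilon_1}(\rho\|\sigma)\leq D_{\max}^{\epsilon_2}(\rho\|\sigma)+\log\frac{1}{(\sqrt{1-\epsilon_2}-\sqrt{\epsilon_1})^2}
$$
for all states $\rho,\sigma$ and all $\epsilon_1,\epsilon_2\geq 0$ with $\epsilon_1+\epsilon_2<1$.

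To establish the state-level bound, I would fix an arbitrary test $0\leq Q\leq\id$ feasible in $D_H^{\epsilon_1}(\rho\|\sigma)$, i.e., $\Tr[Q\rho]\geq 1-\epsilon_1$, together with an arbitrary smoothing candidate $\rho'$ satisfying $F(\rho',\rho)\geq 1-\epsilon_2$. Data-processing of the squared Uhlmann fidelity $F(\rho,\sigma)=\|\sqrt{\rho}\sqrt{\sigma}\|_1^2$ under the binary POVM $\{Q,\id-Q\}$ yields
$$
\sqrt{\Tr[Q\rho]\,\Tr[Q\rho']}+\sqrt{\Tr[(\id-Q)\rho]\,\Tr[(\id-Q)\rho']}\;\geq\;\sqrt{F(\rho,\rho')}\;\geq\;\sqrt{1-\epsilon_2}.
$$
Using $\Tr[(\id-Q)\rho]\leq\epsilon_1$, $\Tr[(\id-Q)\rho']\leq 1$, and $\Tr[Q\rho]\leq 1$ this rearranges to $\Tr[Q\rho']\geq (\sqrt{1-\epsilon_2}-\sqrt{\epsilon_1})^2$. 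Inserting $\rho'\leq 2^{D_{\max}(\rho'\|\sigma)}\sigma$ then gives
$$
\Tr[Q\sigma]\;\geq\; 2^{-D_{\max}(\rho'\|\sigma)}\bigl(\sqrt{1-\epsilon_2}-\sqrt{\epsilon_1}\bigr)^2,
$$
and taking $-\log$, supremum over admissible $Q$, and infimum over admissible $\rho'$ produces the state-level inequality.

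The channel-level statement then follows by exactly the chain used in the proof of Lemma~\ref{lem:hypothesis max channel square}: expand $D_H^{\epsilon_1}(\E\|\M)=\sup_\psi D_H^{\epsilon_1}(\idc\otimes\E(\psi)\|\idc\otimes\M(\psi))$; apply the state-level bound inside the supremum; restrict the smoothing optimization to states of the form $\idc\otimes\E'(\psi)$ with $F(\E',\E)\geq 1-\epsilon_2$, which is legitimate because $F(\idc\otimes\E'(\psi),\idc\otimes\E(\psi))\geq F(\E',\E)$; and finally swap $\sup_\psi$ with $\inf_{\E'}$ via the max-min inequality, identifying the resulting expression as $D_{\max}^{\epsilon_2}(\E\|\M)$.

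I do not expect a serious obstacle. The only step requiring care is the data-processing step, where one must track the squared-fidelity convention used throughout the paper and verify the sign in the triangle-type rearrangement (which is where the condition $\epsilon_1+\epsilon_2<1$ is needed to keep $\sqrt{1-\epsilon_2}-\sqrt{\epsilon_1}$ positive). The channel-lifting stage is essentially a copy-paste of the argument already written down for Lemma~\ref{lem:hypothesis max channel square}, so it introduces no new difficulty.
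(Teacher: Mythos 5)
Your proposal is correct and follows the same overall route as the paper: a state-level inequality $D_H^{\epsilon_1}(\rho\|\sigma)\leq D_{\max}^{\epsilon_2}(\rho\|\sigma)+\log\bigl(\sqrt{1-\epsilon_2}-\sqrt{\epsilon_1}\bigr)^{-2}$, lifted to channels by exactly the restriction-plus-max-min argument of Lemma~\ref{lem:hypothesis max channel square}. The only difference is that the paper simply cites Theorem~4 of Ref.~\cite{Anshu2019minimax} for the state-level bound, whereas you rederive it self-containedly via data processing of the fidelity under the binary test $\{Q,\id-Q\}$ and the operator inequality $\rho'\leq 2^{D_{\max}(\rho'\|\sigma)}\sigma$ — a correct and complete substitute for the citation.
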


\begin{proof}

The proof of \cite[Theorem~4]{Anshu2019minimax} established that 
\bal
 D_H^{\epsilon_1}(\rho\|\sigma)\leq D_{\max}^{\epsilon_2}(\rho\|\sigma)+\log\frac{1}{\left(\sqrt{1-\epsilon_2}-\sqrt{\epsilon_1}\right)^{2}}
\eal
for all $\epsilon_1,\epsilon_2\geq 0$, $\epsilon_1+\epsilon_2<1$ and all states $\rho,\sigma$.
The statement for the channels follows by employing the same argument in the proof of Lemma~\ref{lem:hypothesis max channel square}.
\end{proof}

We also recall monotonicity properties under one-shot channel transformations involving smooth measures.

\begin{lemma}[Theorems 1 and 3 in Ref.~\cite{Regula2020oneshot}]\label{lem:converses}
If there exists a free superchannel $\Theta\in\SS_{\max}$ such that $F(\Theta(\E), \N) \geq 1-\epsilon$, then for every resource monotone $\mathfrak{R}_\OO$, 
\begin{equation}\begin{aligned}
  \mathfrak{R}_{\OO} (\E) \geq \mathfrak{R}_{\OO}^{\epsilon} (\N)
\end{aligned}\end{equation}
where 
\bal
 \mathfrak{R}_\OO^\epsilon(\E)\coloneqq \inf_{F(\E',\E)\geq 1-\epsilon}\mathfrak{R}_\OO(\E').
\eal
In particular, 
\begin{equation}\begin{aligned}
  D_{\max,\OO} (\E) \geq D_{\max,\OO}^{\epsilon} (\N).
\end{aligned}\end{equation}

Now let $\N$ be a channel such that $\idc \otimes \N(\psi)$ is pure for every pure state $\psi$. If there exists a free superchannel $\Theta$ such that $F(\Theta(\E), \N) \geq 1-\delta$, then
\begin{equation}\begin{aligned}
  D_{H,\OO}^\delta (\E) \geq D_{\min,\OO}(\N).
\end{aligned}\end{equation}
\end{lemma}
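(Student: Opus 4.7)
The plan is to split the lemma into its two conceptually distinct claims and verify each independently: the generic monotonicity inequality $\mathfrak{R}_\OO(\E)\geq \mathfrak{R}_\OO^\epsilon(\N)$ (of which the $D_{\max,\OO}$ statement is an immediate special case), and the pure-output hypothesis-testing bound $D_{H,\OO}^\delta(\E)\geq D_{\min,\OO}(\N)$. In both, the strategy is to combine monotonicity of the relevant quantity under the free superchannel $\Theta$ with the observation that $\Theta(\E)$ itself supplies a concrete feasible candidate in the optimization defining the right-hand side.

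For the first claim I would simply unpack the monotone axiom: by definition $\mathfrak{R}_\OO(\E)\geq \mathfrak{R}_\OO(\Theta(\E))$ for every $\Theta\in\SS_{\max}$. Since $F(\Theta(\E),\N)\geq 1-\epsilon$, the channel $\Theta(\E)$ is admissible in the infimum defining $\mathfrak{R}_\OO^\epsilon(\N)=\inf_{F(\E',\N)\geq 1-\epsilon}\mathfrak{R}_\OO(\E')$, so $\mathfrak{R}_\OO(\E)\geq \mathfrak{R}_\OO(\Theta(\E))\geq \mathfrak{R}_\OO^\epsilon(\N)$; specializing to $\mathfrak{R}_\OO=D_{\max,\OO}$, which is itself a monotone under $\SS_{\max}$, yields the $D_{\max,\OO}$ inequality as a corollary.

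For the second claim I would combine three ingredients. First, data processing of the smoothed channel divergence under the superchannel $\Theta$ applied jointly to both arguments gives $D_H^\delta(\E\|\M)\geq D_H^\delta(\Theta(\E)\|\Theta(\M))$ for every $\M\in\OO$. Second, purity of $\idc\otimes\N(\psi)$ makes the projector $P_\psi\coloneqq\idc\otimes\N(\psi)$ a valid test in the variational formula \eqref{eq:hypothesis-testing def}, since
\bal
 \Tr[P_\psi\,\idc\otimes\Theta(\E)(\psi)]&=F(\idc\otimes\N(\psi),\idc\otimes\Theta(\E)(\psi))\\
 &\geq F(\N,\Theta(\E))\geq 1-\delta.
\eal
Third, purity also yields the identification $-\log\Tr[P_\psi\,\idc\otimes\Theta(\M)(\psi)]=D_{\min}(\idc\otimes\N(\psi)\|\idc\otimes\Theta(\M)(\psi))$. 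Inserting these into the sup-over-$\psi$ expression for $D_H^\delta(\Theta(\E)\|\Theta(\M))$ produces $D_H^\delta(\E\|\M)\geq D_{\min}(\N\|\Theta(\M))$, and taking the infimum over $\M\in\OO$ on the left --- while noting that $\Theta(\M)$ remains a free output channel --- delivers $D_{H,\OO}^\delta(\E)\geq D_{\min,\OO}(\N)$.

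The main obstacle I anticipate is the first ingredient of the second part: data processing of $D_H^\delta$ under a superchannel rather than under an ordinary channel. Although state-level DPI is standard, one has to exploit the decomposition of $\Theta$ into pre-processing, tensoring with a quantum memory, and post-processing, and then check that the worst-case definition $D_H^\delta(\E\|\M)=\sup_\psi D_H^\delta(\idc\otimes\E(\psi)\|\idc\otimes\M(\psi))$ is compatible with the extra memory register introduced by $\Theta$. This reduces to state-level arguments but requires some careful bookkeeping; the remaining pieces --- feasibility of $P_\psi$, the pure-state identification of $-\log\Tr$ with $D_{\min}$, and the final infimum manipulation --- are essentially routine.
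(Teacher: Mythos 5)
Your proof is correct. The paper does not prove this lemma itself --- it imports it from Ref.~\cite{Regula2020oneshot} (Theorems 1 and 3) --- but your argument reproduces the standard one: monotonicity plus feasibility of $\Theta(\E)$ in the smoothing infimum for the first part, and for the second part the data-processing inequality for $D_H^\delta$ under the pre-/post-processing decomposition of $\Theta$ combined with the pure-output test operator $P_\psi=\idc\otimes\N(\psi)$, which correctly yields $D_{H,\OO}^\delta(\E)\geq D_{\min}(\N\|\Theta(\M))\geq D_{\min,\OO}(\N)$ after the infimum over $\M\in\OO$.
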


We are now in a position to prove Theorem~\ref{thm:second law general} and Lemma~\ref{lem:second law}.
\medskip

\noindent
\textit{Proof of Theorem~\ref{thm:second law general} and Lemma~\ref{lem:second law}.}
Theorem~\ref{thm:second law general} is obtained from Lemma~\ref{lem:second law} by maximizing the left-hand side over $\T_1\in\TT$ and minimizing the right-hand side over $\T_2\in\TT$ for fixed errors $\epsilon_1,\epsilon_2$. 
Thus, it suffices to show Lemma~\ref{lem:second law}.
Let $\T_1$ be an arbitrary channel for which there exists $\Theta_1 \in \SS$ such that $F(\Theta_1(\E),\T_1)\geq 1-\epsilon_1$, and let $\T_2$ be an arbitrary channel for which there exists $\Theta_2 \in \SS$ such that $F(\Theta_2(\T_2),\E)\geq 1-\epsilon_2$. Then,
\begin{equation}\begin{aligned}
  D_{\min,\OO}(\T_1)&\leq D_{H,\OO}^{\epsilon_1}(\E)\\
  &\leq D_{\max,\OO}^{\epsilon_2}(\E) + \log f(\epsilon_1, \epsilon_2)\\
  &\leq D_{\max,\OO}(\T_2) + \log f(\epsilon_1, \epsilon_2),
\end{aligned}\end{equation}
where in the first and the third inequalities we used Lemma~\ref{lem:converses}, and in the second inequality we used Lemmas~\ref{lem:hypothesis max channel square} and \ref{lem:hypothesis max channel minimax}.
\qed
\medskip

Theorem~\ref{thm:asymptotic} can be shown similarly.

\medskip

\noindent
\textit{Proof of Theorem~\ref{thm:asymptotic}.}
Let us first observe that the regularized quantities $D^\infty_{\min,\OO}$ and $D^\infty_{\max,\OO}$ are both well defined. This follows since $D_{\max,\OO}$ and $D_{\min,\OO}$ are subadditive under assumption (ii) --- as can be seen from their definitions using the additivity of $D_{\max}$ and $D_{\min}$ on tensor-product arguments --- and hence the limit in the definition of the regularized quantities exists by Fekete's lemma.
Now, by assumption (i), there exists a subsequence $\{n_k\}_k$ of indices such that $\lim_{k\to\infty} \epsilon_{n_k} + \delta_{n_k} < 1$.
Then
\begin{equation}\begin{aligned}
  & \!\!\!\!d \, D^\infty_{\min,\OO}(\T) \\
  &\texteq{(1)} \lim_{k\to\infty} \frac{\floor{d n_k}}{n_k} \frac{1}{\floor{d n_k}} D_{\min,\OO}(\T^{\otimes \floor{d n_{k}}})\\
  &\textleq{(2)} \lim_{k\to\infty} \frac{1}{n_k} D_{H,\OO}^{\delta_{n_{k}}}(\E^{\otimes {n_{k}}})\\
  &\textleq{(3)} \lim_{k \to \infty} \frac{1}{n_k} \left( D_{\max,\OO}^{\epsilon_{n_{k}}}(\E^{\otimes {n_{k}}}) + \log f(\epsilon_{n_{k}}, \delta_{n_{k}}) \right)\\
  &\textleq{(4)} \lim_{k \to \infty} \frac{1}{n_k} \left( D_{\max,\OO}(\T^{\otimes \ceil{c{n_{k}}} }) + \log f(\epsilon_{n_{k}}, \delta_{n_{k}}) \right)\\
    &\texteq{(5)}  \lim_{k\to\infty} \frac{\ceil{c n_k}}{n_k} \frac{1}{\ceil{c n_k}} D_{\max,\OO}(\T^{\otimes \ceil{c{n_{k}}} })\\
    &\texteq{(6)} c \, D^\infty_{\max,\OO}(\T)
\end{aligned}\end{equation}
where we used: (1) the fact that $\lim_{k\to\infty} \frac{\floor{d n_k}}{n_k} = d$ and that
\begin{equation}\begin{aligned}
  \lim_{k\to\infty} \frac{1}{\floor{d n_k}} D_{\min,\OO}(\T^{\otimes \floor{d n_{k}}}) = D^\infty_{\min,\OO}(\T)
\end{aligned}\end{equation}
by definition; (2) Lemma~\ref{lem:converses}; (3) Lemmas~\ref{lem:hypothesis max channel square}--\ref{lem:hypothesis max channel minimax} and that $\lim_{k\to\infty} \epsilon_{n_k} + \delta_{n_k} < 1$; (4) Lemma~\ref{lem:converses}; (5) $\lim_{k\to\infty}\frac{1}{n_k}f(\epsilon_{n_k},\delta_{n_k})=0$; (6) the same argument as in (1).

\qed
\medskip

We remark that the subadditivity of $D_{\max,\OO}$ under assumption (ii) implies that $D^\infty_{\max,\OO}(\T) \leq D_{\max,\OO}(\T)$, giving also the general bound
\begin{equation}\begin{aligned}
 d\,D^\infty_{\min,\OO}(\T)\leq c\,D_{\max,\OO}(\T).
\end{aligned}\end{equation}

\section{Applicability to specific settings}\label{app:reference}
We briefly review several examples of physical settings, as well as a reference channel $\T$, that satisfy the conditions of Theorem~\ref{thm:second law general}. 
Examples discussed here are either state preparation channels with one-dimensional input or unitaries with input and output of the same dimension, ensuring that $\idc\otimes\T(\psi)$ is pure for every pure state $\psi$.
They further satisfy $D_{\min,\OO}(\T)=D_{\max,\OO}(\T)$ and thus serve as appropriate references for distillation and dilution protocols that meet the conditions in Theorem~\ref{thm:second law general}.

Let us first consider state theories, where $\OO$ is a set of channels preparing free states (denoted by $\FF$ in Section~\ref{sec:theory of states}). 
The theory of bipartite entanglement~\cite{Horodecki2009quantum}, in which separable states serve as free states, takes a maximally entangled state $\ket{\Phi^+_m} = \sum_{i=1}^{m} m^{-1/2} \ket{ii}$ as a reference state and obeys $D_{\min,\OO}(\Phi^+_m)=D_{\max,\OO}(\Phi^+_m)=\log m$.
In the case of the theory of coherence~\cite{Streltsov2017quantum}, where the free states are the diagonal states with respect to a given preferred basis, a maximally coherent state, i.e., a uniform superposition of the basis states $\ket{\phi^+_m} = \sum_{i=1}^{m} m^{-1/2} \ket{i}$, meets the condition $D_{\min,\OO}(\phi^+_m)=D_{\max,\OO}(\phi^+_m)=\log m$.
The theory of thermal non-equilibrium~\cite{Brandao2013resource} is defined by fixed temperature and Hamiltonian. 
The ``work bit'' represented by the eigenstate of the Hamiltonian with the highest energy is considered as a standard unit resource, and it also satisfies the conditions of Theorem~\ref{thm:second law general}.
In fact, the above observation can be generalized to an arbitrary state theory; every theory with an arbitrary convex set $\FF$ of free states is equipped with a \emph{golden state} $\Phi_{\rm gold}$~\cite{Liu2019one-shot} such that $D_{\min,\FF}(\Phi_{\rm gold})=D_{\max,\FF}(\Phi_{\rm gold})$~\cite{Regula2020benchmarking}.
See also Section~\ref{sec:magic} in the main text for further discussion on the theory of magic states. 

Many important dynamical resource theories are relevant to the setting of quantum communication.  
A central purpose of quantum communication is to transmit a quantum state from one party to another.  
In such a scenario, the identity channel connecting the two parties is considered as the most useful channel.  
It is then natural to take the $m$-dimensional identity channel $\idc_m$ as a reference in distillation and dilution protocols, and indeed, the identity channel satisfies the conditions in Theorem~\ref{thm:second law general} under several coding schemes that are specified by different choices of $\OO$~\cite{Regula2021fundamental,Regula2020oneshot}. For instance, in the theory of no-signalling--assisted communication, we get $D_{\min,\OO}(\idc_m)=D_{\max,\OO}(\idc_m)=2 \log m$, while in quantum communication assisted by LOCC, separable, or positive partial transpose codes we get $D_{\min,\OO}(\idc_m)=D_{\max,\OO}(\idc_m)=\log m$.
A related setting is a framework that quantifies how much quantum memory a given channel can preserve~\cite{Rosset2018resource}. 
This theory takes the set of entanglement-breaking channels as free channels, and the identity channel again serves as an appropriate reference channel~\cite{Yuan2020universal}. 
In relation to Section~\ref{sec:magic}, dynamical resource theories of magic~\cite{Seddon2019quantifying,Wang2019quantifying} admit several unitary gates as reference channels satisfying the conditions in Theorem~\ref{thm:second law general}~\cite{Regula2021fundamental}.

\section{Alternative asymptotic bounds}\label{app:alternative asymptotic}

Besides Theorem~\ref{thm:asymptotic}, we can also show alternative asymptotic bounds, which are tighter but have additional conditions on the achievable errors. 
Another advantage of our alternative bounds is that they do not require the reference channel $\T$ to be pure, i.e., $\idc\otimes\T(\psi)$ does not need to be pure for every pure state $\psi$.  

Let us begin by presenting some useful lemmas.
We first introduce the smoothed hypothesis-testing relative entropy measure as
\bal
 D_{H,\OO}^{\epsilon,\delta}(\E)\coloneqq \inf_{F(\E',\E)\geq 1-\delta} D_{H,\OO}^{\epsilon}(\E').
 \label{eq:smoothed hypothesis def}
\eal
Then, we can relate this smoothed measure and the standard hypothesis-testing measure as follows.

\begin{lemma}\label{lem:smoothed hypothesis}
For an arbitrary channel $\E$ and all $\epsilon,\delta\geq 0$ satisfying $\epsilon+\sqrt{\delta}\leq 1$,  
 \bal
 D_{H,\OO}^{\epsilon}(\E)\leq D_{H,\OO}^{\epsilon+\sqrt{\delta},\delta}(\E).
 \eal
\end{lemma}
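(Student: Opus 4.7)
The plan is to reduce the channel-level statement to a comparable state-level inequality, and then to prove the state version through a direct POVM-transfer argument at the level of the primal hypothesis-testing optimization.

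More specifically, I would first establish the following state-level claim: whenever $F(\rho,\rho')\geq 1-\delta$ and $\epsilon+\sqrt{\delta}\leq 1$, one has $D_H^\epsilon(\rho\|\sigma)\leq D_H^{\epsilon+\sqrt{\delta}}(\rho'\|\sigma)$ for every state $\sigma$. Writing the hypothesis-testing divergence through its type-II error, $\beta_\epsilon(\rho\|\sigma)=\inf\{\Tr(P\sigma)\,:\,0\leq P\leq I,\,\Tr(P\rho)\geq 1-\epsilon\}$, it suffices to show that any $P$ feasible for $\beta_\epsilon(\rho\|\sigma)$ is also feasible for $\beta_{\epsilon+\sqrt{\delta}}(\rho'\|\sigma)$. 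This in turn follows from the elementary chain
\begin{equation*}
|\Tr(P\rho)-\Tr(P\rho')|\leq \tfrac{1}{2}\|\rho-\rho'\|_{1}\leq \sqrt{1-F(\rho,\rho')}\leq \sqrt{\delta},
\end{equation*}
where the first step is the standard bound for $0\leq P\leq I$, the second is the Fuchs--van de Graaf inequality in the paper's (squared) fidelity convention, and the last uses the hypothesis $F(\rho,\rho')\geq 1-\delta$. Hence $\Tr(P\rho')\geq 1-\epsilon-\sqrt{\delta}$, giving $\beta_\epsilon(\rho\|\sigma)\geq \beta_{\epsilon+\sqrt{\delta}}(\rho'\|\sigma)$ and the claimed state-level inequality upon taking negative logarithms.

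Next, I would lift to the channel setting. For an arbitrary $\E'$ with $F(\E',\E)\geq 1-\delta$ and any $\M\in\OO$, the worst-case definition of channel fidelity in \eqref{eq:fidelity def} enforces $F(\idc\otimes\E(\psi),\idc\otimes\E'(\psi))\geq F(\E,\E')\geq 1-\delta$ for every pure input $\psi$. Applying the state-level bound with $\rho=\idc\otimes\E(\psi)$, $\rho'=\idc\otimes\E'(\psi)$, and $\sigma=\idc\otimes\M(\psi)$, and then taking the supremum over $\psi$ followed by the infimum over $\M\in\OO$, yields $D_{H,\OO}^{\epsilon+\sqrt{\delta}}(\E')\geq D_{H,\OO}^{\epsilon}(\E)$. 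Finally, taking the infimum over all $\E'$ with $F(\E',\E)\geq 1-\delta$ produces precisely $D_{H,\OO}^{\epsilon+\sqrt{\delta},\delta}(\E)\geq D_{H,\OO}^{\epsilon}(\E)$, which is the desired bound.

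I do not anticipate any serious obstacle: the core argument is a one-line POVM-transfer estimate, and the channel lift uses only the definitions together with the worst-case-fidelity property. The only care needed is bookkeeping of the fidelity convention (so that $\sqrt{1-F}$ rather than $\sqrt{1-F^2}$ appears in the Fuchs--van de Graaf step) and ensuring that $\epsilon+\sqrt{\delta}\leq 1$ keeps the optimization on the right-hand side non-degenerate.
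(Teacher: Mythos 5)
Your proof is correct and follows essentially the same route as the paper's: the key step in both is the POVM-transfer estimate $|\Tr(P\rho)-\Tr(P\rho')|\leq\frac{1}{2}\|\rho-\rho'\|_1\leq\sqrt{1-F}$ showing that any test feasible for $\E$ at error $\epsilon$ remains feasible for a nearby $\E'$ at error $\epsilon+\sqrt{\delta}$. The only (cosmetic) difference is that you factor this into a clean state-level lemma and propagate it through the $\sup_\psi$ and $\inf_{\M}$ pointwise, whereas the paper works directly inside the nested channel optimization and invokes the max--min inequality at the end; both are valid.
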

\begin{proof}
The hypothesis-testing measure can explicitly be written by
 \begin{equation}\begin{aligned}
  D_{H,\OO}^{\epsilon}(\E)&=-\log \max_{\M\in\OO}\min_\psi\min_{\substack{0\leq P \leq \id\\\Tr[P\idc\otimes\E(\psi)]\geq 1-\epsilon}}\Tr[P\,\idc\otimes\M(\psi)],
 \end{aligned}\end{equation}
where we define $\log 0\coloneqq-\infty$ here and throughout the manuscript.

 Note that for an arbitrary channel $\E'$ satisfying $F(\E',\E)\geq 1-\delta$, an arbitrary positive semidefinite operator $P$ satisfying $0\leq P \leq \id$, and an arbitrary state $\psi$, 
 \begin{equation}\begin{aligned}
 &\left|\Tr[P\,\idc\otimes\E(\psi)]-\Tr[P\,\idc\otimes\E'(\psi)]\right|\\
 &\leq \frac{1}{2}\|\idc\otimes\E(\psi)-\idc\otimes\E'(\psi)\|_1\\
 &\leq \sqrt{1-F(\E',\E)},
 \end{aligned}\end{equation}
where the first inequality is because of the optimization form of the trace distance $\frac{1}{2}\|\rho-\sigma\|_1=\max_{0\leq P \leq \id}\Tr[P(\rho-\sigma)]$ satisfied for arbitrary two states $\rho$ and $\sigma$~\cite{Nielsen2011quantum}, and the second inequality is because 
\begin{align}
 & \frac{1}{2}\|\idc\otimes\E(\psi)-\idc\otimes\E'(\psi)\|_1\notag \\
 &\leq \sqrt{1-F(\idc\otimes\E(\psi),\idc\otimes\E'(\psi))}\notag \\
 &\leq \sqrt{1-F(\E,\E')},
\end{align}
where we used the relation between trace distance and fidelity~\cite{Fuchs1999cryptographic} in the first inequality and the definition of channel fidelity~\eqref{eq:fidelity def} in the second inequality.
This implies that as long as $F(\E',\E)\geq 1-\delta$, every $P$ with $0\leq P \leq \id$ and state $\psi$ satisfying $\Tr[P\,\idc\otimes\E(\psi)]\geq 1-\epsilon$ also satisfy $\Tr[P\,\idc\otimes\E'(\psi)]\geq 1-\epsilon-\sqrt{\delta}$.
This gives 
 
\begin{equation}\begin{aligned}
  &\max_{\M\in\OO}\min_\psi\!\min_{\substack{0\leq P \leq \id\\\Tr[P\idc\otimes\E(\psi)]\geq 1-\epsilon}}\Tr[P\,\idc\otimes\M(\psi)]\\
  &\geq \max_{\M\in\OO}\min_\psi\max_{F(\E',\E)\geq 1-\delta}\!\min_{\substack{0\leq P \leq \id\\\Tr[P\idc\otimes\E'(\psi)]\geq 1-\epsilon-\sqrt{\delta}}}\Tr[P\,\idc\otimes\M(\psi)]\\
  &\geq\max_{F(\E',\E)\geq 1-\delta}\max_{\M\in\OO}\min_\psi\!\min_{\substack{0\leq P \leq \id\\\Tr[P\idc\otimes\E'(\psi)]\geq 1-\epsilon-\sqrt{\delta}}}\Tr[P\,\idc\otimes\M(\psi)]
 \end{aligned}\end{equation}
 where in the last line we used the max-min inequality. 
Taking $-\log$ on both sides concludes the proof. 
\end{proof}

The following result, which is a variant of Theorem~1 in Ref.~\cite{Regula2020oneshot}, is also useful.

\begin{lemma}\label{lem:converse with smoothing}
If there exists a free superchannel $\Theta\in\SS_{\max}$ such that $F(\Theta(\E), \N) \geq 1-\epsilon$, then for an arbitrary resource monotone $\mfR_\OO$,  
\bal
\mfR_\OO^\delta(\E)\geq \mfR_{\OO}^{(\sqrt{\delta}+\sqrt{\epsilon})^2}(\N)
\eal
for every $0\leq\delta\leq 1$, where
\bal
 \mfR_\OO^\delta(\E)\coloneqq\inf_{F(\E',\E)\geq 1-\delta} \mfR_\OO(\E').
\eal
\end{lemma}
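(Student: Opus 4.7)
The plan is to prove this as a smoothed version of the standard monotonicity bound by combining (i) monotonicity of the resource measure under the free superchannel $\Theta$ with (ii) the triangle inequality for the purified (sine) distance.

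First, I would pick an arbitrary $\E'$ with $F(\E',\E)\geq 1-\delta$, so that $\E'$ is a competitor in the smoothed infimum defining $\mfR_\OO^\delta(\E)$. Applying $\Theta\in\SS_{\max}$ and using monotonicity of $\mfR_\OO$ under free superchannels, one obtains $\mfR_\OO(\E')\geq \mfR_\OO(\Theta(\E'))$. The key step is then to show that $\Theta(\E')$ is close enough to $\N$, namely
\begin{equation}
F(\Theta(\E'),\N)\;\geq\; 1-\bigl(\sqrt{\delta}+\sqrt{\epsilon}\bigr)^2,
\end{equation}
so that $\Theta(\E')$ is itself a feasible point in the infimum defining $\mfR_\OO^{(\sqrt{\delta}+\sqrt{\epsilon})^2}(\N)$.

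To establish this closeness I would work with the purified distance $P(\E_1,\E_2)\coloneqq\sqrt{1-F(\E_1,\E_2)}$, where $F$ is the worst-case channel fidelity of Eq.~\eqref{eq:fidelity def}. Two facts are needed: (a) $P$ on channels obeys the triangle inequality, which follows from the state-level triangle inequality (Ref.~\cite{Tomamichel2015quantum}) by evaluating both sides at the pure input $\psi^\ast$ that minimises $F(\E_1,\E_3)$ and then bounding each summand by a supremum over $\psi$; and (b) $P$ is monotone under superchannels, since any $\Theta\in\SS_{\max}$ admits a pre-/post-processing dilation, so $F(\Theta(\cdot),\Theta(\cdot))\geq F(\cdot,\cdot)$ by data processing. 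Combining these,
\begin{equation}
\begin{aligned}
P(\Theta(\E'),\N)&\leq P(\Theta(\E'),\Theta(\E))+P(\Theta(\E),\N)\\
&\leq P(\E',\E)+P(\Theta(\E),\N)\\
&\leq \sqrt{\delta}+\sqrt{\epsilon},
\end{aligned}
\end{equation}
which converts back to the desired fidelity lower bound.

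Putting everything together: $\mfR_\OO(\E')\geq \mfR_\OO(\Theta(\E'))\geq \mfR_\OO^{(\sqrt{\delta}+\sqrt{\epsilon})^2}(\N)$, and taking the infimum over all $\E'$ with $F(\E',\E)\geq 1-\delta$ yields the claim. The only non-routine step is verifying the triangle inequality for the worst-case channel purified distance, which is the main obstacle but is handled by the input-state argument above; the rest is bookkeeping. Note that monotonicity under $\Theta$ is what allows a generic monotone $\mfR_\OO$ (not just entropic measures) to appear, and that the bound is tight in the unsmoothed limit $\delta=\epsilon=0$, recovering the standard monotonicity inequality $\mfR_\OO(\E)\geq \mfR_\OO(\N)$ of Lemma~\ref{lem:converses}.
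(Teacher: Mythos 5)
Your proof is correct and is essentially the same argument as the paper's: both rest on monotonicity of $\mfR_\OO$ under free superchannels, data processing of the worst-case channel fidelity, and the triangle inequality for the channel purified distance. The only cosmetic difference is that the paper packages the first two ingredients by observing that $\mfR_\OO^\delta$ is itself a resource monotone and invoking Lemma~\ref{lem:converses} on it before unfolding the nested smoothings, whereas you inline the same steps by tracking a feasible $\E'$ and showing $\Theta(\E')$ lands in the $(\sqrt{\delta}+\sqrt{\epsilon})^2$-ball around $\N$.
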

\begin{proof}
Let $P(\E,\N)\coloneqq \sqrt{1-F(\E,\N)}$ be the sine distance of quantum channels (also known as purified distance), which satisfies the triangle inequality 
\bal
 P(\E,\T) \leq P(\E,\N) + P(\N,\T) 
 \label{eq:triangle purified distance channel}
\eal
for all channels $\E$, $\N$, $\T$~\cite{Gilchrist2005distance}.
Noting that $\mathfrak{R}_\OO^\delta$ is a resource monotone for all $\delta$, we can use Lemma~\ref{lem:converses} to get 

\bal
 \mathfrak{R}_\OO^\delta(\E)&\geq\inf_{P(\N',\N)\leq \sqrt{\epsilon}} \mathfrak{R}_\OO^\delta(\N')\\
 &=\inf_{P(\N',\N)\leq \sqrt{\epsilon}} \inf_{P(\N'',\N')\leq \sqrt{\delta}}\mathfrak{R}_\OO(\N'').
\eal 
Using the triangle inequality of the purified distance, we get $P(\N'',\N)\leq \sqrt{\epsilon} +\sqrt{\delta}$ for all channels $\N$, $\N'$, $\N''$ that satisfy $P(\N',\N)\leq \sqrt{\epsilon}$ and $P(\N'',\N')\leq \sqrt{\delta}$.
Noting also that $P(\N'',\N)\leq \sqrt{\epsilon}+\sqrt{\delta}$ is equivalent to $F(\N'',\N)\geq 1- (\sqrt{\epsilon}+\sqrt{\delta})^2$, we get

\begin{equation}\begin{aligned}
 \mathfrak{R}_\OO^\delta(\E)&\geq  \inf_{F(\N'',\N)\geq 1-(\sqrt{\epsilon}+\sqrt{\delta})^2}\mathfrak{R}_\OO(\N'')\\
 &=\mathfrak{R}_\OO^{(\sqrt{\delta}+\sqrt{\epsilon})^2}(\N).
\end{aligned}\end{equation}
\end{proof}

\begin{remark}
We can improve the triangle inequality \eqref{eq:triangle purified distance channel}. 
Define the following ``best-case'' fidelity 
\begin{align}
 F_{\max}(\E,\N)&\coloneqq\sup_{\rho_{RA}} F(\idc_R\otimes\E(\rho_{RA}),\idc_R\otimes\N(\rho_{RA}))
 \notag \\
 &= \max_{\rho} F(\E(\rho),\N(\rho)),
 \label{eq:max-out-fidelity}
\end{align}
where the supremum in the first line is taken over every possible ancillary system $R$, and the second equality is due to the data-processing inequality, ensuring that tracing out the ancillary system does not decrease the fidelity. 
The last form particularly allows for an efficient computation via the following semidefinite program, which follows from \eqref{eq:max-out-fidelity} and the known semi-definite program for fidelity~\cite{watrous_2013}:
\begin{multline}
    \sqrt{F_{\max}(\E,\N)} =\\ \max_{\substack{\rho\geq 0, \\X\in\text{L}(\H)}} \lset  \Re[\Tr[X]] \sbar \begin{bmatrix}
    \E(\rho) & X^\dag\\
    X & \N(\rho)
    \end{bmatrix} \geq 0 ,\, \Tr[\rho]=1\rset,
\end{multline}
where $\text{L}(\H)$ denotes the set of linear operators acting on the output Hilbert space $\H$ for the channels $\E$ and $\N$.
Then, we have the following inequality. 
\begin{lemma} \label{lem:triangle purified distance improved}
Every set of channels $\E$, $\N$, and $\T$ with $P(\E,\N)^2+P(\N,\T)^2\leq 1$ satisfies 
\bal
 P(\E,\T)\leq P(\E,\N)\sqrt{F_{\max}(\N,\T)}+P(\N,\T)\sqrt{F_{\max}(\E,\N)}.
 \label{eq:triangle purified channel improved}
\eal
\end{lemma}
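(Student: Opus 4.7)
The plan is to lift the Tomamichel improved triangle inequality for the purified distance on states~\cite{Tomamichel2015quantum} to the worst-case / best-case channel setting. The state-level statement, which I would invoke as a black box, asserts that for any states $\rho,\sigma,\tau$ with $P(\rho,\sigma)^2+P(\sigma,\tau)^2\leq 1$,
\begin{equation}
 P(\rho,\tau)\leq P(\rho,\sigma)\sqrt{F(\sigma,\tau)}+P(\sigma,\tau)\sqrt{F(\rho,\sigma)}.
\end{equation}
The channel version in the lemma differs only in that the purified distances are taken in the worst-case sense over ancilla-assisted inputs while the fidelities appearing under the square roots are taken in the best-case sense, so the idea is to apply the state-level inequality at the input that realises $P(\E,\T)$ and then loosen each surviving factor to the appropriate channel-level quantity.

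Concretely, I would first pick a pure bipartite state $\psi^\star$ (on the input system together with a reference of the same dimension) attaining the infimum in the definition of $F(\E,\T)$, so that $P(\idc\otimes\E(\psi^\star),\idc\otimes\T(\psi^\star))=P(\E,\T)$; such a $\psi^\star$ exists by compactness and continuity. Setting $\rho=\idc\otimes\E(\psi^\star)$, $\sigma=\idc\otimes\N(\psi^\star)$, and $\tau=\idc\otimes\T(\psi^\star)$, the hypothesis of the state-level inequality is met because $P(\rho,\sigma)\leq P(\E,\N)$, $P(\sigma,\tau)\leq P(\N,\T)$, and by the standing assumption $P(\E,\N)^2+P(\N,\T)^2\leq 1$. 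Applying the state-level inequality then produces an explicit upper bound on $P(\E,\T)$ in terms of the fixed input $\psi^\star$.

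The last step is to replace each factor on the right-hand side by its channel-level upper bound: $P(\rho,\sigma)\leq P(\E,\N)$ and $P(\sigma,\tau)\leq P(\N,\T)$ by definition of the worst-case channel purified distance, while $F(\sigma,\tau)\leq F_{\max}(\N,\T)$ and $F(\rho,\sigma)\leq F_{\max}(\E,\N)$ follow from the definition of $F_{\max}$ as a supremum over ancilla-assisted inputs. Combining these bounds yields exactly the claimed inequality. The only substantive analytic ingredient is the Tomamichel inequality; beyond that the argument is bookkeeping to ensure that every loosening step goes in the correct direction (worst case for $P$, best case for $F$), so that the combined right-hand side remains an upper bound. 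The main thing to be careful about is that the state-level sum-of-squares hypothesis is inherited from its channel-level counterpart, which is guaranteed by the assumption in the lemma statement and by the observation that the worst-case channel purified distance is a supremum over ancilla-assisted inputs.
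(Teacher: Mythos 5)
Your proposal is correct and follows essentially the same route as the paper: apply the Tomamichel state-level inequality at the worst-case input realising $P(\E,\T)$, check that the sum-of-squares hypothesis is inherited from the channel-level assumption, and then loosen each factor upward (worst-case $P$, best-case $F_{\max}$). No gaps.
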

\begin{proof}
Recall that the purified distance $P(\rho,\sigma)\coloneqq \sqrt{1-F(\rho,\sigma)}$ defined for arbitrary two states $\rho$ and $\sigma$ satisfies the following triangle inequality~\cite{Tomamichel2015quantum} 
\bal
 P(\rho,\tau)\leq P(\rho,\sigma)\sqrt{F(\sigma,\tau)}+P(\sigma,\tau)\sqrt{F(\rho,\sigma)}
\eal
for every set of three states $\rho,\sigma,\tau$ such that $P(\rho,\sigma)^2+P(\sigma,\tau)^2\leq 1$.
We can employ this to obtain 
\begin{equation}\begin{aligned}
 P(\E,\T)&=\max_\psi P(\idc\otimes\E(\psi),\idc\otimes\T(\psi))\\
 &\leq \max_\psi P(\idc\otimes\E(\psi),\idc\otimes\N(\psi))\sqrt{F_{\max}(\N,\T)}\\
 &\quad+\max_\psi P(\idc\otimes\N(\psi),\idc\otimes\T(\psi))\sqrt{F_{\max}(\E,\N)}\\
 &= P(\E,\N)\sqrt{F_{\max}(\N,\T)}+P(\N,\T)\sqrt{F_{\max}(\E,\N)}.
\end{aligned}\end{equation}
\end{proof}

Eq.~\eqref{eq:triangle purified channel improved} is tighter than \eqref{eq:triangle purified distance channel} in general because $F_{\max}(\E,\N)\leq 1$ for every $\E$ and $\N$. 
Although \eqref{eq:triangle purified distance channel} and Lemma~\ref{lem:converse with smoothing} suffice to establish the forthcoming results, Lemma~\ref{lem:triangle purified distance improved} may find use in other different settings. 
\end{remark}

We are ready to show the first alternative asymptotic bound, which is tighter than Theorem~\ref{thm:asymptotic} and does not assume anything on $\T$, while having less flexibility in the achievable errors.

\begin{proposition}
 Let $\E$ be an arbitrary input channel and let $\T$ be some target reference channel. 
 Let $d$ be any rate of distillation such that there exists a sequence $\{\Theta_n\}_n$ of free superchannels with
\begin{equation}\begin{aligned}
  1 - F\left(\Theta_n(\E^{\otimes n}), \T^{\otimes \floor{d n}}\right) \eqqcolon \delta_n.
\end{aligned}\end{equation}
Also, let $c$ be any rate of dilution such that there exists a sequence $\{\Theta_n\}_n$ of free superchannels with
\begin{equation}\begin{aligned}
  1 - F\left(\Theta_n(\T^{\otimes \ceil{c n}}), \E^{\otimes n}\right) \eqqcolon \epsilon_n.
\end{aligned}\end{equation}
Suppose that 
\bal
\lim_{n\to\infty} \delta_{n} = 0,\quad\liminf_{n\to\infty} \epsilon_{n} < 1. 
\eal
Then, the following inequality holds
\begin{equation}\begin{aligned}
   d\cdot\tilde D^\infty_{H,\OO}(\T)\leq c\cdot\tilde D^\infty_{\max,\OO}(\T),
\end{aligned}\end{equation}
where 
\bal
 \tilde D_{H,\OO}^\infty(\T)&\coloneqq \lim_{\epsilon\to 0}\liminf_{n\to \infty} \frac{1}{n} D_{H,\OO}^\epsilon(\T^{\otimes n})\\
 \tilde D_{\max,\OO}^\infty(\E)&\coloneqq \lim_{\epsilon\to 0}\limsup_{n\to\infty}\frac{1}{n} D_{\max,\OO}^\epsilon(\E^{\otimes n}).
\eal

\end{proposition}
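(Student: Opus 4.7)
The plan is to mirror the proof of Theorem~\ref{thm:asymptotic} but replace its use of Lemma~\ref{lem:converses}'s purity-dependent bound (namely, $D_{H,\OO}^\delta(\E) \geq D_{\min,\OO}(\N)$ when $\N$ is a pure target) with a smoothing-based chain that avoids the pureness assumption on $\T$. The first step is to extract a subsequence $\{n_k\}$ along which $\epsilon_{n_k}$ converges to $\epsilon^\star \coloneqq \liminf_n \epsilon_n < 1$; note that $\delta_{n_k} \to 0$ automatically since $\lim_n \delta_n = 0$.

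Next, fix arbitrary $\epsilon, \mu > 0$ small enough that $\epsilon + (\sqrt{\mu}+\sqrt{\epsilon^\star})^2 < 1$, and chain the following inequalities for $k$ large. Lemma~\ref{lem:smoothed hypothesis} yields $D_{H,\OO}^\epsilon(\T^{\otimes \lfloor dn_k\rfloor}) \leq D_{H,\OO}^{\epsilon+\sqrt{\delta_{n_k}},\delta_{n_k}}(\T^{\otimes \lfloor dn_k\rfloor})$. Since $\Theta_{n_k}(\E^{\otimes n_k})$ lies in the fidelity ball of radius $\delta_{n_k}$ around $\T^{\otimes \lfloor dn_k\rfloor}$, definition \eqref{eq:smoothed hypothesis def} followed by monotonicity of the hypothesis-testing resource measure under the free superchannel $\Theta_{n_k}$ gives
\begin{equation*}
D_{H,\OO}^{\epsilon+\sqrt{\delta_{n_k}},\delta_{n_k}}(\T^{\otimes \lfloor dn_k\rfloor}) \leq D_{H,\OO}^{\epsilon+\sqrt{\delta_{n_k}}}(\Theta_{n_k}(\E^{\otimes n_k})) \leq D_{H,\OO}^{\epsilon+\sqrt{\delta_{n_k}}}(\E^{\otimes n_k}).
\end{equation*}
Applying Lemma~\ref{lem:hypothesis max channel minimax} then produces $D_{H,\OO}^{\epsilon+\sqrt{\delta_{n_k}}}(\E^{\otimes n_k}) \leq D_{\max,\OO}^{(\sqrt{\mu}+\sqrt{\epsilon_{n_k}})^2}(\E^{\otimes n_k}) + C_k$, where $C_k$ is a log correction converging to a finite constant as $k \to \infty$, and finally Lemma~\ref{lem:converse with smoothing} applied with the monotone $D_{\max,\OO}^{\mu}$ to the dilution transformation yields $D_{\max,\OO}^{(\sqrt{\mu}+\sqrt{\epsilon_{n_k}})^2}(\E^{\otimes n_k}) \leq D_{\max,\OO}^{\mu}(\T^{\otimes \lceil cn_k\rceil})$.

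Dividing the resulting composite inequality by $\lfloor dn_k\rfloor$, rewriting the right-hand side as $\frac{\lceil cn_k\rceil}{\lfloor dn_k\rfloor}\cdot\frac{D_{\max,\OO}^\mu(\T^{\otimes \lceil cn_k\rceil})}{\lceil cn_k\rceil} + \frac{C_k}{\lfloor dn_k\rfloor}$, and passing to $\liminf_k$ on both sides (using $\lceil cn_k\rceil/\lfloor dn_k\rfloor \to c/d$ and $C_k/\lfloor dn_k\rfloor \to 0$) produces, after lower-bounding the left-hand $\liminf_k$ by the full-sequence $\liminf_n$ and upper-bounding the right-hand $\liminf_k$ by the full-sequence $\limsup_n$, the inequality
\begin{equation*}
d \cdot \liminf_{n\to\infty} \frac{1}{n} D_{H,\OO}^\epsilon(\T^{\otimes n}) \leq c \cdot \limsup_{n\to\infty} \frac{1}{n} D_{\max,\OO}^\mu(\T^{\otimes n}),
\end{equation*}
valid for all sufficiently small $\epsilon, \mu > 0$. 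Since $D_{H,\OO}^\epsilon$ is non-decreasing in $\epsilon$, one has $\tilde D^\infty_{H,\OO}(\T) \leq \liminf_n \frac{1}{n} D_{H,\OO}^\epsilon(\T^{\otimes n})$, and sending $\mu \to 0$ on the right-hand side converts the $\limsup$ into $\tilde D^\infty_{\max,\OO}(\T)$, establishing the claim.

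The main technical obstacle is bookkeeping the numerous error parameters so that the applicability constraints of Lemmas~\ref{lem:smoothed hypothesis}, \ref{lem:hypothesis max channel minimax}, and \ref{lem:converse with smoothing} are simultaneously satisfied along the chosen subsequence, and verifying that the log correction $C_k$ stays uniformly bounded so that its contribution vanishes after normalization by $\lfloor dn_k\rfloor$. The assumption $\epsilon^\star = \liminf_n \epsilon_n < 1$ is precisely what allows $\mu$ to be taken small enough that $(\sqrt{\mu}+\sqrt{\epsilon_{n_k}})^2$ stays bounded away from $1$ for large $k$, preserving the constraint $\epsilon_1+\epsilon_2 < 1$ of Lemma~\ref{lem:hypothesis max channel minimax} throughout.
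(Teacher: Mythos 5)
Your proposal is correct and follows essentially the same route as the paper's proof: the same chain of Lemma~\ref{lem:converse with smoothing} on the dilution side, Lemma~\ref{lem:hypothesis max channel minimax} as the $D_H$--$D_{\max}$ bridge, monotonicity of $D_{H,\OO}^{\epsilon}$ under the distillation superchannel, and Lemma~\ref{lem:smoothed hypothesis} to absorb the distillation error, combined with the same subsequence extraction along $\liminf_n\epsilon_n<1$ and the same vanishing of the logarithmic correction after normalization. The only difference is that you traverse the inequality chain from the yield side to the cost side rather than the reverse, which is purely cosmetic.
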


\begin{proof}

Let $\{n_k\}_k$ be a subsequence of indices such that $\epsilon_\infty\coloneqq\lim_{k\to\infty} \epsilon_{n_k} < 1$.
Then, for every $\xi$ with $0<\xi<1-\epsilon_\infty$, we get
\begin{equation}\begin{aligned}
  & \!\!\!\!c \, \tilde D^\infty_{\max,\OO}(\T) \\
  &\geq  \lim_{\eta\to 0}\liminf_{k\to\infty} \frac{\ceil{c n_k}}{n_k} \frac{1}{\ceil{c n_k}} D_{\max,\OO}^\eta(\T^{\otimes \ceil{c{n_{k}}} })\\
  &\geq \lim_{\eta\to 0}\liminf_{k\to\infty} \frac{1}{n_k} D_{\max,\OO}^{(\sqrt{\eta}+\sqrt{\epsilon_{n_{k}}})^2}(\E^{\otimes {n_{k}}})\\
  &\geq \lim_{\eta\to 0}\liminf_{k\to\infty} \frac{1}{n_k} \left[D_{H,\OO}^{\xi}(\E^{\otimes {n_{k}}})-\log f\left(\xi,(\sqrt{\eta}+\sqrt{\epsilon_{n_{k}}})^2\right)\right]\\
  &\geq \lim_{\eta\to 0}\liminf_{k\to\infty}\\
  &\qquad\frac{1}{n_k}\left[ D_{H,\OO}^{\xi,\delta_{n_k}}(\T^{\otimes \floor{d n_{k}}})-\log f\left(\xi,(\sqrt{\eta}+\sqrt{\epsilon_{n_{k}}})^2\right)\right]\\
  &\geq \lim_{\eta\to 0}\liminf_{k\to\infty}\\
  &\qquad\frac{1}{n_k} \left[D_{H,\OO}^{\xi-\sqrt{\delta_{n_k}}}(\T^{\otimes \floor{d n_{k}}})-\log f\left(\xi,(\sqrt{\eta}+\sqrt{\epsilon_{n_{k}}})^2\right)\right].
  \end{aligned}\end{equation}
The second inequality follows from Lemma~\ref{lem:converse with smoothing}, the third inequality from Lemmas~\ref{lem:hypothesis max channel square}--\ref{lem:hypothesis max channel minimax} and the fact that $\xi+(\sqrt{\eta}+\sqrt{\epsilon_{n_k}})^2<1$ holds for $\eta$ sufficiently close to $0$ and for sufficiently large $k$, the fourth inequality from Lemma~\ref{lem:converses} noting that $D_{H,\OO}^\xi$ is a resource monotone for a fixed $\xi$, and the fifth inequality from Lemma~\ref{lem:smoothed hypothesis} together with that $\xi-\sqrt{\delta_{n_k}}\in[0,1]$ for sufficiently large $k$.

The second term of the last line, $\log f\left(\xi,(\sqrt{\eta}+\sqrt{\epsilon_{n_{k}}})^2\right)/n_k$, vanishes at the limit of $k\to\infty$, which also removes the $\eta$ dependence. 
Also, for every $\delta'>0$, we have $\sqrt{\delta_k}<\delta'$ for sufficiently large $k$. Noting that $D_{H,\OO}^\epsilon$ is nondecreasing with respect to $\epsilon$, we can bound the last line as
\begin{equation}\begin{aligned}
&\geq \lim_{\delta'\to 0}\liminf_{k\to\infty} \frac{1}{n_k} D_{H,\OO}^{\xi-\delta'}(\T^{\otimes \floor{d n_{k}}})\\
&= \lim_{\delta'\to 0}\liminf_{k\to\infty} \frac{\floor{d n_k}}{n_k} \frac{1}{\floor{d n_k}}D_{H,\OO}^{\xi-\delta'}(\T^{\otimes \floor{d n_{k}}}).
\end{aligned}\end{equation}

Since this holds for every $\xi$ with $0<\xi<1-\epsilon_\infty$, we can further take $\lim_{\xi\to 0}$ and use 
\bal
\lim_{\xi\to 0}\lim_{\delta'\to 0}\liminf_{k\to\infty} \frac{1}{\floor{d n_k}}D_{H,\OO}^{\xi-\delta'}(\T^{\otimes \floor{d n_{k}}})\geq \tilde D_{H,\OO}^\infty(\T)
\eal
to get 
\bal
c\,\tilde D_{\max,\OO}^\infty(\T)\geq d\, \tilde D_{H,\OO}^\infty(\T).
\eal

\end{proof}

In particular, this gives a strong converse inequality 
\bal
 \tilde c_\SS^\infty(\E,\T)\,\tilde D_{\max,\OO}^\infty(\T)\geq d_\SS^\infty(\E,\T)\,\tilde D_{H,\OO}^\infty(\T).
\eal
In the case of state transformations with mild assumptions (c.f., Appendix~\ref{app:strong converse}), the asymptotic equipartition property $\tilde D_{\max,\FF}^\infty(\Phi)=\tilde D_{H,\FF}^\infty(\Phi)$ holds~\cite{Brandao2010generalization}, and consequently we get 
\bal
 d_\OO^\infty(\E,\Phi)\leq \tilde c_\OO^\infty(\E,\Phi).
\eal

In addition, if $\Phi$ satisfies $D_{\min,\FF}(\Phi^{\otimes m})=D_{s,\FF}(\Phi^{\otimes m})=m D_{\min,\FF}(\Phi)\ \forall m$, this leads to the double-sided strong converse inequality 
\bal 
\tilde d_\OO^\infty(\rho,\Phi)\leq\tilde c_\OO^\infty(\rho,\Phi)
\eal
as we discuss in Appendix~\ref{app:strong converse}.

Finally, if we further impose a stronger condition to the achievable errors, we can obtain an even tighter bound.

\begin{proposition}\label{pro:strong converse smooth}
Let us consider the setting in which we instead have the condition $\lim_{k\to\infty} \delta_{n_k} = \lim_{k\to\infty} \epsilon_{n_k} = 0$ for some subsequence $\{n_k\}_k$. 
For every resource monotone $\mathfrak{R}_\OO$, define 
\bal
 \overline{\mathfrak{R}}_{\OO}^\infty(\E)&\coloneqq \lim_{\epsilon\to 0}\limsup_{n\to\infty}\frac{1}{n} \mathfrak{R}_{\OO}^\epsilon(\E^{\otimes n})\\
 \underline{\mathfrak{R}}_{\OO}^\infty(\E)&\coloneqq \lim_{\epsilon\to 0}\liminf_{n\to\infty}\frac{1}{n} \mathfrak{R}_{\OO}^\epsilon(\E^{\otimes n}).
\eal
Then, 
\begin{equation}\begin{aligned}
  d\cdot \underline{\mathfrak{R}}^\infty_{\OO}(\T) \leq c\cdot \overline{\mathfrak{R}}^\infty_{\OO}(\T).
\end{aligned}
\label{eq:cost-dist-asymp-with-R_O^infty}\end{equation}
\end{proposition}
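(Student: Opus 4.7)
The plan is to concatenate dilution and distillation into a cycle on many copies of $\T$, and to apply the smoothed monotonicity bound of Lemma~\ref{lem:converse with smoothing} twice in order to track how a generic resource monotone $\mathfrak{R}_\OO$ evolves across that cycle. Fix any auxiliary smoothing parameter $\eta>0$. A first application of Lemma~\ref{lem:converse with smoothing} to the dilution superchannel $\Theta_{n_k}$ (which takes $\T^{\otimes\ceil{c n_k}}$ to within fidelity $1-\epsilon_{n_k}$ of $\E^{\otimes n_k}$) gives
\begin{equation}
\mathfrak{R}_\OO^{\eta}\!\left(\T^{\otimes\ceil{c n_k}}\right)\geq \mathfrak{R}_\OO^{(\sqrt{\eta}+\sqrt{\epsilon_{n_k}})^2}\!\left(\E^{\otimes n_k}\right),
\end{equation}
and a second application to the distillation superchannel yields
\begin{equation}
\mathfrak{R}_\OO^{(\sqrt{\eta}+\sqrt{\epsilon_{n_k}})^2}\!\left(\E^{\otimes n_k}\right)\geq \mathfrak{R}_\OO^{(\sqrt{\eta}+\sqrt{\epsilon_{n_k}}+\sqrt{\delta_{n_k}})^2}\!\left(\T^{\otimes\floor{d n_k}}\right).
\end{equation}
Chaining the two inequalities relates the monotone evaluated on two different rates of copies of $\T$, with a cumulative smoothing of $(\sqrt{\eta}+\sqrt{\epsilon_{n_k}}+\sqrt{\delta_{n_k}})^2$.

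Dividing by $n_k$, the hypothesis $\epsilon_{n_k},\delta_{n_k}\to 0$ ensures that for any $\eta'>\eta$ the cumulative smoothing is eventually at most $\eta'$, so monotonicity of $\mathfrak{R}_\OO^{\epsilon}$ in $\epsilon$ gives $\frac{1}{n_k}\mathfrak{R}_\OO^{\eta}(\T^{\otimes\ceil{c n_k}})\geq\frac{1}{n_k}\mathfrak{R}_\OO^{\eta'}(\T^{\otimes\floor{d n_k}})$ for all large $k$. Combining this with $\ceil{c n_k}/n_k\to c$, $\floor{d n_k}/n_k\to d$, and the elementary facts that the $\limsup$ along a subsequence is at most the full-sequence $\limsup$ while the $\liminf$ along a subsequence is at least the full-sequence $\liminf$, taking $\limsup_k$ on the left and $\liminf_k$ on the right leads to
\begin{equation}
c\cdot\limsup_{m\to\infty}\frac{1}{m}\mathfrak{R}_\OO^{\eta}\!\left(\T^{\otimes m}\right)\geq d\cdot\liminf_{m\to\infty}\frac{1}{m}\mathfrak{R}_\OO^{\eta'}\!\left(\T^{\otimes m}\right).
\end{equation}
Choosing $\eta'=2\eta$ and sending $\eta\to 0$ then produces exactly $c\cdot\overline{\mathfrak{R}}_\OO^\infty(\T)\geq d\cdot\underline{\mathfrak{R}}_\OO^\infty(\T)$ by the definitions in the statement.

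The main obstacle, and the reason the strengthened condition $\epsilon_{n_k},\delta_{n_k}\to 0$ is essential, is the control of the cumulative smoothing parameter $(\sqrt{\eta}+\sqrt{\epsilon_{n_k}}+\sqrt{\delta_{n_k}})^2$: under the weaker hypothesis $\liminf(\epsilon_n+\delta_n)<1$ used in the earlier proposition, this cumulative smoothing stays bounded away from $\eta$ along the subsequence, so one cannot pass $\eta\to 0$ cleanly and must instead settle for the $D_{\max}/D_H$-type bounds proved before, rather than the monotone-independent statement pursued here. Some care is also needed in the final limit to let $\eta'$ and $\eta$ go to zero together without violating $\eta'>\eta$, which is handled trivially by the choice $\eta'=2\eta$.
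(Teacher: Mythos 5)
Your proposal is correct and follows essentially the same route as the paper: two successive applications of Lemma~\ref{lem:converse with smoothing} along the dilution--distillation chain, control of the accumulated smoothing $(\sqrt{\eta}+\sqrt{\epsilon_{n_k}}+\sqrt{\delta_{n_k}})^2$ via the hypothesis $\epsilon_{n_k},\delta_{n_k}\to 0$, and passage to the regularized quantities using the subsequence inequalities for $\limsup$ and $\liminf$. The only cosmetic difference is that the paper introduces an auxiliary parameter $\xi$ bounding $\epsilon_{n_k},\delta_{n_k}$ and sends $\xi\to 0$ separately, whereas you absorb this into the choice $\eta'=2\eta$; both are equivalent.
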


\begin{proof}

\begin{equation}\begin{aligned}
  & \!\!\!\!c \, \overline{\mfR}^\infty_{\OO}(\T) \\
  &\geq  \lim_{\eta\to 0}\liminf_{k\to\infty} \frac{\ceil{c n_k}}{n_k} \frac{1}{\ceil{c n_k}} \mfR_{\OO}^\eta(\T^{\otimes \ceil{c{n_k}} })\\
  &\geq \lim_{\eta\to 0}\liminf_{k\to\infty} \frac{1}{n_k} \mfR_{\OO}^{(\sqrt{\eta}+\sqrt{\epsilon_{n_k}})^2}(\E^{\otimes {n_k}})\\
  &\geq \lim_{\eta\to 0}\liminf_{k\to\infty} \frac{1}{n_k}  \mfR_{\OO}^{(\sqrt{\eta}+\sqrt{\epsilon_{n_k}}+\sqrt{\delta_{n_k}})^2}(\T^{\otimes \floor{d{n_k}} }),
  \label{eq:bound regularized}
  \end{aligned}\end{equation}
where the second and third lines follow from Lemma~\ref{lem:converse with smoothing}.

Since $\lim_{k\to\infty} \epsilon_{n_k}=\lim_{k\to\infty} \delta_{n_k}=0$, for arbitrary constant $\xi>0$, it is ensured that $\epsilon_{n_k}<\xi$ and $\delta_{n_k}<\xi$ for sufficiently large $k$.
Since $\mfR_\OO^\epsilon$ is nonincreasing with respect to $\epsilon$, we can bound the last expression as 
\begin{equation}\begin{aligned}
&\geq\lim_{\eta\to 0}\liminf_{k\to\infty}\frac{1}{n_k}\mfR_{\OO}^{(\sqrt{\eta}+2\sqrt{\xi})^2}(\T^{\otimes \floor{dn_k}}). 
\end{aligned}\end{equation}

Since this holds for every $\xi>0$, we can bound the expression in \eqref{eq:bound regularized} by taking the limit $\xi\to 0$ as
\begin{equation}\begin{aligned}
&\lim_{\eta\to 0}\lim_{\xi\to 0}\liminf_{k \to \infty} \frac{1}{n_k}  \mfR_{\OO}^{(\sqrt{\eta}+\sqrt{\epsilon_{n_k}}+\sqrt{\delta_{n_k}})^2}(\T^{\otimes \floor{dn_k}} )\\
&\geq\lim_{\eta\to 0}\lim_{\xi\to 0}\liminf_{n \to \infty} \frac{1}{n}  \mfR_{\OO}^{(\sqrt{\eta}+2\sqrt{\xi})^2}(\T^{\otimes \floor{d{n}} })\\
&=\lim_{\eta\to 0}\lim_{\xi\to 0}\liminf_{n\to\infty} \frac{\floor{d n}}{n} \frac{1}{\floor{d n}} \mfR_{\OO}^{(\sqrt{\eta}+2\sqrt{\xi})^2}(\T^{\otimes \floor{d{n}} })\\
    &\geq d \, \underline{\mfR}^\infty_{\OO}(\T),
\end{aligned}\end{equation}
resulting in the desired inequality in \eqref{eq:cost-dist-asymp-with-R_O^infty}.
\end{proof}

This extends and complements similar relations known for state transformations in settings such as entanglement~\cite{Donald2002uniqueness,Watrous2018theory} and a general class of resources~\cite{Brandao2015reversible,Kuroiwa2020general}. 
In particular, this reduces to the intuitive bound $c\geq d$ when $\overline{\mfR}^\infty_\OO(\T)=\underline{\mfR}^\infty_\OO(\T)>0$. 
Note, however, that the regularized resource measure may take 0 for all channels in some cases~\cite{Gour2009measuring}.


\section{Strong converse property of distillable resource}\label{app:strong converse}

Ref.~\cite{Brandao2010generalization} discussed a generalization of quantum Stein's lemma for resource theories satisfying mild assumptions.
Using this, Ref.~\cite{Brandao2010reversible} characterized the asymptotic distillable entanglement under the set of non-entangling operations with the regularized relative entropy of entanglement. 
In fact, their argument shows more than that --- the regularized relative entropy of entanglement also serves as a strong converse distillation rate.

Here, we extend this strong converse property to general resource theories by combining the results in Refs.~\cite{Brandao2010reversible} and \cite{Regula2020benchmarking} in the case when the generalized quantum Stein's lemma holds.
This can then turn the one-sided strong converse inequality in Corollary~\ref{cor:strong converse} to a double-sided inequality, making both quantities --- yield and cost --- strong converse rates for each other. 

Before stating the result, we recall the following characterization of the fidelity of distillation.  

\begin{lemma}[\cite{Regula2020benchmarking}]\label{lem:fidelity distillation}

For an arbitrary convex and closed set $\FF$, if $D_{\min,\FF}(\Phi)=D_{s,\FF}(\Phi)= r$, then
\bal
 \sup_{\E\in\OO} F(\E(\rho),\Phi) = G_\FF(\rho; 2^r),
\eal
 where
\begin{multline}
 G_\FF(\rho;K) \coloneqq \\
 \sup\lset\Tr[W \rho]\sbar 0\leq W\leq \id,\ \Tr[W\sigma]\leq \frac{1}{K}\ \forall \sigma\in\FF\rset.
 \label{eq:G def}
\end{multline}
\end{lemma}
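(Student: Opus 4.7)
The plan is to prove the equality by matching bounds on the left-hand side against the SDP defining $G_\FF(\rho; 2^r)$. For the $\leq$ direction, given any $\E \in \OO_{\max}$ I would choose $W \coloneqq \E^\dagger(\Phi)$ in the Heisenberg picture. Since $\E^\dagger$ is completely positive and unital (as $\E$ is CPTP) and $0 \leq \Phi \leq \id$, this gives $0 \leq W \leq \id$. For any $\sigma \in \FF$, the adjoint identity yields $\Tr[W\sigma] = \Tr[\Phi\,\E(\sigma)]$, and since $\E(\sigma) \in \FF$ (because $\E \in \OO_{\max}$) while $D_{\min,\FF}(\Phi) = r$ amounts to $\max_{\tau \in \FF}\Tr[\Phi \tau] = 2^{-r}$ for pure $\Phi$, I conclude $\Tr[W\sigma] \leq 2^{-r}$. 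Hence $W$ is feasible in $G_\FF(\rho;2^r)$ and $F(\E(\rho),\Phi) = \Tr[\Phi\,\E(\rho)] = \Tr[W\rho] \leq G_\FF(\rho;2^r)$.

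For the reverse inequality I would take an optimal witness $W^\star$ for $G_\FF(\rho;2^r)$ and exhibit a free operation that realises the value $\Tr[W^\star \rho]$. The construction I would use is
\[\E_{W^\star}(X) \coloneqq \Tr[W^\star X]\,\Phi + \Tr[(\id - W^\star) X]\,\sigma^\star\]
for a carefully chosen free state $\sigma^\star$. Two properties of $\sigma^\star$ are needed: (a) $\sigma^\star \perp \Phi$, which guarantees $F(\E_{W^\star}(\rho),\Phi) = \Tr[W^\star \rho]$ exactly; and (b) $2^{-r}\Phi + (1 - 2^{-r})\sigma^\star \in \FF$, which keeps the convex combinations $p\Phi + (1-p)\sigma^\star$ with $p \in [0,2^{-r}]$ free. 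To secure both I would pick $\tau$ attaining the optimum of $D_{s,\FF}(\Phi)$, so that $\sigma_0 \coloneqq \tfrac{\Phi + (2^r - 1)\tau}{2^r} \in \FF$; applying the bound $\Tr[\Phi\,\sigma_0] \leq 2^{-r}$ implied by $D_{\min,\FF}(\Phi) = r$ and using purity ($\Tr[\Phi^2] = 1$) forces $\Tr[\Phi\tau] = 0$. Setting $\sigma^\star \coloneqq \tau$ therefore fulfils (a) and (b); the same $\sigma^\star$ is also produced abstractly by Lemma~\ref{lem:resource to twirling}.

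The map $\E_{W^\star}$ is CPTP because $0 \leq W^\star \leq \id$. To verify $\E_{W^\star} \in \OO_{\max}$, for any $\sigma \in \FF$ I write $p \coloneqq \Tr[W^\star \sigma] \in [0,2^{-r}]$ and decompose
\[\E_{W^\star}(\sigma) = p\Phi + (1-p)\sigma^\star = (p\cdot 2^r)\,\sigma_0 + (1 - p\cdot 2^r)\,\sigma^\star,\]
a convex combination of two elements of $\FF$, hence free by convexity of $\FF$. Orthogonality $\sigma^\star \perp \Phi$ then yields $F(\E_{W^\star}(\rho),\Phi) = \Tr[W^\star \rho]$ exactly, closing the reverse inequality and the proof.

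The main obstacle is orchestrating the triple role of $\sigma^\star$: it must simultaneously be a free state, be orthogonal to $\Phi$, and serve as the extremal free state in the standard-robustness decomposition of $\Phi$. The hypothesis $D_{\min,\FF}(\Phi) = D_{s,\FF}(\Phi)$ is precisely what synchronises these three requirements; any weakening breaks either freeness of the intermediate convex combinations above or the exact attainment of the fidelity, and the lemma would collapse to a one-sided inequality.
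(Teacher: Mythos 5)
Your proof is correct. Note that the paper does not actually prove this lemma --- it is imported by citation from Ref.~\cite{Regula2020benchmarking} --- but your reconstruction is the standard argument and matches the machinery the paper deploys elsewhere: the converse direction via the Heisenberg-picture witness $W=\E^\dagger(\Phi)$ is exactly the duality between the fidelity of distillation and the SDP $G_\FF$, and your achievability construction is precisely the measure-and-prepare channel of Lemma~\ref{lem:resource to twirling}, including the derivation of $\Tr[\Phi\sigma^\star]=0$ from $D_{\min,\FF}(\Phi)=D_{s,\FF}(\Phi)$. Two small caveats: the argument silently assumes $\Phi$ is pure (needed both for $F(\E(\rho),\Phi)=\Tr[\Phi\,\E(\rho)]$ and for $\Pi_\Phi=\Phi$ in the identity $\max_{\tau\in\FF}\Tr[\Phi\tau]=2^{-r}$), which is the intended reading since the lemma is applied to pure reference states; and the step forcing $\Tr[\Phi\tau]=0$ degenerates when $r=0$ (the factor $2^r-1$ vanishes), though that case is trivial because $\Phi\in\FF$ and the replacer channel onto $\Phi$ already attains $G_\FF(\rho;1)=1$.
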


Then, we get the following relations. 

\begin{proposition}
 Let $\FF$ be a set of free states which:
 \begin{enumerate}
     \item is convex and closed,
     \item contains a full-rank state,
     \item is closed under partial trace and composition of free states.
 \end{enumerate}
 Also, let $\Phi$ be a state that satisfies $D_{\min,\FF}(\Phi^{\otimes m})= D_{s,\FF}(\Phi^{\otimes m}) = m D_{\min,\FF}(\Phi) \; \forall m$. Then,
 \bal
  \tilde d_{\OO_{\max}}^\infty(\rho,\Phi)=d_{\OO_{\max}}^\infty(\rho,\Phi).
 \eal
 
 In particular, for every $\OO\subseteq\OO_{\rm max}$ we have that 
 \bal
  \tilde d_\OO^\infty(\rho,\Phi)\leq \tilde c_\OO^\infty(\rho,\Phi) .
 \eal
\end{proposition}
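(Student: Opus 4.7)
The plan is to reduce distillation under $\OO_{\max}$ to a composite hypothesis-testing problem and then invoke the generalized quantum Stein's lemma of Brand\~ao and Plenio~\cite{Brandao2010generalization,Brandao2010reversible} to pin down the one-shot distillation and strong converse rates to the same regularized relative-entropy value. Writing $r\coloneqq D_{\min,\FF}(\Phi)$, the standing hypothesis gives $D_{\min,\FF}(\Phi^{\otimes m}) = D_{s,\FF}(\Phi^{\otimes m}) = m r$ for every $m$, so Lemma~\ref{lem:fidelity distillation} applied to the target $\Phi^{\otimes \lfloor dn\rfloor}$ yields
\begin{equation}
 \sup_{\Lambda \in \OO_{\max}} F\!\left(\Lambda(\rho^{\otimes n}),\, \Phi^{\otimes \lfloor dn\rfloor}\right) = G_\FF\!\left(\rho^{\otimes n};\, 2^{\lfloor dn\rfloor r}\right)
\end{equation}
for every rate $d \geq 0$.

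The key link is that $G_\FF(\rho^{\otimes n};K)$ coincides with the type-II error of composite hypothesis testing of $\rho^{\otimes n}$ against the set $\FF^{(n)}$ of free states on $n$ copies: a standard minimax argument using convexity and compactness of $\FF^{(n)}$ shows that $G_\FF(\rho^{\otimes n}; K) \geq 1-\epsilon$ if and only if $D_{H,\FF}^{\epsilon}(\rho^{\otimes n}) \geq \log K$. Conditions~1--3 on $\FF$ ensure that $\{\FF^{(n)}\}_n$ satisfies the axioms required by the generalized quantum Stein's lemma, and hence
\begin{equation}
 \lim_{n\to\infty} \frac{1}{n} D_{H,\FF}^\epsilon(\rho^{\otimes n}) = D_\FF^\infty(\rho)
\end{equation}
for every $\epsilon \in (0,1)$, independently of $\epsilon$.

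Combining the three ingredients settles the equality. If $d < D_\FF^\infty(\rho)/r$, then for every fixed $\epsilon\in(0,1)$ and large $n$ we have $D_{H,\FF}^\epsilon(\rho^{\otimes n}) \geq \lfloor dn\rfloor r$, hence the fidelity above is $\geq 1-\epsilon$; letting $\epsilon \to 0$ gives $d^\infty_{\OO_{\max}}(\rho,\Phi) \geq D_\FF^\infty(\rho)/r$. Conversely, if $d > D_\FF^\infty(\rho)/r$, then for every fixed $\epsilon\in(0,1)$ and large $n$ we have $D_{H,\FF}^\epsilon(\rho^{\otimes n}) < \lfloor dn\rfloor r$, forcing the fidelity below $1-\epsilon$ for all large $n$; since $\epsilon$ can be chosen arbitrarily close to $1$, the fidelity tends to $0$, yielding $\tilde d^\infty_{\OO_{\max}}(\rho,\Phi) \leq D_\FF^\infty(\rho)/r$. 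Together with the trivial $d^\infty \leq \tilde d^\infty$, both rates coincide (and equal $D_\FF^\infty(\rho)/r$).

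For the ``in particular'' statement, monotonicity of both quantities under enlarging the set of free operations gives the chain
\begin{equation}
\tilde d^\infty_\OO(\rho,\Phi) \leq \tilde d^\infty_{\OO_{\max}}(\rho,\Phi) = d^\infty_{\OO_{\max}}(\rho,\Phi) \leq \tilde c^\infty_{\OO_{\max}}(\rho,\Phi) \leq \tilde c^\infty_\OO(\rho,\Phi),
\end{equation}
where the middle equality is what we just established and the penultimate inequality is the state-theory version of Corollary~\ref{cor:strong converse}, whose hypotheses (closure under tensor products and the collapse $D_{\min,\FF}(\Phi^{\otimes n}) = D_{\max,\FF}(\Phi^{\otimes n}) = n D_{\min,\FF}(\Phi)$) are immediate from the proposition's assumptions. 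The main obstacle is the use of the generalized Stein's lemma: one must verify that conditions~1--3 (together with the implicit permutation invariance of free states) deliver all the axioms needed in~\cite{Brandao2010generalization,Brandao2010reversible}, after which the remaining argument is bookkeeping via Lemma~\ref{lem:fidelity distillation} and the $G_\FF$--$D_{H,\FF}^\epsilon$ duality.
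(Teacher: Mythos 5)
Your proposal is correct and follows essentially the same route as the paper: both reduce the $\OO_{\max}$-distillation fidelity to $G_\FF(\rho^{\otimes n}; 2^{\lfloor dn\rfloor r})$ via Lemma~\ref{lem:fidelity distillation} and the assumed additivity, invoke the generalized quantum Stein's lemma of Brand\~ao--Plenio to pin both the achievable and strong-converse rates to $R^\infty_{\mathrm{rel},\FF}(\rho)/r$, and close with Corollary~\ref{cor:strong converse} plus monotonicity under $\OO\subseteq\OO_{\max}$. The only (cosmetic) difference is that you package Stein's lemma through the minimax duality $G_\FF \leftrightarrow D_{H,\FF}^\epsilon$ and the $\epsilon$-independent AEP limit, whereas the paper derives the Lagrangian dual $G_\FF(\rho;K)=\inf_{\tilde\sigma\in\cone(\FF)}(\Tr[\rho-\tilde\sigma]_+ + \Tr[\tilde\sigma]/K)$ and applies Prop.~III.1 of Ref.~\cite{Brandao2010generalization} directly to the trace-positive-part expression; both rest on the same verification of the Brand\~ao--Plenio axioms that you correctly flag.
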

\begin{proof}

Let $F_\OO(\rho\to\phi)$ be the fidelity of distillation from $\rho$ to $\phi$ under free operations $\OO$, defined as 
\bal
 F_\OO(\rho\to\phi)\coloneqq \sup_{\E\in\OO} F(\E(\rho),\phi).
\eal
Lemma~\ref{lem:fidelity distillation} and the assumed additivity of $D_{\min,\FF}$ gives
\bal
 F_{\OO_{\max}}(\rho^{\otimes n}\to\Phi^{\otimes ny})=G_\FF(\rho^{\otimes n};2^{rny}),
 \label{eq:fidelity distillation upper bound}
\eal
where $r = D_{\min,\FF}(\Phi)=D_{s,\FF}(\Phi)$.

Note that $G_\FF(\rho;K)$ is a convex optimization program, and one can obtain its dual program by following standard techniques in convex optimization theory~\cite{Boyd2004convex} (cf.~\cite{Brandao2010generalization}).
For operators $W\geq 0$, $Y\geq 0$, $Z\in\cone(\FF)$, consider the Lagrangian 
\bal 
 \L(\rho,W;Y,Z)&\coloneqq \Tr[W\rho]+\Tr\left[(\id-W)Y\right]+\Tr\left[(\id-KW)Z\right]\\
 &= \Tr[Y]+\Tr[Z] + \Tr[W(\rho-Y-KZ)].
\eal
This form leads to a dual program:
\begin{equation}\begin{aligned}
&\inf\lset\Tr[Y]+\Tr[Z]\sbar Y\geq 0, Y\geq\rho-KZ, Z\in\cone(\FF)\rset \\
&=\inf\lset\Tr[\rho-KZ]_++\Tr[Z]\sbar Z\in\cone(\FF)\rset
\end{aligned}\end{equation}
where we defined $\Tr[A]_+$ to be the trace over the positive part of the operator $A$.
Since Slater's condition~\cite[Sec.\ 5.2.3]{Boyd2004convex} is satisfied, which can be confirmed by taking $W=\id/(K+1)$ in \eqref{eq:G def}, we get
\bal
 G_\FF(\rho;K)=\inf_{\tilde\sigma\in\cone(\FF)}\left(\Tr[\rho-\tilde\sigma]_++\frac{1}{K}\Tr[\tilde\sigma]\right).
\eal
Taking $\tilde\sigma = 2^{rnb} \sigma$ for some $b \in \RR$ and $\sigma \in \FF$ allows us to write
\bal
 G_\FF(\rho^{\otimes n};2^{rny})=\inf_{\sigma\in\FF,b\in\RR}\left(\Tr[\rho-2^{rnb}\sigma]_++2^{-r(y-b)n}\right).
\eal
for all $n$ and $y$.
Let $R_{{\rm rel}, \FF}^\infty(\rho)$ be the regularized relative entropy resource measure~\cite{Brandao2010reversible,Brandao2015reversible} defined as 
\bal
 R_{{\rm rel}, \FF}^\infty (\rho)\coloneqq \lim_{n\to\infty}\frac{1}{n} \inf_{\sigma\in\FF}D(\rho^{\otimes n}\|\sigma),
 \eal
where $D(\rho\|\sigma)$ is the relative entropy defined for arbitrary two states $\rho$ and $\sigma$ taking $\Tr[\rho\log\rho]-\Tr[\rho\log\sigma]$ if $\supp(\rho)\subseteq\supp(\sigma)$ and $+\infty$ otherwise.
Let us take $y=\frac{1}{r} R_{{\rm rel}, \FF}^\infty (\rho)+\epsilon$ for some $\epsilon > 0$. Then, for each $n$, we can take $b=\frac{1}{r}R_{{\rm rel}, \FF}^\infty (\rho)+ \frac{\epsilon}{2}$ to get

\begin{equation}\begin{aligned}
&G_\FF(\rho^{\otimes n}; 2^{rn(R_{{\rm rel}, \FF}^\infty (\rho)+\epsilon)})\\
&\quad \leq \inf_{\sigma\in\FF}\left(\Tr[\rho^{\otimes n}-2^{rn(\frac{1}{r}R_{{\rm rel}, \FF}^\infty (\rho)+\frac{\epsilon}{2})}\sigma]_+\right) + 2^{-\frac{rn \epsilon}{2}}\\
&\quad = \inf_{\sigma\in\FF}\left(\Tr[\rho^{\otimes n}-2^{n(R_{{\rm rel}, \FF}^\infty (\rho)+\frac{r\epsilon}{2})}\sigma]_+\right) + 2^{-\frac{rn \epsilon}{2}}.
\end{aligned}\end{equation}
Ref.~\cite[Prop.\ III.1]{Brandao2010generalization} showed that the first term approaches 0 in the limit of $n\to\infty$ for every $\epsilon>0$.
Therefore, combining it with \eqref{eq:fidelity distillation upper bound}, we get that for all $\epsilon>0$,
\bal
 \lim_{n\to\infty}F_{\OO_{\max}}\left(\rho^{\otimes n}\to\Phi^{\otimes n(R_{{\rm rel},\FF}^\infty(\rho)/ r+\epsilon)}\right)=0.
 \label{eq:distillation fidelity converse}
\eal

On the other hand, if we take $y=\frac{1}{r}R_{{\rm rel}, \FF}^\infty (\rho)-\epsilon$, the optimum $b$ for each $n$ needs to satisfy $b<y$ because otherwise $G_\FF(\rho^{\otimes n}, r^{ny})$ would diverge as $n\to\infty$. 
Therefore, 
\bal
 G_\FF(\rho^{\otimes n};r^{ny})\geq \inf_{\sigma\in\FF}\Tr[\rho-2^{n(R_{{\rm rel},\FF}^\infty(\rho)-\epsilon r)}\sigma]_+.
\eal
The right-hand side approaches 1 for every $\epsilon>0$~\cite{Brandao2010generalization} and thus leads to 
\bal
 \lim_{n\to\infty}F_{\OO_{\max}}\left(\rho^{\otimes n}\to\Phi^{\otimes n(R_{{\rm rel},\FF}^\infty(\rho)/r-\epsilon)}\right)=1
\label{eq:distillation fidelity achievable}
\eal
for every $\epsilon>0$.
\eqref{eq:distillation fidelity converse} and \eqref{eq:distillation fidelity achievable} imply 
\bal
 d_{\OO_{\max}}^\infty(\rho,\Phi) = \tilde d_{\OO_{\max}}^\infty(\rho,\Phi) = \frac{R_{{\rm rel},\FF}^\infty(\rho)}{r} = \frac{R_{{\rm rel},\FF}^\infty(\rho)}{D_{\min,\FF}(\Phi)}.
\eal

Finally, combining this with Corollary~\ref{cor:strong converse}, we get 
\bal
\tilde d_{\OO_{\max}}(\rho,\Phi)\leq \tilde c_{\OO_{\max}}(\rho,\Phi).
\eal
Noting that  
\bal
  \tilde d_{\OO}(\rho,\Phi)\leq\tilde d_{\OO_{\max}}(\rho,\Phi),\quad \tilde c_{\OO_{\max}}(\rho,\Phi)\leq\tilde c_{\OO}(\rho,\Phi)
\eal
for every $\OO\subseteq\OO_{\max}$ immediately leads to
\bal
 \tilde d_{\OO}(\rho,\Phi)\leq\tilde c_{\OO}(\rho,\Phi),
\eal
which concludes the proof. 
\end{proof}

We note that the fact that $\frac{1}{r} \tilde D_{\max,\OO}^\infty(\E)$ is a strong converse rate for distillation in general resource theories of channels was previously shown in Ref.~\cite{Regula2021fundamental}. However, the relation
\bal
\tilde D_{\max,\FF}^\infty(\rho) = \tilde D_{H,\FF}^\infty(\rho) = R_{{\rm rel}, \FF}^\infty(\rho) 
\eal
and in particular the fact that $\frac{1}{r} R_{{\rm rel}, \FF}^\infty(\rho)$ constitutes an \emph{achievable} rate of distillation under $\OO_{\mathrm{max}}$ is a very non-trivial result established in Ref.~\cite{Brandao2010generalization}, applicable only to resources of quantum states. It is an open question whether an extension of this result to channel theories can be obtained (cf.~\cite{Gour2019how}). It would also be interesting to understand whether the double-sided strong converse bound described in this section can be shown without relying on the generalized quantum Stein's lemma of Ref.~\cite{Brandao2010generalization}.

\section{Proof of Lemma~\ref{lem:resource to twirling}, Proposition~\ref{pro:smooth measures} and Proposition~\ref{pro:isotropic exact}}\label{app:isotropic}

We first show Lemma~\ref{lem:resource to twirling}.

\medskip

\noindent
\textit{Proof of Lemma~\ref{lem:resource to twirling}.}
Suppose $D_{\min,\FF}(\Phi)=D_{s,\FF}(\Phi)$.
By definition of the standard robustness and the closedness of the set $\FF$, there exist states $\tau,\sigma^\star\in\FF$ such that
\bal
 \tau=\frac{\Phi  + (2^{D_{s,\FF}(\Phi)}-1)\sigma^\star}{2^{D_{s,\FF}(\Phi)}}.
 \label{eq:standard robustness optimal free}
\eal
Also, by definition of $D_{\min,\FF}(\Phi)$ and noting that $D_{\min,\FF}(\Phi)=D_{H,\FF}^{\epsilon=0}(\Phi)$, one can confirm that an operator $P^\star=\Pi_\Phi$  satisfies $0\leq P^\star\leq \id$, $\Tr[P^\star\Phi]=1$, and $\Tr[P^\star\eta]\leq 2^{-D_{\min,\FF}(\Phi)},\forall \eta\in\FF$.
Using this operator, define the following channel
\bal
 \Lambda(\cdot)\coloneqq \Tr[P^\star\cdot]\Phi + \Tr[(\id-P^\star)\cdot]\sigma^\star.
\label{eq:measure prepare superchannel}
\eal

One can check that $\Lambda\in\OO_{\max}$ as follows. 
For every $\eta\in\FF$, we get \bal
\Lambda(\eta)=\Tr[P^\star\eta]\Phi+\Tr[(\id-P^\star)\eta]]\sigma^\star
\eal
with
\bal
\Tr[P^\star\eta]\leq 2^{-D_{\min,\FF}(\Phi)}=2^{-D_{s,\FF}(\Phi)}.
\label{eq:superchannl free}
\eal
The convexity of $\FF$ and the form of \eqref{eq:standard robustness optimal free} implies that all states of the form
\bal
\alpha\Phi+(1-\alpha)\sigma^\star,\quad 0\leq \alpha \leq 2^{-D_{s,\FF}(\Phi)}
\label{eq:isotropic free}
\eal
are free states. 
Therefore, \eqref{eq:superchannl free} ensures $\Lambda(\eta)\in\FF,\ \forall\eta\in\FF$, implying $\Lambda\in\OO_{\max}$. 

Next, we show that $\Tr[P^\star\sigma^\star]=\Tr[\Phi\sigma^\star]=0$.
If we apply $\Lambda$ to $\tau$ in \eqref{eq:standard robustness optimal free}, we get 
\bal
 \Lambda(\tau) &=\frac{\Lambda(\Phi)  + (2^{D_{s,\FF}(\Phi)}-1)\Lambda(\sigma^\star)}{2^{D_{s,\FF}(\Phi)}}\\
 &= \frac{\Phi  + (2^{D_{s,\FF}(\Phi)}-1)\left(\Tr[P^\star\sigma^\star]\Phi+\Tr[(\id-P^\star)\sigma^\star]\sigma^\star\right)}{2^{D_{s,\FF}(\Phi)}}\\
 &= \left[2^{-D_{s,\FF}(\Phi)}+ \Tr[P^\star\sigma^\star]\left(1-2^{-D_{s,\FF}(\Phi)}\right)\right]\Phi \\ 
 &\quad+  \left(1-2^{-D_{s,\FF}(\Phi)}\right)\left(1-\Tr[P^\star\sigma^\star]\right) \sigma^\star.
\eal
Since $\tau\in\FF$ and $\Lambda\in\OO_{\max}$, we have $\Lambda(\tau)\in\FF$. 
Therefore, the definition of $D_{s,\FF}(\Phi)$ (or in other words, Eq.~\eqref{eq:isotropic free}) forces $\Tr[P^\star\sigma^\star]\left(1-2^{-D_{s,\FF}(\Phi)}\right)\leq 0$. Since $2^{-D_{s,\FF}(\Phi)}\leq 1$, we must have that $\Tr[P^\star\sigma^\star]=0$.
Combining the fact that $P^\star\geq \Phi$ because of $\Tr[P^\star\Phi]=1$, we also have $0=\Tr[P^\star\sigma^\star]\geq\Tr[\Phi\sigma^\star]\geq 0$, leading to $\Tr[\Phi\sigma^\star]=0$. 

The proof for the case of $D_{\min,\aff(\FF)}(\Phi)=D_{\max,\FF}(\Phi)$ goes analogously. 
The only difference is that the inequality in \eqref{eq:superchannl free} becomes an equality, and $P^\star$ does not necessarily coincide with $\Pi_\Phi$.
By definition of the generalized robustness, there exist a free state $\tau\in\FF$ and some state $\sigma^\star\in\DD$ such that
\bal
 \tau=\frac{\Phi  + (2^{D_{\max,\FF}(\Phi)}-1)\sigma^\star}{2^{D_{\max,\FF}(\Phi)}}.
 \label{eq:generalized robustness optimal free}
\eal
Also, by definition of $D_{\min,\aff(\FF)}(\Phi)\coloneqq D_{H,\aff(\FF)}^{\epsilon=0}$, there exists an operator $P^\star$ that satisfies $0\leq P^\star\leq \id$, $\Tr[P^\star\Phi]=1$, and $\Tr[P^\star\eta]\leq 2^{-D_{\min,\aff(\FF)}(\Phi)},\forall \eta\in\aff(\FF)$.
In fact, these conditions impose a strong constraint
\bal
\Tr[P^\star\eta]=2^{-D_{\min,\aff(\FF)}(\Phi)},\ \forall\eta\in\FF.
\label{eq:d aff min constant}
\eal
To see this, observe first that $D_{\min,\aff(\FF)}(\Phi)\geq 0$ because in \eqref{eq:hypothesis-testing def}, the choice of $P=\id$ ensures $D_H^\epsilon(\rho\|\sigma)\geq 0$ for all $\sigma\in\aff(\FF)$ and all $\epsilon$, resulting in $D_{H,\aff(\FF)}^\epsilon(\rho)\geq 0$ for an arbitrary state $\rho$.
This particularly ensures that 
\bal
 \Tr[P^\star\eta]\leq 1,\quad \forall\eta\in\aff(\FF).
\label{eq:overlap upper bound}
\eal

Then, suppose that there exist two free states $\eta_1,\eta_2\in\FF$ such that $\Tr[P^\star\eta_1]\neq \Tr[P^\star\eta_2]$, where we assume $\Tr[P^\star\eta_1] - \Tr[P^\star\eta_2]=:\Delta>0$ without loss of generality. 
Define an affine combination $\eta(c)\coloneqq c\eta_1-(1-c)\eta_2\in\aff(\FF)$ for an arbitrary real number $c$.
This operator realizes $\Tr[P^\star\eta(c)]=c\Delta-\Tr[P^\star\eta_2]$.
However, since $\Delta>0$, one could violate \eqref{eq:overlap upper bound} by taking sufficiently large $c$, which is a contradiction. 
Thus, we must have $\Tr[P^\star \eta]={\rm const},\ \forall\eta\in\FF$.
Combining this with the definition of $D_{\min,\aff(\FF)}$ leads to the condition \eqref{eq:d aff min constant}.

Using this operator, define the following channel
\bal
 \Lambda(\cdot)\coloneqq \Tr[P^\star\cdot]\Phi + \Tr[(\id-P^\star)\cdot]\sigma^\star.
\eal
It is now easy to check that $\Lambda\in\OO_{\max}$ using \eqref{eq:d aff min constant} and \eqref{eq:generalized robustness optimal free}.

Next, we show that $\Tr[P^\star\sigma^\star]=\Tr[\Phi\sigma^\star]=0$.
If we apply $\Lambda$ to $\tau$ in \eqref{eq:standard robustness optimal free}, we get 
\begin{equation}\begin{aligned}
 &\Lambda(\tau)\\
 &\ =\frac{\Lambda(\Phi)  + (2^{D_{\max,\FF}(\Phi)}-1)\Lambda(\sigma^\star)}{2^{D_{\max,\FF}(\Phi)}}\\
 &\ = \frac{\Phi  + (2^{D_{\max,\FF}(\Phi)}-1)\left(\Tr[P^\star\sigma^\star]\Phi+\Tr[(\id-P^\star)\sigma^\star]\sigma^\star\right)}{2^{D_{\max,\FF}(\Phi)}}\\
 &\ = \left[2^{-D_{\max,\FF}(\Phi)}+ \Tr[P^\star\sigma^\star]\left(1-2^{-D_{\max,\FF}(\Phi)}\right)\right]\Phi \\ 
 &\ \quad+  \left(1-2^{-D_{\max,\FF}(\Phi)}\right)\left(1-\Tr[P^\star\sigma^\star]\right) \sigma^\star.
\end{aligned}\end{equation}
Since $\tau\in\FF$ and $\Lambda\in\OO_{\max}$, we have $\Lambda(\tau)\in\FF$. 
The definition of $D_{\max,\FF}(\Phi)$ states that $2^{-D_{\max,\FF}(\Phi)}$ is the maximum coefficient in front of $\Phi$ such that a mixture with another state becomes a free state. 
This forces $\Tr[P^\star\sigma^\star]\left(1-2^{-D_{\max,\FF}(\Phi)}\right)\leq 0$. 
Since $2^{-D_{\max,\FF}(\Phi)}\leq 1$, we must have that $\Tr[P^\star\sigma^\star]=0$.
Combining the fact that $P^\star\geq \Phi$ because of $\Tr[P^\star\Phi]=1$, we also have $0=\Tr[P^\star\sigma^\star]\geq\Tr[\Phi\sigma^\star]\geq 0$, leading to $\Tr[\Phi\sigma^\star]=0$. 

\qed
\medskip

Using Lemma~\ref{lem:resource to twirling}, we obtain the following simplification of the evaluation of resource measures:

\begin{lemma}\label{lem:measure restriction}
Let $\mfR$ be a resource measure defined by 
\bal
 \mfR(\rho)=\inf_{\sigma\in\FF} D(\rho,\sigma)
 \label{eq:resource full optimization}
\eal
where $D$ is a contractive measure under free operations, i.e., $D(\rho,\sigma)\geq D(\Lambda(\rho),\Lambda(\sigma)),\ \forall \rho,\sigma$ for an arbitrary free channel $\Lambda\in\OO_{\max}$.
Suppose $D_{s,\FF}(\Phi)=D_{\min,\FF}(\Phi)=: r$ for some state $\Phi$, and let $\sigma^\star$ be the state that appears in \eqref{eq:resource to twirling}.
Also, let $\tilde\DD$ and $\tilde\FF$ be the sets of states defined as
\bal
 \tilde\DD&\coloneqq\lset\kappa\Phi+(1-\kappa)\sigma^\star\sbar0\leq\kappa\leq 1\rset,\\
 \tilde\FF & \coloneqq \lset\alpha \Phi + (1-\alpha) \sigma^\star \sbar 0\leq \alpha \leq 2^{-r}\rset.
 \label{eq:generallized isotropic def}
\eal
Then, for every $\rho\in\tilde\DD$, we can restrict the optimization in \eqref{eq:resource full optimization} as  
\bal
 \mfR(\rho)& =\inf_{\sigma\in\tilde\FF} D(\rho,\sigma).
\label{eq:measure optimization restricted}
\eal

On the other hand, if $D_{\max,\FF}(\Phi)=D_{\min,\aff(\FF)}(\Phi)=:r$, then every $\rho\in\tilde\DD$ satisfies 
\bal
 \mfR(\rho)= D(\rho,\tilde\sigma),\quad  \tilde\sigma\coloneqq 2^{-r} \Phi + (1-2^{-r}) \sigma^\star.
 \label{eq:measure optimization restricted reduced}
\eal

\end{lemma}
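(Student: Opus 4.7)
The plan is to exploit the twirling-like free operation $\Lambda$ constructed in Lemma~\ref{lem:resource to twirling} together with contractivity of $D$ to collapse the optimization over $\FF$ onto the one-parameter family $\tilde\FF$. The key observation is that $\Lambda$ acts as the identity on the family $\tilde\DD$: for $\rho = \kappa\Phi + (1-\kappa)\sigma^\star$ we compute $\Tr[P^\star\rho] = \kappa$, using $\Tr[P^\star\Phi]=1$ and $\Tr[P^\star\sigma^\star]=0$, so that $\Lambda(\rho)=\rho$. Meanwhile, $\Lambda$ maps every $\sigma\in\FF$ into $\tilde\DD$, because $\Lambda(\sigma)=\Tr[P^\star\sigma]\,\Phi+\Tr[(\id-P^\star)\sigma]\,\sigma^\star$.

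For the first case, $D_{\min,\FF}(\Phi)=D_{s,\FF}(\Phi)=r$, Lemma~\ref{lem:resource to twirling} guarantees that $\sigma^\star\in\FF$, and the free state $\tau$ in \eqref{eq:standard robustness optimal free} together with convexity of $\FF$ shows $\tilde\FF\subseteq\FF$. The definition of $D_{\min,\FF}(\Phi)$ further gives $\Tr[P^\star\sigma]\le 2^{-r}$ for every $\sigma\in\FF$, so in fact $\Lambda(\sigma)\in\tilde\FF$. Applying contractivity,
\begin{equation*}
D(\rho,\sigma)\ \geq\ D(\Lambda(\rho),\Lambda(\sigma))\ =\ D(\rho,\Lambda(\sigma)),
\end{equation*}
and taking the infimum over $\sigma\in\FF$ on the left and over $\Lambda(\sigma)\in\tilde\FF$ on the right yields $\mfR(\rho)\ge \inf_{\sigma\in\tilde\FF}D(\rho,\sigma)$. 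The reverse inequality is immediate from $\tilde\FF\subseteq\FF$, which establishes \eqref{eq:measure optimization restricted}.

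For the second case, $D_{\max,\FF}(\Phi)=D_{\min,\aff(\FF)}(\Phi)=r$, the proof of Lemma~\ref{lem:resource to twirling} established the sharper property that $\Tr[P^\star\sigma]$ is \emph{constant} on $\FF$, equal to $2^{-r}$, because any deviation could be extrapolated in the affine hull to violate $\Tr[P^\star\eta]\le 1$. This forces $\Lambda(\sigma)=\tilde\sigma$ for every $\sigma\in\FF$, with $\tilde\sigma = 2^{-r}\Phi+(1-2^{-r})\sigma^\star$ being precisely the free state from \eqref{eq:generalized robustness optimal free}. Contractivity then gives $D(\rho,\sigma)\ge D(\rho,\tilde\sigma)$ for every $\sigma\in\FF$, which combined with $\tilde\sigma\in\FF$ yields the equality \eqref{eq:measure optimization restricted reduced}.

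The main technical point to be careful about is the inclusion $\tilde\FF\subseteq\FF$ in the first case (which follows from convexity of $\FF$ applied to $\Phi\in\FF$ when $r=0$, or to $\sigma^\star$ and $\tau$ otherwise) and the affine-hull argument in the second case, which forces the overlap $\Tr[P^\star\sigma]$ to be rigidly pinned rather than just bounded. Apart from these two structural inputs from Lemma~\ref{lem:resource to twirling}, the argument is a direct data-processing step, and I do not anticipate any substantive obstacle.
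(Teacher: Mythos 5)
Your proof is correct and follows essentially the same route as the paper's: use the twirling-like channel $\Lambda$ from Lemma~\ref{lem:resource to twirling}, note that it fixes every state in $\tilde\DD$ while mapping $\FF$ into $\tilde\FF$ (respectively onto $\{\tilde\sigma\}$ in the affine case), and conclude by contractivity of $D$ together with the inclusion $\tilde\FF\subseteq\FF$. The only difference is that you spell out a few steps the paper leaves implicit (the computation $\Tr[P^\star\rho]=\kappa$ and the identification of $\tilde\sigma$ with the free state in \eqref{eq:generalized robustness optimal free}), which is fine.
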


\begin{proof}
 When $D_{s,\FF}(\Phi)=D_{\min,\FF}(\Phi)=r$, 
 Lemma~\ref{lem:resource to twirling} ensures the existence of a channel $\Lambda\in\OO_{\max}$ of the form \eqref{eq:resource to twirling}.
 Crucially, all states in $\tilde\DD$ are invariant under $\Lambda$. 
 Thus, every $\rho\in\tilde\DD$ satisfies 
 \bal
  \mfR(\rho) &= \inf_{\sigma\in\FF}D(\rho,\sigma)\\
  &\geq \inf_{\sigma\in\FF}D(\Lambda(\rho),\Lambda(\sigma))\\
  &= \inf_{\sigma\in\FF}D(\rho,\Lambda(\sigma))\\
  &\geq \inf_{\sigma\in\tilde\FF}D(\rho,\sigma)\\
  &\geq \inf_{\sigma\in\FF}D(\rho,\sigma)\\
  &= \mfR(\rho)
  \label{eq:measure restricted inequality}
 \eal
 where we used the contractivity of $D$ in the second line, the fact that $\Lambda(\rho)=\rho$ in the third line, $\Lambda(\sigma)\in\tilde\FF\ \ \forall \sigma\in\FF$ due to Lemma~\ref{lem:resource to twirling} and the definition of the robustness measures in the fourth line, $\tilde \FF\subseteq \FF$ in the fifth line, leading to \eqref{eq:measure optimization restricted}.
 
 When $D_{\max,\FF}(\rho)=D_{\min,\aff(\FF)}(\rho)=r$, the operator $P^\star$ in \eqref{eq:resource to twirling} satisfies $\Tr[P^\star\sigma]=2^{-r},\forall \sigma\in\FF$ as in \eqref{eq:d aff min constant}. 
 Thus, \eqref{eq:measure optimization restricted reduced} is obtained by replacing $\tilde\FF$ in \eqref{eq:measure restricted inequality} with $\{\tilde\sigma\}$.
\end{proof}

We are now ready to prove Propositions~\ref{pro:smooth measures} and \ref{pro:isotropic exact}. We will in fact show a more general result, which immediately implies both of the Propositions and allows for the computation of the smoothed entropic measures for all isotropic-like states $\Phi_\kappa$.

\begin{proposition}\label{pro:smooth measures full}
Suppose a state $\Phi$ satisfies $D_{\min,\FF}(\Phi)=D_{s,\FF}(\Phi)=:r$, and let $\sigma^\star$ be the state in \eqref{eq:resource to twirling}. 
Then, every state $\Phi_\kappa=\kappa\Phi+(1-\kappa)\sigma^\star$ with $0\leq\kappa\leq 1$ satisfies  
\bal
 D_{H,\FF}^\epsilon(\Phi_\kappa)=\begin{cases}
 0 & \epsilon=0,\ 0\leq\kappa<1 \\
 r + \log\frac{1}{1-\epsilon} & 0\leq\epsilon<1,\ \kappa=1.
 \end{cases}
\eal

Also, let $\eta_{\min}^\epsilon$ and $\eta_{\max}^\epsilon$ be the minimum and maximum $\eta$ that satisfy $F_{\rm cl}((\eta,1-\eta),(\kappa,1-\kappa))\geq 1-\epsilon$, where $F_{\rm cl}(p,q)\coloneqq \left(\sum_i\sqrt{p_iq_i}\right)^2$ is the fidelity for two classical distributions. 
Then,
 \bal
 D_{\max,\FF}^\epsilon(\Phi_\kappa) = D_{s,\FF}^\epsilon(\Phi_\kappa) = \max\left\{r - \log\frac{1}{\eta_{\min}^\epsilon},0\right\}
 \eal
for all $\epsilon\in[0,1)$.

Similarly, if a state $\Phi$ satisfies $D_{\min,\aff(\FF)}(\Phi)=D_{\max,\FF}(\Phi)=:r$, then 
\bal
 D_{H,\aff(\FF)}^\epsilon(\Phi_\kappa)&=D_{H,\FF}^\epsilon(\Phi_\kappa)\\
 &=\begin{cases}
 0 & \epsilon=0,\ 0<\kappa<1\\
 \frac{1}{1-2^{-r}} &\epsilon=0,\ \kappa=0\\
 r+\log\frac{1}{1-\epsilon} & 0\leq \epsilon<1,\ \kappa=1,
 \end{cases}
\eal
and 
 \bal
D_{\max,\FF}^\epsilon(\Phi_\kappa)=\begin{cases}
r-\log\frac{1}{\eta_{\min}^\epsilon} & \eta_{\min}^\epsilon\geq 2^{-r}\\ 
\log\frac{1-\eta_{\max}^\epsilon}{1-2^{-r}} & \eta_{\max}^\epsilon\leq 2^{-r}\\
0 & \eta_{\min}^\epsilon\leq 2^{-r}\leq \eta_{\max}^\epsilon
\end{cases}
\eal
for all $\epsilon\in[0,1)$.
\end{proposition}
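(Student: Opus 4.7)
The plan is to exploit the twirling-like free channel $\Lambda$ from Lemma~\ref{lem:resource to twirling} as a symmetrization that reduces every optimization to the one-parameter family $\{\Phi_\eta\}_\eta \subset \tilde\DD$. The key observation is that $\Lambda(\Phi)=\Phi$ and $\Lambda(\sigma^\star)=\sigma^\star$ (the latter because $\Tr[P^\star\sigma^\star]=0$), so $\Lambda$ fixes every $\Phi_\kappa$. Hence for any candidate $\rho'$ with $F(\rho',\Phi_\kappa)\geq 1-\epsilon$, data processing of fidelity gives $F(\Lambda(\rho'),\Phi_\kappa)\geq 1-\epsilon$, and monotonicity of $D_{\max,\FF}$ and $D_{s,\FF}$ under $\Lambda\in\OO_{\max}$ gives $D_{\max,\FF}(\Lambda(\rho'))\leq D_{\max,\FF}(\rho')$ and similarly for $D_{s,\FF}$. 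Since $\Lambda(\rho')\in\tilde\DD$, the smoothed infima may be restricted to $\tilde\DD$, exactly as in Lemma~\ref{lem:measure restriction}.

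Given this reduction, I compute the fidelity classically: since $\Phi$ and $\sigma^\star$ are mutually orthogonal, $F(\Phi_\eta,\Phi_\kappa)=F_{\rm cl}((\eta,1-\eta),(\kappa,1-\kappa))$, so the fidelity ball is exactly the interval $\eta\in[\eta_{\min}^\epsilon,\eta_{\max}^\epsilon]$. Substituting the closed-form expressions of $D_{\max,\FF}(\Phi_\eta)$ and $D_{s,\FF}(\Phi_\eta)$ from Proposition~\ref{pro:isotropic exact} turns the problem into a one-variable minimization over this interval. In the first case, the common value $\max\{r-\log(1/\eta),0\}$ is monotone non-decreasing in $\eta$, so the minimum is attained at $\eta=\eta_{\min}^\epsilon$, yielding the claimed formula. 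In the second case, $D_{\max,\FF}(\Phi_\eta)=\max\{r-\log(1/\eta),\,\log((1-\eta)/(1-2^{-r}))\}$ is piecewise linear and V-shaped with its zero at $\eta=2^{-r}$; the minimum over $[\eta_{\min}^\epsilon,\eta_{\max}^\epsilon]$ splits into the three stated cases depending on the position of $2^{-r}$ relative to the interval endpoints.

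For the hypothesis-testing quantities, the entries with $\epsilon=0$ coincide with $D_{\min,\FF}$ and $D_{\min,\aff(\FF)}$ and are already supplied by Proposition~\ref{pro:isotropic exact}, so only the $\kappa=1$ row is genuinely new. Here I apply $\Lambda$ inside the infimum over $\sigma$: contractivity of $D_H^\epsilon$ yields $D_H^\epsilon(\Phi\|\sigma)\geq D_H^\epsilon(\Phi\|\Lambda(\sigma))$, and $\Lambda(\sigma)=\alpha\Phi+(1-\alpha)\sigma^\star$ with $\alpha=\Tr[P^\star\sigma]\leq 2^{-r}$ in the first case, and $\alpha=2^{-r}$ forced in the affine case by the identity $\Tr[P^\star\eta]=2^{-r}$ for every $\eta\in\aff(\FF)$ established inside the proof of Lemma~\ref{lem:resource to twirling}. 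Orthogonality $\Tr[\Phi\sigma^\star]=0$ makes the test operator $P=(1-\epsilon)\Phi$ achieve the minimum $\Tr[P(\alpha\Phi+(1-\alpha)\sigma^\star)]=(1-\epsilon)\alpha$ for $D_H^\epsilon(\Phi\|\cdot)$, so minimizing $-\log((1-\epsilon)\alpha)$ over the allowed $\alpha$ (or substituting the unique $\alpha=2^{-r}$ in the affine case) gives $r+\log\frac{1}{1-\epsilon}$, simultaneously proving $D_{H,\FF}^\epsilon(\Phi)=D_{H,\aff(\FF)}^\epsilon(\Phi)$.

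The main obstacle I anticipate is the piecewise analysis for $D_{\max,\FF}^\epsilon$ in the second setting: unlike the first case, where the underlying function is monotone, here one must carefully locate the threshold $2^{-r}$ relative to $\eta_{\min}^\epsilon$ and $\eta_{\max}^\epsilon$ to obtain the three-case formula, and verify that neither $\eta_{\min}^\epsilon$ nor $\eta_{\max}^\epsilon$ falls outside $[0,1]$ for valid $\epsilon$. A secondary technical care is to justify the optimality of the test operator $P=(1-\epsilon)\Phi$ for $D_H^\epsilon$, which follows from a short Neyman--Pearson-style argument once $\Phi\perp\sigma^\star$ is used, and to check that $\Lambda(\sigma)$ indeed lies in $\FF$ or $\aff(\FF)$ whenever $\sigma$ does, which is guaranteed by $\Lambda\in\OO_{\max}$ combined with affine linearity.
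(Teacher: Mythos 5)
Your proposal is correct and follows essentially the same route as the paper: the twirling channel of Lemma~\ref{lem:resource to twirling} is used to restrict both the smoothing state and the free state to the one-parameter isotropic family (this is precisely Lemma~\ref{lem:measure restriction} in the paper), after which the orthogonality $\Tr[\Phi\sigma^\star]=0$ reduces everything to a classical two-outcome computation, with the fidelity ball becoming the interval $[\eta_{\min}^\epsilon,\eta_{\max}^\epsilon]$ and the piecewise minimizations matching the paper's case analysis. Two small points to tidy up: the paper derives Proposition~\ref{pro:isotropic exact} as the $\epsilon=0$ corollary of the present proposition, so if you invoke it you should note that its values follow independently from the same $\tilde\DD$/$\tilde\FF$ reduction without smoothing (no genuine circularity, but the logical dependence must be reversed); and the optimal test operator should be $P=(1-\epsilon)\Pi_\Phi$ rather than $(1-\epsilon)\Phi$ when $\Phi$ is not assumed pure.
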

  
We remark that one can obtain an analogous result for smoothing with different distance measures. In particular, the trace-distance smoothing leads to simple expressions with $\eta_{\min}^\epsilon=\kappa-\epsilon$ and $\eta_{\max}^\epsilon=\kappa+\epsilon$.

\begin{proof}
Let us first consider the case when $D_{\min,\FF}(\Phi)=D_{s,\FF}(\Phi)=r$. 
Since $D_H^\epsilon(\cdot\|\cdot)$ satisfies the data-processing inequality, we can use Lemma~\ref{lem:measure restriction} to get 
\bal
D_{H,\FF}^\epsilon(\Phi_\kappa) &= \log\min_{\tau\in\FF}\max_{\substack{0\leq Q \leq \id \\ \Tr[Q\Phi_\kappa]\geq 1-\epsilon}}\Tr[Q\tau]^{-1}\\
 &= \log\min_{\tau\in\tilde\FF}\max_{\substack{0\leq Q \leq \id \\ \Tr[Q\Phi_\kappa]\geq 1-\epsilon}}\Tr[Q\tau]^{-1}
 \label{eq:hypothesis restricted}
\eal
where $\tilde\FF$ is the set of free states defined in \eqref{eq:generallized isotropic def}. 
Let $\Lambda$ and $P^\star$ be the channel and operator in \eqref{eq:resource to twirling}.
Then, since $\Lambda(\Phi_\kappa)=\Phi_\kappa$ and $\Lambda(\tau)=\tau, \forall \tau\in\tilde\FF$, if $Q$ is a feasible solution in \eqref{eq:hypothesis restricted}, $\Lambda^{\dagger}(Q)=\Tr[Q\Phi]P^\star+\Tr[Q\sigma](\id-P^\star)$ is also a feasible solution giving the same objective function, i.e., $\Tr[\Lambda^\dagger(Q)\tau]^{-1}=\Tr[Q\tau]^{-1}$.
Thus, it suffices to take the optimization over  operators of the form $Q=\eta P^\star + \lambda (\id-P^\star)$. 
The conditions $0\leq Q \leq \id$, $\Tr[Q\Phi_\kappa]\geq 1-\epsilon$ are equivalent to 
\bal
0\leq \lambda\leq 1,\ 0\leq \eta \leq 1,\ \eta\kappa+\lambda(1-\kappa) \geq 1-\epsilon.
\label{eq:hypothesis condition}
\eal

Let $\mathcal{A}$ be the set of $(\lambda,\eta)$ that satisfies \eqref{eq:hypothesis condition}.
Then, we can compute $D_{H,\FF}^\epsilon(\Phi_\kappa)$ as
\begin{align}\begin{aligned}
 &\log \min_{0\leq \alpha \leq 2^{-r}}\max_{(\lambda,\eta)\in\mathcal{A}}\Tr[\{\eta P^\star+\lambda(\id-P^\star)\} \{\alpha\Phi+(1-\alpha)\sigma^\star\}]^{-1}\\
 &\quad= - \log \max_{0\leq \alpha \leq 2^{-r}}\min_{(\lambda,\eta)\in\mathcal{A}}[\eta\alpha+\lambda(1-\alpha)]
 \label{eq:DH isotropic intermediate}
\end{aligned}\end{align} 
where we used $\Tr[P^\star\Phi]=1$ and $\Tr[P^\star\sigma^\star]=0$.

Let us first consider the case $\kappa=1$.
The condition for $\mathcal{A}$ in this case turns to 
\bal
 0\leq\lambda\leq 1,\quad 1-\epsilon\leq\eta\leq 1.
 \label{eq:condition target}
\eal
Then, \eqref{eq:DH isotropic intermediate} can be further computed as 
\begin{align}\begin{aligned}
  &- \log \max_{0\leq \alpha \leq 2^{-r}}(1-\epsilon)\alpha = r + \log\frac{1}{1-\epsilon}.
  \label{eq:DH target}
\end{aligned}\end{align} 

On the other hand, suppose $\epsilon=0$ and $0\leq \kappa \leq 1$.
Then, the condition for $\mathcal{A}$ takes the form 
\bal
 0\leq \lambda\leq 1,\ 0\leq\eta\leq 1,\ \eta\kappa+\lambda(1-\kappa)=1.
 \label{eq:condition zero error}
\eal
From this, it is clear that for the case $0<\kappa<1$, we must have $\lambda=\eta=1$, which makes the quantity in \eqref{eq:DH isotropic intermediate} equal to 0. 
In the case of $\kappa=0$, which forces $\lambda=1$, $0\leq\eta\leq 1$, the optimum in \eqref{eq:DH isotropic intermediate} is achieved at $\alpha=\eta=0$, $\lambda=1$, resulting in 0 also. 
The case of $\kappa=1$ is included in \eqref{eq:DH target}; we get $r$ by setting $\epsilon=0$.

To summarize, we showed
\bal
 D_{H,\FF}^\epsilon(\Phi_\kappa)=\begin{cases}
 0 & \epsilon=0,\ 0\leq\kappa<1 \\
 r + \log\frac{1}{1-\epsilon} & 0\leq\epsilon<1,\ \kappa=1.
 \end{cases}
\eal

To show the expression for $D_{\max,\FF}^\epsilon$, note that 
\bal
 D_{\max,\FF}^\epsilon(\Phi_\kappa) &= \inf_{\sigma\in\FF}D_{\max}^\epsilon(\Phi_\kappa\|\sigma)\\
 &=\inf\lset \log s \sbar \rho\leq s\sigma,\ \sigma\in\tilde\FF, F(\rho,\Phi_\kappa)\geq 1-\epsilon \rset 
 \label{eq:dmax optimization free state restricted}
\eal
where in the second line we used Lemma~\ref{lem:measure restriction} because $D_{\max}^\epsilon(\cdot\|\cdot)$ satisfies the data-processing inequality for all quantum channels.
Consider again the map $\Lambda$ in \eqref{eq:resource to twirling}, which maps an arbitrary state to a state in $\tilde\DD$ (c.f., \eqref{eq:generallized isotropic def}), while stabilizing every state $\Phi_\kappa\in\tilde\DD$ and $\sigma\in\tilde\FF$ as $\Lambda(\Phi_\kappa)=\Phi_\kappa$ and $\Lambda(\sigma)=\sigma$.
Then, $\rho\leq s\sigma$ implies $\Lambda(\rho)\leq s\Lambda(\sigma)=s\sigma, \forall \sigma\in\tilde\FF$. 
We also have 
\bal
F(\Lambda(\rho),\Phi_\kappa)=F\left(\Lambda(\rho),\Lambda(\Phi_\kappa)\right)\geq F(\rho,\Phi_\kappa)\geq 1-\epsilon.
\eal
Thus, the optimization in \eqref{eq:dmax optimization free state restricted} is achieved by the states of the form $\rho=\eta \Phi + (1-\eta)\sigma^\star$, $\sigma=\alpha \Phi + (1-\alpha)\sigma^\star$ with constraints $0\leq\eta\leq 1$, $F(\rho,\Phi_\kappa)\geq1-\epsilon$, and $0\leq \alpha\leq 2^{-r}$.
Noting that $\Tr[\Phi\sigma^\star]=0$ and thus $\rho=\eta\Phi\oplus(1-\eta)\sigma^\star$, the condition on $\eta$ can equivalently be written as a condition for two classical distributions $(\eta,1-\eta)$ and $(\kappa,1-\kappa)$:
\bal
 F_{\rm cl}((\eta,1-\eta),(\kappa,1-\kappa))\geq 1-\epsilon.
\label{eq:fidelity dmax}
\eal
Let $\eta_{\min}^\epsilon$ and $\eta_{\max}^\epsilon$ be the minimum and maximum $\eta$ that satisfies \eqref{eq:fidelity dmax}.
Then, we can compute $D_{\max,\FF}^\epsilon(\Phi_\kappa)$ as
\begin{equation}\begin{aligned}
 &\inf_{\substack{\alpha\in[0,2^{-r}]\\\eta\in[\eta_{\min}^\epsilon,\eta_{\max}^\epsilon]}}\lset \log s \sbar \eta\Phi+(1-\eta)\sigma^\star\leq s(\alpha \Phi + (1-\alpha)\sigma^\star)\rset\\
 &=\inf_{\substack{\alpha\in[0,2^{-r}]\\\eta\in[\eta_{\min}^\epsilon,\eta_{\max}^\epsilon]}}\lset \log s \sbar s\alpha-\eta\geq 0,\ s(1-\alpha)-(1-\eta)\geq 0\rset\\
 &=\log\inf_{\substack{\alpha\in[0,2^{-r}]\\\eta\in[\eta_{\min}^\epsilon,\eta_{\max}^\epsilon]}}\max\left\{\frac{\eta}{\alpha},\frac{1-\eta}{1-\alpha}\right\}.
 \label{eq:dmax smooth proof intermediate}
\end{aligned}\end{equation}

Note that $\max\left\{\frac{\eta}{\alpha},\frac{1-\eta}{1-\alpha}\right\}$ is clearly lower bounded by 1, and it is achieved when $\eta=\alpha$. Thus,
when $\eta_{\min}^\epsilon\leq 2^{-r}$, we immediately get $D_{\max,\FF}^\epsilon(\Phi_\kappa)=0$. 
On the other hand, when $\eta_{\min}^\epsilon\geq 2^{-r}$, we always have $\max\left\{\frac{\eta}{\alpha},\frac{1-\eta}{1-\alpha}\right\}=\frac{\eta}{\alpha}$, and the minimization over $\alpha$ and $\eta$ gives $D_{\max,\FF}^\epsilon(\Phi_\kappa)=r-\log\frac{1}{\eta_{\min}^\epsilon}$. 
These can concisely be written as 
\bal
D_{\max,\FF}^\epsilon(\Phi_\kappa)=\max\left\{r-\log\frac{1}{\eta_{\min}^\epsilon},0\right\},
\label{eq:dmax smoothed final expression proof}
\eal
which concludes the proof for the expression of $D_{\max,\FF}^\epsilon$.

The smooth standard robustness can be computed similarly. 
Note that $D_{s,\FF}^\epsilon(\Phi)=\inf_{\sigma\in\FF}D_{s,\FF}^\epsilon(\Phi\|\sigma)$ where 
\begin{align}\begin{aligned}
 &D_{s,\FF}^\epsilon(\rho_1\|\rho_2)\coloneqq\\
 &\inf\lset\log(1+s)\sbar\frac{\rho_1'+s\tau}{1+s}=\rho_2,\ \tau\in\FF,\ F(\rho_1',\rho_1)\geq 1-\epsilon\rset,
\end{aligned}\end{align}
and $D_{s,\FF}^\epsilon$ is contractive under every free operation $\E\in\OO_{\max}$, as for all states $\rho$ and $\sigma$,  
\bal
 D_{s,\FF}^\epsilon(\rho\|\sigma)&= D_{s,\FF}(\tilde\rho\|\sigma)\\
 &\geq D_{s,\FF}(\E(\tilde\rho)\|\E(\sigma))\\
 &\geq D_{s,\FF}^\epsilon(\E(\rho)\|\E(\sigma))
 \eal
where we set $\tilde\rho$ as the optimal state realizing the standard robustness, and in the third line we used that $F(\E(\tilde\rho),\E(\rho))\geq F(\tilde\rho,\rho)\geq 1-\epsilon$.
Thus, we can use Lemma~\ref{lem:measure restriction} to compute $D_{s,\FF}^\epsilon(\Phi_\kappa)$ as
\begin{equation}\begin{aligned}
 &\inf_{\substack{\alpha\in[0,2^{-r}]\\\eta\in[\eta_{\min}^\epsilon,\eta_{\max}^\epsilon]}}\lset \log s \sbar \eta\Phi+(1-\eta)\sigma^\star\leq_\FF s(\alpha \Phi + (1-\alpha)\sigma^\star)\rset\\
 &=\inf_{\substack{\alpha\in[0,2^{-r}]\\\eta\in[\eta_{\min}^\epsilon,\eta_{\max}^\epsilon]}}\lset \log s \sbar s\alpha-\eta\geq 0,\ s(1-\alpha)-(1-\eta)\geq 0,\right.\\
 &\hspace{4cm}\left.\frac{s(1-\alpha)-(1-\eta)}{s\alpha-\eta}\geq 2^r-1\rset\\
 &=\log\inf_{\substack{\alpha\in[0,2^{-r}]\\\eta\in[\eta_{\min}^\epsilon,\eta_{\max}^\epsilon]}}\max\left\{\frac{\eta}{\alpha},\frac{1-\eta}{1-\alpha},\frac{1-2^r\eta}{1-2^r\alpha}\right\}
\end{aligned}\end{equation}
where in the first line, we used the notation $A\leq_\FF B \iff B-A\in \cone(\FF)$. 
When $\eta_{\min}^\epsilon\leq 2^{-r}$, we immediately get $D_{s,\FF}^\epsilon(\Phi_\kappa)=0$, which is achieved at $\alpha=\eta$. 
On the other hand, when $\eta_{\min}^\epsilon\geq 2^{-r}$, 
we have $\frac{\eta}{\alpha}>\frac{1-\eta}{1-\alpha}$ and $\frac{\eta}{\alpha}>\frac{1-2^r\eta}{1-2^r\alpha}$, in which the minimization over $\alpha$ and $\eta$ results in 
$D_{s,\FF}^\epsilon(\Phi_\kappa)=r-\log\frac{1}{\eta_{\min}}$. 
These two can be combined as 
\bal
D_{s,\FF}^\epsilon(\Phi_\kappa)=\max\left\{r-\log\frac{1}{\eta_{\min}^\epsilon},0\right\}.
\label{eq:ds smoothed final expression proof}
\eal

The proof for the case when $D_{\min,\aff(\FF)}(\Phi)=D_{\max,\FF}(\Phi)$ goes analogously, where we basically change the region of optimization over free states from $\tilde\FF$ to $\{\tilde\sigma\}$ defined in \eqref{eq:measure optimization restricted reduced}.
Since $D_H^\epsilon(\rho\|\sigma)$ satisfies the data-processing inequality even if $\sigma$ is not a positive operator~\cite{Regula2020oneshot}, we can employ the same argument as that in Lemma~\ref{lem:measure restriction} to get 
\bal
D_{H,\aff(\FF)}^\epsilon(\Phi_\kappa) = \log\sup_{\substack{0\leq Q \leq \id \\ \Tr[Q\Phi_\kappa]\geq 1-\epsilon}}\Tr[Q\tilde\sigma]^{-1}
\eal
where we used that the map $\Lambda$ in \eqref{eq:resource to twirling} transforms all $\sigma\in\aff(\FF)$ to $\tilde\sigma$.
We can then restrict the optimization to $Q=\eta P^\star+\lambda(\id-P^\star)$ with the condition \eqref{eq:hypothesis condition}.
Recall that $\mathcal{A}$ is the set of $(\lambda,\eta)$ that satisfies \eqref{eq:hypothesis condition}.
Then, we can compute $D_{H,\aff(\FF)}^\epsilon(\Phi_\kappa)$ as
\begin{align}\begin{aligned}
 &\log \sup_{(\lambda,\eta)\in\mathcal{A}}\Tr[\{\eta P^\star+\lambda(\id-P^\star)\} \{2^{-r}\Phi+(1-2^{-r})\sigma^\star\}]^{-1}\\
 &\quad= - \log \inf_{(\lambda,\eta)\in\mathcal{A}}[\eta\, 2^{-r}+\lambda(1-2^{-r})].
\end{aligned}\end{align} 

When $\kappa=1$, the condition on $\mathcal{A}$ becomes \eqref{eq:condition target}. 
Then, we evaluate this as 
\begin{align}\begin{aligned} 
 -\log\left[(1-\epsilon)2^{-r}\right]=r + \log\frac{1}{1-\epsilon}.
 \end{aligned}\end{align}

On the other hand, when $\epsilon=0$ and $0<\kappa<1$, we are constrained to $\lambda=\eta=1$, leading to value 0. 
When $\kappa=0$, we have $\lambda=1$, $0\leq\eta\leq1$, giving $\log \frac{1}{1-2^{-r}}$.
When $\kappa=1$, we have $0\leq\lambda\leq 1$, $\eta=1$, giving $r$.

Noting that the above argument gives the same conclusion for $D_{H,\FF}^\epsilon$ as well, we reorganize the above form to reach
\bal
 D_{H,\aff(\FF)}^\epsilon(\Phi_\kappa)&=D_{H,\FF}^\epsilon(\Phi_\kappa)\\
 &=\begin{cases}
 0 & \epsilon=0,\ 0<\kappa<1\\
 \log\frac{1}{1-2^{-r}} &\epsilon=0,\ \kappa=0\\
 r+\log\frac{1}{1-\epsilon} & 0\leq \epsilon<1,\ \kappa=1
 \end{cases}
\eal

As for $D_{\max,\FF}^\epsilon$, we follow the same argument up to \eqref{eq:dmax smooth proof intermediate} to get 
\begin{equation}\begin{aligned}
 D_{\max,\FF}^\epsilon(\Phi_\kappa)=\log\inf_{\eta\in[\eta_{\min}^\epsilon,\eta_{\max}^\epsilon]}\max\left\{\frac{\eta}{2^{-r}},\frac{1-\eta}{1-2^{-r}}\right\}.
\end{aligned}\end{equation}

If $\eta_{\min}^\epsilon\geq 2^{-r}$, we always have $\frac{\eta}{2^{-r}}\geq \frac{1-\eta}{1-2^{-r}}$ and thus $D_{\max,\FF}^\epsilon(\Phi_\kappa)=\log\frac{\eta_{\min}^\epsilon}{2^{-r}}$.
If $\eta_{\max}^\epsilon\leq 2^{-r}$, we always have $\frac{\eta}{2^{-r}}\leq \frac{1-\eta}{1-2^{-r}}$ and thus $D_{\max,\FF}^\epsilon(\Phi_\kappa)=\log\frac{1-\eta_{\max}^\epsilon}{1-2^{-r}}$.
If $\eta_{\min}^\epsilon\leq 2^{-r}\leq \eta_{\max}^\epsilon$, then 
$D_{\max,\FF}^\epsilon(\Phi_\kappa)=0$, which is achieved at $\eta=2^{-r}$.

Summarizing, we get
\bal
D_{\max,\FF}^\epsilon(\Phi_\kappa)=\begin{cases}
r-\log\frac{1}{\eta_{\min}^\epsilon} & \eta_{\min}^\epsilon\geq 2^{-r}\\ 
\log\frac{1-\eta_{\max}^\epsilon}{1-2^{-r}} & \eta_{\max}^\epsilon\leq 2^{-r}\\
0 & \eta_{\min}^\epsilon\leq 2^{-r}\leq \eta_{\max}^\epsilon
\end{cases},
\eal
concluding the proof.

\end{proof}

The generality of Lemma~\ref{lem:measure restriction} has wide applicability beyond the above measures. 
As an example, it allows us to provide an exact evaluation for a measure based on the trace distance. 

\begin{proposition}
 Define the trace-distance measure $R_{\tr,\FF}(\rho)\coloneqq \min_{\sigma\in\FF} \frac{1}{2}\|\rho-\sigma\|_1$.
 If $D_{\min,\FF}(\Phi)=D_{s,\FF}(\Phi)=:r$ or $D_{\min,\aff(\FF)}(\Phi)=D_{\max,\FF}(\Phi)=:r$ then
 \bal
  R_{\tr,\FF}(\Phi) = 1-2^{-r}.
 \eal
\end{proposition}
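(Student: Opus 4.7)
The plan is to apply Lemma~\ref{lem:measure restriction} with the choice $D(\rho,\sigma) = \tfrac{1}{2}\|\rho-\sigma\|_1$. The trace distance is contractive under all CPTP maps and in particular under every free operation in $\OO_{\max}$, so the hypothesis of Lemma~\ref{lem:measure restriction} is satisfied. Furthermore, writing $\Phi = \Phi_{\kappa=1}$, we have $\Phi \in \tilde{\D}$ as defined in the lemma, so the restriction of the optimization applies. The only other input we need from Lemma~\ref{lem:resource to twirling} is that $\sigma^\star$ is a state orthogonal to $\Phi$ in the sense $\Tr[\Phi \sigma^\star] = 0$; since both are positive with unit trace, this implies that $\Phi$ and $\sigma^\star$ have orthogonal supports, and hence $\|\Phi - \sigma^\star\|_1 = \Tr[\Phi] + \Tr[\sigma^\star] = 2$.

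In the first case, when $D_{\min,\FF}(\Phi) = D_{s,\FF}(\Phi) = r$, Lemma~\ref{lem:measure restriction} reduces the optimization to the one-parameter family $\tilde{\FF} = \{\alpha \Phi + (1-\alpha)\sigma^\star \mid 0 \leq \alpha \leq 2^{-r}\}$. Using the orthogonality of supports,
\begin{equation}
\tfrac{1}{2}\|\Phi - (\alpha \Phi + (1-\alpha)\sigma^\star)\|_1 = \tfrac{1-\alpha}{2}\|\Phi - \sigma^\star\|_1 = 1 - \alpha.
\end{equation}
Minimizing over $\alpha \in [0, 2^{-r}]$ gives the value $1 - 2^{-r}$, attained at $\alpha = 2^{-r}$.

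In the second case, when $D_{\min,\aff(\FF)}(\Phi) = D_{\max,\FF}(\Phi) = r$, the same lemma tells us that the infimum is pinned at the single state $\tilde{\sigma} = 2^{-r}\Phi + (1-2^{-r})\sigma^\star \in \FF$. A direct calculation using orthogonality of supports gives $\tfrac{1}{2}\|\Phi - \tilde{\sigma}\|_1 = 1 - 2^{-r}$, as required. There is no real obstacle here beyond invoking Lemma~\ref{lem:measure restriction} with $D$ equal to the trace distance and observing the orthogonality relation $\Tr[\Phi\sigma^\star] = 0$ from Lemma~\ref{lem:resource to twirling}; the actual computation is then immediate.
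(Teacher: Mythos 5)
Your proof is correct and follows essentially the same route as the paper: invoke Lemma~\ref{lem:measure restriction} to restrict the minimization to $\tilde\FF$ (or to the single state $\tilde\sigma$ in the affine case), then use $\Tr[\Phi\sigma^\star]=0$ to reduce the trace norm to $2(1-\alpha)$ and minimize over $\alpha\in[0,2^{-r}]$. The only cosmetic difference is that you spell out the orthogonal-support argument explicitly, which the paper leaves implicit.
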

\begin{proof}
For the case when $D_{\min,\FF}(\Phi)=D_{s,\FF}(\Phi)=:r$, 
\bal
 R_{\tr,\FF}(\Phi)&=\min_{\sigma\in\tilde\FF}\frac{1}{2}\|\Phi-\sigma\|_1\\
 &=\min_{0\leq\alpha\leq 2^{-r}} \frac{1}{2}\|\Phi - \left[\alpha\Phi + (1-\alpha)\sigma^\star\right]\|_1\\
 &=\min_{0\leq\alpha\leq 2^{-r}} (1-\alpha)\\
 &=1-2^{-r},
\eal
where in the first line we used Lemma~\ref{lem:measure restriction}, and in the third line we used that $\Tr[\Phi\sigma]=0$. 
The case for $D_{\min,\aff(\FF)}(\Phi)=D_{\max,\FF}(\Phi)$ can be shown analogously. 
\end{proof}

\section{Proof of Theorem~\ref{thm:second law general state}}\label{app:second law general state}

\begin{proof}
The proof combines Proposition~\ref{pro:smooth measures} and the argument in Ref.~\cite{Wilde2021second}.
For two states $\Phi_1,\Phi_2\in\TT$, consider the following sequence of transformations: 
\bal
 \Phi_2\xrightarrow[\OO]{\epsilon_2}\rho \xrightarrow[\OO]{\epsilon_1}\Phi_1.
\eal

Let $\Lambda_2$ and $\Lambda_1$ be free transformations corresponding to the first and the second transformation above.
Define the purified distance $P(\rho,\sigma)\coloneqq \sqrt{1-F(\rho,\sigma)}$, which satisfies the following triangle inequality~\cite{Tomamichel2015quantum} 
\bal
 P(\rho,\tau)\leq P(\rho,\sigma)\sqrt{F(\sigma,\tau)}+P(\sigma,\tau)\sqrt{F(\rho,\sigma)}
\eal
for every set of three states $\rho,\sigma,\tau$ such that $P(\rho,\sigma)^2+P(\sigma,\tau)^2\leq 1$. 
Applying this to our setting, we get that $\Lambda_1\circ\Lambda_2\in\OO$ achieves the transformation $\Phi_2\xrightarrow[\OO]{\epsilon'}\Phi_1$ with
\bal
\epsilon'\coloneqq \left(\sqrt{\epsilon_2(1-\epsilon_1)}+\sqrt{\epsilon_1(1-\epsilon_2)}\right)^2
\label{eq:error state}
\eal
whenever $\epsilon_1+\epsilon_2\leq 1$ is satisfied.
Moreover, $\epsilon_1+\epsilon_2< 1$ ensures $\epsilon'< 1$. This can be shown as follows. 
Direct calculation gives
\bal
 \epsilon'=\epsilon_1+\epsilon_2 +2\sqrt{\epsilon_1\epsilon_2}\left(\sqrt{(1-\epsilon_1)(1-\epsilon_2)}- \sqrt{\epsilon_1\epsilon_2}\right).
\eal
Let us change parameters as $\Delta\coloneqq 1-(\epsilon_1+\epsilon_2)$ and $p\coloneqq \epsilon_1\epsilon_2$, which behave as independent variables under the condition that $(1-\Delta)^2-4p\geq 0$, $0<\Delta\leq 1$, $p\geq 0$, noting that $\epsilon_1,\epsilon_2$ are solutions of the quadratic equation $x^2-(1-\Delta) x + p=0$ while satisfying $\epsilon_1,\epsilon_2
\geq 0$ and $\epsilon_1+\epsilon_2<1$.
Then, we get 
\bal
 1-\epsilon' = \Delta - 2\sqrt{p}\left(\sqrt{\Delta+p}-\sqrt{p}\right).
\eal
This gives
\bal
 \frac{\partial}{\partial \Delta} (1-\epsilon')=1 - \frac{\sqrt{p}}{\sqrt{\Delta+p}}>0,
\eal
implying that $1-\epsilon'$ is a strictly increasing function of $\Delta$ for an arbitrary fixed $p$. 
Since $(1-\epsilon')|_{\Delta=0} = 0$ for every $p$, we get that $1-\epsilon'>0$ for all $\epsilon_1$ and $\epsilon_2$ satisfying $\epsilon_1+\epsilon_2<1$. 

Therefore, for every dilution operation $\Lambda_2$ with error $\epsilon_2$ and distillation operation $\Lambda_1$ with error $\epsilon_1$, we get
\bal
  D_{\min,\FF}(\Phi_1)
  &\leq D_{H,\FF}^{\epsilon'}(\Phi_2)\\
 &= D_{\min,\FF}(\Phi_2) + \log\frac{1}{1-\epsilon'},
\eal
where the first line follows from Lemma~\ref{lem:converses} and the second line from Proposition~\ref{pro:smooth measures}. Optimizing over all feasible $\Lambda_1$ and $\Lambda_2$ and using the definition of distillable resource and resource cost gives the desired relation
\bal
d_{\OO}^{\epsilon_1}(\rho) \leq c_{\OO}^{\epsilon_2}(\rho) + \log\frac{1}{1-\epsilon'}.
\eal
\end{proof}

\section{Comparing the bounds in Theorem~\ref{thm:second law general} and Theorem~\ref{thm:second law general state}}

\label{app:comparison bounds}

Here, we show that the bound in Theorem~\ref{thm:second law general state} is always tighter than the bound in Theorem~\ref{thm:second law general}.

Let us first compare the expression in Theorem~\ref{thm:second law general} valid for $\epsilon_1+\epsilon_2<1$ and the bound in Theorem~\ref{thm:second law general state}.
Comparing the denominators of the expressions inside the logarithm, we get
\begin{align}\begin{aligned}
 &\left(1-\epsilon'\right)-\left(\sqrt{1-\epsilon_2}-\sqrt{\epsilon_1}\right)^2 \\&= -2\epsilon_1+2\epsilon_1\epsilon_2-2\sqrt{\epsilon_1\epsilon_2(1-\epsilon_2)(1-\epsilon_1)}+2\sqrt{\epsilon_1(1-\epsilon_2)}\\
 &=2\sqrt{\epsilon_1(1-\epsilon_2)}\left(1-\sqrt{\epsilon_2(1-\epsilon_1)}\right) - 2\epsilon_1(1-\epsilon_2)\\
 &=2\sqrt{\epsilon_1(1-\epsilon_2)}\left(1-\sqrt{\epsilon_2(1-\epsilon_1)}- \sqrt{\epsilon_1(1-\epsilon_2)}\right)\\
 &=2\sqrt{\epsilon_1(1-\epsilon_2)}\left(1-\sqrt{\epsilon'}\right)
\end{aligned}\end{align}
where $\epsilon'$ is the one introduced in \eqref{eq:error state}.
As shown in Appendix~\ref{app:second law general state}, we have $\epsilon'<1$ (and thus $1-\sqrt{\epsilon'}>0$) for all $\epsilon_1$ and $\epsilon_2$ satisfying $\epsilon_1+\epsilon_2<1$. 
This implies $\left(1-\epsilon'\right)-\left(\sqrt{1-\epsilon_2}-\sqrt{\epsilon_1}\right)^2>0$, and in particular 
\bal
 \log\frac{1}{1-\epsilon'}< \log\left(\sqrt{1-\epsilon_2}-\sqrt{\epsilon_1}\right)^{-2}
\eal
for $\epsilon_1+\epsilon_2<1$. 

We next compare the expression in Theorem~\ref{thm:second law general} to the bound in Theorem~\ref{thm:second law general state} for $\epsilon_1+\sqrt{\epsilon_2}<1$. 
Direct calculation gives
\begin{align}\begin{aligned}
 &\left(1-\epsilon'\right)-\left(1-\epsilon_1-\sqrt{\epsilon_2}\right) \\&= \epsilon_1+\sqrt{\epsilon_2}-\left(\sqrt{\epsilon_1(1-\epsilon_2)}+\sqrt{\epsilon_2(1-\epsilon_1)}\right)^2\\
 &=\sqrt{\epsilon_2}-\epsilon_2+2\epsilon_1\epsilon_2-2\sqrt{\epsilon_1\epsilon_2(1-\epsilon_1)(1-\epsilon_2)}.
\end{aligned}\end{align}

Let us define $g(\epsilon_1,\epsilon_2)\coloneqq\left(1-\epsilon'\right)-\left(1-\epsilon_1-\sqrt{\epsilon_2}\right)$ and consider varying $\epsilon_1$ for a fixed $\epsilon_2$. 
Although we are interested in the region $0\leq \epsilon_1 <1-\sqrt{\epsilon_2}$, let us consider an extended region $0\leq \epsilon_1 <1-\epsilon_2$ as a domain of $g(\epsilon_1,\epsilon_2)$. 
Since $1-\sqrt{\epsilon_2}\leq 1-\epsilon_2$, if we can show that $g(\epsilon_1,\epsilon_2)\geq 0$ for $0\leq \epsilon_1 \leq 1-\epsilon_2$, then $g(\epsilon_1,\epsilon_2)\geq 0$ for $0\leq \epsilon_1 \leq 1-\sqrt{\epsilon_2}$ automatically follows.

Since $g(0,\epsilon_2)=g(1-\epsilon_2,\epsilon_2)=\sqrt{\epsilon_2}-\epsilon_2\geq 0$, it suffices to show that $g(\cdot,\epsilon_2)\geq 0$ at local minima. 
Since 
\bal
 \frac{\partial}{\partial\epsilon_1}g(\epsilon_1,\epsilon_2) &= 2\epsilon_2 - \frac{\epsilon_2(1-\epsilon_2)(1-2\epsilon_1)}{\sqrt{\epsilon_1\epsilon_2(1-\epsilon_1)(1-\epsilon_2)}}\\
 &\propto 2\sqrt{\epsilon_1\epsilon_2(1-\epsilon_1)(1-\epsilon_2)}-(1-\epsilon_2)(1-2\epsilon_1),
\eal
the local minima occur at $\epsilon_1^\star$ such that 
\bal
 2\sqrt{\epsilon_1^\star\epsilon_2(1-\epsilon_1^\star)(1-\epsilon_2)}-(1-\epsilon_2)(1-2\epsilon_1^\star) = 0,
 \label{eq:local minima}
\eal
for which we get a simplified form of $g(\epsilon_1^\star,\epsilon_2)$ as 
\bal
 g(\epsilon_1^\star,\epsilon_2)&=\sqrt{\epsilon_2}-\epsilon_2+2\epsilon_1^\star\epsilon_2-(1-\epsilon_2)(1-2\epsilon_1^\star)\\
 &=\sqrt{\epsilon_2}-1+2\epsilon_1^\star.
 \label{eq:local minimum evaluated}
\eal

Eq.~\eqref{eq:local minima} can be alternatively written as 
\bal
 -1+ \epsilon_1^\star +\left(\sqrt{\epsilon_1^\star(1-\epsilon_2)}+\sqrt{\epsilon_2(1-\epsilon_1^\star)}\right)^2 = 0,
\eal
leading to 
\bal
  \sqrt{\epsilon_1^\star(1-\epsilon_2)} = (1-\sqrt{\epsilon_2})\sqrt{1-\epsilon_1^\star},
\eal
for the region $\epsilon_1^\star \leq 1$.
This gives the expression of the local minimum as 
\bal
 \epsilon_1^\star = \frac{1-\sqrt{\epsilon_2}}{2}.
\eal
Plugging this into \eqref{eq:local minimum evaluated} gives
\bal
 g(\epsilon_1^\star,\epsilon_2)&= 0.
\eal
This concludes the proof that $\left(1-\epsilon'\right)-\left(1-\epsilon_1-\sqrt{\epsilon_2}\right)\geq 0$ for all $\epsilon_1$ and $\epsilon_2$ satisfying $\epsilon_1+\sqrt{\epsilon_2}<1$, and in particular,
\bal
 \log \frac{1}{1-\epsilon'} \leq \log \frac{1}{1-\epsilon_1-\sqrt{\epsilon_2}}.
\eal


\section{Proof of Lemma~\ref{lem:twirling to resource}}

\begin{proof}
Let us write $\Lambda\in\OO_{\max}\cap\S(\Phi)$ as 
\bal
 \Lambda(\cdot)=\Tr[P\cdot]\Phi + \tilde\Lambda(\cdot)
\eal
where $\Tr[P\Phi]=1$ and $\tilde\Lambda\in{\rm CP}$.
Let $\sigma\in\FF$ be an arbitrary free state. Since $\Lambda\in\OO_{\max}$, we get 
\bal
 \Tr[P\sigma]\Phi +\left(1-\Tr[P\sigma]\right)\tau\in\FF
\eal
where we defined a state $\tau\coloneqq \tilde\Lambda(\sigma)/\Tr[\tilde\Lambda(\sigma)]$ and used that $\Tr[\tilde\Lambda(\sigma)]=1-\Tr[P\sigma]$ because $\Lambda$ is trace preserving. 
Since this holds for every free state $\sigma$, the first expression of $D_{\max,\FF}$ in \eqref{eq:robustness def} implies that
\bal
\Tr[P\sigma]^{-1}\geq 2^{D_{\max,\FF}(\Phi)},
\forall \sigma\in\FF.
\eal
Note that 
\bal
2^{D_{\min,\FF}(\Phi)}=\min_{\sigma\in\FF}\max_{\substack{0\leq Q\leq \id\\\Tr[Q\Phi]=1}}\Tr[Q\sigma]^{-1}\\
\geq \max_{\substack{0\leq Q\leq \id\\\Tr[Q\Phi]=1}}\min_{\sigma\in\FF}\Tr[Q\sigma]^{-1} 
\eal
where the inequality is due to the max-min inequality. (This can be made into an equality by further using the convexity of $\FF$, the linearity of the trace, and  Sion's minimax theorem, which however we do not need here.)
These give
\bal
 2^{D_{\min,\FF}(\Phi)} & \geq \max_{\substack{0\leq Q\leq \id\\\Tr[Q\Phi]=1}}\min_{\sigma\in\FF}\Tr[Q\sigma]^{-1} \\
 & \geq \min_{\sigma\in\FF}\Tr[P\sigma]^{-1} \\
 & \geq 2^{D_{\max,\FF}(\Phi)},
\eal
implying $D_{\min,\FF}(\Phi)\geq D_{\max,\FF}(\Phi)$.
On the other hand, note that $D_{\min,\FF}(\Phi)\leq D_{\max,\FF}(\Phi)$ for every $\Phi$ because 
\bal
D_{\min,\FF}(\Phi) & =\inf_{\sigma\in\FF}D_{\min}(\Phi\|\sigma),\\
D_{\max,\FF}(\Phi) & =\inf_{\sigma\in\FF}D_{\max}(\Phi\|\sigma),\\
D_{\min}(\Phi\|\sigma) & \leq D_{\max}(\Phi\|\sigma)\quad  \forall \Phi,\sigma.
\eal
These result in $D_{\min,\FF}(\Phi)=D_{\max,\FF}(\Phi)$.
The proof for $D_{s,\FF}$ goes analogously.

Furthermore, $\Lambda^{\otimes n}$ takes the form 
\bal
\Lambda^{\otimes n}= \Tr[P^{\otimes n}\cdot]\Phi^{\otimes n} + \Lambda'(\cdot)
\eal
where $\Lambda'$ is another completely-positive map.
Noting that $0\leq P^{\otimes n}\leq \id$, $\Tr[P^{\otimes n}\Phi^{\otimes n}]=1$, and 
\begin{multline}
\Lambda^{\otimes n} = (\Lambda\otimes\idc\otimes\dots\otimes\idc)\circ(\idc\otimes\Lambda\otimes\dots\otimes\idc)\\
\circ\dots\circ(\idc\dots\otimes\idc\otimes\Lambda)
\end{multline}
is in $\OO_{\max}$ because $\Lambda$ is completely free, we get $\Lambda^{\otimes n}\in\OO_{\max}\cap\S(\Phi^{\otimes n})$. 
Thus, by using the same argument, we get $D_{\min,\FF}(\Phi^{\otimes n})=D_{\max,\FF}(\Phi^{\otimes n})$ for every positive integer $n$. 
\end{proof}


\section{Proof of Proposition~\ref{pro:group twirling}}

\begin{proof}
Let us define the group twirling operation 
\bal
 \Xi(\cdot)\coloneqq \int_G \mathrm{d}g\, U(g) \cdot U(g)^\dagger
\eal
where the integral is taken over the Haar measure of the group $G$.
We first show a general form of how the twirling operation acts on an arbitrary state $\rho$ (in particular, Eq.~\eqref{eq:twirling general}).
Although Eq.~\eqref{eq:twirling general} was already presented in Ref.~\cite[Sec.~2.6]{Chiribella2006optimal}, we here provide a self-contained explanation for completeness.

An arbitrary unitary representation $\{U(g)\}_{g\in G}$ of a finite or a compact Lie group $G$ can be decomposed into a direct sum of irreducible representations~\cite{Fulton2013representation}. 
Accordingly, the Hilbert space that each $U(g)$ acts on admits the following decomposition:
\bal
 \H = \bigoplus_\mu \H_\mu^{(r)}\otimes \H_\mu^{(m)}
\eal
where $\mu$ labels the irreducible representations, $\H_\mu^{(r)}$ denotes the subspace on which each irreducible representation acts nontrivially, and $\H_\mu^{(m)}$ denotes the multiplicity subspace.
We write the dimensions of $\H_{\mu}^{(r)}$ and $\H_{\mu}^{(m)}$ as $d_\mu^{(r)}$ and $d_\mu^{(m)}$ respectively. 

Each $U(g)$ can be written as 
\bal
 U(g) = \bigoplus_\mu U_\mu^{(r)}(g)\otimes\id^{(m)}_\mu.
 \label{eq:unitary decomposition}
\eal
Also, Schur's lemma~\cite{Fulton2013representation} imposes a structure to every \emph{symmetric state} $\sigma$ satisfying $U(g)\sigma U(g)^\dagger=\sigma\ \forall g$ as 
\bal
 \sigma = \bigoplus_\mu q_\mu \frac{\id_\mu^{(r)}}{d_\mu^{(r)}}\otimes\eta_\mu^{(m)}
 \label{eq:symmetric states}
\eal
where $\{q_\mu\}_\mu$ is a probability distribution, $\eta_\mu^{(m)}$ is a quantum state, and $\id_{\mu}^{(r)}$ is the projector onto $\H_\mu^{(r)}$.

It is easy to see that every output of the group twirling operation is a symmetric state, because, for every $g\in G$,
\bal
 U_g\Xi(\rho)U_g^\dagger &= \int \mathrm{d}g' U(g)U(g')\rho U(g')^\dagger U(g)^\dagger\\
 &= \int_G \mathrm{d}g'\,  U(gg')\rho U(gg')^\dagger\\
 &= \int_G \mathrm{d}g'\, U(g')\rho U(g')^\dagger\\
 &= \Xi(\rho)
\eal
where in the third line we used the left and right invariance of the Haar measure.

Let us also define the projection onto invariant subspaces as
\bal
 \P(\cdot)\coloneqq \sum_\mu \id_\mu^{(r)}\otimes\id_\mu^{(m)}\cdot \id_\mu^{(r)}\otimes\id_\mu^{(m)}.
\label{eq:projector}
\eal
The projection turns every state $\rho$ into the form
\bal
 \P(\rho) = \bigoplus_\mu p_\mu \sigma_\mu^{(r,m)}
\eal
where $\{p_\mu\}_\mu$ is probability distribution, and $\sigma_\mu^{(r,m)}$ is a quantum state acting on $\H_\mu^{(r)}\otimes\H_\mu^{(m)}$. 
Also, the forms of \eqref{eq:unitary decomposition} and \eqref{eq:projector} imply that $U_g\P(\rho)U_g^\dagger=\P(U_g\rho U_g^\dagger)\ \forall \rho, \forall g\in G$, and hence
\bal
\Xi\circ\P(\rho)&=\P\circ\Xi(\rho)\\
&=\Xi(\rho)
\eal
for every state $\rho$, where in the second line, we used the fact that $\Xi$ maps every state to a symmetric state, which has the form \eqref{eq:symmetric states}.
Then,
\begin{align}\begin{aligned}
&\Xi(\rho) = \Xi(\P(\rho))\\
&= \bigoplus_\mu\int_G \mathrm{d}g\, p_\mu \left(U_\mu^{(r)}(g)\otimes\id_\mu^{(m)}\right)\sigma_\mu^{(r,m)}\left(U_\mu^{(r)}(g)^\dagger\otimes\id_\mu^{(m)}\right).
\label{eq:twirling subspace}
\end{aligned}\end{align}

Since $\Xi(\rho)$ is a symmetric state having a structure of \eqref{eq:symmetric states}, and $U_\mu^{(r)}$ in \eqref{eq:twirling subspace} acts only on $\H_\mu^{(r)}$ nontrivially, we must have 
\begin{align}\begin{aligned}
\int_G \mathrm{d}g\, p_\mu &\left(U_\mu^{(r)}(g)\otimes\id_\mu^{(m)}\right)\sigma_\mu^{(r,m)}\left(U_\mu^{(r)}(g)^\dagger\otimes\id_\mu^{(m)}\right)\\
&\quad=p_\mu\frac{\id_\mu^{(r)}}{d_\mu^{(r)}}\otimes\sigma_\mu^{(m)},\ \forall \mu.
\end{aligned}\end{align}
where $\sigma_\mu^{(m)}\coloneqq\Tr_r[\sigma_\mu^{(r,m)}]$ and $\Tr_r$ denotes the partial trace over the system with $\H_\mu^{(r)}$.
Therefore, we get
\bal
\Xi(\rho) &= \sum_\mu p_\mu\frac{\id_\mu^{(r)}}{d_\mu^{(r)}}\otimes \sigma_\mu^{(m)}\\
&=\sum_\mu \frac{\id_\mu^{(r)}}{d_\mu^{(r)}}\otimes\Tr_r\left[\P(\rho)\id_\mu^{(r)}\otimes\id_\mu^{(m)}\right]\\
&=\sum_\mu \frac{\id_\mu^{(r)}}{d_\mu^{(r)}}\otimes\Tr_r\left[\rho\id_\mu^{(r)}\otimes\id_\mu^{(m)}\right].
\label{eq:twirling general}
\eal

We now show that this group twirling operation satisfies the condition for the operation that appears in Lemma~\ref{lem:twirling to resource}.
By assumption, $\ket{\Phi}$ is an eigenvector of $U_g$ with eigenvalue $e^{i\phi_g}$ for every $g\in G$.
This implies that $\ket{\Phi}$ is in the invariant subspace of $U(g)$ corresponding to the one-dimensional irreducible representation $\{e^{i\phi_g}\}_{g\in G}$.
We assign the label $\mu^\star$ for this representation.
Then, every $U(g)$ has the form 
\bal
 U(g) = \begin{pmatrix}
 e^{i\phi_g}\id_{\mu^\star}^{(m)}&\\
 &\bigoplus_{\mu\neq \mu^\star}U_{\mu}^{(r)}(g)\otimes \id_\mu^{(m)},
 \end{pmatrix}
\eal
implying that there exist $d_{\mu^\star}^{(m)}$ simultaneous eigenvectors of all $U(g)$'s with eigenvalues $\{e^{i\phi_g}\}_g$. 
Using the assumption that $\ket{\Phi}$ is the unique simultaneous eigenvector with eigenvalues $\{e^{i\phi_g}\}_g$, we identify $d_{\mu^\star}^{(m)}=1$. 
Noting $d_{\mu^\star}^{(r)}=1$ and $\id_{\mu^\star}^{(r)}=\Phi$, we get from \eqref{eq:twirling general} that
\bal
 \Xi(\rho) &= \Tr[\rho\Phi]\Phi+\sum_{\mu\neq\mu^\star}\frac{\id_\mu^{(r)}}{d_\mu^{(r)}}\otimes\Tr_r\left[\rho\id_\mu^{(r)}\otimes\id_\mu^{(m)}\right].
 \label{eq:twirling reference}
\eal

The second term is a completely positive map.
Also, the convexity of $\OO$ and the assumption that $U_g\cdot U_g^\dagger\in\OO$ imply that $\Xi\in\OO\subseteq\OO_{\max}$.
Thus, we can apply Lemma~\ref{lem:twirling to resource} to conclude the proof. 

\end{proof}


\section{Proofs of Propositions~\ref{pro:face state}--\ref{pro:Norrell state}}

We first show Proposition~\ref{pro:face state}.

\medskip

\noindent
\textit{Proof of Proposition~\ref{pro:face state}.}
The face state is the $+1$ eigenstate of the Clifford unitary $K\coloneqq SH$, which cycles Pauli operators as $X\to Z \to Y$. 
This implies that one can construct the dephasing map $\Lambda$ with respect to $\ket{F}$ and the $-1$ eigenstate $\ket{\overline{F}}$ of $K$  of the form 
\bal
 \Lambda(\rho)=\frac{1}{2}\rho + \frac{1}{2}K\rho K^{\dagger}=\Tr[F\rho] F + \Tr[(\id-F)\rho]\overline{F}.
\eal
Since this is realized by a probabilistic application of the Clifford gate $K$, it is a stabilizer operation and thus $\Lambda\in\OO_{\rm all}$.
We get the statement by applying Lemma~\ref{lem:twirling to resource}, or alternatively, Proposition~\ref{pro:group twirling}.
\qed
\medskip

This idea can be employed to prove Proposition~\ref{pro:Clifford magic}.

\medskip

\noindent
\textit{Proof of Proposition~\ref{pro:Clifford magic}.}
Consider a $t$-qudit system with local dimension $d$.   
Since $\phi$ is a stabilizer state, there exists a Clifford unitary $U$ such that $\ket{\phi}=U\ket{0}^{\otimes t}$ where $\ket{0}$ is the $+1$ eigenstate of $\{Z^j\}_{j=0}^{d-1}$ with $Z\coloneqq \sum_{j=0}^{d-1}e^{i2\pi j/d}\dm{j}$.
Since $U$ is Clifford, $\phi$ is the $+1$ eigenvalue of the $t(d-1)$ generalized Pauli operators $\left\{U^\dagger Z^j_k U\right\}_{k,j}$ where $Z_k$ is the $Z$ operator that only acts on the $k$\,th qudit and acts trivially on the other qudits. 
Since $V$ is in the 3rd level of the Clifford hierarchy, $\psi$ is the unique $+1$ eigenstate of $t(d-1)$ Clifford unitaries $\left\{V^\dagger U^\dagger Z^j_k U V\right\}_{k,j}$. 
Let $W_{jk}\coloneqq V^\dagger U^\dagger Z^j_k U V$ and $\ket{\psi_{\vec z}}\coloneqq VU\ket{\vec z}$ with $\vec z\in\{0,\dots,d-1\}^t$ be another eigenstate of $W_{jk}$ parameterized by $\vec z$, satisfying $W_{jk}\ket{\psi_{\vec z}}=e^{i2\pi  jz_k/d}\ket{\psi_{\vec z}}$.
The uniformly random application of Clifford unitaries $\left\{\prod_{k=1}^t W_{j_k k}\right\}_{j_1\dots j_t}$ works as a dephasing among $\{\psi_{\vec z}\}_{\vec z}$ because 
\begin{align}\begin{aligned}
&\frac{1}{d^t}\sum_{j_1=0}^{d-1}\dots\sum_{j_t=0}^{d-1}\left(\prod_{k=1}^t W_{j_k k}\right) \ketbra{\psi_{\vec z}}{\psi_{\vec{z'}}} \left(\prod_{k=1}^t W_{j_k k}^\dagger\right) \\&= 
\ketbra{\psi_{\vec z}}{\psi_{\vec{z'}}} \frac{1}{d^t}\sum_{j_1,\dots,j_t} \prod_{k=1}^t e^{i2\pi j_k (z_k-z_k')/d}\\
&= \ketbra{\psi_{\vec z}}{\psi_{\vec{z'}}} \prod_{k=1}^t \left(\frac{1}{d}\sum_{j=0}^{d-1} e^{i2\pi j (z_k-z_k')/d}\right)\\
&= \ketbra{\psi_{\vec z}}{\psi_{\vec{z'}}} \delta_{\vec{z}\vec{z'}}. 
\end{aligned}\end{align}
Thus, it acts as a projector onto $\{\psi_{\vec z}\}_{\vec z}$ as 
\bal
\frac{1}{d^t}\sum_{j_1\dots j_t}^{d-1}\left(\prod_{k=1}^t W_{j_k k}\right) \cdot \left(\prod_{k=1}^t W_{j_k k}^\dagger\right) = \sum_{\vec z} \dm{\psi_{\vec z}}\cdot \dm{\psi_{\vec z}}.
\eal

Importantly, this is a free operation because any $\prod_{k=1}^t W_{j_k k}$ is a Clifford unitary. 
On the other hand, this is clearly in $\S(\psi)$ and thus we can apply Lemma~\ref{lem:twirling to resource}, or alternatively, Proposition~\ref{pro:group twirling}.
\qed
\medskip

Next, we prove Proposition~\ref{pro:Hoggar state}.

\medskip

\noindent
\textit{Proof of Proposition~\ref{pro:Hoggar state}.} 
The crucial property of the Hoggar state is that it has a completely flat representation in the Pauli basis, in the sense that
\begin{equation}\begin{aligned}
  \left|\braket{{\rm Hog} | P | {\rm Hog}}\right| = \frac{1}{3}
  \label{eq:Hoggar overlap}
\end{aligned}\end{equation}
for every non-trivial Pauli operator $P \neq \id$.

On the one hand, we know that ${\rm Hog}$ is a maximizer of $D_{s,\STAB}$ with $D_{s,\STAB}({\rm Hog}) =\log\frac{12}{5}$~\cite{Howard2017application}. 
On the other hand, we will consider the stab-norm~\cite{Campbell2011catalysis}, defined for every $n$-qubit operator $A$ as
\begin{equation}\begin{aligned}
\| A \|_{\rm st} \coloneqq \frac{1}{2^n} \sum_{P \in \mathcal{P}}  \left|\Tr(A\, P)\right| ,
\end{aligned}\end{equation}
where $\mathcal{P}$ denotes all $n$-qubit Pauli operators. Crucially, since any stabilizer state satisfies $\| \sigma \|_{\rm st} \leq 1$, we can define the set
\begin{equation}\begin{aligned}
  \FF_{\mathcal{P}} \coloneqq \lset \sigma \sbar \sigma \geq 0,\; \Tr \sigma = 1,\; \| \sigma \|_{\rm st} \leq 1 \rset \supseteq \STAB.
\end{aligned}\end{equation}
Then $D_{\min,\STAB}(\rho) \geq D_{\min,\FF_{\mathcal{P}}}(\rho)$ for every state $\rho$.

Take then the Hoggar state, which can be written as
\begin{equation}\begin{aligned}
  \proj{\rm Hog} = \frac{1}{2^3} \left[ \id + \sum_{P \in \mathcal{P} \setminus \{\id\}} \frac{1}{3} e^{i \phi_P} P \right] ,
\end{aligned}\end{equation}
and consider any state $\sigma \in \FF_{\mathcal{P}}$, which can always be written as
\begin{equation}\begin{aligned}
  \sigma = \frac{1}{2^3} \left[ \id + \sum_{P \in \mathcal{P} \setminus \{\id\}} z_P P \right] 
\end{aligned}\end{equation}
with $\sum_P |z_P| \leq 2^3 - 1$ since $\| \sigma \|_{\rm st} \leq 1$. 
Then
\begin{equation}\begin{aligned}
  2^{-D_{\min,\FF_{\mathcal{P}}}({\rm Hog})} &\leq \braket{ \rm Hog | \sigma | \rm Hog}\\ 
  &\leq \frac{1}{2^3} \left[ 1 + \sum_{P \in \mathcal{P} \setminus \{\id\}} \frac{1}{3} e^{i \phi_P} z_P \right]\\
  &\leq \frac{1}{2^3} \left[ 1 + \frac{1}{3} \left(2^3 - 1\right)  \right]\\
  &= \frac{1}{8} + \frac{7}{24}\\
  &= \frac{5}{12}.
\end{aligned}\end{equation}
Using the fact that $D_{s,\STAB}({\rm Hog}) \geq D_{\max,\STAB}({\rm Hog}) \geq D_{\min,\STAB}({\rm Hog}) \geq D_{\min,\FF_{\mathcal{P}}}({\rm Hog})$ concludes the proof of the first part of the result.

To show the second part of the result, note that the Hoggar state $\ket{\rm Hog}$ is the $+1$ eigenvector of all unitaries in the group generated by two unitaries with order 7 and order 12~\cite{Zhu2012quantum}, defined as
\bal
\tilde U_7 \coloneqq \frac{\omega^5}{\sqrt{2}}\begin{pmatrix}
0 & 0 & 1 & 0 & -i & 0 & 0 & 0 \\
0 & 0 & i & 0 & -1 & 0 & 0 & 0 \\
0 & 0 & 0 & -i & 0 & -1 & 0 & 0 \\
0 & 0 & 0 & -1 & 0 & -i & 0 & 0 \\
1 & 0 & 0 & 0 & 0 & 0 & -i & 0 \\
-i & 0 & 0 & 0 & 0 & 0 & 1 & 0 \\
0 & -i & 0 & 0 & 0 & 0 & 0 & -1 \\
0 & 1 & 0 & 0 & 0 & 0 & 0 & i 
\end{pmatrix}
\eal
\bal
\tilde U_{12} \coloneqq \frac{\omega^3}{\sqrt{2}}\begin{pmatrix}
0 & 0 & 0 & 0 & 1 & i & 0 & 0 \\
0 & 0 & 0 & 0 & -1 & i & 0 & 0 \\
1 & -i & 0 & 0 & 0 & 0 & 0 & 0 \\
-1 & -i & 0 & 0 & 0 & 0 & 0 & 0 \\
0 & 0 & 1 & i & 0 & 0 & 0 & 0 \\
0 & 0 & -1 & i & 0 & 0 & 0 & 0 \\
0 & 0 & 0 & 0 & 0 & 0 & 1 & -i \\
0 & 0 & 0 & 0 & 0 & 0 & -1 & -i 
\end{pmatrix}
\eal
where $\omega=e^{2\pi i/8}$.
Let us call this group $G$ and consider the representation $\{U_g\}_{g\in G}$ that has the identical form to the element of the unitary group $G$.  As one can explicitly check, $\tilde U_{12}$ 
has a one-dimensional $+1$-eigenspace with the unique $+1$-eigenvector $\ket{\rm Hog}$, meaning that the Hoggar state is the unique simultaneous $+1$-eigenvector of all unitaries $\{U_g\}_{g\in G}$.

Since $\tilde U_7$ and $\tilde U_{12}$ are Clifford unitaries, $G$ is a subgroup of the three-qubit Clifford group, and hence $U_g\in\OO_{\rm STAB}\ \forall g\in G$. Noting also that $\OO_{\rm STAB}$ is completely free~\cite{Seddon2019quantifying}, we can apply Proposition~\ref{pro:group twirling} to conclude the proof. 
\qed
\medskip

We now focus on qutrit states. 
First, we show Proposition~\ref{pro:strange state}. 

\medskip

\noindent
\textit{Proof of Proposition~\ref{pro:strange state}.}
The Strange state is a fiducial state for a SIC-POVM for dimension 3.
Recall that a set $\{\dm{\psi_j}\}_{j=0}^{d^2-1}$ of projectors  is called a SIC-POVM if $\frac{1}{d}\sum_j \dm{\psi_j}=\id$ and $|\braket{\psi_i|\psi_j}|^2=\frac{d\delta_{ij}+1}{d+1}$.   
Moreover, a SIC-POVM is called covariant with group $G$ if any state $\ket{\psi_i}$ in the SIC-POVM can be constructed by applying some unitary representation $U_{g_i}, g_i\in G$ to a fiducial state $\ket{\psi_0}$ as $U_{g_i}\ket{\psi_0}=\ket{\psi_i}$.
The SIC-POVM generated from the Strange state is covariant with respect to the Heisenberg-Weyl group, and the Strange state is stabilized by a subset of the Clifford group that is isomorphic to the special linear group $\rm SL(2,\mathbb{Z}_3)$, which is the set of matrices with unit determinant whose entries are over $\mathbb{Z}_3$~\cite{Appleby2005symmmetric}. 
It was shown that there is a one-to-one correspondence between any $F\in {\rm SL}(2,\mathbb{Z}_3)$ and a Clifford unitary $U_F$ (up to global phase) acting as 
\bal
 U_FD_{\bf k}U_F^\dagger = D_{F{\bf k}} ,
\label{eq:Clifford}
\eal
where $D_{\bf k}=D_{k_1,k_2}=-e^{i\pi/d}X^{k_1}Z^{k_2}$~\cite{Appleby2005symmmetric,Zhu2010SIC}. 
Noting that the projectors in the SIC-POVM generated from the Strange state can be parameterized by ${\bf k}$ as $\dm{\psi_{\bf k}}=D_{\bf k}\dm{S}D_{\bf k}^\dagger$, and
\bal
U_F\dm{\psi_{\bf k}}U_F &= U_F D_{\bf k} U_F^\dagger \dm{S} U_F D_{\bf k}^\dagger U_F^\dagger\\
&= \dm{\psi_{F{\bf k}}}
\eal
where we used that $U_F\ket{S}=\ket{S}$, we get that for ${\bf k\neq 0}$
\bal
 \Lambda(\psi_{\bf k})&\coloneqq \frac{1}{|{\rm SL(2,\mathbb{Z}_3)}|}\sum_{F\in {\rm SL(2,\mathbb{Z}_3)}} U_F \psi_{\bf k} U_F^\dagger \\&=  \frac{1}{8}\sum_{\bf k'\neq 0} \psi_{\bf k'} \\
 &= \frac{3\id-\dm{S}}{8}.
 \label{eq:strange proof equations}
\eal
In the third equality, we used the fact that $\{\frac{1}{3}\psi_{\bf k}\}_{\bf k}$ constitutes a POVM and thus $\sum_{\bf k}\psi_{\bf k}= 3\id$.
The second equality follows from the following observation. 
The set of Clifford unitaries of the form \eqref{eq:Clifford} is the collection of all possible mappings from a Pauli operator to another Pauli operator. 
This means that a certain given non-identity Pauli operator $D_{\bf k}$ can be mapped to an arbitrary non-identity Pauli operator $D_{\bf k'}$ by some $U_F$. 
Let $\{U_{{\bf k}\to {\bf k'}}^{(j)}\}_{j=1}^{N_{\bf k'}}$ be the set of Clifford unitaries that map $D_{\bf k}$ to $D_{\bf k'}$ as in \eqref{eq:Clifford}, where $N_{\bf k'}$ is the number of such Clifford unitaries, and we set that $U_{{\bf k}\to {\bf k'}}^{(j_1)}\neq U_{{\bf k}\to {\bf k'}}^{(j_2)}$ for $j_1\neq j_2$.
The second equality of \eqref{eq:strange proof equations} is equivalent to showing that $N_{\bf k'}$ takes the same value for all ${\bf k'}\neq {\bf 0}$.
Suppose, to the contrary, that there exists ${\bf k'}$ such that $N_{\bf k'}>N_{\bf k''}$ for all ${\bf k''}\neq {\bf k'}$. 
Pick an arbitrary ${\bf k''}$ with ${\bf k''}\neq {\bf k'}$, and let $\{U_{{\bf k'}\to {\bf k''}}\}_{\bf k''\neq {\bf 0}}$ be a fixed set of unitaries that satisfies 
\bal
U_{{\bf k'}\to{\bf k''}}  D_{\bf k'}  U_{{\bf k'}\to{\bf k''}}^\dagger = D_{\bf k''}.
\eal
This implies that for every $j$, $U_{{\bf k'}\to{\bf k''}}U_{{\bf k}\to {\bf k'}}^{(j)}$ maps $D_{\bf k}$ to $D_{\bf k''}$ and thus is a member of the set $\{U_{{\bf k}\to{\bf k''}}^{(l)}\}_{l=1}^{N_{\bf k''}}$. 
Moreover, we have 
\bal
 U_{{\bf k'}\to{\bf k''}}U_{{\bf k}\to {\bf k'}}^{(j_1)}\neq U_{{\bf k'}\to{\bf k''}}U_{{\bf k}\to {\bf k'}}^{(j_2)},\quad \forall j_1\neq j_2
\eal
because otherwise it would result in $U_{{\bf k}\to {\bf k'}}^{(j_1)}= U_{{\bf k}\to {\bf k'}}^{(j_2)}$, violating the assumption. 
This implies that for every $U_{{\bf k}\to{\bf k'}}^{(j)}$, we can construct a corresponding element in $\{U_{{\bf k}\to{\bf k''}}^{(l)}\}_{l=1}^{N_{\bf k''}}$, which are distinct from each other for different $j$'s.
This shows that we must at least have $N_{\bf k''}\geq N_{\bf k'}$, but this violates the assumption that $N_{\bf k'}>N_{\bf k''}$. 
This concludes the proof that $N_{\bf k'}$ takes the same value for all ${\bf k'}\neq {\bf 0}$, leading to the second equality in \eqref{eq:strange proof equations}.

On the other hand, for ${\bf k=0}$, we get $\Lambda(\psi_{\bf 0})=\Lambda(S)=S$.
Since $\{\frac{1}{3}\psi_{\bf k}\}_{\bf k}$ is informationally complete, any state $\rho$ can be expanded as 
\bal
 \rho = \sum_{\bf k} c_{\bf k} \psi_{\bf k}.
\eal
This gives an action of $\Lambda$ for an arbitrary state $\rho$ as 
\bal
 \Lambda(\rho)&=c_{\bf 0} S + (1-c_{\bf 0})\frac{3\id-S}{8}\\
 &= (1-\epsilon_\rho)S + \epsilon_\rho \frac{\id-S}{2}
\eal
where we set $\epsilon_\rho\coloneqq \frac{3(1-c_{\bf 0})}{4}$.
In order for $\Lambda$ to be linear in $\rho$, $\epsilon_\rho$ must have the form $\epsilon_\rho=\Tr[H\rho]$ for some Hermitian operator $H$.
Combining this with the conditions $\Lambda(S)=S$ and $\Lambda\left(\frac{\id-S}{2}\right)=\frac{\id-S}{2}$,  we get
\bal
\Lambda(\rho)=\Tr[S\rho ]S+\Tr[(\id-S)\rho]\frac{\id-S}{2},
\eal
showing $\Lambda\in\S(S)$.

Moreover, $\frac{\id-S}{2}$ is a stabilizer state because $\Lambda(\dm{0})=\frac{\id-S}{2}$ while $\ket{0}$ is a stabilizer state and $\Lambda$ is a stabilizer operation. 
Indeed, we explicitly find
\bal
 \frac{\id-S}{2}=\frac{1}{4}(\dm{0}+\dm{x}+\dm{xz}+\dm{xz^2})
\eal
where $\ket{x},\ket{xz},\ket{xz^2}$ are the eigenstates of $X$ with eigenvalue $+1$, $XZ$ with eigenvalue $e^{i2\pi/3}$, and $XZ^2$ with eigenvalue $e^{-i2\pi/3}$ respectively, defined as 
\bal
 \ket{x}&\coloneqq \frac{1}{\sqrt{3}}(\ket{0}+\ket{1}+\ket{2})\\
 \ket{xz}&\coloneqq \frac{1}{\sqrt{3}}(\ket{0}+e^{-i2\pi/3}\ket{1}+e^{-i2\pi/3}\ket{2})\\
 \ket{xz^2}&\coloneqq \frac{1}{\sqrt{3}}(\ket{0}+e^{i2\pi/3}\ket{1}+e^{i2\pi/3}\ket{2}).
\eal

These allow us to apply Lemma~\ref{lem:twirling to resource} to get the statement.
The value for $D_{\min,\FF_{\rm STAB}}(S)$ can be explicitly calculated by computing the overlaps between the Strange state and all the pure stabilizer states and taking the logarithm of the inverse of the maximum overlap.  
\qed
\medskip

Finally, we show Proposition~\ref{pro:Norrell state}.

\medskip

\noindent
\textit{Proof of Proposition~\ref{pro:Norrell state}.}
$D_{s,\STAB}(N)\leq \log\frac{3}{2}$ is obtained by noting
\bal
 \dm{N}+\frac{1}{2}\dm{+}=\frac{1}{2}(\dm{1}+\dm{xz'}+\dm{xzz'})
\eal
where
\bal
 \ket{+}&\coloneqq \frac{1}{\sqrt{3}}(\ket{0}+\ket{1}+\ket{2})\\
 \ket{xz'}&\coloneqq \frac{1}{\sqrt{3}}(\ket{0}+e^{i2\pi/3}\ket{1}+\ket{2})\\
 \ket{xzz'}&\coloneqq \frac{1}{\sqrt{3}}(\ket{0}+e^{-i2\pi/3}\ket{1}+\ket{2})
\eal
are stabilizer states. 
On the other hand, the optimization in $D_{\min,\STAB}$ is achieved by
\bal
 D_{\min,\STAB}(N)=\log \braket{N|xzz'}^{-2}=\log \frac{3}{2}.
\eal
Using the inequalites $D_{\min,\STAB}(N)\leq D_{\max,\STAB}(N)\leq D_{s,\STAB}(N)$ concludes the proof.
\qed

\vspace{2cm}


\bibliographystyle{apsrmp4-2}
\bibliography{bib_ryuji,bib_bartosz}

\begin{thebibliography}{121}%
\makeatletter
\providecommand \@ifxundefined [1]{%
 \@ifx{#1\undefined}
}%
\providecommand \@ifnum [1]{%
 \ifnum #1\expandafter \@firstoftwo
 \else \expandafter \@secondoftwo
 \fi
}%
\providecommand \@ifx [1]{%
 \ifx #1\expandafter \@firstoftwo
 \else \expandafter \@secondoftwo
 \fi
}%
\providecommand \natexlab [1]{#1}%
\providecommand \emph  [1]{``#1''}%
\providecommand \bibnamefont  [1]{#1}%
\providecommand \bibfnamefont [1]{#1}%
\providecommand \citenamefont [1]{#1}%
\providecommand \href@noop [0]{\@secondoftwo}%
\providecommand \href [0]{\begingroup \@sanitize@url \@href}%
\providecommand \@href[1]{\@@startlink{#1}\@@href}%
\providecommand \@@href[1]{\endgroup#1\@@endlink}%
\providecommand \@sanitize@url [0]{\catcode `\\12\catcode `\$12\catcode
  `\&12\catcode `\#12\catcode `\^12\catcode `\_12\catcode `\%12\relax}%
\providecommand \@@startlink[1]{}%
\providecommand \@@endlink[0]{}%
\providecommand \url  [0]{\begingroup\@sanitize@url \@url }%
\providecommand \@url [1]{\endgroup\@href {#1}{\urlprefix }}%
\providecommand \urlprefix  [0]{URL }%
\providecommand \Eprint [0]{\href }%
\providecommand \doibase [0]{http://dx.doi.org/}%
\providecommand \selectlanguage [0]{\@gobble}%
\providecommand \bibinfo  [0]{\@secondoftwo}%
\providecommand \bibfield  [0]{\@secondoftwo}%
\providecommand \translation [1]{[#1]}%
\providecommand \BibitemOpen [0]{}%
\providecommand \bibitemStop [0]{}%
\providecommand \bibitemNoStop [0]{.\EOS\space}%
\providecommand \EOS [0]{\spacefactor3000\relax}%
\providecommand \BibitemShut  [1]{\csname bibitem#1\endcsname}%
\let\auto@bib@innerbib\@empty
\bibitem [{\citenamefont {Bennett}\ \emph
  {et~al.}(1996{\natexlab{a}})\citenamefont {Bennett}, \citenamefont
  {Brassard}, \citenamefont {Popescu}, \citenamefont {Schumacher},
  \citenamefont {Smolin},\ and\ \citenamefont
  {Wootters}}]{Bennett1996purification}%
  \BibitemOpen
  \bibfield  {author} {\bibinfo {author} {\bibfnamefont {C.~H.}\ \bibnamefont
  {Bennett}}, \bibinfo {author} {\bibfnamefont {G.}~\bibnamefont {Brassard}},
  \bibinfo {author} {\bibfnamefont {S.}~\bibnamefont {Popescu}}, \bibinfo
  {author} {\bibfnamefont {B.}~\bibnamefont {Schumacher}}, \bibinfo {author}
  {\bibfnamefont {J.~A.}\ \bibnamefont {Smolin}}, \ and\ \bibinfo {author}
  {\bibfnamefont {W.~K.}\ \bibnamefont {Wootters}},\ }\bibfield  {title} {\emph
  {\bibinfo {title} {Purification of noisy entanglement and faithful
  teleportation via noisy channels},}\ }\href
  {http://dx.doi.org/10.1103/PhysRevLett.76.722} {\bibfield  {journal}
  {\bibinfo  {journal} {Phys. Rev. Lett.}\ }\textbf {\bibinfo {volume} {76}},\
  \bibinfo {pages} {722} (\bibinfo {year} {1996}{\natexlab{a}})}\BibitemShut
  {NoStop}%
\bibitem [{\citenamefont {Bennett}\ \emph
  {et~al.}(1996{\natexlab{b}})\citenamefont {Bennett}, \citenamefont
  {Bernstein}, \citenamefont {Popescu},\ and\ \citenamefont
  {Schumacher}}]{Bennett1996concentrating}%
  \BibitemOpen
  \bibfield  {author} {\bibinfo {author} {\bibfnamefont {C.~H.}\ \bibnamefont
  {Bennett}}, \bibinfo {author} {\bibfnamefont {H.~J.}\ \bibnamefont
  {Bernstein}}, \bibinfo {author} {\bibfnamefont {S.}~\bibnamefont {Popescu}},
  \ and\ \bibinfo {author} {\bibfnamefont {B.}~\bibnamefont {Schumacher}},\
  }\bibfield  {title} {\emph {\bibinfo {title} {Concentrating partial
  entanglement by local operations},}\ }\href
  {http://dx.doi.org/10.1103/PhysRevA.53.2046} {\bibfield  {journal} {\bibinfo
  {journal} {Phys. Rev. A}\ }\textbf {\bibinfo {volume} {53}},\ \bibinfo
  {pages} {2046} (\bibinfo {year} {1996}{\natexlab{b}})}\BibitemShut {NoStop}%
\bibitem [{\citenamefont {Brandão}\ \emph {et~al.}(2013)\citenamefont
  {Brandão}, \citenamefont {Horodecki}, \citenamefont {Oppenheim},
  \citenamefont {Renes},\ and\ \citenamefont {Spekkens}}]{Brandao2013resource}%
  \BibitemOpen
  \bibfield  {author} {\bibinfo {author} {\bibfnamefont {F.~G. S.~L.}\
  \bibnamefont {Brandão}}, \bibinfo {author} {\bibfnamefont {M.}~\bibnamefont
  {Horodecki}}, \bibinfo {author} {\bibfnamefont {J.}~\bibnamefont
  {Oppenheim}}, \bibinfo {author} {\bibfnamefont {J.~M.}\ \bibnamefont
  {Renes}}, \ and\ \bibinfo {author} {\bibfnamefont {R.~W.}\ \bibnamefont
  {Spekkens}},\ }\bibfield  {title} {\emph {\bibinfo {title} {Resource
  {{Theory}} of {{Quantum States Out}} of {{Thermal Equilibrium}}},}\ }\href
  {http://dx.doi.org/10.1103/PhysRevLett.111.250404} {\bibfield  {journal}
  {\bibinfo  {journal} {Phys. Rev. Lett.}\ }\textbf {\bibinfo {volume} {111}},\
  \bibinfo {pages} {250404} (\bibinfo {year} {2013})}\BibitemShut {NoStop}%
\bibitem [{\citenamefont {Horodecki}\ and\ \citenamefont
  {Oppenheim}(2013{\natexlab{a}})}]{Horodecki2013quantumness}%
  \BibitemOpen
  \bibfield  {author} {\bibinfo {author} {\bibfnamefont {M.}~\bibnamefont
  {Horodecki}}\ and\ \bibinfo {author} {\bibfnamefont {J.}~\bibnamefont
  {Oppenheim}},\ }\bibfield  {title} {\emph {\bibinfo {title} {({Quantumness}
  in the context of) {R}esource theories},}\ }\href
  {http://dx.doi.org/10.1142/S0217979213450197} {\bibfield  {journal} {\bibinfo
   {journal} {Int. J. Mod. Phys. B}\ }\textbf {\bibinfo {volume} {27}},\
  \bibinfo {pages} {1345019} (\bibinfo {year}
  {2013}{\natexlab{a}})}\BibitemShut {NoStop}%
\bibitem [{\citenamefont {Winter}\ and\ \citenamefont
  {Yang}(2016)}]{Winter2016operational}%
  \BibitemOpen
  \bibfield  {author} {\bibinfo {author} {\bibfnamefont {A.}~\bibnamefont
  {Winter}}\ and\ \bibinfo {author} {\bibfnamefont {D.}~\bibnamefont {Yang}},\
  }\bibfield  {title} {\emph {\bibinfo {title} {Operational resource theory of
  coherence},}\ }\href {http://dx.doi.org/10.1103/PhysRevLett.116.120404}
  {\bibfield  {journal} {\bibinfo  {journal} {Phys. Rev. Lett.}\ }\textbf
  {\bibinfo {volume} {116}},\ \bibinfo {pages} {120404} (\bibinfo {year}
  {2016})}\BibitemShut {NoStop}%
\bibitem [{\citenamefont {Marvian}(2020)}]{Marvian2020coherence}%
  \BibitemOpen
  \bibfield  {author} {\bibinfo {author} {\bibfnamefont {I.}~\bibnamefont
  {Marvian}},\ }\bibfield  {title} {\emph {\bibinfo {title} {Coherence
  distillation machines are impossible in quantum thermodynamics},}\ }\href
  {http://dx.doi.org/10.1038/s41467-019-13846-3} {\bibfield  {journal}
  {\bibinfo  {journal} {Nat. Commun.}\ }\textbf {\bibinfo {volume} {11}},\
  \bibinfo {pages} {25} (\bibinfo {year} {2020})}\BibitemShut {NoStop}%
\bibitem [{\citenamefont {Brand{\~a}o}\ and\ \citenamefont
  {Datta}(2011)}]{Brandao2011one-shot}%
  \BibitemOpen
  \bibfield  {author} {\bibinfo {author} {\bibfnamefont {F.~G. S.~L.}\
  \bibnamefont {Brand{\~a}o}}\ and\ \bibinfo {author} {\bibfnamefont
  {N.}~\bibnamefont {Datta}},\ }\bibfield  {title} {\emph {\bibinfo {title}
  {One-shot rates for entanglement manipulation under non-entangling maps},}\
  }\href {http://dx.doi.org/10.1109/TIT.2011.2104531} {\bibfield  {journal}
  {\bibinfo  {journal} {IEEE Trans. Inf. Theory}\ }\textbf {\bibinfo {volume}
  {57}},\ \bibinfo {pages} {1754} (\bibinfo {year} {2011})}\BibitemShut
  {NoStop}%
\bibitem [{\citenamefont {Bravyi}\ and\ \citenamefont
  {Kitaev}(2005)}]{Bravyi2005magic}%
  \BibitemOpen
  \bibfield  {author} {\bibinfo {author} {\bibfnamefont {S.}~\bibnamefont
  {Bravyi}}\ and\ \bibinfo {author} {\bibfnamefont {A.}~\bibnamefont
  {Kitaev}},\ }\bibfield  {title} {\emph {\bibinfo {title} {Universal quantum
  computation with ideal clifford gates and noisy ancillas},}\ }\href
  {http://dx.doi.org/10.1103/PhysRevA.71.022316} {\bibfield  {journal}
  {\bibinfo  {journal} {Phys. Rev. A}\ }\textbf {\bibinfo {volume} {71}},\
  \bibinfo {pages} {022316} (\bibinfo {year} {2005})}\BibitemShut {NoStop}%
\bibitem [{\citenamefont {Zhao}\ \emph {et~al.}(2018)\citenamefont {Zhao},
  \citenamefont {Liu}, \citenamefont {Yuan}, \citenamefont {Chitambar},\ and\
  \citenamefont {Ma}}]{Zhao2018one-shot}%
  \BibitemOpen
  \bibfield  {author} {\bibinfo {author} {\bibfnamefont {Q.}~\bibnamefont
  {Zhao}}, \bibinfo {author} {\bibfnamefont {Y.}~\bibnamefont {Liu}}, \bibinfo
  {author} {\bibfnamefont {X.}~\bibnamefont {Yuan}}, \bibinfo {author}
  {\bibfnamefont {E.}~\bibnamefont {Chitambar}}, \ and\ \bibinfo {author}
  {\bibfnamefont {X.}~\bibnamefont {Ma}},\ }\bibfield  {title} {\emph {\bibinfo
  {title} {One-shot coherence dilution},}\ }\href
  {http://dx.doi.org/10.1103/PhysRevLett.120.070403} {\bibfield  {journal}
  {\bibinfo  {journal} {Phys. Rev. Lett.}\ }\textbf {\bibinfo {volume} {120}},\
  \bibinfo {pages} {070403} (\bibinfo {year} {2018})}\BibitemShut {NoStop}%
\bibitem [{\citenamefont {Regula}\ \emph {et~al.}(2018)\citenamefont {Regula},
  \citenamefont {Fang}, \citenamefont {Wang},\ and\ \citenamefont
  {Adesso}}]{Regula2018one-shot}%
  \BibitemOpen
  \bibfield  {author} {\bibinfo {author} {\bibfnamefont {B.}~\bibnamefont
  {Regula}}, \bibinfo {author} {\bibfnamefont {K.}~\bibnamefont {Fang}},
  \bibinfo {author} {\bibfnamefont {X.}~\bibnamefont {Wang}}, \ and\ \bibinfo
  {author} {\bibfnamefont {G.}~\bibnamefont {Adesso}},\ }\bibfield  {title}
  {\emph {\bibinfo {title} {One-shot coherence distillation},}\ }\href
  {http://dx.doi.org/10.1103/PhysRevLett.121.010401} {\bibfield  {journal}
  {\bibinfo  {journal} {Phys. Rev. Lett.}\ }\textbf {\bibinfo {volume} {121}},\
  \bibinfo {pages} {010401} (\bibinfo {year} {2018})}\BibitemShut {NoStop}%
\bibitem [{\citenamefont {Faist}\ and\ \citenamefont
  {Renner}(2018)}]{Faist2018fundamental}%
  \BibitemOpen
  \bibfield  {author} {\bibinfo {author} {\bibfnamefont {P.}~\bibnamefont
  {Faist}}\ and\ \bibinfo {author} {\bibfnamefont {R.}~\bibnamefont {Renner}},\
  }\bibfield  {title} {\emph {\bibinfo {title} {Fundamental {{Work Cost}} of
  {{Quantum Processes}}},}\ }\href
  {http://dx.doi.org/10.1103/PhysRevX.8.021011} {\bibfield  {journal} {\bibinfo
   {journal} {Phys. Rev. X}\ }\textbf {\bibinfo {volume} {8}},\ \bibinfo
  {pages} {021011} (\bibinfo {year} {2018})}\BibitemShut {NoStop}%
\bibitem [{\citenamefont {Liu}\ \emph {et~al.}(2019)\citenamefont {Liu},
  \citenamefont {Bu},\ and\ \citenamefont {Takagi}}]{Liu2019one-shot}%
  \BibitemOpen
  \bibfield  {author} {\bibinfo {author} {\bibfnamefont {Z.-W.}\ \bibnamefont
  {Liu}}, \bibinfo {author} {\bibfnamefont {K.}~\bibnamefont {Bu}}, \ and\
  \bibinfo {author} {\bibfnamefont {R.}~\bibnamefont {Takagi}},\ }\bibfield
  {title} {\emph {\bibinfo {title} {One-shot operational quantum resource
  theory},}\ }\href {http://dx.doi.org/10.1103/PhysRevLett.123.020401}
  {\bibfield  {journal} {\bibinfo  {journal} {Phys. Rev. Lett.}\ }\textbf
  {\bibinfo {volume} {123}},\ \bibinfo {pages} {020401} (\bibinfo {year}
  {2019})}\BibitemShut {NoStop}%
\bibitem [{\citenamefont {Horodecki}\ \emph {et~al.}(2002)\citenamefont
  {Horodecki}, \citenamefont {Oppenheim},\ and\ \citenamefont
  {Horodecki}}]{Horodecki2002are}%
  \BibitemOpen
  \bibfield  {author} {\bibinfo {author} {\bibfnamefont {M.}~\bibnamefont
  {Horodecki}}, \bibinfo {author} {\bibfnamefont {J.}~\bibnamefont
  {Oppenheim}}, \ and\ \bibinfo {author} {\bibfnamefont {R.}~\bibnamefont
  {Horodecki}},\ }\bibfield  {title} {\emph {\bibinfo {title} {Are the {{Laws}}
  of {{Entanglement Theory Thermodynamical}}?}}\ }\href
  {http://dx.doi.org/10.1103/PhysRevLett.89.240403} {\bibfield  {journal}
  {\bibinfo  {journal} {Phys. Rev. Lett.}\ }\textbf {\bibinfo {volume} {89}},\
  \bibinfo {pages} {240403} (\bibinfo {year} {2002})}\BibitemShut {NoStop}%
\bibitem [{\citenamefont {Donald}\ \emph {et~al.}(2002)\citenamefont {Donald},
  \citenamefont {Horodecki},\ and\ \citenamefont
  {Rudolph}}]{Donald2002uniqueness}%
  \BibitemOpen
  \bibfield  {author} {\bibinfo {author} {\bibfnamefont {M.~J.}\ \bibnamefont
  {Donald}}, \bibinfo {author} {\bibfnamefont {M.}~\bibnamefont {Horodecki}}, \
  and\ \bibinfo {author} {\bibfnamefont {O.}~\bibnamefont {Rudolph}},\
  }\bibfield  {title} {\emph {\bibinfo {title} {The uniqueness theorem for
  entanglement measures},}\ }\href {http://dx.doi.org/10.1063/1.1495917}
  {\bibfield  {journal} {\bibinfo  {journal} {J. Math. Phys.}\ }\textbf
  {\bibinfo {volume} {43}},\ \bibinfo {pages} {4252} (\bibinfo {year}
  {2002})}\BibitemShut {NoStop}%
\bibitem [{\citenamefont {Brandão}\ and\ \citenamefont
  {Plenio}(2010{\natexlab{a}})}]{Brandao2010reversible}%
  \BibitemOpen
  \bibfield  {author} {\bibinfo {author} {\bibfnamefont {F.~G. S.~L.}\
  \bibnamefont {Brandão}}\ and\ \bibinfo {author} {\bibfnamefont {M.~B.}\
  \bibnamefont {Plenio}},\ }\bibfield  {title} {\emph {\bibinfo {title} {A
  {{Reversible Theory}} of {{Entanglement}} and its {{Relation}} to the
  {{Second Law}}},}\ }\href {http://dx.doi.org/10.1007/s00220-010-1003-1}
  {\bibfield  {journal} {\bibinfo  {journal} {Commun. Math. Phys.}\ }\textbf
  {\bibinfo {volume} {295}},\ \bibinfo {pages} {829} (\bibinfo {year}
  {2010}{\natexlab{a}})}\BibitemShut {NoStop}%
\bibitem [{\citenamefont {Brand\~ao}\ and\ \citenamefont
  {Gour}(2015)}]{Brandao2015reversible}%
  \BibitemOpen
  \bibfield  {author} {\bibinfo {author} {\bibfnamefont {F.~G. S.~L.}\
  \bibnamefont {Brand\~ao}}\ and\ \bibinfo {author} {\bibfnamefont
  {G.}~\bibnamefont {Gour}},\ }\bibfield  {title} {\emph {\bibinfo {title}
  {Reversible framework for quantum resource theories},}\ }\href
  {http://dx.doi.org/10.1103/PhysRevLett.115.070503} {\bibfield  {journal}
  {\bibinfo  {journal} {Phys. Rev. Lett.}\ }\textbf {\bibinfo {volume} {115}},\
  \bibinfo {pages} {070503} (\bibinfo {year} {2015})}\BibitemShut {NoStop}%
\bibitem [{\citenamefont {Hayashi}(2016)}]{Hayashi2016quantum}%
  \BibitemOpen
  \bibfield  {author} {\bibinfo {author} {\bibfnamefont {M.}~\bibnamefont
  {Hayashi}},\ }\href@noop {} {\emph {\bibinfo {title} {Quantum Information
  Theory: Mathematical Foundation}}}\ (\bibinfo  {publisher} {Springer},\
  \bibinfo {year} {2016})\BibitemShut {NoStop}%
\bibitem [{\citenamefont {Watrous}(2018)}]{Watrous2018theory}%
  \BibitemOpen
  \bibfield  {author} {\bibinfo {author} {\bibfnamefont {J.}~\bibnamefont
  {Watrous}},\ }\href@noop {} {\emph {\bibinfo {title} {The {{Theory}} of
  {{Quantum Information}}}}}\ (\bibinfo  {publisher} {{Cambridge University
  Press}},\ \bibinfo {address} {{Cambridge}},\ \bibinfo {year}
  {2018})\BibitemShut {NoStop}%
\bibitem [{\citenamefont {Kuroiwa}\ and\ \citenamefont
  {Yamasaki}(2020)}]{Kuroiwa2020general}%
  \BibitemOpen
  \bibfield  {author} {\bibinfo {author} {\bibfnamefont {K.}~\bibnamefont
  {Kuroiwa}}\ and\ \bibinfo {author} {\bibfnamefont {H.}~\bibnamefont
  {Yamasaki}},\ }\bibfield  {title} {\emph {\bibinfo {title} {General {{Quantum
  Resource Theories}}: {{Distillation}}, {{Formation}} and {{Consistent
  Resource Measures}}},}\ }\href {http://dx.doi.org/10.22331/q-2020-11-01-355}
  {\bibfield  {journal} {\bibinfo  {journal} {Quantum}\ }\textbf {\bibinfo
  {volume} {4}},\ \bibinfo {pages} {355} (\bibinfo {year} {2020})}\BibitemShut
  {NoStop}%
\bibitem [{\citenamefont {Kumagai}\ and\ \citenamefont
  {Hayashi}(2013)}]{Kumagai2013entanglement}%
  \BibitemOpen
  \bibfield  {author} {\bibinfo {author} {\bibfnamefont {W.}~\bibnamefont
  {Kumagai}}\ and\ \bibinfo {author} {\bibfnamefont {M.}~\bibnamefont
  {Hayashi}},\ }\bibfield  {title} {\emph {\bibinfo {title} {Entanglement
  concentration is irreversible},}\ }\href
  {http://dx.doi.org/10.1103/PhysRevLett.111.130407} {\bibfield  {journal}
  {\bibinfo  {journal} {Phys. Rev. Lett.}\ }\textbf {\bibinfo {volume} {111}},\
  \bibinfo {pages} {130407} (\bibinfo {year} {2013})}\BibitemShut {NoStop}%
\bibitem [{\citenamefont {Ito}\ \emph {et~al.}(2015)\citenamefont {Ito},
  \citenamefont {Kumagai},\ and\ \citenamefont {Hayashi}}]{Ito2015asymptotic}%
  \BibitemOpen
  \bibfield  {author} {\bibinfo {author} {\bibfnamefont {K.}~\bibnamefont
  {Ito}}, \bibinfo {author} {\bibfnamefont {W.}~\bibnamefont {Kumagai}}, \ and\
  \bibinfo {author} {\bibfnamefont {M.}~\bibnamefont {Hayashi}},\ }\bibfield
  {title} {\emph {\bibinfo {title} {Asymptotic compatibility between
  local-operations-and-classical-communication conversion and recovery},}\
  }\href {http://dx.doi.org/10.1103/PhysRevA.92.052308} {\bibfield  {journal}
  {\bibinfo  {journal} {Phys. Rev. A}\ }\textbf {\bibinfo {volume} {92}},\
  \bibinfo {pages} {052308} (\bibinfo {year} {2015})}\BibitemShut {NoStop}%
\bibitem [{\citenamefont {Kumagai}\ and\ \citenamefont
  {Hayashi}(2017)}]{Kumagai2017secondorder}%
  \BibitemOpen
  \bibfield  {author} {\bibinfo {author} {\bibfnamefont {W.}~\bibnamefont
  {Kumagai}}\ and\ \bibinfo {author} {\bibfnamefont {M.}~\bibnamefont
  {Hayashi}},\ }\bibfield  {title} {\emph {\bibinfo {title} {Second-order
  asymptotics of conversions of distributions and entangled states based on
  rayleigh-normal probability distributions},}\ }\href
  {http://dx.doi.org/10.1109/TIT.2016.2645223} {\bibfield  {journal} {\bibinfo
  {journal} {IEEE Trans. Inf. Theory}\ }\textbf {\bibinfo {volume} {63}},\
  \bibinfo {pages} {1829} (\bibinfo {year} {2017})}\BibitemShut {NoStop}%
\bibitem [{\citenamefont {Wilde}(2021)}]{Wilde2021second}%
  \BibitemOpen
  \bibfield  {author} {\bibinfo {author} {\bibfnamefont {M.~M.}\ \bibnamefont
  {Wilde}},\ }\bibfield  {title} {\emph {\bibinfo {title} {Second law of
  entanglement dynamics for the non-asymptotic regime},}\ }in\ \href
  {http://dx.doi.org/10.1109/ITW48936.2021.9611411} {\emph {\bibinfo
  {booktitle} {2021 IEEE Information Theory Workshop (ITW)}}}\ (\bibinfo {year}
  {2021})\ pp.\ \bibinfo {pages} {1--6},\ \Eprint
  {http://arxiv.org/abs/2105.05867} {arXiv:2105.05867} \BibitemShut {NoStop}%
\bibitem [{\citenamefont {Baumgratz}\ \emph {et~al.}(2014)\citenamefont
  {Baumgratz}, \citenamefont {Cramer},\ and\ \citenamefont
  {Plenio}}]{Baumgratz2014quantifying}%
  \BibitemOpen
  \bibfield  {author} {\bibinfo {author} {\bibfnamefont {T.}~\bibnamefont
  {Baumgratz}}, \bibinfo {author} {\bibfnamefont {M.}~\bibnamefont {Cramer}}, \
  and\ \bibinfo {author} {\bibfnamefont {M.~B.}\ \bibnamefont {Plenio}},\
  }\bibfield  {title} {\emph {\bibinfo {title} {Quantifying {{Coherence}}},}\
  }\href {http://dx.doi.org/10.1103/PhysRevLett.113.140401} {\bibfield
  {journal} {\bibinfo  {journal} {Phys. Rev. Lett.}\ }\textbf {\bibinfo
  {volume} {113}},\ \bibinfo {pages} {140401} (\bibinfo {year}
  {2014})}\BibitemShut {NoStop}%
\bibitem [{\citenamefont {Streltsov}\ \emph {et~al.}(2017)\citenamefont
  {Streltsov}, \citenamefont {Adesso},\ and\ \citenamefont
  {Plenio}}]{Streltsov2017quantum}%
  \BibitemOpen
  \bibfield  {author} {\bibinfo {author} {\bibfnamefont {A.}~\bibnamefont
  {Streltsov}}, \bibinfo {author} {\bibfnamefont {G.}~\bibnamefont {Adesso}}, \
  and\ \bibinfo {author} {\bibfnamefont {M.~B.}\ \bibnamefont {Plenio}},\
  }\bibfield  {title} {\emph {\bibinfo {title} {Quantum coherence as a
  resource},}\ }\href {http://dx.doi.org/10.1103/RevModPhys.89.041003}
  {\bibfield  {journal} {\bibinfo  {journal} {Rev. Mod. Phys.}\ }\textbf
  {\bibinfo {volume} {89}},\ \bibinfo {pages} {041003} (\bibinfo {year}
  {2017})}\BibitemShut {NoStop}%
\bibitem [{\citenamefont {Horodecki}\ and\ \citenamefont
  {Oppenheim}(2013{\natexlab{b}})}]{Horodecki2013fundamental}%
  \BibitemOpen
  \bibfield  {author} {\bibinfo {author} {\bibfnamefont {M.}~\bibnamefont
  {Horodecki}}\ and\ \bibinfo {author} {\bibfnamefont {J.}~\bibnamefont
  {Oppenheim}},\ }\bibfield  {title} {\emph {\bibinfo {title} {Fundamental
  limitations for quantum and nanoscale thermodynamics},}\ }\href
  {http://dx.doi.org/10.1038/ncomms3059} {\bibfield  {journal} {\bibinfo
  {journal} {Nat. Commun.}\ }\textbf {\bibinfo {volume} {4}},\ \bibinfo {pages}
  {2059} (\bibinfo {year} {2013}{\natexlab{b}})}\BibitemShut {NoStop}%
\bibitem [{\citenamefont {Chitambar}\ and\ \citenamefont
  {Gour}(2019)}]{Chitambar2019quantum}%
  \BibitemOpen
  \bibfield  {author} {\bibinfo {author} {\bibfnamefont {E.}~\bibnamefont
  {Chitambar}}\ and\ \bibinfo {author} {\bibfnamefont {G.}~\bibnamefont
  {Gour}},\ }\bibfield  {title} {\emph {\bibinfo {title} {Quantum resource
  theories},}\ }\href {http://dx.doi.org/10.1103/RevModPhys.91.025001}
  {\bibfield  {journal} {\bibinfo  {journal} {Rev. Mod. Phys.}\ }\textbf
  {\bibinfo {volume} {91}},\ \bibinfo {pages} {025001} (\bibinfo {year}
  {2019})}\BibitemShut {NoStop}%
\bibitem [{\citenamefont {Liu}\ \emph {et~al.}(2017)\citenamefont {Liu},
  \citenamefont {Hu},\ and\ \citenamefont {Lloyd}}]{Liu2017resource}%
  \BibitemOpen
  \bibfield  {author} {\bibinfo {author} {\bibfnamefont {Z.-W.}\ \bibnamefont
  {Liu}}, \bibinfo {author} {\bibfnamefont {X.}~\bibnamefont {Hu}}, \ and\
  \bibinfo {author} {\bibfnamefont {S.}~\bibnamefont {Lloyd}},\ }\bibfield
  {title} {\emph {\bibinfo {title} {Resource destroying maps},}\ }\href
  {http://dx.doi.org/10.1103/PhysRevLett.118.060502} {\bibfield  {journal}
  {\bibinfo  {journal} {Phys. Rev. Lett.}\ }\textbf {\bibinfo {volume} {118}},\
  \bibinfo {pages} {060502} (\bibinfo {year} {2017})}\BibitemShut {NoStop}%
\bibitem [{\citenamefont {Anshu}\ \emph {et~al.}(2018)\citenamefont {Anshu},
  \citenamefont {Hsieh},\ and\ \citenamefont {Jain}}]{Anshu2018quantifying}%
  \BibitemOpen
  \bibfield  {author} {\bibinfo {author} {\bibfnamefont {A.}~\bibnamefont
  {Anshu}}, \bibinfo {author} {\bibfnamefont {M.-H.}\ \bibnamefont {Hsieh}}, \
  and\ \bibinfo {author} {\bibfnamefont {R.}~\bibnamefont {Jain}},\ }\bibfield
  {title} {\emph {\bibinfo {title} {Quantifying resources in general resource
  theory with catalysts},}\ }\href
  {http://dx.doi.org/10.1103/PhysRevLett.121.190504} {\bibfield  {journal}
  {\bibinfo  {journal} {Phys. Rev. Lett.}\ }\textbf {\bibinfo {volume} {121}},\
  \bibinfo {pages} {190504} (\bibinfo {year} {2018})}\BibitemShut {NoStop}%
\bibitem [{\citenamefont {Takagi}\ \emph {et~al.}(2019)\citenamefont {Takagi},
  \citenamefont {Regula}, \citenamefont {Bu}, \citenamefont {Liu},\ and\
  \citenamefont {Adesso}}]{Takagi2019operational}%
  \BibitemOpen
  \bibfield  {author} {\bibinfo {author} {\bibfnamefont {R.}~\bibnamefont
  {Takagi}}, \bibinfo {author} {\bibfnamefont {B.}~\bibnamefont {Regula}},
  \bibinfo {author} {\bibfnamefont {K.}~\bibnamefont {Bu}}, \bibinfo {author}
  {\bibfnamefont {Z.-W.}\ \bibnamefont {Liu}}, \ and\ \bibinfo {author}
  {\bibfnamefont {G.}~\bibnamefont {Adesso}},\ }\bibfield  {title} {\emph
  {\bibinfo {title} {Operational advantage of quantum resources in subchannel
  discrimination},}\ }\href {http://dx.doi.org/10.1103/PhysRevLett.122.140402}
  {\bibfield  {journal} {\bibinfo  {journal} {Phys. Rev. Lett.}\ }\textbf
  {\bibinfo {volume} {122}},\ \bibinfo {pages} {140402} (\bibinfo {year}
  {2019})}\BibitemShut {NoStop}%
\bibitem [{\citenamefont {Takagi}\ and\ \citenamefont
  {Regula}(2019)}]{Takagi2019general}%
  \BibitemOpen
  \bibfield  {author} {\bibinfo {author} {\bibfnamefont {R.}~\bibnamefont
  {Takagi}}\ and\ \bibinfo {author} {\bibfnamefont {B.}~\bibnamefont
  {Regula}},\ }\bibfield  {title} {\emph {\bibinfo {title} {General resource
  theories in quantum mechanics and beyond: Operational characterization via
  discrimination tasks},}\ }\href {http://dx.doi.org/10.1103/PhysRevX.9.031053}
  {\bibfield  {journal} {\bibinfo  {journal} {Phys. Rev. X}\ }\textbf {\bibinfo
  {volume} {9}},\ \bibinfo {pages} {031053} (\bibinfo {year}
  {2019})}\BibitemShut {NoStop}%
\bibitem [{\citenamefont {Uola}\ \emph {et~al.}(2019)\citenamefont {Uola},
  \citenamefont {Kraft}, \citenamefont {Shang}, \citenamefont {Yu},\ and\
  \citenamefont {G\"uhne}}]{Uola2019quantifying}%
  \BibitemOpen
  \bibfield  {author} {\bibinfo {author} {\bibfnamefont {R.}~\bibnamefont
  {Uola}}, \bibinfo {author} {\bibfnamefont {T.}~\bibnamefont {Kraft}},
  \bibinfo {author} {\bibfnamefont {J.}~\bibnamefont {Shang}}, \bibinfo
  {author} {\bibfnamefont {X.-D.}\ \bibnamefont {Yu}}, \ and\ \bibinfo {author}
  {\bibfnamefont {O.}~\bibnamefont {G\"uhne}},\ }\bibfield  {title} {\emph
  {\bibinfo {title} {Quantifying quantum resources with conic programming},}\
  }\href {http://dx.doi.org/10.1103/PhysRevLett.122.130404} {\bibfield
  {journal} {\bibinfo  {journal} {Phys. Rev. Lett.}\ }\textbf {\bibinfo
  {volume} {122}},\ \bibinfo {pages} {130404} (\bibinfo {year}
  {2019})}\BibitemShut {NoStop}%
\bibitem [{\citenamefont {Uola}\ \emph {et~al.}(2020)\citenamefont {Uola},
  \citenamefont {Bullock}, \citenamefont {Kraft}, \citenamefont {Pellonpää},\
  and\ \citenamefont {Brunner}}]{Uola2020all}%
  \BibitemOpen
  \bibfield  {author} {\bibinfo {author} {\bibfnamefont {R.}~\bibnamefont
  {Uola}}, \bibinfo {author} {\bibfnamefont {T.}~\bibnamefont {Bullock}},
  \bibinfo {author} {\bibfnamefont {T.}~\bibnamefont {Kraft}}, \bibinfo
  {author} {\bibfnamefont {J.-P.}\ \bibnamefont {Pellonpää}}, \ and\ \bibinfo
  {author} {\bibfnamefont {N.}~\bibnamefont {Brunner}},\ }\bibfield  {title}
  {\emph {\bibinfo {title} {All {{Quantum Resources Provide}} an {{Advantage}}
  in {{Exclusion Tasks}}},}\ }\href
  {http://dx.doi.org/10.1103/PhysRevLett.125.110402} {\bibfield  {journal}
  {\bibinfo  {journal} {Phys. Rev. Lett.}\ }\textbf {\bibinfo {volume} {125}},\
  \bibinfo {pages} {110402} (\bibinfo {year} {2020})}\BibitemShut {NoStop}%
\bibitem [{\citenamefont {Ducuara}\ and\ \citenamefont
  {Skrzypczyk}(2020)}]{Ducuara2020operational}%
  \BibitemOpen
  \bibfield  {author} {\bibinfo {author} {\bibfnamefont {A.~F.}\ \bibnamefont
  {Ducuara}}\ and\ \bibinfo {author} {\bibfnamefont {P.}~\bibnamefont
  {Skrzypczyk}},\ }\bibfield  {title} {\emph {\bibinfo {title} {Operational
  {{Interpretation}} of {{Weight}}-{{Based Resource Quantifiers}} in {{Convex
  Quantum Resource Theories}}},}\ }\href
  {http://dx.doi.org/10.1103/PhysRevLett.125.110401} {\bibfield  {journal}
  {\bibinfo  {journal} {Phys. Rev. Lett.}\ }\textbf {\bibinfo {volume} {125}},\
  \bibinfo {pages} {110401} (\bibinfo {year} {2020})}\BibitemShut {NoStop}%
\bibitem [{\citenamefont {Regula}\ \emph {et~al.}(2021)\citenamefont {Regula},
  \citenamefont {Lami}, \citenamefont {Ferrari},\ and\ \citenamefont
  {Takagi}}]{Regula2021operational}%
  \BibitemOpen
  \bibfield  {author} {\bibinfo {author} {\bibfnamefont {B.}~\bibnamefont
  {Regula}}, \bibinfo {author} {\bibfnamefont {L.}~\bibnamefont {Lami}},
  \bibinfo {author} {\bibfnamefont {G.}~\bibnamefont {Ferrari}}, \ and\
  \bibinfo {author} {\bibfnamefont {R.}~\bibnamefont {Takagi}},\ }\bibfield
  {title} {\emph {\bibinfo {title} {Operational quantification of
  continuous-variable quantum resources},}\ }\href
  {http://dx.doi.org/10.1103/PhysRevLett.126.110403} {\bibfield  {journal}
  {\bibinfo  {journal} {Phys. Rev. Lett.}\ }\textbf {\bibinfo {volume} {126}},\
  \bibinfo {pages} {110403} (\bibinfo {year} {2021})}\BibitemShut {NoStop}%
\bibitem [{\citenamefont {Regula}(2018)}]{Regula2018convex}%
  \BibitemOpen
  \bibfield  {author} {\bibinfo {author} {\bibfnamefont {B.}~\bibnamefont
  {Regula}},\ }\bibfield  {title} {\emph {\bibinfo {title} {Convex geometry of
  quantum resource quantification},}\ }\href
  {http://dx.doi.org/10.1088/1751-8121/aa9100} {\bibfield  {journal} {\bibinfo
  {journal} {J. Phys. A: Math. Theor.}\ }\textbf {\bibinfo {volume} {51}},\
  \bibinfo {pages} {045303} (\bibinfo {year} {2018})}\BibitemShut {NoStop}%
\bibitem [{\citenamefont {Gonda}\ and\ \citenamefont
  {Spekkens}(2019)}]{gonda_2019}%
  \BibitemOpen
  \bibfield  {author} {\bibinfo {author} {\bibfnamefont {T.}~\bibnamefont
  {Gonda}}\ and\ \bibinfo {author} {\bibfnamefont {R.~W.}\ \bibnamefont
  {Spekkens}},\ }\bibfield  {title} {\emph {\bibinfo {title} {Monotones in
  {{General Resource Theories}}},}\ }\href {http://arxiv.org/abs/1912.07085}
  {\Eprint {http://arxiv.org/abs/1912.07085} {arXiv:1912.07085}  (\bibinfo
  {year} {2019})}\BibitemShut {NoStop}%
\bibitem [{\citenamefont {Gour}(2017)}]{Gour2017quantum}%
  \BibitemOpen
  \bibfield  {author} {\bibinfo {author} {\bibfnamefont {G.}~\bibnamefont
  {Gour}},\ }\bibfield  {title} {\emph {\bibinfo {title} {Quantum resource
  theories in the single-shot regime},}\ }\href
  {http://dx.doi.org/10.1103/PhysRevA.95.062314} {\bibfield  {journal}
  {\bibinfo  {journal} {Phys. Rev. A}\ }\textbf {\bibinfo {volume} {95}},\
  \bibinfo {pages} {062314} (\bibinfo {year} {2017})}\BibitemShut {NoStop}%
\bibitem [{\citenamefont {Zhou}\ and\ \citenamefont
  {Buscemi}(2020)}]{Zhou2020general}%
  \BibitemOpen
  \bibfield  {author} {\bibinfo {author} {\bibfnamefont {W.}~\bibnamefont
  {Zhou}}\ and\ \bibinfo {author} {\bibfnamefont {F.}~\bibnamefont {Buscemi}},\
  }\bibfield  {title} {\emph {\bibinfo {title} {General state transitions with
  exact resource morphisms: a unified resource-theoretic approach},}\ }\href
  {http://dx.doi.org/10.1088/1751-8121/abafe5} {\bibfield  {journal} {\bibinfo
  {journal} {J. Phys. A: Math. Theor.}\ }\textbf {\bibinfo {volume} {53}},\
  \bibinfo {pages} {445303} (\bibinfo {year} {2020})}\BibitemShut {NoStop}%
\bibitem [{\citenamefont {Vijayan}\ \emph {et~al.}(2020)\citenamefont
  {Vijayan}, \citenamefont {Chitambar},\ and\ \citenamefont
  {Hsieh}}]{Vijayan2020simple}%
  \BibitemOpen
  \bibfield  {author} {\bibinfo {author} {\bibfnamefont {M.~K.}\ \bibnamefont
  {Vijayan}}, \bibinfo {author} {\bibfnamefont {E.}~\bibnamefont {Chitambar}},
  \ and\ \bibinfo {author} {\bibfnamefont {M.-H.}\ \bibnamefont {Hsieh}},\
  }\bibfield  {title} {\emph {\bibinfo {title} {Simple bounds for one-shot
  pure-state distillation in general resource theories},}\ }\href
  {http://dx.doi.org/10.1103/PhysRevA.102.052403} {\bibfield  {journal}
  {\bibinfo  {journal} {Phys. Rev. A}\ }\textbf {\bibinfo {volume} {102}},\
  \bibinfo {pages} {052403} (\bibinfo {year} {2020})}\BibitemShut {NoStop}%
\bibitem [{\citenamefont {Regula}\ \emph {et~al.}(2020)\citenamefont {Regula},
  \citenamefont {Bu}, \citenamefont {Takagi},\ and\ \citenamefont
  {Liu}}]{Regula2020benchmarking}%
  \BibitemOpen
  \bibfield  {author} {\bibinfo {author} {\bibfnamefont {B.}~\bibnamefont
  {Regula}}, \bibinfo {author} {\bibfnamefont {K.}~\bibnamefont {Bu}}, \bibinfo
  {author} {\bibfnamefont {R.}~\bibnamefont {Takagi}}, \ and\ \bibinfo {author}
  {\bibfnamefont {Z.-W.}\ \bibnamefont {Liu}},\ }\bibfield  {title} {\emph
  {\bibinfo {title} {Benchmarking one-shot distillation in general quantum
  resource theories},}\ }\href {http://dx.doi.org/10.1103/PhysRevA.101.062315}
  {\bibfield  {journal} {\bibinfo  {journal} {Phys. Rev. A}\ }\textbf {\bibinfo
  {volume} {101}},\ \bibinfo {pages} {062315} (\bibinfo {year}
  {2020})}\BibitemShut {NoStop}%
\bibitem [{\citenamefont {Fang}\ and\ \citenamefont
  {Liu}(2020)}]{Fang2020nogo-state}%
  \BibitemOpen
  \bibfield  {author} {\bibinfo {author} {\bibfnamefont {K.}~\bibnamefont
  {Fang}}\ and\ \bibinfo {author} {\bibfnamefont {Z.-W.}\ \bibnamefont {Liu}},\
  }\bibfield  {title} {\emph {\bibinfo {title} {No-go theorems for quantum
  resource purification},}\ }\href
  {http://dx.doi.org/10.1103/PhysRevLett.125.060405} {\bibfield  {journal}
  {\bibinfo  {journal} {Phys. Rev. Lett.}\ }\textbf {\bibinfo {volume} {125}},\
  \bibinfo {pages} {060405} (\bibinfo {year} {2020})}\BibitemShut {NoStop}%
\bibitem [{\citenamefont {Ben~Dana}\ \emph {et~al.}(2017)\citenamefont
  {Ben~Dana}, \citenamefont {García~Díaz}, \citenamefont {Mejatty},\ and\
  \citenamefont {Winter}}]{Bendana2017resource}%
  \BibitemOpen
  \bibfield  {author} {\bibinfo {author} {\bibfnamefont {K.}~\bibnamefont
  {Ben~Dana}}, \bibinfo {author} {\bibfnamefont {M.}~\bibnamefont
  {García~Díaz}}, \bibinfo {author} {\bibfnamefont {M.}~\bibnamefont
  {Mejatty}}, \ and\ \bibinfo {author} {\bibfnamefont {A.}~\bibnamefont
  {Winter}},\ }\bibfield  {title} {\emph {\bibinfo {title} {Resource theory of
  coherence: {{Beyond}} states},}\ }\href
  {http://dx.doi.org/10.1103/PhysRevA.95.062327} {\bibfield  {journal}
  {\bibinfo  {journal} {Phys. Rev. A}\ }\textbf {\bibinfo {volume} {95}},\
  \bibinfo {pages} {062327} (\bibinfo {year} {2017})}\BibitemShut {NoStop}%
\bibitem [{\citenamefont {Kaur}\ and\ \citenamefont
  {Wilde}(2017)}]{Kaur2017amortized}%
  \BibitemOpen
  \bibfield  {author} {\bibinfo {author} {\bibfnamefont {E.}~\bibnamefont
  {Kaur}}\ and\ \bibinfo {author} {\bibfnamefont {M.~M.}\ \bibnamefont
  {Wilde}},\ }\bibfield  {title} {\emph {\bibinfo {title} {Amortized
  entanglement of a quantum channel and approximately teleportation-simulable
  channels},}\ }\href {http://dx.doi.org/10.1088/1751-8121/aa9da7} {\bibfield
  {journal} {\bibinfo  {journal} {J. Phys. A: Math. Theor.}\ }\textbf {\bibinfo
  {volume} {51}},\ \bibinfo {pages} {035303} (\bibinfo {year}
  {2017})}\BibitemShut {NoStop}%
\bibitem [{\citenamefont {Pirandola}\ \emph {et~al.}(2017)\citenamefont
  {Pirandola}, \citenamefont {Laurenza}, \citenamefont {Ottaviani},\ and\
  \citenamefont {Banchi}}]{Pirandola2017fundamental}%
  \BibitemOpen
  \bibfield  {author} {\bibinfo {author} {\bibfnamefont {S.}~\bibnamefont
  {Pirandola}}, \bibinfo {author} {\bibfnamefont {R.}~\bibnamefont {Laurenza}},
  \bibinfo {author} {\bibfnamefont {C.}~\bibnamefont {Ottaviani}}, \ and\
  \bibinfo {author} {\bibfnamefont {L.}~\bibnamefont {Banchi}},\ }\bibfield
  {title} {\emph {\bibinfo {title} {Fundamental limits of repeaterless quantum
  communications},}\ }\href {http://dx.doi.org/10.1038/ncomms15043} {\bibfield
  {journal} {\bibinfo  {journal} {Nat. Commun.}\ }\textbf {\bibinfo {volume}
  {8}},\ \bibinfo {pages} {15043} (\bibinfo {year} {2017})}\BibitemShut
  {NoStop}%
\bibitem [{\citenamefont {Díaz}\ \emph {et~al.}(2018)\citenamefont {Díaz},
  \citenamefont {Fang}, \citenamefont {Wang}, \citenamefont {Rosati},
  \citenamefont {Skotiniotis}, \citenamefont {Calsamiglia},\ and\ \citenamefont
  {Winter}}]{Diaz2018using}%
  \BibitemOpen
  \bibfield  {author} {\bibinfo {author} {\bibfnamefont {M.~G.}\ \bibnamefont
  {Díaz}}, \bibinfo {author} {\bibfnamefont {K.}~\bibnamefont {Fang}},
  \bibinfo {author} {\bibfnamefont {X.}~\bibnamefont {Wang}}, \bibinfo {author}
  {\bibfnamefont {M.}~\bibnamefont {Rosati}}, \bibinfo {author} {\bibfnamefont
  {M.}~\bibnamefont {Skotiniotis}}, \bibinfo {author} {\bibfnamefont
  {J.}~\bibnamefont {Calsamiglia}}, \ and\ \bibinfo {author} {\bibfnamefont
  {A.}~\bibnamefont {Winter}},\ }\bibfield  {title} {\emph {\bibinfo {title}
  {Using and reusing coherence to realize quantum processes},}\ }\href
  {http://dx.doi.org/10.22331/q-2018-10-19-100} {\bibfield  {journal} {\bibinfo
   {journal} {Quantum}\ }\textbf {\bibinfo {volume} {2}},\ \bibinfo {pages}
  {100} (\bibinfo {year} {2018})}\BibitemShut {NoStop}%
\bibitem [{\citenamefont {Rosset}\ \emph {et~al.}(2018)\citenamefont {Rosset},
  \citenamefont {Buscemi},\ and\ \citenamefont {Liang}}]{Rosset2018resource}%
  \BibitemOpen
  \bibfield  {author} {\bibinfo {author} {\bibfnamefont {D.}~\bibnamefont
  {Rosset}}, \bibinfo {author} {\bibfnamefont {F.}~\bibnamefont {Buscemi}}, \
  and\ \bibinfo {author} {\bibfnamefont {Y.-C.}\ \bibnamefont {Liang}},\
  }\bibfield  {title} {\emph {\bibinfo {title} {Resource {{Theory}} of
  {{Quantum Memories}} and {{Their Faithful Verification}} with {{Minimal
  Assumptions}}},}\ }\href {http://dx.doi.org/10.1103/PhysRevX.8.021033}
  {\bibfield  {journal} {\bibinfo  {journal} {Phys. Rev. X}\ }\textbf {\bibinfo
  {volume} {8}},\ \bibinfo {pages} {021033} (\bibinfo {year}
  {2018})}\BibitemShut {NoStop}%
\bibitem [{\citenamefont {Bäuml}\ \emph {et~al.}(2019)\citenamefont {Bäuml},
  \citenamefont {Das}, \citenamefont {Wang},\ and\ \citenamefont
  {Wilde}}]{Bauml2019resource}%
  \BibitemOpen
  \bibfield  {author} {\bibinfo {author} {\bibfnamefont {S.}~\bibnamefont
  {Bäuml}}, \bibinfo {author} {\bibfnamefont {S.}~\bibnamefont {Das}},
  \bibinfo {author} {\bibfnamefont {X.}~\bibnamefont {Wang}}, \ and\ \bibinfo
  {author} {\bibfnamefont {M.~M.}\ \bibnamefont {Wilde}},\ }\bibfield  {title}
  {\emph {\bibinfo {title} {Resource theory of entanglement for bipartite
  quantum channels},}\ }\href {http://arxiv.org/abs/1907.04181} {\Eprint
  {http://arxiv.org/abs/1907.04181} {arXiv:1907.04181}  (\bibinfo {year}
  {2019})}\BibitemShut {NoStop}%
\bibitem [{\citenamefont {Gour}\ and\ \citenamefont
  {Scandolo}(2020{\natexlab{a}})}]{Gour2020dynamical}%
  \BibitemOpen
  \bibfield  {author} {\bibinfo {author} {\bibfnamefont {G.}~\bibnamefont
  {Gour}}\ and\ \bibinfo {author} {\bibfnamefont {C.~M.}\ \bibnamefont
  {Scandolo}},\ }\bibfield  {title} {\emph {\bibinfo {title} {Dynamical
  entanglement},}\ }\href {http://dx.doi.org/10.1103/PhysRevLett.125.180505}
  {\bibfield  {journal} {\bibinfo  {journal} {Phys. Rev. Lett.}\ }\textbf
  {\bibinfo {volume} {125}},\ \bibinfo {pages} {180505} (\bibinfo {year}
  {2020}{\natexlab{a}})}\BibitemShut {NoStop}%
\bibitem [{\citenamefont {Saxena}\ \emph {et~al.}(2020)\citenamefont {Saxena},
  \citenamefont {Chitambar},\ and\ \citenamefont {Gour}}]{Saxena2020dynamical}%
  \BibitemOpen
  \bibfield  {author} {\bibinfo {author} {\bibfnamefont {G.}~\bibnamefont
  {Saxena}}, \bibinfo {author} {\bibfnamefont {E.}~\bibnamefont {Chitambar}}, \
  and\ \bibinfo {author} {\bibfnamefont {G.}~\bibnamefont {Gour}},\ }\bibfield
  {title} {\emph {\bibinfo {title} {Dynamical resource theory of quantum
  coherence},}\ }\href {http://dx.doi.org/10.1103/PhysRevResearch.2.023298}
  {\bibfield  {journal} {\bibinfo  {journal} {Phys. Rev. Research}\ }\textbf
  {\bibinfo {volume} {2}},\ \bibinfo {pages} {023298} (\bibinfo {year}
  {2020})}\BibitemShut {NoStop}%
\bibitem [{\citenamefont {Theurer}\ \emph {et~al.}(2019)\citenamefont
  {Theurer}, \citenamefont {Egloff}, \citenamefont {Zhang},\ and\ \citenamefont
  {Plenio}}]{Theurer2019quantifying}%
  \BibitemOpen
  \bibfield  {author} {\bibinfo {author} {\bibfnamefont {T.}~\bibnamefont
  {Theurer}}, \bibinfo {author} {\bibfnamefont {D.}~\bibnamefont {Egloff}},
  \bibinfo {author} {\bibfnamefont {L.}~\bibnamefont {Zhang}}, \ and\ \bibinfo
  {author} {\bibfnamefont {M.~B.}\ \bibnamefont {Plenio}},\ }\bibfield  {title}
  {\emph {\bibinfo {title} {Quantifying {{Operations}} with an {{Application}}
  to {{Coherence}}},}\ }\href
  {http://dx.doi.org/10.1103/PhysRevLett.122.190405} {\bibfield  {journal}
  {\bibinfo  {journal} {Phys. Rev. Lett.}\ }\textbf {\bibinfo {volume} {122}},\
  \bibinfo {pages} {190405} (\bibinfo {year} {2019})}\BibitemShut {NoStop}%
\bibitem [{\citenamefont {Theurer}\ \emph {et~al.}(2020)\citenamefont
  {Theurer}, \citenamefont {Satyajit},\ and\ \citenamefont
  {Plenio}}]{Theurer2020dynamical}%
  \BibitemOpen
  \bibfield  {author} {\bibinfo {author} {\bibfnamefont {T.}~\bibnamefont
  {Theurer}}, \bibinfo {author} {\bibfnamefont {S.}~\bibnamefont {Satyajit}}, \
  and\ \bibinfo {author} {\bibfnamefont {M.~B.}\ \bibnamefont {Plenio}},\
  }\bibfield  {title} {\emph {\bibinfo {title} {Quantifying dynamical coherence
  with dynamical entanglement},}\ }\href
  {http://dx.doi.org/10.1103/PhysRevLett.125.130401} {\bibfield  {journal}
  {\bibinfo  {journal} {Phys. Rev. Lett.}\ }\textbf {\bibinfo {volume} {125}},\
  \bibinfo {pages} {130401} (\bibinfo {year} {2020})}\BibitemShut {NoStop}%
\bibitem [{\citenamefont {Yuan}\ \emph {et~al.}(2021)\citenamefont {Yuan},
  \citenamefont {Liu}, \citenamefont {Zhao}, \citenamefont {Regula},
  \citenamefont {Thompson},\ and\ \citenamefont {Gu}}]{Yuan2020universal}%
  \BibitemOpen
  \bibfield  {author} {\bibinfo {author} {\bibfnamefont {X.}~\bibnamefont
  {Yuan}}, \bibinfo {author} {\bibfnamefont {Y.}~\bibnamefont {Liu}}, \bibinfo
  {author} {\bibfnamefont {Q.}~\bibnamefont {Zhao}}, \bibinfo {author}
  {\bibfnamefont {B.}~\bibnamefont {Regula}}, \bibinfo {author} {\bibfnamefont
  {J.}~\bibnamefont {Thompson}}, \ and\ \bibinfo {author} {\bibfnamefont
  {M.}~\bibnamefont {Gu}},\ }\bibfield  {title} {\emph {\bibinfo {title}
  {Universal and operational benchmarking of quantum memories},}\ }\href
  {http://dx.doi.org/10.1038/s41534-021-00444-9} {\bibfield  {journal}
  {\bibinfo  {journal} {{npj} Quantum Inf.}\ }\textbf {\bibinfo {volume} {7}},\
  \bibinfo {pages} {108} (\bibinfo {year} {2021})}\BibitemShut {NoStop}%
\bibitem [{\citenamefont {Wang}\ \emph {et~al.}(2019)\citenamefont {Wang},
  \citenamefont {Wilde},\ and\ \citenamefont {Su}}]{Wang2019quantifying}%
  \BibitemOpen
  \bibfield  {author} {\bibinfo {author} {\bibfnamefont {X.}~\bibnamefont
  {Wang}}, \bibinfo {author} {\bibfnamefont {M.~M.}\ \bibnamefont {Wilde}}, \
  and\ \bibinfo {author} {\bibfnamefont {Y.}~\bibnamefont {Su}},\ }\bibfield
  {title} {\emph {\bibinfo {title} {Quantifying the magic of quantum
  channels},}\ }\href {http://dx.doi.org/10.1088/1367-2630/ab451d} {\bibfield
  {journal} {\bibinfo  {journal} {New J. Phys.}\ }\textbf {\bibinfo {volume}
  {21}},\ \bibinfo {pages} {103002} (\bibinfo {year} {2019})}\BibitemShut
  {NoStop}%
\bibitem [{\citenamefont {Wang}\ and\ \citenamefont
  {Wilde}(2019{\natexlab{a}})}]{Wang2019asymmetric_channel}%
  \BibitemOpen
  \bibfield  {author} {\bibinfo {author} {\bibfnamefont {X.}~\bibnamefont
  {Wang}}\ and\ \bibinfo {author} {\bibfnamefont {M.~M.}\ \bibnamefont
  {Wilde}},\ }\bibfield  {title} {\emph {\bibinfo {title} {Resource theory of
  asymmetric distinguishability for quantum channels},}\ }\href
  {http://dx.doi.org/10.1103/PhysRevResearch.1.033169} {\bibfield  {journal}
  {\bibinfo  {journal} {Phys. Rev. Research}\ }\textbf {\bibinfo {volume}
  {1}},\ \bibinfo {pages} {033169} (\bibinfo {year}
  {2019}{\natexlab{a}})}\BibitemShut {NoStop}%
\bibitem [{\citenamefont {Liu}\ and\ \citenamefont
  {Winter}(2019)}]{Liu2019resource}%
  \BibitemOpen
  \bibfield  {author} {\bibinfo {author} {\bibfnamefont {Z.-W.}\ \bibnamefont
  {Liu}}\ and\ \bibinfo {author} {\bibfnamefont {A.}~\bibnamefont {Winter}},\
  }\bibfield  {title} {\emph {\bibinfo {title} {Resource theories of quantum
  channels and the universal role of resource erasure},}\ }\href@noop {}
  {\Eprint {http://arxiv.org/abs/1904.04201} {arXiv:1904.04201}  (\bibinfo
  {year} {2019})}\BibitemShut {NoStop}%
\bibitem [{\citenamefont {Liu}\ and\ \citenamefont
  {Yuan}(2020)}]{Liu2020operational}%
  \BibitemOpen
  \bibfield  {author} {\bibinfo {author} {\bibfnamefont {Y.}~\bibnamefont
  {Liu}}\ and\ \bibinfo {author} {\bibfnamefont {X.}~\bibnamefont {Yuan}},\
  }\bibfield  {title} {\emph {\bibinfo {title} {Operational resource theory of
  quantum channels},}\ }\href
  {http://dx.doi.org/10.1103/PhysRevResearch.2.012035} {\bibfield  {journal}
  {\bibinfo  {journal} {Phys. Rev. Research}\ }\textbf {\bibinfo {volume}
  {2}},\ \bibinfo {pages} {012035} (\bibinfo {year} {2020})}\BibitemShut
  {NoStop}%
\bibitem [{\citenamefont {Regula}\ and\ \citenamefont
  {Takagi}(2021{\natexlab{a}})}]{Regula2021fundamental}%
  \BibitemOpen
  \bibfield  {author} {\bibinfo {author} {\bibfnamefont {B.}~\bibnamefont
  {Regula}}\ and\ \bibinfo {author} {\bibfnamefont {R.}~\bibnamefont
  {Takagi}},\ }\bibfield  {title} {\emph {\bibinfo {title} {Fundamental
  limitations on distillation of quantum channel resources},}\ }\href
  {http://dx.doi.org/https://doi.org/10.1038/s41467-021-24699-0} {\bibfield
  {journal} {\bibinfo  {journal} {Nat. Commun.}\ }\textbf {\bibinfo {volume}
  {12}},\ \bibinfo {pages} {4411} (\bibinfo {year}
  {2021}{\natexlab{a}})}\BibitemShut {NoStop}%
\bibitem [{\citenamefont {Fang}\ and\ \citenamefont
  {Liu}(2022)}]{Fang2020no-go}%
  \BibitemOpen
  \bibfield  {author} {\bibinfo {author} {\bibfnamefont {K.}~\bibnamefont
  {Fang}}\ and\ \bibinfo {author} {\bibfnamefont {Z.-W.}\ \bibnamefont {Liu}},\
  }\bibfield  {title} {\emph {\bibinfo {title} {No-go theorems for quantum
  resource purification: New approach and channel theory},}\ }\href
  {http://dx.doi.org/10.1103/PRXQuantum.3.010337} {\bibfield  {journal}
  {\bibinfo  {journal} {PRX Quantum}\ }\textbf {\bibinfo {volume} {3}},\
  \bibinfo {pages} {010337} (\bibinfo {year} {2022})}\BibitemShut {NoStop}%
\bibitem [{\citenamefont {Gour}\ and\ \citenamefont
  {Winter}(2019)}]{Gour2019how}%
  \BibitemOpen
  \bibfield  {author} {\bibinfo {author} {\bibfnamefont {G.}~\bibnamefont
  {Gour}}\ and\ \bibinfo {author} {\bibfnamefont {A.}~\bibnamefont {Winter}},\
  }\bibfield  {title} {\emph {\bibinfo {title} {How to {{Quantify}} a
  {{Dynamical Quantum Resource}}},}\ }\href
  {http://dx.doi.org/10.1103/PhysRevLett.123.150401} {\bibfield  {journal}
  {\bibinfo  {journal} {Phys. Rev. Lett.}\ }\textbf {\bibinfo {volume} {123}},\
  \bibinfo {pages} {150401} (\bibinfo {year} {2019})}\BibitemShut {NoStop}%
\bibitem [{\citenamefont {Takagi}\ \emph {et~al.}(2020)\citenamefont {Takagi},
  \citenamefont {Wang},\ and\ \citenamefont {Hayashi}}]{Takagi2020application}%
  \BibitemOpen
  \bibfield  {author} {\bibinfo {author} {\bibfnamefont {R.}~\bibnamefont
  {Takagi}}, \bibinfo {author} {\bibfnamefont {K.}~\bibnamefont {Wang}}, \ and\
  \bibinfo {author} {\bibfnamefont {M.}~\bibnamefont {Hayashi}},\ }\bibfield
  {title} {\emph {\bibinfo {title} {Application of the {{Resource Theory}} of
  {{Channels}} to {{Communication Scenarios}}},}\ }\href
  {http://dx.doi.org/10.1103/PhysRevLett.124.120502} {\bibfield  {journal}
  {\bibinfo  {journal} {Phys. Rev. Lett.}\ }\textbf {\bibinfo {volume} {124}},\
  \bibinfo {pages} {120502} (\bibinfo {year} {2020})}\BibitemShut {NoStop}%
\bibitem [{\citenamefont {Takagi}(2021)}]{Takagi2021optimal}%
  \BibitemOpen
  \bibfield  {author} {\bibinfo {author} {\bibfnamefont {R.}~\bibnamefont
  {Takagi}},\ }\bibfield  {title} {\emph {\bibinfo {title} {Optimal resource
  cost for error mitigation},}\ }\href
  {http://dx.doi.org/10.1103/PhysRevResearch.3.033178} {\bibfield  {journal}
  {\bibinfo  {journal} {Phys. Rev. Research}\ }\textbf {\bibinfo {volume}
  {3}},\ \bibinfo {pages} {033178} (\bibinfo {year} {2021})}\BibitemShut
  {NoStop}%
\bibitem [{\citenamefont {Regula}\ and\ \citenamefont
  {Takagi}(2021{\natexlab{b}})}]{Regula2020oneshot}%
  \BibitemOpen
  \bibfield  {author} {\bibinfo {author} {\bibfnamefont {B.}~\bibnamefont
  {Regula}}\ and\ \bibinfo {author} {\bibfnamefont {R.}~\bibnamefont
  {Takagi}},\ }\bibfield  {title} {\emph {\bibinfo {title} {One-{{Shot
  Manipulation}} of {{Dynamical Quantum Resources}}},}\ }\href
  {http://dx.doi.org/10.1103/PhysRevLett.127.060402} {\bibfield  {journal}
  {\bibinfo  {journal} {Phys. Rev. Lett.}\ }\textbf {\bibinfo {volume} {127}},\
  \bibinfo {pages} {060402} (\bibinfo {year} {2021}{\natexlab{b}})}\BibitemShut
  {NoStop}%
\bibitem [{\citenamefont {Gour}\ and\ \citenamefont
  {Wilde}(2021)}]{Gour2021entropy}%
  \BibitemOpen
  \bibfield  {author} {\bibinfo {author} {\bibfnamefont {G.}~\bibnamefont
  {Gour}}\ and\ \bibinfo {author} {\bibfnamefont {M.~M.}\ \bibnamefont
  {Wilde}},\ }\bibfield  {title} {\emph {\bibinfo {title} {Entropy of a quantum
  channel},}\ }\href {http://dx.doi.org/10.1103/PhysRevResearch.3.023096}
  {\bibfield  {journal} {\bibinfo  {journal} {Phys. Rev. Research}\ }\textbf
  {\bibinfo {volume} {3}},\ \bibinfo {pages} {023096} (\bibinfo {year}
  {2021})}\BibitemShut {NoStop}%
\bibitem [{\citenamefont {Rosset}\ \emph {et~al.}(2020)\citenamefont {Rosset},
  \citenamefont {Schmid},\ and\ \citenamefont {Buscemi}}]{Rosset2020type}%
  \BibitemOpen
  \bibfield  {author} {\bibinfo {author} {\bibfnamefont {D.}~\bibnamefont
  {Rosset}}, \bibinfo {author} {\bibfnamefont {D.}~\bibnamefont {Schmid}}, \
  and\ \bibinfo {author} {\bibfnamefont {F.}~\bibnamefont {Buscemi}},\
  }\bibfield  {title} {\emph {\bibinfo {title} {Type-independent
  characterization of spacelike separated resources},}\ }\href
  {http://dx.doi.org/10.1103/PhysRevLett.125.210402} {\bibfield  {journal}
  {\bibinfo  {journal} {Phys. Rev. Lett.}\ }\textbf {\bibinfo {volume} {125}},\
  \bibinfo {pages} {210402} (\bibinfo {year} {2020})}\BibitemShut {NoStop}%
\bibitem [{\citenamefont {Schmid}\ \emph {et~al.}(2020)\citenamefont {Schmid},
  \citenamefont {Rosset},\ and\ \citenamefont
  {Buscemi}}]{Schmid2020typeindependent}%
  \BibitemOpen
  \bibfield  {author} {\bibinfo {author} {\bibfnamefont {D.}~\bibnamefont
  {Schmid}}, \bibinfo {author} {\bibfnamefont {D.}~\bibnamefont {Rosset}}, \
  and\ \bibinfo {author} {\bibfnamefont {F.}~\bibnamefont {Buscemi}},\
  }\bibfield  {title} {\emph {\bibinfo {title} {The type-independent resource
  theory of local operations and shared randomness},}\ }\href
  {http://dx.doi.org/10.22331/q-2020-04-30-262} {\bibfield  {journal} {\bibinfo
   {journal} {{Quantum}}\ }\textbf {\bibinfo {volume} {4}},\ \bibinfo {pages}
  {262} (\bibinfo {year} {2020})}\BibitemShut {NoStop}%
\bibitem [{\citenamefont {{Yuan}}\ \emph {et~al.}(2020)\citenamefont {{Yuan}},
  \citenamefont {{Zeng}}, \citenamefont {{Gao}},\ and\ \citenamefont
  {{Zhao}}}]{Yuan2020one-shot}%
  \BibitemOpen
  \bibfield  {author} {\bibinfo {author} {\bibfnamefont {X.}~\bibnamefont
  {{Yuan}}}, \bibinfo {author} {\bibfnamefont {P.}~\bibnamefont {{Zeng}}},
  \bibinfo {author} {\bibfnamefont {M.}~\bibnamefont {{Gao}}}, \ and\ \bibinfo
  {author} {\bibfnamefont {Q.}~\bibnamefont {{Zhao}}},\ }\bibfield  {title}
  {\emph {\bibinfo {title} {{One-shot dynamical resource theory}},}\
  }\href@noop {} {\Eprint {http://arxiv.org/abs/2012.02781} {arXiv:2012.02781}
  (\bibinfo {year} {2020})}\BibitemShut {NoStop}%
\bibitem [{\citenamefont {Gour}\ and\ \citenamefont
  {Scandolo}(2020{\natexlab{b}})}]{Gour2020dynamicalresources}%
  \BibitemOpen
  \bibfield  {author} {\bibinfo {author} {\bibfnamefont {G.}~\bibnamefont
  {Gour}}\ and\ \bibinfo {author} {\bibfnamefont {C.~M.}\ \bibnamefont
  {Scandolo}},\ }\bibfield  {title} {\emph {\bibinfo {title} {Dynamical
  {{Resources}}},}\ }\href {http://arxiv.org/abs/2101.01552} {\Eprint
  {http://arxiv.org/abs/2101.01552} {arXiv:2101.01552}  (\bibinfo {year}
  {2020}{\natexlab{b}})}\BibitemShut {NoStop}%
\bibitem [{\citenamefont {Salzmann}\ \emph {et~al.}(2021)\citenamefont
  {Salzmann}, \citenamefont {Datta}, \citenamefont {Gour}, \citenamefont
  {Wang},\ and\ \citenamefont {Wilde}}]{Salzmann2021symmetric}%
  \BibitemOpen
  \bibfield  {author} {\bibinfo {author} {\bibfnamefont {R.}~\bibnamefont
  {Salzmann}}, \bibinfo {author} {\bibfnamefont {N.}~\bibnamefont {Datta}},
  \bibinfo {author} {\bibfnamefont {G.}~\bibnamefont {Gour}}, \bibinfo {author}
  {\bibfnamefont {X.}~\bibnamefont {Wang}}, \ and\ \bibinfo {author}
  {\bibfnamefont {M.~M.}\ \bibnamefont {Wilde}},\ }\bibfield  {title} {\emph
  {\bibinfo {title} {Symmetric distinguishability as a quantum resource},}\
  }\href {http://dx.doi.org/10.1088/1367-2630/ac14aa} {\bibfield  {journal}
  {\bibinfo  {journal} {New J. Phys.}\ }\textbf {\bibinfo {volume} {23}},\
  \bibinfo {pages} {083016} (\bibinfo {year} {2021})}\BibitemShut {NoStop}%
\bibitem [{\citenamefont {de~Vicente}(2014)}]{de_Vicente2014nonlocality}%
  \BibitemOpen
  \bibfield  {author} {\bibinfo {author} {\bibfnamefont {J.~I.}\ \bibnamefont
  {de~Vicente}},\ }\bibfield  {title} {\emph {\bibinfo {title} {On nonlocality
  as a resource theory and nonlocality measures},}\ }\href
  {http://dx.doi.org/10.1088/1751-8113/47/42/424017} {\bibfield  {journal}
  {\bibinfo  {journal} {J. Phys. A: Math. Theor.}\ }\textbf {\bibinfo {volume}
  {47}},\ \bibinfo {pages} {424017} (\bibinfo {year} {2014})}\BibitemShut
  {NoStop}%
\bibitem [{\citenamefont {Geller}\ and\ \citenamefont
  {Piani}(2014)}]{Geller2014quantifying}%
  \BibitemOpen
  \bibfield  {author} {\bibinfo {author} {\bibfnamefont {J.}~\bibnamefont
  {Geller}}\ and\ \bibinfo {author} {\bibfnamefont {M.}~\bibnamefont {Piani}},\
  }\bibfield  {title} {\emph {\bibinfo {title} {Quantifying non-classical and
  beyond-quantum correlations in the unified operator formalism},}\ }\href
  {http://dx.doi.org/10.1088/1751-8113/47/42/424030} {\bibfield  {journal}
  {\bibinfo  {journal} {J. Phys. A: Math. Theor.}\ }\textbf {\bibinfo {volume}
  {47}},\ \bibinfo {pages} {424030} (\bibinfo {year} {2014})}\BibitemShut
  {NoStop}%
\bibitem [{\citenamefont {Wolfe}\ \emph {et~al.}(2020)\citenamefont {Wolfe},
  \citenamefont {Schmid}, \citenamefont {Sainz}, \citenamefont {Kunjwal},\ and\
  \citenamefont {Spekkens}}]{Wolfe2020quantifyingbell}%
  \BibitemOpen
  \bibfield  {author} {\bibinfo {author} {\bibfnamefont {E.}~\bibnamefont
  {Wolfe}}, \bibinfo {author} {\bibfnamefont {D.}~\bibnamefont {Schmid}},
  \bibinfo {author} {\bibfnamefont {A.~B.}\ \bibnamefont {Sainz}}, \bibinfo
  {author} {\bibfnamefont {R.}~\bibnamefont {Kunjwal}}, \ and\ \bibinfo
  {author} {\bibfnamefont {R.~W.}\ \bibnamefont {Spekkens}},\ }\bibfield
  {title} {\emph {\bibinfo {title} {Quantifying {B}ell: the {R}esource {T}heory
  of {N}onclassicality of {C}ommon-{C}ause {B}oxes},}\ }\href
  {http://dx.doi.org/10.22331/q-2020-06-08-280} {\bibfield  {journal} {\bibinfo
   {journal} {{Quantum}}\ }\textbf {\bibinfo {volume} {4}},\ \bibinfo {pages}
  {280} (\bibinfo {year} {2020})}\BibitemShut {NoStop}%
\bibitem [{\citenamefont {Heinosaari}\ \emph {et~al.}(2016)\citenamefont
  {Heinosaari}, \citenamefont {Miyadera},\ and\ \citenamefont
  {Ziman}}]{Heinosaari2016invitation}%
  \BibitemOpen
  \bibfield  {author} {\bibinfo {author} {\bibfnamefont {T.}~\bibnamefont
  {Heinosaari}}, \bibinfo {author} {\bibfnamefont {T.}~\bibnamefont
  {Miyadera}}, \ and\ \bibinfo {author} {\bibfnamefont {M.}~\bibnamefont
  {Ziman}},\ }\bibfield  {title} {\emph {\bibinfo {title} {An invitation to
  quantum incompatibility},}\ }\href
  {http://dx.doi.org/10.1088/1751-8113/49/12/123001} {\bibfield  {journal}
  {\bibinfo  {journal} {J. Phys. A: Math. Theor.}\ }\textbf {\bibinfo {volume}
  {49}},\ \bibinfo {pages} {123001} (\bibinfo {year} {2016})}\BibitemShut
  {NoStop}%
\bibitem [{\citenamefont {Werner}(1989)}]{Werner1989quantum}%
  \BibitemOpen
  \bibfield  {author} {\bibinfo {author} {\bibfnamefont {R.~F.}\ \bibnamefont
  {Werner}},\ }\bibfield  {title} {\emph {\bibinfo {title} {Quantum states with
  einstein-podolsky-rosen correlations admitting a hidden-variable model},}\
  }\href {http://dx.doi.org/10.1103/PhysRevA.40.4277} {\bibfield  {journal}
  {\bibinfo  {journal} {Phys. Rev. A}\ }\textbf {\bibinfo {volume} {40}},\
  \bibinfo {pages} {4277} (\bibinfo {year} {1989})}\BibitemShut {NoStop}%
\bibitem [{\citenamefont {Horodecki}\ and\ \citenamefont
  {Horodecki}(1999)}]{Horodecki1999reduction}%
  \BibitemOpen
  \bibfield  {author} {\bibinfo {author} {\bibfnamefont {M.}~\bibnamefont
  {Horodecki}}\ and\ \bibinfo {author} {\bibfnamefont {P.}~\bibnamefont
  {Horodecki}},\ }\bibfield  {title} {\emph {\bibinfo {title} {Reduction
  criterion of separability and limits for a class of distillation
  protocols},}\ }\href {http://dx.doi.org/10.1103/PhysRevA.59.4206} {\bibfield
  {journal} {\bibinfo  {journal} {Phys. Rev. A}\ }\textbf {\bibinfo {volume}
  {59}},\ \bibinfo {pages} {4206} (\bibinfo {year} {1999})}\BibitemShut
  {NoStop}%
\bibitem [{\citenamefont {Veitch}\ \emph {et~al.}(2014)\citenamefont {Veitch},
  \citenamefont {Mousavian}, \citenamefont {Gottesman},\ and\ \citenamefont
  {Emerson}}]{Veitch2014resource}%
  \BibitemOpen
  \bibfield  {author} {\bibinfo {author} {\bibfnamefont {V.}~\bibnamefont
  {Veitch}}, \bibinfo {author} {\bibfnamefont {S.~A.~H.}\ \bibnamefont
  {Mousavian}}, \bibinfo {author} {\bibfnamefont {D.}~\bibnamefont
  {Gottesman}}, \ and\ \bibinfo {author} {\bibfnamefont {J.}~\bibnamefont
  {Emerson}},\ }\bibfield  {title} {\emph {\bibinfo {title} {The resource
  theory of stabilizer quantum computation},}\ }\href
  {http://dx.doi.org/10.1088/1367-2630/16/1/013009} {\bibfield  {journal}
  {\bibinfo  {journal} {New J. Phys.}\ }\textbf {\bibinfo {volume} {16}},\
  \bibinfo {pages} {013009} (\bibinfo {year} {2014})}\BibitemShut {NoStop}%
\bibitem [{\citenamefont {Howard}\ and\ \citenamefont
  {Campbell}(2017)}]{Howard2017application}%
  \BibitemOpen
  \bibfield  {author} {\bibinfo {author} {\bibfnamefont {M.}~\bibnamefont
  {Howard}}\ and\ \bibinfo {author} {\bibfnamefont {E.}~\bibnamefont
  {Campbell}},\ }\bibfield  {title} {\emph {\bibinfo {title} {Application of a
  resource theory for magic states to fault-tolerant quantum computing},}\
  }\href {http://dx.doi.org/10.1103/PhysRevLett.118.090501} {\bibfield
  {journal} {\bibinfo  {journal} {Phys. Rev. Lett.}\ }\textbf {\bibinfo
  {volume} {118}},\ \bibinfo {pages} {090501} (\bibinfo {year}
  {2017})}\BibitemShut {NoStop}%
\bibitem [{Note1()}]{Note1}%
  \BibitemOpen
  \bibinfo {note} {In this manuscript, we restrict our attention to the cases
  in which underlying Hilbert spaces are finite dimensional.}\BibitemShut
  {Stop}%
\bibitem [{\citenamefont {Horodecki}\ \emph {et~al.}(2009)\citenamefont
  {Horodecki}, \citenamefont {Horodecki}, \citenamefont {Horodecki},\ and\
  \citenamefont {Horodecki}}]{Horodecki2009quantum}%
  \BibitemOpen
  \bibfield  {author} {\bibinfo {author} {\bibfnamefont {R.}~\bibnamefont
  {Horodecki}}, \bibinfo {author} {\bibfnamefont {P.}~\bibnamefont
  {Horodecki}}, \bibinfo {author} {\bibfnamefont {M.}~\bibnamefont
  {Horodecki}}, \ and\ \bibinfo {author} {\bibfnamefont {K.}~\bibnamefont
  {Horodecki}},\ }\bibfield  {title} {\emph {\bibinfo {title} {Quantum
  entanglement},}\ }\href {http://dx.doi.org/10.1103/RevModPhys.81.865}
  {\bibfield  {journal} {\bibinfo  {journal} {Rev. Mod. Phys.}\ }\textbf
  {\bibinfo {volume} {81}},\ \bibinfo {pages} {865} (\bibinfo {year}
  {2009})}\BibitemShut {NoStop}%
\bibitem [{\citenamefont {Vidal}\ and\ \citenamefont
  {Tarrach}(1999)}]{Vidal1999robustness}%
  \BibitemOpen
  \bibfield  {author} {\bibinfo {author} {\bibfnamefont {G.}~\bibnamefont
  {Vidal}}\ and\ \bibinfo {author} {\bibfnamefont {R.}~\bibnamefont
  {Tarrach}},\ }\bibfield  {title} {\emph {\bibinfo {title} {Robustness of
  entanglement},}\ }\href {http://dx.doi.org/10.1103/PhysRevA.59.141}
  {\bibfield  {journal} {\bibinfo  {journal} {Phys. Rev. A}\ }\textbf {\bibinfo
  {volume} {59}},\ \bibinfo {pages} {141} (\bibinfo {year} {1999})}\BibitemShut
  {NoStop}%
\bibitem [{\citenamefont {Steiner}(2003)}]{Steiner2003generalized}%
  \BibitemOpen
  \bibfield  {author} {\bibinfo {author} {\bibfnamefont {M.}~\bibnamefont
  {Steiner}},\ }\bibfield  {title} {\emph {\bibinfo {title} {Generalized
  robustness of entanglement},}\ }\href
  {http://dx.doi.org/10.1103/PhysRevA.67.054305} {\bibfield  {journal}
  {\bibinfo  {journal} {Phys. Rev. A}\ }\textbf {\bibinfo {volume} {67}},\
  \bibinfo {pages} {054305} (\bibinfo {year} {2003})}\BibitemShut {NoStop}%
\bibitem [{\citenamefont {Datta}(2009{\natexlab{a}})}]{Datta2009min}%
  \BibitemOpen
  \bibfield  {author} {\bibinfo {author} {\bibfnamefont {N.}~\bibnamefont
  {Datta}},\ }\bibfield  {title} {\emph {\bibinfo {title} {Min- and
  {{Max}}-{{Relative Entropies}} and a {{New Entanglement Monotone}}},}\ }\href
  {http://dx.doi.org/10.1109/TIT.2009.2018325} {\bibfield  {journal} {\bibinfo
  {journal} {IEEE Trans. Inf. Theory}\ }\textbf {\bibinfo {volume} {55}},\
  \bibinfo {pages} {2816} (\bibinfo {year} {2009}{\natexlab{a}})}\BibitemShut
  {NoStop}%
\bibitem [{\citenamefont {Datta}(2009{\natexlab{b}})}]{Datta2009max}%
  \BibitemOpen
  \bibfield  {author} {\bibinfo {author} {\bibfnamefont {N.}~\bibnamefont
  {Datta}},\ }\bibfield  {title} {\emph {\bibinfo {title} {Max-relative entropy
  of entanglement, alias log robustness},}\ }\href
  {http://dx.doi.org/10.1142/S0219749909005298} {\bibfield  {journal} {\bibinfo
   {journal} {Int. J. Quantum Inform.}\ }\textbf {\bibinfo {volume} {07}},\
  \bibinfo {pages} {475} (\bibinfo {year} {2009}{\natexlab{b}})}\BibitemShut
  {NoStop}%
\bibitem [{\citenamefont {Uhlmann}(1976)}]{Uhlmann1976transition}%
  \BibitemOpen
  \bibfield  {author} {\bibinfo {author} {\bibfnamefont {A.}~\bibnamefont
  {Uhlmann}},\ }\bibfield  {title} {\emph {\bibinfo {title} {The “transition
  probability” in the state space of a {*}-algebra},}\ }\href
  {http://dx.doi.org/10.1016/0034-4877(76)90060-4} {\bibfield  {journal}
  {\bibinfo  {journal} {Rep. Math. Phys.}\ }\textbf {\bibinfo {volume} {9}},\
  \bibinfo {pages} {273} (\bibinfo {year} {1976})}\BibitemShut {NoStop}%
\bibitem [{\citenamefont {Shimony}(1995)}]{Shimony1995degree}%
  \BibitemOpen
  \bibfield  {author} {\bibinfo {author} {\bibfnamefont {A.}~\bibnamefont
  {Shimony}},\ }\bibfield  {title} {\emph {\bibinfo {title} {Degree of
  {{Entanglement}}},}\ }\href
  {http://dx.doi.org/10.1111/j.1749-6632.1995.tb39008.x} {\bibfield  {journal}
  {\bibinfo  {journal} {Ann. NY Ac.}\ }\textbf {\bibinfo {volume} {755}},\
  \bibinfo {pages} {675} (\bibinfo {year} {1995})}\BibitemShut {NoStop}%
\bibitem [{\citenamefont {Chiribella}\ \emph
  {et~al.}(2008{\natexlab{a}})\citenamefont {Chiribella}, \citenamefont
  {D'Ariano},\ and\ \citenamefont {Perinotti}}]{Chiribella2008transforming}%
  \BibitemOpen
  \bibfield  {author} {\bibinfo {author} {\bibfnamefont {G.}~\bibnamefont
  {Chiribella}}, \bibinfo {author} {\bibfnamefont {G.~M.}\ \bibnamefont
  {D'Ariano}}, \ and\ \bibinfo {author} {\bibfnamefont {P.}~\bibnamefont
  {Perinotti}},\ }\bibfield  {title} {\emph {\bibinfo {title} {Transforming
  quantum operations: Quantum supermaps},}\ }\href
  {http://dx.doi.org/10.1209/0295-5075/83/30004} {\bibfield  {journal}
  {\bibinfo  {journal} {EPL Europhys. Lett.}\ }\textbf {\bibinfo {volume}
  {83}},\ \bibinfo {pages} {30004} (\bibinfo {year}
  {2008}{\natexlab{a}})}\BibitemShut {NoStop}%
\bibitem [{\citenamefont {Chiribella}\ \emph
  {et~al.}(2008{\natexlab{b}})\citenamefont {Chiribella}, \citenamefont
  {D’Ariano},\ and\ \citenamefont {Perinotti}}]{Chiribella2008quantum}%
  \BibitemOpen
  \bibfield  {author} {\bibinfo {author} {\bibfnamefont {G.}~\bibnamefont
  {Chiribella}}, \bibinfo {author} {\bibfnamefont {G.~M.}\ \bibnamefont
  {D’Ariano}}, \ and\ \bibinfo {author} {\bibfnamefont {P.}~\bibnamefont
  {Perinotti}},\ }\bibfield  {title} {\emph {\bibinfo {title} {Quantum
  {{Circuit Architecture}}},}\ }\href
  {http://dx.doi.org/10.1103/PhysRevLett.101.060401} {\bibfield  {journal}
  {\bibinfo  {journal} {Phys. Rev. Lett.}\ }\textbf {\bibinfo {volume} {101}},\
  \bibinfo {pages} {060401} (\bibinfo {year} {2008}{\natexlab{b}})}\BibitemShut
  {NoStop}%
\bibitem [{\citenamefont {Gilchrist}\ \emph {et~al.}(2005)\citenamefont
  {Gilchrist}, \citenamefont {Langford},\ and\ \citenamefont
  {Nielsen}}]{Gilchrist2005distance}%
  \BibitemOpen
  \bibfield  {author} {\bibinfo {author} {\bibfnamefont {A.}~\bibnamefont
  {Gilchrist}}, \bibinfo {author} {\bibfnamefont {N.~K.}\ \bibnamefont
  {Langford}}, \ and\ \bibinfo {author} {\bibfnamefont {M.~A.}\ \bibnamefont
  {Nielsen}},\ }\bibfield  {title} {\emph {\bibinfo {title} {Distance measures
  to compare real and ideal quantum processes},}\ }\href
  {http://dx.doi.org/10.1103/PhysRevA.71.062310} {\bibfield  {journal}
  {\bibinfo  {journal} {Phys. Rev. A}\ }\textbf {\bibinfo {volume} {71}},\
  \bibinfo {pages} {062310} (\bibinfo {year} {2005})}\BibitemShut {NoStop}%
\bibitem [{\citenamefont {Khatri}\ and\ \citenamefont
  {Wilde}(2020)}]{Khatri2020principles}%
  \BibitemOpen
  \bibfield  {author} {\bibinfo {author} {\bibfnamefont {S.}~\bibnamefont
  {Khatri}}\ and\ \bibinfo {author} {\bibfnamefont {M.~M.}\ \bibnamefont
  {Wilde}},\ }\bibfield  {title} {\emph {\bibinfo {title} {Principles of
  quantum communication theory: A modern approach},}\ }\href@noop {} {\Eprint
  {http://arxiv.org/abs/2011.04672} {arXiv:2011.04672 [quant-ph]}  (\bibinfo
  {year} {2020})}\BibitemShut {NoStop}%
\bibitem [{\citenamefont {Brandão}\ and\ \citenamefont
  {Plenio}(2010{\natexlab{b}})}]{Brandao2010generalization}%
  \BibitemOpen
  \bibfield  {author} {\bibinfo {author} {\bibfnamefont {F.~G. S.~L.}\
  \bibnamefont {Brandão}}\ and\ \bibinfo {author} {\bibfnamefont {M.~B.}\
  \bibnamefont {Plenio}},\ }\bibfield  {title} {\emph {\bibinfo {title} {A
  {Generalization} of {Quantum} {Stein}’s {Lemma}},}\ }\href
  {http://dx.doi.org/10.1007/s00220-010-1005-z} {\bibfield  {journal} {\bibinfo
   {journal} {Commun. Math. Phys.}\ }\textbf {\bibinfo {volume} {295}},\
  \bibinfo {pages} {791} (\bibinfo {year} {2010}{\natexlab{b}})}\BibitemShut
  {NoStop}%
\bibitem [{\citenamefont {Buscemi}\ and\ \citenamefont
  {Datta}(2010)}]{Buscemi2010quantum}%
  \BibitemOpen
  \bibfield  {author} {\bibinfo {author} {\bibfnamefont {F.}~\bibnamefont
  {Buscemi}}\ and\ \bibinfo {author} {\bibfnamefont {N.}~\bibnamefont
  {Datta}},\ }\bibfield  {title} {\emph {\bibinfo {title} {The {{Quantum
  Capacity}} of {{Channels With Arbitrarily Correlated Noise}}},}\ }\href
  {http://dx.doi.org/10.1109/TIT.2009.2039166} {\bibfield  {journal} {\bibinfo
  {journal} {IEEE Trans. Inf. Theory}\ }\textbf {\bibinfo {volume} {56}},\
  \bibinfo {pages} {1447} (\bibinfo {year} {2010})}\BibitemShut {NoStop}%
\bibitem [{\citenamefont {Wang}\ and\ \citenamefont
  {Renner}(2012)}]{Wang2012one-shot}%
  \BibitemOpen
  \bibfield  {author} {\bibinfo {author} {\bibfnamefont {L.}~\bibnamefont
  {Wang}}\ and\ \bibinfo {author} {\bibfnamefont {R.}~\bibnamefont {Renner}},\
  }\bibfield  {title} {\emph {\bibinfo {title} {One-{{Shot
  Classical}}-{{Quantum Capacity}} and {{Hypothesis Testing}}},}\ }\href
  {http://dx.doi.org/10.1103/PhysRevLett.108.200501} {\bibfield  {journal}
  {\bibinfo  {journal} {Phys. Rev. Lett.}\ }\textbf {\bibinfo {volume} {108}},\
  \bibinfo {pages} {200501} (\bibinfo {year} {2012})}\BibitemShut {NoStop}%
\bibitem [{\citenamefont {Vollbrecht}\ and\ \citenamefont
  {Werner}(2001)}]{Vollbrecht2001entanglement}%
  \BibitemOpen
  \bibfield  {author} {\bibinfo {author} {\bibfnamefont {K.~G.~H.}\
  \bibnamefont {Vollbrecht}}\ and\ \bibinfo {author} {\bibfnamefont {R.~F.}\
  \bibnamefont {Werner}},\ }\bibfield  {title} {\emph {\bibinfo {title}
  {Entanglement measures under symmetry},}\ }\href
  {http://dx.doi.org/10.1103/PhysRevA.64.062307} {\bibfield  {journal}
  {\bibinfo  {journal} {Phys. Rev. A}\ }\textbf {\bibinfo {volume} {64}},\
  \bibinfo {pages} {062307} (\bibinfo {year} {2001})}\BibitemShut {NoStop}%
\bibitem [{\citenamefont {Gottesman}\ and\ \citenamefont
  {Chuang}(1999)}]{Gottesman1999demonstrating}%
  \BibitemOpen
  \bibfield  {author} {\bibinfo {author} {\bibfnamefont {D.}~\bibnamefont
  {Gottesman}}\ and\ \bibinfo {author} {\bibfnamefont {I.~L.}\ \bibnamefont
  {Chuang}},\ }\bibfield  {title} {\emph {\bibinfo {title} {Demonstrating the
  viability of universal quantum computation using teleportation and
  single-qubit operations},}\ }\href {http://dx.doi.org/10.1038/46503}
  {\bibfield  {journal} {\bibinfo  {journal} {Nature}\ }\textbf {\bibinfo
  {volume} {402}},\ \bibinfo {pages} {390} (\bibinfo {year}
  {1999})}\BibitemShut {NoStop}%
\bibitem [{\citenamefont {Zhou}\ \emph {et~al.}(2000)\citenamefont {Zhou},
  \citenamefont {Leung},\ and\ \citenamefont {Chuang}}]{Zhou2000methodology}%
  \BibitemOpen
  \bibfield  {author} {\bibinfo {author} {\bibfnamefont {X.}~\bibnamefont
  {Zhou}}, \bibinfo {author} {\bibfnamefont {D.~W.}\ \bibnamefont {Leung}}, \
  and\ \bibinfo {author} {\bibfnamefont {I.~L.}\ \bibnamefont {Chuang}},\
  }\bibfield  {title} {\emph {\bibinfo {title} {Methodology for quantum logic
  gate construction},}\ }\href {http://dx.doi.org/10.1103/PhysRevA.62.052316}
  {\bibfield  {journal} {\bibinfo  {journal} {Phys. Rev. A}\ }\textbf {\bibinfo
  {volume} {62}},\ \bibinfo {pages} {052316} (\bibinfo {year}
  {2000})}\BibitemShut {NoStop}%
\bibitem [{\citenamefont {Campbell}\ \emph {et~al.}(2012)\citenamefont
  {Campbell}, \citenamefont {Anwar},\ and\ \citenamefont
  {Browne}}]{Campbell2012magic-state}%
  \BibitemOpen
  \bibfield  {author} {\bibinfo {author} {\bibfnamefont {E.~T.}\ \bibnamefont
  {Campbell}}, \bibinfo {author} {\bibfnamefont {H.}~\bibnamefont {Anwar}}, \
  and\ \bibinfo {author} {\bibfnamefont {D.~E.}\ \bibnamefont {Browne}},\
  }\bibfield  {title} {\emph {\bibinfo {title} {Magic-state distillation in all
  prime dimensions using quantum reed-muller codes},}\ }\href
  {http://dx.doi.org/10.1103/PhysRevX.2.041021} {\bibfield  {journal} {\bibinfo
   {journal} {Phys. Rev. X}\ }\textbf {\bibinfo {volume} {2}},\ \bibinfo
  {pages} {041021} (\bibinfo {year} {2012})}\BibitemShut {NoStop}%
\bibitem [{\citenamefont {Howard}\ and\ \citenamefont
  {Vala}(2012)}]{Howard2012qudit}%
  \BibitemOpen
  \bibfield  {author} {\bibinfo {author} {\bibfnamefont {M.}~\bibnamefont
  {Howard}}\ and\ \bibinfo {author} {\bibfnamefont {J.}~\bibnamefont {Vala}},\
  }\bibfield  {title} {\emph {\bibinfo {title} {Qudit versions of the qubit
  $\ensuremath{\pi}/8$ gate},}\ }\href
  {http://dx.doi.org/10.1103/PhysRevA.86.022316} {\bibfield  {journal}
  {\bibinfo  {journal} {Phys. Rev. A}\ }\textbf {\bibinfo {volume} {86}},\
  \bibinfo {pages} {022316} (\bibinfo {year} {2012})}\BibitemShut {NoStop}%
\bibitem [{\citenamefont {Wang}\ \emph {et~al.}(2020)\citenamefont {Wang},
  \citenamefont {Wilde},\ and\ \citenamefont {Su}}]{Wang2020efficiently}%
  \BibitemOpen
  \bibfield  {author} {\bibinfo {author} {\bibfnamefont {X.}~\bibnamefont
  {Wang}}, \bibinfo {author} {\bibfnamefont {M.~M.}\ \bibnamefont {Wilde}}, \
  and\ \bibinfo {author} {\bibfnamefont {Y.}~\bibnamefont {Su}},\ }\bibfield
  {title} {\emph {\bibinfo {title} {Efficiently {{Computable Bounds}} for
  {{Magic State Distillation}}},}\ }\href
  {http://dx.doi.org/10.1103/PhysRevLett.124.090505} {\bibfield  {journal}
  {\bibinfo  {journal} {Phys. Rev. Lett.}\ }\textbf {\bibinfo {volume} {124}},\
  \bibinfo {pages} {090505} (\bibinfo {year} {2020})}\BibitemShut {NoStop}%
\bibitem [{\citenamefont {Bravyi}\ \emph {et~al.}(2019)\citenamefont {Bravyi},
  \citenamefont {Browne}, \citenamefont {Calpin}, \citenamefont {Campbell},
  \citenamefont {Gosset},\ and\ \citenamefont {Howard}}]{Bravyi2019simulation}%
  \BibitemOpen
  \bibfield  {author} {\bibinfo {author} {\bibfnamefont {S.}~\bibnamefont
  {Bravyi}}, \bibinfo {author} {\bibfnamefont {D.}~\bibnamefont {Browne}},
  \bibinfo {author} {\bibfnamefont {P.}~\bibnamefont {Calpin}}, \bibinfo
  {author} {\bibfnamefont {E.}~\bibnamefont {Campbell}}, \bibinfo {author}
  {\bibfnamefont {D.}~\bibnamefont {Gosset}}, \ and\ \bibinfo {author}
  {\bibfnamefont {M.}~\bibnamefont {Howard}},\ }\bibfield  {title} {\emph
  {\bibinfo {title} {Simulation of quantum circuits by low-rank stabilizer
  decompositions},}\ }\href {http://dx.doi.org/10.22331/q-2019-09-02-181}
  {\bibfield  {journal} {\bibinfo  {journal} {{Quantum}}\ }\textbf {\bibinfo
  {volume} {3}},\ \bibinfo {pages} {181} (\bibinfo {year} {2019})}\BibitemShut
  {NoStop}%
\bibitem [{\citenamefont {Seddon}\ \emph {et~al.}(2021)\citenamefont {Seddon},
  \citenamefont {Regula}, \citenamefont {Pashayan}, \citenamefont {Ouyang},\
  and\ \citenamefont {Campbell}}]{Seddon2021quantifying}%
  \BibitemOpen
  \bibfield  {author} {\bibinfo {author} {\bibfnamefont {J.~R.}\ \bibnamefont
  {Seddon}}, \bibinfo {author} {\bibfnamefont {B.}~\bibnamefont {Regula}},
  \bibinfo {author} {\bibfnamefont {H.}~\bibnamefont {Pashayan}}, \bibinfo
  {author} {\bibfnamefont {Y.}~\bibnamefont {Ouyang}}, \ and\ \bibinfo {author}
  {\bibfnamefont {E.~T.}\ \bibnamefont {Campbell}},\ }\bibfield  {title} {\emph
  {\bibinfo {title} {Quantifying {{Quantum Speedups}}: {{Improved Classical
  Simulation From Tighter Magic Monotones}}},}\ }\href
  {http://dx.doi.org/10.1103/PRXQuantum.2.010345} {\bibfield  {journal}
  {\bibinfo  {journal} {PRX Quantum}\ }\textbf {\bibinfo {volume} {2}},\
  \bibinfo {pages} {010345} (\bibinfo {year} {2021})}\BibitemShut {NoStop}%
\bibitem [{\citenamefont {Andersson}\ \emph {et~al.}(2015)\citenamefont
  {Andersson}, \citenamefont {Bengtsson}, \citenamefont {Blanchfield},\ and\
  \citenamefont {Dang}}]{Andersson2015states}%
  \BibitemOpen
  \bibfield  {author} {\bibinfo {author} {\bibfnamefont {D.}~\bibnamefont
  {Andersson}}, \bibinfo {author} {\bibfnamefont {I.}~\bibnamefont
  {Bengtsson}}, \bibinfo {author} {\bibfnamefont {K.}~\bibnamefont
  {Blanchfield}}, \ and\ \bibinfo {author} {\bibfnamefont {H.~B.}\ \bibnamefont
  {Dang}},\ }\bibfield  {title} {\emph {\bibinfo {title} {States that are far
  from being stabilizer states},}\ }\href
  {http://dx.doi.org/10.1088/1751-8113/48/34/345301} {\bibfield  {journal}
  {\bibinfo  {journal} {J. Phys. A: Math. Theor.}\ }\textbf {\bibinfo {volume}
  {48}},\ \bibinfo {pages} {345301} (\bibinfo {year} {2015})}\BibitemShut
  {NoStop}%
\bibitem [{\citenamefont {Stacey}(2016)}]{stacey2016geometric}%
  \BibitemOpen
  \bibfield  {author} {\bibinfo {author} {\bibfnamefont {B.~C.}\ \bibnamefont
  {Stacey}},\ }\bibfield  {title} {\emph {\bibinfo {title} {Geometric and
  information-theoretic properties of the {H}oggar lines},}\ }\href@noop {}
  {\Eprint {http://arxiv.org/abs/1609.03075} {arXiv:1609.03075}  (\bibinfo
  {year} {2016})}\BibitemShut {NoStop}%
\bibitem [{\citenamefont {Stacey}(2019)}]{Stacey2019invariant}%
  \BibitemOpen
  \bibfield  {author} {\bibinfo {author} {\bibfnamefont {B.~C.}\ \bibnamefont
  {Stacey}},\ }\bibfield  {title} {\emph {\bibinfo {title} {Invariant
  {{Off}}-{{Diagonality}}: {{SICs}} as {{Equicoherent Quantum States}}},}\
  }\href {http://arxiv.org/abs/1906.05637} {\Eprint
  {http://arxiv.org/abs/1906.05637} {arXiv:1906.05637}  (\bibinfo {year}
  {2019})}\BibitemShut {NoStop}%
\bibitem [{\citenamefont {Zhu}(2012)}]{Zhu2012quantum}%
  \BibitemOpen
  \bibfield  {author} {\bibinfo {author} {\bibfnamefont {H.}~\bibnamefont
  {Zhu}},\ }\emph {\bibinfo {title} {Quantum state estimation and symmetric
  informationally complete POMs}},\ \href@noop {} {PhD thesis},\ \bibinfo
  {school} {National University of Singapore} (\bibinfo {year}
  {2012})\BibitemShut {NoStop}%
\bibitem [{\citenamefont {Appleby}(2005)}]{Appleby2005symmmetric}%
  \BibitemOpen
  \bibfield  {author} {\bibinfo {author} {\bibfnamefont {D.~M.}\ \bibnamefont
  {Appleby}},\ }\bibfield  {title} {\emph {\bibinfo {title} {Symmetric
  informationally complete–positive operator valued measures and the extended
  clifford group},}\ }\href {http://dx.doi.org/10.1063/1.1896384} {\bibfield
  {journal} {\bibinfo  {journal} {J. Math. Phys.}\ }\textbf {\bibinfo {volume}
  {46}},\ \bibinfo {pages} {052107} (\bibinfo {year} {2005})}\BibitemShut
  {NoStop}%
\bibitem [{\citenamefont {Zhu}(2010)}]{Zhu2010SIC}%
  \BibitemOpen
  \bibfield  {author} {\bibinfo {author} {\bibfnamefont {H.}~\bibnamefont
  {Zhu}},\ }\bibfield  {title} {\emph {\bibinfo {title} {{SIC} {POVMs} and
  clifford groups in prime dimensions},}\ }\href
  {http://dx.doi.org/10.1088/1751-8113/43/30/305305} {\bibfield  {journal}
  {\bibinfo  {journal} {J. Phys. A: Math. Theor.}\ }\textbf {\bibinfo {volume}
  {43}},\ \bibinfo {pages} {305305} (\bibinfo {year} {2010})}\BibitemShut
  {NoStop}%
\bibitem [{\citenamefont {Lewenstein}\ and\ \citenamefont
  {Sanpera}(1998)}]{Lewenstein1998separability}%
  \BibitemOpen
  \bibfield  {author} {\bibinfo {author} {\bibfnamefont {M.}~\bibnamefont
  {Lewenstein}}\ and\ \bibinfo {author} {\bibfnamefont {A.}~\bibnamefont
  {Sanpera}},\ }\bibfield  {title} {\emph {\bibinfo {title} {Separability and
  {{Entanglement}} of {{Composite Quantum Systems}}},}\ }\href
  {http://dx.doi.org/10.1103/PhysRevLett.80.2261} {\bibfield  {journal}
  {\bibinfo  {journal} {Phys. Rev. Lett.}\ }\textbf {\bibinfo {volume} {80}},\
  \bibinfo {pages} {2261} (\bibinfo {year} {1998})}\BibitemShut {NoStop}%
\bibitem [{\citenamefont {Tomamichel}(2015)}]{Tomamichel2015quantum}%
  \BibitemOpen
  \bibfield  {author} {\bibinfo {author} {\bibfnamefont {M.}~\bibnamefont
  {Tomamichel}},\ }\href@noop {} {\emph {\bibinfo {title} {Quantum information
  processing with finite resources: mathematical foundations}}}\ (\bibinfo
  {publisher} {Springer},\ \bibinfo {year} {2015})\BibitemShut {NoStop}%
\bibitem [{\citenamefont {Cooney}\ \emph {et~al.}(2016)\citenamefont {Cooney},
  \citenamefont {Mosonyi},\ and\ \citenamefont {Wilde}}]{Cooney2015strong}%
  \BibitemOpen
  \bibfield  {author} {\bibinfo {author} {\bibfnamefont {T.}~\bibnamefont
  {Cooney}}, \bibinfo {author} {\bibfnamefont {M.}~\bibnamefont {Mosonyi}}, \
  and\ \bibinfo {author} {\bibfnamefont {M.~M.}\ \bibnamefont {Wilde}},\
  }\bibfield  {title} {\emph {\bibinfo {title} {Strong converse exponents for a
  quantum channel discrimination problem and quantum-feedback-assisted
  communication},}\ }\href {http://dx.doi.org/10.1007/s00220-016-2645-4}
  {\bibfield  {journal} {\bibinfo  {journal} {Commun. Math. Phys.}\ }\textbf
  {\bibinfo {volume} {344}},\ \bibinfo {pages} {797} (\bibinfo {year}
  {2016})}\BibitemShut {NoStop}%
\bibitem [{\citenamefont {Datta}\ \emph {et~al.}(2013)\citenamefont {Datta},
  \citenamefont {Mosonyi}, \citenamefont {Hsieh},\ and\ \citenamefont
  {Brandão}}]{Datta2013smooth}%
  \BibitemOpen
  \bibfield  {author} {\bibinfo {author} {\bibfnamefont {N.}~\bibnamefont
  {Datta}}, \bibinfo {author} {\bibfnamefont {M.}~\bibnamefont {Mosonyi}},
  \bibinfo {author} {\bibfnamefont {M.-H.}\ \bibnamefont {Hsieh}}, \ and\
  \bibinfo {author} {\bibfnamefont {F.~G. S.~L.}\ \bibnamefont {Brandão}},\
  }\bibfield  {title} {\emph {\bibinfo {title} {A smooth entropy approach to
  quantum hypothesis testing and the classical capacity of quantum channels},}\
  }\href {http://dx.doi.org/10.1109/TIT.2013.2282160} {\bibfield  {journal}
  {\bibinfo  {journal} {IEEE Trans. Inf. Theory}\ }\textbf {\bibinfo {volume}
  {59}},\ \bibinfo {pages} {8014} (\bibinfo {year} {2013})}\BibitemShut
  {NoStop}%
\bibitem [{\citenamefont {Anshu}\ \emph {et~al.}(2019)\citenamefont {Anshu},
  \citenamefont {Berta}, \citenamefont {Jain},\ and\ \citenamefont
  {Tomamichel}}]{Anshu2019minimax}%
  \BibitemOpen
  \bibfield  {author} {\bibinfo {author} {\bibfnamefont {A.}~\bibnamefont
  {Anshu}}, \bibinfo {author} {\bibfnamefont {M.}~\bibnamefont {Berta}},
  \bibinfo {author} {\bibfnamefont {R.}~\bibnamefont {Jain}}, \ and\ \bibinfo
  {author} {\bibfnamefont {M.}~\bibnamefont {Tomamichel}},\ }\bibfield  {title}
  {\emph {\bibinfo {title} {A minimax approach to one-shot entropy
  inequalities},}\ }\href {http://dx.doi.org/10.1063/1.5126723} {\bibfield
  {journal} {\bibinfo  {journal} {J. Math. Phys.}\ }\textbf {\bibinfo {volume}
  {60}},\ \bibinfo {pages} {122201} (\bibinfo {year} {2019})}\BibitemShut
  {NoStop}%
\bibitem [{\citenamefont {Wang}\ and\ \citenamefont
  {Wilde}(2019{\natexlab{b}})}]{Wang2019asymmetric_state}%
  \BibitemOpen
  \bibfield  {author} {\bibinfo {author} {\bibfnamefont {X.}~\bibnamefont
  {Wang}}\ and\ \bibinfo {author} {\bibfnamefont {M.~M.}\ \bibnamefont
  {Wilde}},\ }\bibfield  {title} {\emph {\bibinfo {title} {Resource theory of
  asymmetric distinguishability},}\ }\href
  {http://dx.doi.org/10.1103/PhysRevResearch.1.033170} {\bibfield  {journal}
  {\bibinfo  {journal} {Phys. Rev. Research}\ }\textbf {\bibinfo {volume}
  {1}},\ \bibinfo {pages} {033170} (\bibinfo {year}
  {2019}{\natexlab{b}})}\BibitemShut {NoStop}%
\bibitem [{\citenamefont {Seddon}\ and\ \citenamefont
  {Campbell}(2019)}]{Seddon2019quantifying}%
  \BibitemOpen
  \bibfield  {author} {\bibinfo {author} {\bibfnamefont {J.~R.}\ \bibnamefont
  {Seddon}}\ and\ \bibinfo {author} {\bibfnamefont {E.~T.}\ \bibnamefont
  {Campbell}},\ }\bibfield  {title} {\emph {\bibinfo {title} {Quantifying magic
  for multi-qubit operations},}\ }\href
  {http://dx.doi.org/10.1098/rspa.2019.0251} {\bibfield  {journal} {\bibinfo
  {journal} {Proc. R. Soc. A}\ }\textbf {\bibinfo {volume} {475}},\ \bibinfo
  {pages} {20190251} (\bibinfo {year} {2019})}\BibitemShut {NoStop}%
\bibitem [{\citenamefont {Nielsen}\ and\ \citenamefont
  {Chuang}(2011)}]{Nielsen2011quantum}%
  \BibitemOpen
  \bibfield  {author} {\bibinfo {author} {\bibfnamefont {M.~A.}\ \bibnamefont
  {Nielsen}}\ and\ \bibinfo {author} {\bibfnamefont {I.~L.}\ \bibnamefont
  {Chuang}},\ }\href@noop {} {\emph {\bibinfo {title} {Quantum {{Computation}}
  and {{Quantum Information}}}}},\ \bibinfo {edition} {10th}\ ed.\ (\bibinfo
  {publisher} {{Cambridge University Press}},\ \bibinfo {address} {{New York,
  NY, USA}},\ \bibinfo {year} {2011})\BibitemShut {NoStop}%
\bibitem [{\citenamefont {Fuchs}\ and\ \citenamefont {{van de
  Graaf}}(1999)}]{Fuchs1999cryptographic}%
  \BibitemOpen
  \bibfield  {author} {\bibinfo {author} {\bibfnamefont {C.}~\bibnamefont
  {Fuchs}}\ and\ \bibinfo {author} {\bibfnamefont {J.}~\bibnamefont {{van de
  Graaf}}},\ }\bibfield  {title} {\emph {\bibinfo {title} {Cryptographic
  distinguishability measures for quantum-mechanical states},}\ }\href
  {http://dx.doi.org/10.1109/18.761271} {\bibfield  {journal} {\bibinfo
  {journal} {IEEE Trans. Inf. Theory}\ }\textbf {\bibinfo {volume} {45}},\
  \bibinfo {pages} {1216} (\bibinfo {year} {1999})}\BibitemShut {NoStop}%
\bibitem [{\citenamefont {Watrous}(2013)}]{watrous_2013}%
  \BibitemOpen
  \bibfield  {author} {\bibinfo {author} {\bibfnamefont {J.}~\bibnamefont
  {Watrous}},\ }\bibfield  {title} {\emph {\bibinfo {title} {Simpler
  semidefinite programs for completely bounded norms},}\ }\href
  {http://dx.doi.org/10.4086/cjtcs.2013.008} {\bibfield  {journal} {\bibinfo
  {journal} {Chic.\,J.\,Th.\,Comp.\,Sci.}\ }\textbf {\bibinfo {volume} {19}},\
  \bibinfo {pages} {1} (\bibinfo {year} {2013})}\BibitemShut {NoStop}%
\bibitem [{\citenamefont {Gour}\ \emph {et~al.}(2009)\citenamefont {Gour},
  \citenamefont {Marvian},\ and\ \citenamefont {Spekkens}}]{Gour2009measuring}%
  \BibitemOpen
  \bibfield  {author} {\bibinfo {author} {\bibfnamefont {G.}~\bibnamefont
  {Gour}}, \bibinfo {author} {\bibfnamefont {I.}~\bibnamefont {Marvian}}, \
  and\ \bibinfo {author} {\bibfnamefont {R.~W.}\ \bibnamefont {Spekkens}},\
  }\bibfield  {title} {\emph {\bibinfo {title} {Measuring the quality of a
  quantum reference frame: The relative entropy of frameness},}\ }\href
  {http://dx.doi.org/10.1103/PhysRevA.80.012307} {\bibfield  {journal}
  {\bibinfo  {journal} {Phys. Rev. A}\ }\textbf {\bibinfo {volume} {80}},\
  \bibinfo {pages} {012307} (\bibinfo {year} {2009})}\BibitemShut {NoStop}%
\bibitem [{\citenamefont {Boyd}\ and\ \citenamefont
  {Vandenberghe}(2004)}]{Boyd2004convex}%
  \BibitemOpen
  \bibfield  {author} {\bibinfo {author} {\bibfnamefont {S.}~\bibnamefont
  {Boyd}}\ and\ \bibinfo {author} {\bibfnamefont {L.}~\bibnamefont
  {Vandenberghe}},\ }\href@noop {} {\emph {\bibinfo {title} {Convex
  {{Optimization}}}}}\ (\bibinfo  {publisher} {{Cambridge University Press}},\
  \bibinfo {address} {{New York}},\ \bibinfo {year} {2004})\BibitemShut
  {NoStop}%
\bibitem [{\citenamefont {Chiribella}(2006)}]{Chiribella2006optimal}%
  \BibitemOpen
  \bibfield  {author} {\bibinfo {author} {\bibfnamefont {G.}~\bibnamefont
  {Chiribella}},\ }\emph {\bibinfo {title} {Optimal estimation of quantum
  signals in the presence of symmetry}},\ \href@noop {} {PhD thesis},\ \bibinfo
   {school} {University of Pavia} (\bibinfo {year} {2006})\BibitemShut
  {NoStop}%
\bibitem [{\citenamefont {Fulton}\ and\ \citenamefont
  {Harris}(2013)}]{Fulton2013representation}%
  \BibitemOpen
  \bibfield  {author} {\bibinfo {author} {\bibfnamefont {W.}~\bibnamefont
  {Fulton}}\ and\ \bibinfo {author} {\bibfnamefont {J.}~\bibnamefont
  {Harris}},\ }\href@noop {} {\emph {\bibinfo {title} {Representation theory: a
  first course}}},\ Vol.\ \bibinfo {volume} {129}\ (\bibinfo  {publisher}
  {Springer Science \& Business Media},\ \bibinfo {year} {2013})\BibitemShut
  {NoStop}%
\bibitem [{\citenamefont {Campbell}(2011)}]{Campbell2011catalysis}%
  \BibitemOpen
  \bibfield  {author} {\bibinfo {author} {\bibfnamefont {E.~T.}\ \bibnamefont
  {Campbell}},\ }\bibfield  {title} {\emph {\bibinfo {title} {Catalysis and
  activation of magic states in fault-tolerant architectures},}\ }\href
  {http://dx.doi.org/10.1103/PhysRevA.83.032317} {\bibfield  {journal}
  {\bibinfo  {journal} {Phys. Rev. A}\ }\textbf {\bibinfo {volume} {83}},\
  \bibinfo {pages} {032317} (\bibinfo {year} {2011})}\BibitemShut {NoStop}%
\end{thebibliography}%

\end{document}